\newtheorem{theorem}{Theorem}
\newtheorem{corollary}[theorem]{Corollary}
\newtheorem{lemma}[theorem]{Lemma}
\theoremstyle{remark}
\newtheorem{remark}[theorem]{\bf Remark}
\numberwithin{theorem}{section}
\numberwithin{question}{section}
\numberwithin{figure}{section}
\numberwithin{equation}{section}
\begin{document}

\title{Conformally covariant probabilities, operator product expansions, \\ and logarithmic correlations in two-dimensional critical percolation}
\bigskip{}
\author[1]{Federico Camia\thanks{federico.camia@nyu.edu}}
\author[2]{Yu Feng\thanks{yufeng\_proba@163.com. This work was conducted while Y.F. was at Tsinghua University, China.}}

\affil[1]{NYU Abu Dhabi, UAE \& Courant Institute, USA}
\affil[2]{University of Michigan, USA}

\date{}

%

%

\global\long\def\CR{\mathrm{CR}}
\global\long\def\ST{\mathrm{ST}}
\global\long\def\SF{\mathrm{SF}}
\global\long\def\cov{\mathrm{cov}}
\global\long\def\dist{\mathrm{dist}}
\global\long\def\SLE{\mathrm{SLE}}
\global\long\def\unitD{\mathbb{D}}
\global\long\def\hSLE{\mathrm{hSLE}}
\global\long\def\CLE{\mathrm{CLE}}
\global\long\def\GFF{\mathrm{GFF}}
\global\long\def\inte{\mathrm{int}}
\global\long\def\ext{\mathrm{ext}}
\global\long\def\opi{\overline{\pi}}
\global\long\def\orho{\overline{\rho}}
\global\long\def\inrad{\mathrm{inrad}}
\global\long\def\outrad{\mathrm{outrad}}
\global\long\def\dimH{\mathrm{dim}}
\global\long\def\capa{\mathrm{cap}}
\global\long\def\diam{\mathrm{diam}}
\global\long\def\free{\mathrm{free}}
\global\long\def\Dist{\mathrm{Dist}}
\global\long\def\hF{{}_2\mathrm{F}_1}
\global\long\def\simple{\mathrm{simple}}
\global\long\def\even{\mathrm{even}}
\global\long\def\odd{\mathrm{odd}}
\global\long\def\st{\mathrm{ST}}
\global\long\def\usf{\mathrm{USF}}
\global\long\def\Leb{\mathrm{Leb}}
\global\long\def\LP{\mathrm{LP}}
\global\long\def\coulomb{\LH}
\global\long\def\coulombnew{\LG}
\global\long\def\kfunc{p}
\global\long\def\OO{\mathcal{O}}
\global\long\def\parti{\mathbf{Q}}
\global\long\def\rad{\mathrm{rad}}

\global\long\def\eps{\epsilon}
\global\long\def\ov{\overline}
\global\long\def\U{\mathbb{U}}
\global\long\def\T{\mathbb{T}}
\global\long\def\HH{\mathbb{H}}
\global\long\def\LA{\mathcal{A}}
\global\long\def\LB{\mathcal{B}}
\global\long\def\LC{\mathcal{C}}
\global\long\def\LD{\mathcal{D}}
\global\long\def\LF{\mathcal{F}}
\global\long\def\LK{\mathcal{K}}
\global\long\def\LE{\mathcal{E}}
\global\long\def\LG{\mathcal{G}}
\global\long\def\LI{\mathcal{I}}
\global\long\def\LJ{\mathcal{J}}
\global\long\def\LL{\mathcal{L}}
\global\long\def\LM{\mathcal{M}}
\global\long\def\LN{\mathcal{N}}
\global\long\def\LQ{\mathcal{Q}}
\global\long\def\LR{\mathcal{R}}
\global\long\def\LT{\mathcal{T}}
\global\long\def\LS{\mathcal{S}}
\global\long\def\LU{\mathcal{U}}
\global\long\def\LV{\mathcal{V}}
\global\long\def\LW{\mathcal{W}}
\global\long\def\LX{\mathcal{X}}
\global\long\def\LY{\mathcal{Y}}
\global\long\def\PartF{\mathcal{Z}}
\global\long\def\LH{\mathcal{H}}
\global\long\def\LJ{\mathcal{J}}
\global\long\def\R{\mathbb{R}}
\global\long\def\C{\mathbb{C}}
\global\long\def\N{\mathbb{N}}
\global\long\def\Z{\mathbb{Z}}
\global\long\def\E{\mathbb{E}}
\global\long\def\PP{\mathbb{P}}
\global\long\def\QQ{\mathbb{Q}}
\global\long\def\A{\mathbb{A}}
\global\long\def\one{\mathbb{1}}
\global\long\def\bn{\mathbf{n}}
\global\long\def\MR{MR}
\global\long\def\cond{\,|\,}
\global\long\def\la{\langle}
\global\long\def\ra{\rangle}
\global\long\def\tree{\Upsilon}
\global\long\def\prob{\mathbb{P}}
\global\long\def\hm{\mathrm{Hm}}
\global\long\def\cross{\mathrm{Cross}}

\global\long\def\sf{\mathrm{SF}}
\global\long\def\wr{\varrho}

\global\long\def\Im{\operatorname{Im}}
\global\long\def\Re{\operatorname{Re}}

\global\long\def\ud{\mathrm{d}}
\global\long\def\pder#1{\frac{\partial}{\partial#1}}
\global\long\def\pdder#1{\frac{\partial^{2}}{\partial#1^{2}}}
\global\long\def\der#1{\frac{\ud}{\ud#1}}

\global\long\def\bZnn{\mathbb{Z}_{\geq 0}}

\global\long\def\Vfunc{\LG}
\global\long\def\gfunc{g^{(\rr)}}
\global\long\def\hfunc{h^{(\rr)}}

\global\long\def\SimplexInt{\rho}
\global\long\def\CubeInt{\widetilde{\rho}}

\global\long\def\ii{\mathfrak{i}}
\global\long\def\rr{\mathfrak{r}}
\global\long\def\chamber{\mathfrak{X}}
\global\long\def\Wchamber{\mathfrak{W}}

\global\long\def\SimplexIntKappa8{\SimplexInt}

\global\long\def\nested{\boldsymbol{\underline{\Cap}}}
\global\long\def\unnested{\boldsymbol{\underline{\cap\cap}}}
\global\long\def\unnested{\boldsymbol{\underline{\cap\cap}}}

\global\long\def\acycle{\vartheta}
\global\long\def\bcycle{\tilde{\acycle}}
\global\long\def\Gloop{\Theta}

\global\long\def\metric{\mathrm{dist}}

\global\long\def\adj#1{\mathrm{adj}(#1)}

\global\long\def\bs{\boldsymbol}

\global\long\def\edge#1#2{\langle #1,#2 \rangle}
\global\long\def\graph{G}

\newcommand{\conn}{\vartheta_{\scaleobj{0.7}{\mathrm{RCM}}}}
\newcommand{\hatconn}{\widehat{\vartheta}_{\mathrm{RCM}}}
\newcommand{\realpt}{\smash{\mathring{x}}}
\newcommand{\corrind}{\LC}
\newcommand{\bssymb}{\pi}
\newcommand{\PRCM}{\mu}
\newcommand{\coeff}{p}
\newcommand{\MainConst}{C}

\global\long\def\removeLink{/}
\maketitle

\begin{abstract}
The large-scale behavior of two-dimensional critical percolation is expected to be described by a \emph{conformal field theory} (CFT).
Moreover, this putative CFT is believed to be of the logarithmic type, exhibiting logarithmic corrections to the most commonly encountered behavior of CFT correlations.
While constructing a full-fledged percolation CFT is still an open problem, in this paper we prove various CFT features of the scaling limit of two-dimensional critical percolation.
In particular, we provide the first rigorous proof of the emergence of logarithmic singularities in the scaling limit of connection probabilities.
More precisely, we study several connectivity events, including \emph{arm-events} and the events that a vertex is \emph{pivotal} or belongs to the percolation \emph{backbone}, whose probabilities have conformally covariant scaling limits and can be interpreted as CFT correlation functions. For some of these probabilities, we prove asymptotic expansions that can be regarded as CFT \emph{operator product expansions} (OPEs).
Our analysis identifies various logarithmic singularities and explains the geometric mechanism that produces them.
In follow-up work, the results of this paper are used to define a percolation energy field and its \emph{logarithmic partner}.
\end{abstract}

\noindent\textbf{Keywords:} 
critical percolation, conformal invariance, connection probabilities, continuum scaling limit, conformal field theory, logarithmic correlations, operator product expansion\\ 

\noindent\textbf{MSC:} 60K35, 82B43, 82B27, 82B31, 60J67, 81T27

\tableofcontents

\section{Introduction and discussion of the main results}
\subsection{Percolation background and motivation}
Percolation was introduced by Broadbent and Hammersley to model the spread of a gas or a fluid through a porous medium \cite{broadbent_hammersley_1957}.
Because it is one of the simplest mathematical models of a continuous phase transition and due to the large number of applications, it has been extensively studied by both physicists and mathematicians (see, e.g.,~\cite{SABERI20151}). The two-dimensional version of the model is particularly well understood (see \cite{kesten1982percolation,stauffer1994introduction,grimmett1999percolation,br2006}), including at the critical density (the phase transition point), where its large-scale properties are believed to be described by a conformal field theory (see, e.g., \cite{francesco1997conformal,henkel1999conformal}).

A conformal field theory (CFT) is a special type of quantum field theory which is invariant under scale and more general conformal transformations.
One of the most studied examples is provided by the two-dimensional critical Ising model, defined in terms of \emph{spin} (random) variables located at the vertices of a regular lattice.
In the Ising model with no external magnetic field, when the temperature parameter approaches a critical value, the \emph{correlation length} of the model, defined as the rate of exponential decay in space of the covariance between two spin variables, diverges.
As a consequence, the large-scale statistical (thermal) fluctuations become scale invariant and the covariance has a power law behavior~\cite{McCoyWu}.
In such a situation, it was proposed by Polyakov and collaborators \cite{polyakov1970conformal,BelavinPolyakovZamolodchikovInfiniteConformalSymmetry2D,BelavinPolyakovZamolodchikovInfiniteConformalSymmetryCritical2D} that the large-scale fluctuations should be described by a CFT.
In order to obtain full scale and conformal invariance, one needs to take a \emph{continuum} (\emph{scaling}) \emph{limit}, in which the lattice spacing is sent to zero and the spin variables are replaced by a magnetization ``field,'' a generalized function which behaves homogeneously under scale and more general conformal transformations~\cite{CamiaGarbanNewman}.
In the continuum limit, the spin $n$-point functions, i.e., the expectations of products of $n$ spin variables, converge to the $n$-point correlation functions of the magnetization field.
Like the magnetization field itself, its correlation functions also transform homogeneously under scale and conformal transformations~\cite{ChelkakHonglerIzyurovConformalInvarianceCorrelationIsing}. 
This property, named \emph{conformal covariance}, is characteristic of so-called \emph{(conformal) primary fields} (or \emph{operators}), which are the building blocks of any CFT.
The Ising magnetization field and its $n$-point functions provide a prototypical example of a primary field and its correlations.

Unlike the Ising model, Bernoulli percolation is a purely geometric model in which independent, identically distributed, binary (say, black/white, open/closed, $+$/$-$) random variables are placed at the vertices or edges of a regular lattice.
If interpreted as the components of an Ising-type lattice field, the percolation field is trivial in the sense that all its $n$-point functions
are products of expectations of single variables, due to the independence of the variables.
Instead, these variables are used to define \emph{clusters}, i.e., maximal connected (according to the lattice adjacency notion) subsets of vertices with the same label (or vertices joined by edges with the same label), which are the main objects of interest.
The connectivity properties of clusters are encoded in the $n$-point \emph{connection probabilities}, the probabilities that $n \geq 2$ vertices belong to the same cluster.
The percolation phase transition corresponds to the emergence of an infinite cluster, signaled by the divergence of the mean size of the cluster of the origin, as the density of vertices or edges with a given label is increased to a critical value.

In the 1990s, Michael Aizenman conjectured that, at the critical density, connection probabilities should have a conformally covariant scaling limit and therefore behave like the correlation functions of a CFT (see~\cite{LanglandsPouliotSaintAubinConfInv2dPerco,aizenman1998continuum,Aizenman1998}).
The conjecture implies that one can try to apply the CFT formalism to critical percolation using connection probabilities instead of correlation functions.

Following the groundbreaking introduction of the Schramm-Loewner evolution (SLE) \cite{SchrammFirstSLE}, percolation was one of the earliest models for which conformal invariance was established \cite{Smirnov:Critical_percolation_in_the_plane}, followed shortly by a proof of convergence of percolation interfaces to SLE curves \cite{CamiaNewmanPercolationFull,CamiaNewmanPercolation}.
Nevertheless, no progress was made until recently in the direction of proving Aizenman's conjecture on connection probabilities and rigorously establishing a percolation CFT (see~\cite{SchrammSmirnovPercolation} for a discussion).

In~\cite{Cam23}, one of us proved Aizenman's conjecture for critical (Bernoulli) site percolation on the triangular lattice and constructed a percolation lattice field whose $n$-point functions are non-trivial combinations of connection probabilities and consequently have a conformally covariant scaling limit.

In this paper, building on the results and ideas of~\cite{Cam23}, we move one step forward and start to explore the CFT structure of critical percolation.
In particular, we identify new percolation events whose probabilities have conformally covariant scaling limits and can therefore be interpreted as $n$-point functions of primary fields,
providing further evidence for the assumption that the large-scale properties of percolation can be described using the CFT formalism.
Among the probabilities we study, some can be interpreted as correlation functions involving the two most fundamental percolation fields, the so-called \emph{density field} and \emph{energy field}, which are, in some sense, the analogs of the Ising magnetization and energy fields.

One of the main results of this paper is the identification of logarithmic singularities in critical percolation, including in the four-point function of the density field, which provides the first rigorous confirmation of similar predictions made in the physics literature.
This is particularly interesting because the field of logarithmic CFTs is significantly less developed than that of ordinary CFTs, despite the fact that logarithmic CFTs have attracted considerable attention in recent years due to their role in the study of important physical models and phenomena such as the Wess-Zumino-Witten (WZW) model, the quantum Hall effect, disordered critical systems, self-avoiding polymers, and the Fortuin-Kasteleyn (FK) model (see~\cite{creutzig2013logarithmic} for a review).

We show the presence of logarithmic singularities by studying the asymptotic behavior of certain four-point functions as two of the four points approach each other.
This analysis is of independent interest for at least two reasons.
First of all, it elucidates the ``physical'' mechanism that leads to the appearance of logarithmic singularities, in terms of lattice quantities.
To the best of our knowledge, this mechanism had not been previously explained, even in the physics literature. 
Secondly, it provides a connection with fundamental CFT concepts such as those of \emph{operator product expansion} (\emph{OPE}) and \emph{fusion rule}, and a way to rigorously understand them at the lattice level.
We briefly discuss these important concepts in the next section before presenting the main results of the paper.

Building on the results and techniques of this paper, the logarithmic CFT structure of the scaling limit of two-dimensional critical percolation is further explored in~\cite{PercolationEnergy}, where we identify a percolation ``energy'' field and its ``logarithmic partner,'' related to the four-arm event. In that paper, we rigorously show that the two- and three-point functions of the percolation ``energy'' field and its ``partner'' possess the structure predicted by Gurarie~\cite{Gur93} for pairs of logarithmic partner fields in a logarithmic CFT (see also~\cite{creutzig2013logarithmic}).

\subsection{CFT background and terminology} \label{sec::CFT_background}

In this section, we provide a brief and informal introduction to conformal field theory (CFT).
Our goal here is not to give a complete overview of CFT, but simply to provide some background and, more importantly, to introduce some of the ideas and terminology that appear in the physics literature, in order to be able to compare our (mathematically rigorous) results, presented in the next section, with that literature.
Excellent reviews of two-dimensional CFT include~\cite{Cardy1990,Ginsp1990,francesco1997conformal,henkel1999conformal}.

{
A natural starting point to discuss CFT is the theory of continuous phase transitions or critical phenomena.
The study of critical phenomena leads naturally to scale-invariant theories, which do not possess a characteristic length and therefore look the same at all distances.
This is a consequence of the emergence of fluctuations occurring over many distance scales as the critical point is approached, resulting in the divergence of the characteristic length.
Onsager's exact solution of the two-dimensional Ising model~\cite{OnsagerCrystalStatisticsI} provides an explicit, well-known and very influential example of scale invariance at the critical (phase transition) point.

The divergence of the characteristic length at the critical point is a general phenomenon and it is now understood that all critical points are described by scale-invariant theories.
This is consistent with Wilson’s renormalization group (RG) theory of phase transitions, introduced by Wilson and Kogut (1974) and Wilson (1983), according to which continuous phase transitions are described by the fixed points of RG flows.
Since (real-space) RG flows correspond to the rescaling of space, fixed points are automatically scale-invariant.

A theory is scale-invariant if it is unchanged by the uniform rescaling (dilation) of all coordinates.
Scale transformations are a special case of conformal transformations, which locally look like a rotation and a dilation.
Hence, conformal invariance can be considered a natural extension of scale invariance.
It is indeed natural to ask if there is a difference between theories that are invariant under uniform dilations and theories that are invariant under non-uniform dilations (i.e., conformal transformations), where the scale factor is allowed to vary with the position.

It is now understood, as first conjectured by Polyakov \cite{polyakov1970conformal}, that scale invariance generically implies conformal invariance (although there are exceptions, see~\cite{CardyRiva2005339}), so that physically relevant scale-invariant theories are typically conformally invariant.
For the critical Ising model, conformal invariance was proved by Smirnov \cite{SmirnovHolomorphicFermion}.

So far, our discussion has been very general and somewhat vague, since we haven't specified what types of theories we are referring to.
Two obvious questions concern the objects that populate these theories and how to express the conformal invariance in terms of those objects.

The main conceptual framework that has emerged in physics to study the large-scale behavior of critical systems, where scale invariance emerges, is that of quantum field theory (QFT), where the objects of interest are ``fields,'' which encode the spatial dependence of the physical quantities of interest (e.g., the magnetization in the ferromagnet-paramagnet phase transition modelled by the Ising model).
This is a very powerful framework, developed throughout much of the 20th century, with numerous important applications ranging from particle physics to condensed matter physics.
The term QFT is also used to indicate individual theories, and some of the most fundamental and most successful physical theories are QFTs (e.g., the Standard Model of Particle Physics).

A conformal field theory (CFT) is a conformally invariant QFT.
As with QFT, the term CFT is also used collectively, in this case, to denote the study of all quantum field theories that are invariant under conformal transformations.

In addition to their role in the theory of critical phenomena, CFTs are also extremely important for the study of more general QFTs that appear in high-energy (particle) physics and quantum condensed matter physics.
The study and classification of CFTs is a major goal of modern theoretical physics.}
In particular, the two-dimensional version of the theory has seen a rapid development following the groundbreaking work of Belavin, Polyakov and Zamolodchikov \cite{BelavinPolyakovZamolodchikovInfiniteConformalSymmetry2D,BelavinPolyakovZamolodchikovInfiniteConformalSymmetryCritical2D}.

The approach of Belavin, Polyakov and Zamolodchikov allows to study the large-scale behavior of a critical system using the constraints of conformal symmetry alone, with no need to consider the microscopic details of the system. The crucial idea is the so-called \emph{conformal bootstrap}, first described by Ferrara, Grillo, and Gatto~\cite{ferrara1973conformal} and Polyakov~\cite{polyakov1974nonhamiltonian}, which combines conformal invariance with the operator product expansion (OPE), another powerful concept which can be traced back to Wilson~\cite{WilsonNonlagrangian}
and Kadanoff~\cite{KadanoffOperatorAlgebra}. 
Following these developments, CFT has become an indispensable tool in the theory of two-dimensional critical phenomena.
More recently, many interesting results were obtained for conformal field theories in higher dimensions, including a precise determination of the critical exponents of the critical 3D Ising model~\cite{kos2016precision}.

{In the conformal bootstrap approach, a CFT is defined as a set of functions satisfying certain axiomatic properties.
These functions are interpreted as the $n$-point correlations, $\langle \phi_1(x_1) \ldots \phi_n(x_n) \rangle$, of some physical fields, $\phi_1, \ldots, \phi_n$, probed at locations $x_1, \ldots, x_n$, respectively.
The fields in a correlation function can be different or repeated and we assume that their position does not matter (i.e., the functions are invariant under permutations of the fields).
Mathematically, the fields can be thought of as indices that identify the functions that define the CFT.

The collection of all fields is called the \emph{field content} or \emph{operator content} of the theory.
The latter term comes from the fact that, in quantum field theory, classical fields are treated as canonical coordinates and elevated to the role of operators acting on a Hilbert space of possible physical states (the same way as, in quantum mechanics, canonical coordinates are turned into operators in the canonical quantization of classical mechanics).}

{The two main properties that $n$-point correlation functions are assumed to satisfy concern the way they transform when the $n$ points are mapped to other points by a conformal map and the existence of a short-distance expansion as two of the $n$ points are brought close to each other.
(Other simple properties are assumed to hold, but they are either relatively trivial or somewhat technical, and correspond to statements such as the existence of a ``unit field,'' invariance under permutations, the existence of a discrete set of scaling dimensions $\alpha_j$ associated to the fields $\phi_j$---see, e.g.~\cite{rychkov20203d}).
}

{Firstly, correlation functions are assumed to be \emph{conformally covariant} in the sense that, if $\varphi$ is a conformal transformation, then
    \begin{align} \label{eq:conf-cov}
    \langle \phi_1(\varphi(x_1)) \ldots \phi_n(\varphi(x_n)) \rangle = \Big(\prod_{i=1}^{n} \vert \varphi'(x_i) \vert^{-\alpha_i}\Big) \langle \phi_1(x_1) \ldots \phi_n(x_n) \rangle,
    \end{align}
    where the $\alpha_i$'s are called \emph{scaling dimensions}.
The same scaling dimension $\alpha_i$ has to appear in all $n$-point functions involving the field $\phi_i$, so a unique scaling dimension is associated to each field and therefore $\alpha_i$ is called the scaling dimension of the field $\phi_i$.
(However, different fields can have the same scaling dimension.)
It is furthermore assumed that, for all $\phi_i$,
\begin{align*}
    \langle\phi_i(x)\rangle=0
\end{align*}
and
\begin{align} \label{eq:2-pt-fnct}
    \langle\phi_i(x_1)\phi_i(x_2)\rangle = \vert x_1-x_2 \vert^{-2\alpha_i},
\end{align}
where it is easy to check that the behavior of the two-point function above is, up to a multiplicative constant, the only one compatible with \eqref{eq:conf-cov}.
Moreover, it is assumed that distinct fields are ``orthogonal'' in the sense that, if $\phi_i\neq\phi_j$,
    \begin{align} \label{eq:orthogonal}
    \langle\phi_i(x_1)\phi_j(x_2) \rangle=0.
    \end{align}

It is not difficult to check that conformal covariance, Eq.~\eqref{eq:conf-cov}, determines not only the functional form of two-point functions, but also that of three-point functions, namely,
    \begin{align} \label{eq:3-pt-fnct}
    \langle \phi_1(x_1)\phi_2(x_2)\phi_3(x_3) \rangle = C_{\phi_1\phi_2\phi_3} \vert x_1-x_2 \vert^{-(\alpha_1+\alpha_2-\alpha_3)} \vert x_1-x_3 \vert^{-(\alpha_1+\alpha_3-\alpha_2)} \vert x_2-x_3 \vert^{-(\alpha_2+\alpha_3-\alpha_1)},
    \end{align}
where $C_{\phi_1\phi_2\phi_3}$ is called a \emph{structure constant} of the theory.
Four-point functions are the simplest correlations not fully determined, up to a multiplicative constant, by conformal covariance and are therefore of particular interest.

The second fundamental property of correlation functions is the \emph{operator product expansion} (\emph{OPE}), whose origin can be traced back to the work of Wilson and Zimmermann~\cite{WilsonNonlagrangian,WilsonZimmermannOPE} 
on QFT and of Kadanoff in statistical mechanics~\cite{KadanoffOperatorAlgebra}.
For a CFT four-point function, the OPE reads
\begin{align} \label{eq:OPE}
    \langle\phi_1(x_1)\phi_2(x_2)\phi_3(x_3)\phi_4(x_4)\rangle \sim \sum_{k \geq 0} \frac{C_{\phi_1\phi_2}^{\phi_k}}{\vert x_1-x_2 \vert^{\alpha_1+\alpha_2-\alpha_k}} F_{\phi_k}^{\phi_3\phi_4}(x,x_3,x_4) \; \text{ as } x_1,x_2 \to x,
\end{align}
where we assume that the theory contains a discrete set of fields $\{\phi_k\}_{k \geq 0}$ and the sum runs over all possible fields of the theory, including the ``unit field'' $\phi_0=\mathbb{1}$ with $\alpha_0=0$, which is identically equal to one and satisfies $\langle \mathbb{1}(x) \rangle = 1$ and $\langle \mathbb{1}(x_0) \phi_1(x_1) \ldots \phi_n(x_n) \rangle = \langle \phi_1(x_1) \ldots \phi_n(x_n) \rangle$.
However, the coefficients $C_{\phi_1\phi_2}^{\phi_k}$ can be zero, so the right-hand side of the OPE doesn't necessarily contain all fields.
Which fields are present is encoded in the so-called \emph{fusion rules}.

In the simplest case, $F_{\phi_k}^{\phi_3\phi_4}$ takes the form of a three-point function,
    \begin{align} \label{eq:OPE-term}
    F_{\phi_k}^{\phi_3\phi_4}(x,x_3,x_4) = C_{\phi_k\phi_3\phi_4} \vert x-x_3 \vert^{-(\alpha_k+\alpha_3-\alpha_4)} \vert x-x_4 \vert^{-(\alpha_k+\alpha_4-\alpha_3)} \vert x_3-x_4 \vert^{-(\alpha_3+\alpha_4-\alpha_k)},
    \end{align}
so that, formally, one can write
\begin{align} \label{eq:formal-OPE}
    \phi_1(x_1) \phi_2(x_2) \sim \sum_{k \geq 0} \frac{C_{\phi_1\phi_2}^{\phi_k}}{\vert x_1-x_2 \vert^{\alpha_1+\alpha_2-\alpha_k}} \, \phi_k(x) \; \text{ as } x_1,x_2 \to x.
\end{align}
In CFT language, Eq.~\eqref{eq:formal-OPE} can be interpreted as saying that the ``fusion'' of fields $\phi_1$ and $\phi_2$ produces the fields $\phi_k$ in the right-hand side of the equation.

Applying this form of the OPE to the three-point function $\langle\phi_1(x_1)\phi_2(x_2)\phi_3(x_3)\rangle$ and using \eqref{eq:orthogonal} and \eqref{eq:2-pt-fnct} shows that $C_{\phi_1\phi_2}^{\phi_3}=C_{\phi_1\phi_2\phi_3}$, so the coefficients of the OPE are the structure constants of Eq.~\eqref{eq:3-pt-fnct}.
Moreover, using the OPE \eqref{eq:formal-OPE} one can reduce any $n$-point function for $n \geq 4$ to an $(n-1)$-point function.
It is therefore possible, at least in principle, to ``solve'' a CFT by calculating the structure constants and applying the OPE repeatedly, provided that one knows the field content of the theory.
Conformal invariance is very useful in carrying out this plan, since it poses significant constraints on the correlation functions and consequently on the OPE coefficients.
These observations are at the heart of the conformal bootstrap approach to CFT.

What we just described is the simplest case, but in general the function $F_{\phi_k}^{\phi_3,\phi_4}$ in Eq.~\eqref{eq:OPE} can contain additional terms, corresponding to derivatives of three-point functions, as well as logarithmic terms.
The latter case is particularly interesting and relevant for percolation, as we will show (see also~\cite{CamiaFeng2024logarithmic}, where some of the results of this paper are announced and their consequences for the percolation CFT are discussed).
A CFT that contains logarithmic terms in its four-point functions and OPEs is called \emph{logarithmic} (see, e.g.,~\cite{creutzig2013logarithmic}).
Logarithmic CFTs emerge naturally in several contexts, have many applications and have attracted considerable attention since the work of Gurarie~\cite{Gur93}.
However, despite significant recent progress, the field of logarithmic CFTs is still considerably less developed than that of ordinary CFTs.

From an algebraic perspective, it is well understood that logarithmic CFTs are linked to non-diagonal representations of the Virasoro algebra and can be analyzed by studying the indecomposable modules of the Virasoro algebra. In recent years, this approach, combined with numerical techniques and conformal bootstrap methods, has led to tremendous progress, both on the side of a general theory of logarithmic CFTs and for specific models~\cite{JS19,PRS19,HJS20,NR21,NRJ23}.
However, these methods are not rigorous and, moreover, rely on the implicit assumption that an appropriate scaling limit exists and admits a
CFT description, as well as on additional assumptions on the field content of the putative CFT. In practice, when explicit expressions for correlation functions are found, they are typically obtained by solving differential equations derived from the Ward identities (see, e.g.,~\cite{francesco1997conformal}).
This approach does not explain the physical mechanism leading to the appearance of logarithms in terms of the lattice variables of the original model. Our analysis elucidates this physical mechanism for percolation, while providing a connection with the fundamental CFT concepts described in this section.

\subsection{Definitions and main results}
\label{subsec::def_result}
Let $\mathcal{T}$ denote the triangular lattice and let  $\mathcal{H}$ denote the hexagonal lattice dual to $\mathcal{T}$. Then each vertex of $\mathcal{T}$ corresponds to a face (that is, a hexagon) of $\mathcal{H}$ in a natural way and we often identify them. Assume that $\mathcal{T}$ and $\mathcal{H}$ are embedded in $\mathbb{C}$ in such a way that one of the vertices of $\mathcal{T}$ coincides with the origin of $\mathbb{C}$ (see Figure~\ref{fig::embedding}).
We consider critical site percolation on the scaled triangular lattice $a\mathcal{T}$, where each vertex of $a\mathcal{T}$ is assigned a black or white label independently, with equal probability. We denote this measure by $\mathbb{P}^a$. For a subgraph $G$ of $\mathcal{T}$, we define its (outer) boundary $\partial G$ as
\begin{equation*} 
\partial G:=\{v\in \mathcal{T}\setminus G: \exists u\in G \text{ such that }u\sim v \},
\end{equation*}
where $u\sim v$ denotes that $u$ and $v$ are adjacent in $G$.
We call a sequence of vertices $(v_1,\ldots,v_{n+1})$ a black (respectively, white) \emph{path} if $v_1,\ldots,v_{n+1}$ are all black (resp., white) vertices and $v_j \sim v_{j+1}$ for $j=1,\ldots,n$.
If $v_{n+1}=v_1$, then $(v_1,\ldots,v_{n+1})$ is called a black (resp., white) \emph{circuit}.

\begin{figure}
	\includegraphics[width= 0.3\textwidth]{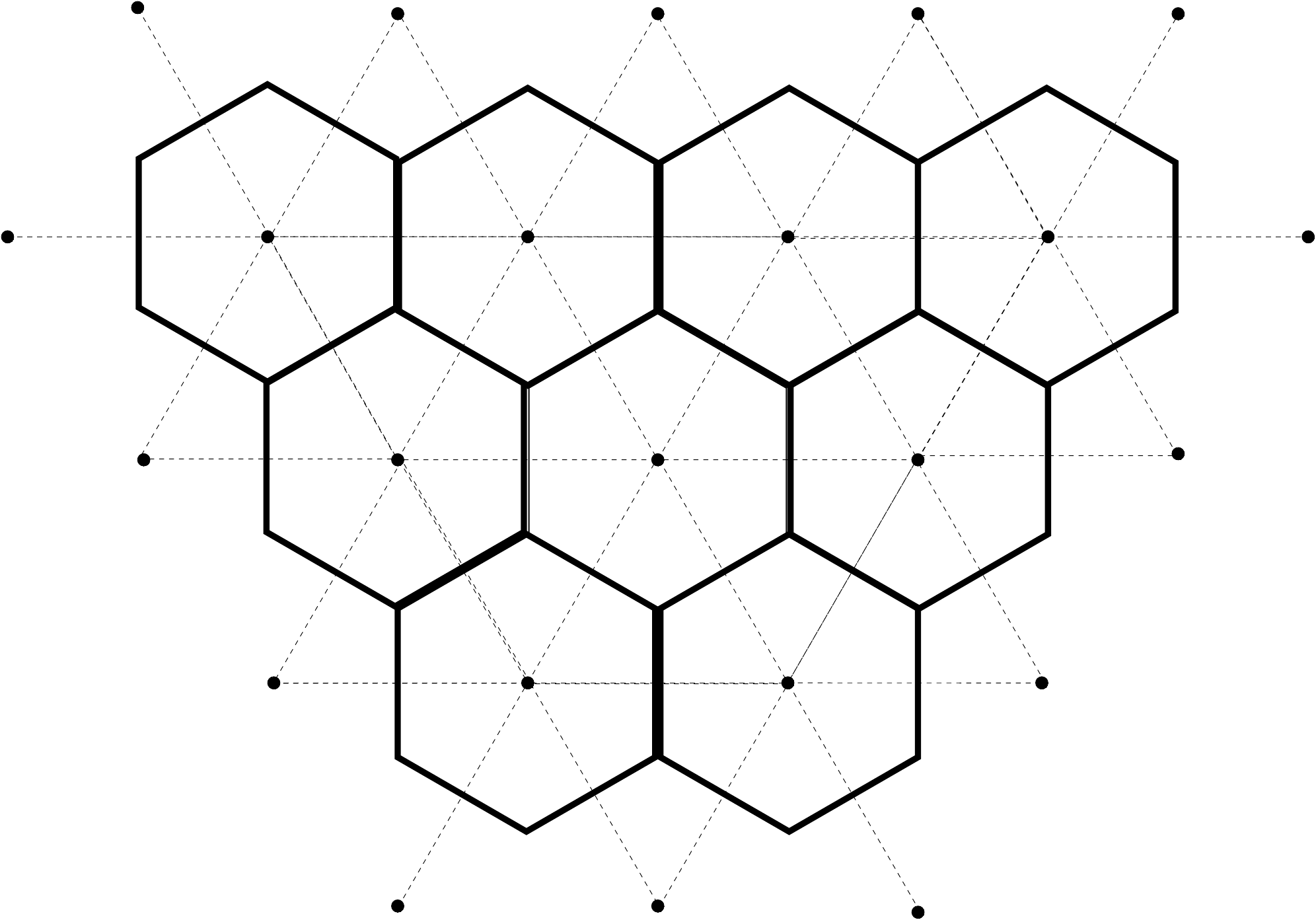}
	\caption{Embedding of the triangular and hexagonal lattices in $\mathbb{C}$.}
	\label{fig::embedding}
\end{figure}

Let $\{\mathcal{C}_j^a\}_j$ denote the collection of black clusters {(maximal connected components of the graph consisting of black vertices, with the adjacency relation $\sim$)} on $a\mathcal{T}$ and assign to each cluster $\mathcal{C}_j^a$ a random spin $\sigma_j=\pm 1$, where $\{\sigma_j\}_j$ is a collection of symmetric, $(\pm1)$-valued, i.i.d. random variables. Then for each $x^a\in a\mathcal{T}$, we let 
\begin{equation} \label{def:lattice-field}
	S_{x^a}=\begin{cases}
		\sigma_j,& \text{ if } x^a\in \mathcal{C}_j^a,\\
		0, & \text{ if }x^a \text{ is white}. 
	\end{cases}
\end{equation} 
We denote by $\langle \cdot \rangle^a$ the expectation with respect to the distribution of $\{S_{x^a}\}_{x^a\in a\mathcal{T}}$. 

The interest and relevance of the lattice field \eqref{def:lattice-field} stems from the fact that it provides an explicit version of the percolation \emph{density} or \emph{spin field} discussed in the physics literature.
As shown in \cite{Cam23}, its correlation functions, when appropriately rescaled, have a conformally covariant scaling limit as does the field itself.
This determines the two-point function up to a multiplicative constant, while all odd correlation functions are zero by symmetry.
The four-point function, discussed below, is not determined by conformal covariance and reveals the connection with the physics literature (see, e.g.,~\cite{GV18,HGJS20}).

Let $x_1,x_2,x_3,x_4\in \mathbb{C}$ be four distinct points on the complex plane and suppose that $x_1^a,x_2^a,x_3^a,x_4^a\in a\mathcal{T}$ satisfy $\lim_{a\to 0}x_j^a=x_j$ for $1\leq j\leq 4$. As explained in~\cite[Eq.~(1.9)]{Cam23}, we have 
\begin{align*}
	\langle S_{x_1^a}\ldots S_{x_4^a}\rangle^a=&\mathbb{P}^a\left[x_1^a\xlongleftrightarrow{B}x_2^a\xlongleftrightarrow{B}x_3^a\xlongleftrightarrow{B}x_4^a\right]+\mathbb{P}^a\left[x_1^a\xlongleftrightarrow{B}x_2^a\centernot{\xlongleftrightarrow{B}}x_3^a\xlongleftrightarrow{B}x_4^a\right]\\
	&+\mathbb{P}^a\left[x_1^a\xlongleftrightarrow{B}x_3^a\centernot{\xlongleftrightarrow{B}}x_2^a\xlongleftrightarrow{B}x_4^a\right]+\mathbb{P}^a\left[x_1^a\xlongleftrightarrow{B}x_4^a\centernot{\xlongleftrightarrow{B}}x_2^a\xlongleftrightarrow{B}x_3^a\right],
\end{align*}
{where $\{x^a_i \xlongleftrightarrow{B} x^a_j \}$ denotes the event that $x^a_i$ and $x^a_j$ belong to the same black cluster and $\{x^a_i \centernot{\xlongleftrightarrow{B}} x^a_j \}$ denotes its complement.}

For $z\in \mathbb{C}$ and $\epsilon>0$, we define $B_\epsilon(z)=\{w\in \mathbb{C}: |z-w|<\epsilon\}$ and write
\begin{equation*}
	\pi_a:=\mathbb{P}^a\left[0\xlongleftrightarrow{B}\partial B_{1}(0)\right],
\end{equation*}
where $\{0\xlongleftrightarrow{B}\partial B_{1}(0)\}$ denotes a \emph{one-arm event}, i.e., the event that there exists a black path connecting $0$ to $\partial B_{1}(0)$.

It is shown in the proof of~\cite[Theorem~1.5]{Cam23} that the following limits exist (their sum gives the function $\mathrm{C}_{D,n}$ in Eq.~(1.10) of~\cite[Theorem~1.5]{Cam23} for $n=4$):
\begin{align*}
	P(x_1\xlongleftrightarrow{B}x_2\xlongleftrightarrow{B}x_2\xlongleftrightarrow{B}x_4):=& \, \lim_{a\to 0} \pi_a^{-4}\times \mathbb{P}^a\left[x_1^a\xlongleftrightarrow{B}x_2^a\xlongleftrightarrow{B}x_3^a\xlongleftrightarrow{B}x_4^a\right],\\
	P(x_1\xlongleftrightarrow{B}x_2\centernot{\xlongleftrightarrow{B}}x_3\xlongleftrightarrow{B} x_4):=& \, \lim_{a\to 0} \pi_a^{-4}\times\mathbb{P}^a\left[x_1^a\xlongleftrightarrow{B}x_2^a\centernot{\xlongleftrightarrow{B}}x_3^a\xlongleftrightarrow{B}x_4^a\right],\\
	P(x_1\xlongleftrightarrow{B}x_3\centernot{\xlongleftrightarrow{B}}x_2\xlongleftrightarrow{B} x_4):=& \, \lim_{a\to 0} \pi_a^{-4}\times\mathbb{P}^a\left[x_1^a\xlongleftrightarrow{B}x_3^a\centernot{\xlongleftrightarrow{B}}x_2^a\xlongleftrightarrow{B}x_4^a\right],\\
	P(x_1\xlongleftrightarrow{B}x_4\centernot{\xlongleftrightarrow{B}}x_2\xlongleftrightarrow{B} x_3):=& \, \lim_{a\to 0} \pi_a^{-4}\times\mathbb{P}^a\left[x_1^a\xlongleftrightarrow{B}x_4^a\centernot{\xlongleftrightarrow{B}}x_2^a\xlongleftrightarrow{B}x_3^a\right].
\end{align*}
{Moreover, the limits are covariant under M\"obius transformations, in the sense of Eq.~\eqref{eq:conf-cov}.
The same covariance property, with exponents $\alpha_1=\ldots=\alpha_4=5/48$, is satisfied by the function}
\begin{align*}
	& C(x_1,x_2,x_3,x_4) := \lim_{a\to 0}\pi_a^{-4}\times \langle S_{x_1^a}\cdots S_{x_4^a}\rangle^a \\
	& \qquad = P(x_1\xlongleftrightarrow{B}x_2\xlongleftrightarrow{B}x_2\xlongleftrightarrow{B}x_4)+P(x_1\xlongleftrightarrow{B}x_2\centernot{\xlongleftrightarrow{B}}x_3\xlongleftrightarrow{B} x_4) \\
	& \qquad \quad + P(x_1\xlongleftrightarrow{B}x_3\centernot{\xlongleftrightarrow{B}}x_2\xlongleftrightarrow{B} x_4) + P(x_1\xlongleftrightarrow{B}x_4\centernot{\xlongleftrightarrow{B}}x_2\xlongleftrightarrow{B} x_3).
\end{align*}

As mentioned earlier, conformal covariance is not sufficient to determine the functional form of a function of four variables.
A possible asymptotic expansion of $C(x_1,x_2,x_3,x_4)$, when two of the four points are close to each other, is suggested in~\cite[Section~2.3]{Cam23}.
However, the heuristic analysis in~\cite[Section~2.3]{Cam23} fails to identify the presence of a logarithmic term, which is the most interesting feature of the expansion in Theorem~\ref{thm::expansion} below.

\begin{theorem} \label{thm::expansion}
	There are two universal constants $C_1,C_2\in(0,\infty)$ such that the following holds for the function $C(x_1,x_2,x_3,x_4)$ defined above:
	\begin{equation} \label{eqn::C_expansion}
		C(x_1,x_2,x_3,x_4)\sim C_1|x_2-x_1|^{-\frac{5}{24}}\left(|x_3-x_4|^{-\frac{5}{24}}+ C_2 |x_2-x_1|^{\frac{5}{4}} F(x,x_3,x_4) \left| \log\left|x_2-x_1\right|\right|\right),
	\end{equation}
as $x_1,x_2\to x\in \mathbb{C}\setminus \{x_3,x_4\}$, namely,
\begin{equation*}
	\lim_{x_1,x_2\to x} \frac{C(x_1,x_2,x_3,x_4)-C_1|x_3-x_4|^{-\frac{5}{24}}\times |x_2-x_1|^{-\frac{5}{24}}}{|x_2-x_1|^{\frac{25}{24}}\times \left| \log\left|x_2-x_1\right|\right|}= C_1C_2F(x,x_3,x_4),
\end{equation*}
where 
\begin{equation*} 
	F(x,x_3,x_4)=|x-x_3|^{-\frac{5}{4}}|x-x_4|^{-\frac{5}{4}}|x_3-x_4|^{\frac{25}{24}}. 
\end{equation*}
\end{theorem}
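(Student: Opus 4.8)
The plan is to collapse $C$ onto the ``$\{12\}\{34\}$'' connectivity channel, subtract the leading product of two-point functions, and then locate the logarithm inside the resulting correction by a scale-by-scale analysis near the collision point $x$.

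\emph{Reduction.} Using $\mathbb{P}^a[x_1^a\xlongleftrightarrow{B}x_2^a\xlongleftrightarrow{B}x_3^a\xlongleftrightarrow{B}x_4^a]+\mathbb{P}^a[x_1^a\xlongleftrightarrow{B}x_2^a\centernot{\xlongleftrightarrow{B}}x_3^a\xlongleftrightarrow{B}x_4^a]=\mathbb{P}^a[x_1^a\xlongleftrightarrow{B}x_2^a,\,x_3^a\xlongleftrightarrow{B}x_4^a]$ to merge the first two terms in the definition of $C$, one obtains the exact identity (all limits below existing by \cite{Cam23})
\begin{equation*}
C(x_1,x_2,x_3,x_4)=P\big(
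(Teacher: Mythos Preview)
Your proposal is truncated mid-equation (it ends at ``$P\big($'' with an unmatched delimiter), so only the outline is visible. That outline---merge the first two channels into $P(x_1\xlongleftrightarrow{B}x_2,\,x_3\xlongleftrightarrow{B}x_4)$, subtract the product $P(x_1\xlongleftrightarrow{B}x_2)P(x_3\xlongleftrightarrow{B}x_4)$, and analyze the remainder scale-by-scale near $x$---is exactly the paper's strategy: the cross channels $P(x_1\xlongleftrightarrow{B}x_3\centernot{\xlongleftrightarrow{B}}x_2\xlongleftrightarrow{B}x_4)$ and $P(x_1\xlongleftrightarrow{B}x_4\centernot{\xlongleftrightarrow{B}}x_2\xlongleftrightarrow{B}x_3)$ are bounded by $O(|x_2-x_1|^{25/24})$ via a four-arm estimate (Lemma~\ref{lem::expansion_aux3}), and the logarithm comes from the difference $P(x_1\xlongleftrightarrow{B}x_2,\,x_3\xlongleftrightarrow{B}x_4)-P(x_1\xlongleftrightarrow{B}x_2)P(x_3\xlongleftrightarrow{B}x_4)$ (Lemma~\ref{lem::expansion_aux1}).

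What is entirely missing from your fragment is the hard part: the proof of Lemma~\ref{lem::expansion_aux1}. The paper decomposes the difference over dyadic annuli $B_m\setminus B_{m-1}$ around $\frac{x_1+x_2}{2}$, uses independence to cancel the ``both connections stay on their own side of $\partial B_m$'' contribution (this is where independence is essential---see~\eqref{eqn::independence_log}), and then shows via coupling arguments (Lemmas~\ref{lem::expansion_aux13} and~\ref{lem::expansion_aux14}) that each of the $M\sim |\log|x_2-x_1||$ remaining annular terms contributes a fixed positive constant times $|x_2-x_1|^{25/24}$, summing to the logarithm. The positivity of the constant (Lemma~\ref{lem::expansion_aux14}, showing $V_3>V_2=\tfrac12$) requires a nontrivial FKG/symmetry argument. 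None of this is indicated in what you wrote, so while your plan is on the right track and matches the paper's, the proposal as it stands is not a proof.
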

\begin{remark}
We emphasize that the independence of labels at different vertices in percolation is crucial to get the logarithmic term in Theorem~\ref{thm::expansion}. More precisely, independence is needed to get~\eqref{eqn::independence_log} in the proof of Lemma~\ref{lem::expansion_aux1}. 
\end{remark}

\begin{remark} \label{rem::2nfunction}
		Let $n\geq 1$ be an integer and let $x_1,\ldots,x_{2n}\in \mathbb{C}$ be $2n$ distinct points. Suppose that $x_j^a\in a\mathcal{T}$ satisfy $\lim_{a\to 0} x_j^a=x_j$ for $1\leq j\leq 2n$. Then according to~\cite[Theorem~1.5]{Cam23}, the following limit also exists:
		\begin{equation*}
			C(x_1,x_2,\ldots,x_{2n}):=\lim_{a\to 0}\pi_a^{-2n}\times \langle S_{x_1^a}\cdots S_{x_{2n}^a}\rangle^a.
		\end{equation*}
		One can still use the strategy in the proof of Theorem~\ref{thm::expansion} to study the asymptotic expansion of the $2n$-point function $C(x_1,\ldots,x_{2n})$ when two of $\{x_1,x_2,\ldots,x_{2n}\}$ are close to each other. 
\end{remark}

The presence of a logarithm in \eqref{eqn::C_expansion} is particularly interesting because it provides a rigorous example of a logarithmic divergence in a percolation four-point function.
Although the large-scale properties of two-dimensional critical percolation are expected to be described by a logarithmic CFT  (see, e.g.,~\cite{mathieu2007percolation,VJS12}),
finding explicit examples of logarithmic singularities has been challenging and until now there was no rigorous proof of such a singularity in a correlation function.

Equation~\eqref{eqn::C_expansion} can be interpreted as an operator product expansion (OPE), as discussed in Section~\ref{sec::CFT_background}.
Writing \eqref{eqn::C_expansion} as
\begin{equation*}
C(x_1,x_2,x_3,x_4)\sim \frac{C_1}{|x_2-x_1|^{\frac{5}{24}}} |x_3-x_4|^{-\frac{5}{24}} + \frac{C_1C_2}{|x_2-x_1|^{\frac{5}{48}+\frac{5}{48}-\frac{5}{4}}} F(x,x_3,x_4)\left| \log\left|x_2-x_1\right|\right| + \ldots,
\end{equation*}
we can analyze its terms by comparing this expression with \eqref{eq:OPE} and learn something about the fusion rules of the putative percolation CFT, which determine what fields are present in the right-hand side of the OPE.
The first term in the right-hand side involves the spin (density) field \eqref{def:lattice-field}, with scaling dimension $5/48$, and the ``unit field,'' with scaling dimension $0$.
The second term involves the spin (density) field and a new ``field'' with scaling dimension $5/4$.
The discussion in~\cite[Section~2.3]{Cam23} and the proof of Theorem~\ref{thm::expansion} show that this term is related to the \emph{four-arm event}, i.e., the event that there are four paths with alternating labels (black/white) crossing an annulus.
In the physics literature, this event is related to the so-called \emph{energy field}.
The ellipsis represents possible contributions from other ``fields,'' but these are higher order terms, which vanish faster than $\vert x_1-x_2 \vert^{25/24}\vert\log\vert x_1-x_2 \vert\vert$ as $x_1,x_2 \to x$.
Hence, we can interpret the OPE above as saying that the ``fusion'' of two spin (density) fields produces the ``unit field'' and the ``four-arm/energy field,'' plus possibly other fields whose contributions to the four-point function \eqref{eqn::C_expansion} are of higher order in $\vert x_1-x_2 \vert$.
The reader is referred to~\cite{CamiaFeng2024logarithmic} for a more detailed discussion of the CFT implications of~\eqref{eqn::C_expansion}.

\begin{remark}
  The interpretation above is corroborated by the observation that the function $F(x,x_3,x_4)$ in Theorem~\ref{thm::expansion} can be obtained as the scaling limit of the three-point function $\langle \mathcal{E}_{x^a}S_{x_3^a}S_{x_4^a}\rangle^a$ involving the \emph{discrete energy} $\mathcal{E}_{x^a}:=S_{x^a-a}S_{x^a+a}-\langle S_{x^a-a}S_{x^a+a}\rangle^a$.
  More precisely, using the techniques outlined in Section~\ref{subsec::explain_proof} and applied in Section~\ref{sec::3}, one can show that
		\begin{align} \label{eqn::<ESS>}
			\lim_{a\to 0} a^{-\frac{5}{4}}\vert \log a\vert^{-1}\pi_a^{-2}\times \langle \mathcal{E}_{x^a}S_{x_3^a}S_{x_4^a}\rangle^a= C F(x,x_3,x_4), 
		\end{align}
		for some universal constant $C\in (0,\infty)$.
    The logarithmic scaling in \eqref{eqn::<ESS>} suggests that the correct normalization for the energy field $\mathcal{E}$ is $a^{-\frac{5}{4}}\vert \log a\vert^{-1}$. The presence of the logarithm in the normalization has interesting ``physical'' consequences, as discussed in~\cite{CamiaFeng2024logarithmic}. The behavior of the energy field is further investigated in the follow-up work~\cite{PercolationEnergy}.
\end{remark}

Our next result shows that the probabilities of certain events involving several four-arm events at different locations, when appropriately rescaled, have a conformally covariant scaling limit and behave like CFT correlation functions.
This supports the claim that, in the CFT description of critical percolation, one should associate a conformal field to the occurrence of a four-arm event.

\begin{figure} 
	\begin{subfigure}[b]{0.48\textwidth}
		\begin{center}
			\includegraphics[width=0.85\textwidth]{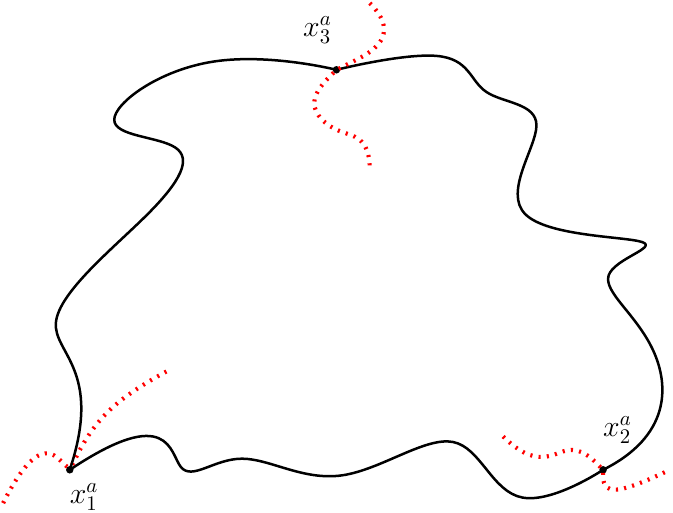}
		\end{center}
		\caption{The event $\mathcal{R}^a_3(x_1^a,x_2^a,x_3^a)$;}
	\end{subfigure}
	\begin{subfigure}[b]{0.48\textwidth}
		\begin{center}
			\includegraphics[width=0.85\textwidth]{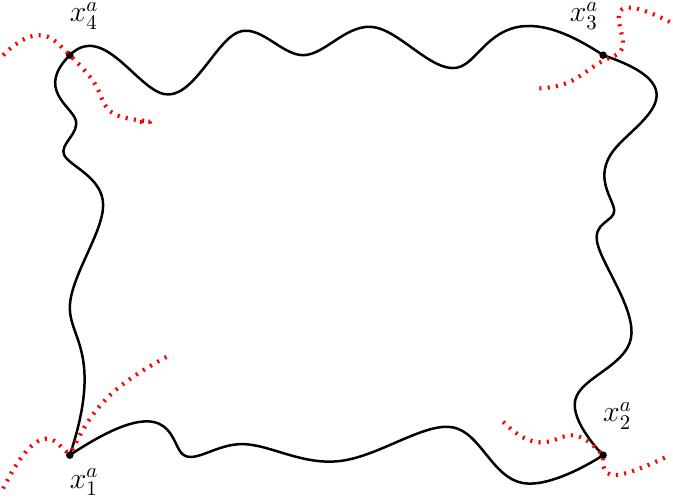}
		\end{center}
		\caption{The event $\mathcal{R}^a_4(x_1^a,x_2^a,x_3^a,x_4^a)$. }
	\end{subfigure}
	\caption{The events $\mathcal{R}_i^a(x_1^a,\ldots,x_i^a)$ for $i\in \{3,4\}$. The black, solid lines represent black paths and the red, dotted lines represent white paths. The $i$ black paths together with vertices $x_1^a,\ldots,x_i^a$ form naturally a lattice circuit. Note that the $i$ white paths that are inside (resp., outside) of this circuit belong to the same white cluster.}
	\label{fig::Rfigures}
	\end{figure}

Let $x_1^a,x_2^a,x_3^a,x_4^a\in a\mathcal{T}$ be four vertices satisfying $|x_j^a-x_k^a|\geq 10a$ for $1\leq j<k\leq 4$. For $i\in \{3,4\}$, we define $\mathcal{R}_i^a(x_1^a,\ldots,x_i^a)$ to be the event that there are $i$ black paths belonging to $i$ different black clusters connecting $x_j^a$ to $x_{j+1}^a$ for $1\leq j\leq i$, respectively, where we use the convention that $x_{i+1}^a=x_1^a$. See Figure~\ref{fig::Rfigures} for an illustration of the events $\mathcal{R}_i^a(x_1^a,\ldots,x_i^a)$ for $i\in \{3,4\}$.
Moreover, we write
\begin{equation*}
	\orho_a=\mathbb{P}^a\left[0\xlongleftrightarrow{BWBW}\partial B_1(0)\right],
\end{equation*}
where $\{0\xlongleftrightarrow{BWBW}\partial B_1(0)\}$ denotes the event that there are four paths with alternating {labels (black/white)} connecting $0$ to $\partial B_1(0)$.

\begin{theorem} \label{thm::4arm_expansion}
Let $x_1,\ldots,x_4\in \mathbb{C}$ be four distinct points.  Suppose that $x_1^a,\ldots,x_4^a\in a\mathcal{T}$ satisfy $x_j^a\to x_j$ as $a\to 0$ for $1\leq j\leq 4$. Let $i\in \{3,4\}$. Then the following statements hold:
\begin{enumerate}[label=\textnormal{(\arabic*)}, ref=\arabic*]
	\item \label{item::existence_limit} 
	The limit
	\begin{equation} \label{eqn::def_R}
		R_i(x_1,\ldots,x_i):=\lim_{a\to 0} \overline{\rho}_a^{-i} \times \mathbb{P}^a\left[\mathcal{R}_i^a\left(x_1^a,\ldots,x_i^a\right)\right]
	\end{equation}
	exists and belongs to $(0,\infty)$. 
\item \label{item::conformal_cova} 
Let $\varphi$ be any non-constant M\"obius transformation such that $\varphi(x_j)\neq \infty$ for $1\leq j\leq 4$. Then we have 
\begin{equation*}
	R_i(\varphi(x_1),\ldots,\varphi(x_i))=R_i(x_1,\ldots,x_i) \times \prod_{j=1}^i |\varphi'(x_j)|^{-\frac{5}{4}}.
\end{equation*} 
As a consequence, there exists a universal constant $C_3\in (0,\infty)$ such that 
\begin{equation*}
R_3(x_1,x_2,x_3)=C_3|x_1-x_2|^{-\frac{5}{4}}|x_1-x_3|^{-\frac{5}{4}}|x_2-x_3|^{-\frac{5}{4}}.
\end{equation*} 
\item  \label{item::expansion}
Let $x\in \mathbb{C}\setminus \{x_1,x_2\}$. Then there exists a universal constant $C_4\in (0,\infty)$ such that 
\begin{equation} \label{eqn::expansion_R}
	\lim_{x_3,x_4 \to x} \frac{R_4(x_1,x_2,x_3,x_4)}{|x_3-x_4|^{-\frac{5}{4}}}=C_4R_3(x_1,x_2,x). 
\end{equation}
\end{enumerate}
\end{theorem}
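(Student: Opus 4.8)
The plan is to handle parts~\eqref{item::existence_limit} and~\eqref{item::conformal_cova} by the methods already developed in \cite{Cam23} for connection probabilities, and to spend the real effort on part~\eqref{item::expansion}. All three parts rely on standard inputs for critical site percolation on $\mathcal{T}$: quasi-multiplicativity and arm-separation for the polychromatic four-arm event; convergence of critical percolation to its conformally invariant scaling limit, hence convergence of macroscopic multi-arm connection events; and the fact that the polychromatic four-arm exponent equals $5/4$. For part~\eqref{item::existence_limit} I would mimic the proof of the analogous statements in \cite{Cam23}, replacing the one-arm normalization $\pi_a$ by $\orho_a$. Near each $x_j^a$ the event $\mathcal{R}_i^a$ forces a polychromatic four-arm event — the two black arcs of the circuit meeting at $x_j^a$ lie in distinct clusters, hence are separated by two white arms — while globally it forces a macroscopic connection configuration. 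Fixing $\delta$ below a fraction of $\min_{j\neq k}|x_j-x_k|$ and separating scales in each $B_\delta(x_j)$ gives $\mathbb{P}^a[\mathcal{R}_i^a]=(1+o_a(1))\big(\prod_{j=1}^{i}\mathbb{P}^a[\text{four-arm }x_j^a\to\partial B_\delta(x_j)]\big)G^a_\delta$, where $G^a_\delta$ converges as $a\to 0$ by scaling-limit convergence; dividing by $\orho_a^i$, using $\mathbb{P}^a[\text{four-arm }0\to\partial B_\delta]/\orho_a\to\lambda(\delta)\in(0,\infty)$, and checking that refining $\delta$ does not change the answer, produces $R_i\in(0,\infty)$.

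Part~\eqref{item::conformal_cova} then follows from conformal invariance of the scaling limit: it transforms the macroscopic factor covariantly, a Möbius map $\varphi$ sends $B_\delta(x_j)$ to a set comparable to $B_{|\varphi'(x_j)|\delta}(\varphi(x_j))$, so the local factors contribute $\prod_j\lambda(|\varphi'(x_j)|\delta)/\lambda(\delta)$, and since $\lambda(\delta)\sim c\,\delta^{-5/4}$ — equivalently $\orho_{a/\rho}/\orho_a\to\rho^{-5/4}$, which follows from the scaling limit together with the exponent being $5/4$ — one gets exactly $\prod_j|\varphi'(x_j)|^{-5/4}$. The closed form of $R_3$ is pure rigidity: the function $R_3(x_1,x_2,x_3)\prod_{i<j}|x_i-x_j|^{5/4}$ is Möbius-invariant, because $|\varphi(x_i)-\varphi(x_j)|=|\varphi'(x_i)\varphi'(x_j)|^{1/2}|x_i-x_j|$ for Möbius $\varphi$, and the Möbius group is transitive on triples of distinct points, so this product is a positive universal constant $C_3$.

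For part~\eqref{item::expansion}, fix $x\notin\{x_1,x_2\}$ and let $x_3,x_4\to x$ with $\eta:=|x_3-x_4|\to 0$. On $\mathcal{R}_4^a$ the short black arc joining $x_3^a$ to $x_4^a$ stays within distance $O(\eta)$ of $x$, while the two remaining black arcs — heading to $x_1$ and $x_2$ — enter a ball $B_r(x)$ as two arms of distinct clusters separated by white, that is, a single polychromatic four-arm event in the annulus around $x$ between scales $\sim\eta$ and $r$. Choosing $r$ with $\eta\ll r\ll\dist(x,\{x_1,x_2\})$ and separating scales around $x$ exactly as in part~\eqref{item::existence_limit} (quasi-multiplicativity, arm separation, FKG, RSW), I would establish
\begin{equation*}
R_4(x_1,x_2,x_3,x_4)=(1+o_r(1))\,\frac{\Xi_r(x_3,x_4)}{\Lambda_r(x)}\,R_3(x_1,x_2,x),
\end{equation*}
with the error term uniform in $\eta\leq r/10$, where $\Lambda_r(x):=\lim_a\orho_a^{-1}\mathbb{P}^a[\text{four-arm }x^a\to\partial B_r(x)]$ and $\Xi_r(x_3,x_4):=\lim_a\orho_a^{-2}\mathbb{P}^a[\text{two four-arm events at }x_3^a,x_4^a\text{ fusing into a single four-arm reaching }\partial B_r(x)]$, both limits existing as in part~\eqref{item::existence_limit}. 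Translation, rotation and scale covariance force $\eta^{5/4}\Xi_r(x_3,x_4)/\Lambda_r(x)=\Psi(r/\eta)$ for a single function $\Psi$, the $5/4$ here being the four-arm exponent. It then remains to show $\lim_{t\to\infty}\Psi(t)=:C_4\in(0,\infty)$; combining this with the factorization (first $\eta\to 0$ for fixed $r$, then $r\to 0$) yields $\lim_{x_3,x_4\to x}\eta^{5/4}R_4(x_1,x_2,x_3,x_4)=C_4R_3(x_1,x_2,x)$, which is~\eqref{eqn::expansion_R}.

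Alternatively, and perhaps more transparently, once the scale-separation estimate above is known to yield a genuine limit $L(x_1,x_2,x):=\lim_{x_3,x_4\to x}\eta^{5/4}R_4(x_1,x_2,x_3,x_4)\in(0,\infty)$, part~\eqref{item::conformal_cova} pins down its form without computing $\Psi$: using the Möbius covariance of $R_4$ with exponent $5/4$ at each point together with $|\varphi(x_3)-\varphi(x_4)|=|\varphi'(x)|\,\eta\,(1+o(1))$, a short computation shows $L$ is Möbius-covariant in $(x_1,x_2,x)$ with exponent $5/4$ at each point; by the rigidity above, $L=C\prod_{i<j}|x_i-x_j|^{-5/4}$, so $L=(C/C_3)R_3(x_1,x_2,x)$. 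The main obstacle I anticipate is precisely the percolation input behind the displayed factorization and the existence of $\lim_t\Psi(t)$: one must decouple three scale ranges around $x$ (lattice to $\eta$, $\eta$ to $r$, $r$ to macroscopic), show that the fusion of the two microscopic four-arm configurations at $x_3^a,x_4^a$ into a single four-arm configuration at scale $\eta$ admits a well-defined rescaled limit, and — the genuinely delicate point — control the gluing constants well enough that these relations hold not merely up to bounded factors but with errors converging to zero, which requires an additional near-multiplicativity argument on top of quasi-multiplicativity.
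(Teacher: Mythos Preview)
Your proposal is correct and follows essentially the same route as the paper: parts~(1)--(2) via scale separation at each $x_j$ combined with the four-arm coupling and the methods of \cite{Cam23}, and part~(3) via separating an intermediate annulus around $x$, extracting a four-arm factor there, and then using M\"obius rigidity to identify the limit with $C_4R_3$. The obstacle you correctly flag---turning quasi-multiplicativity into an honest limit for the ``fusion'' factor---is exactly what the paper resolves, using the faces/coupling argument of \cite{GarbanPeteSchrammPivotalClusterInterfacePercolation} (Lemmas~\ref{lem::four_arm_coupling_inner} and~\ref{lem::faces}) together with a concrete interior/exterior decomposition at scales $\varkappa^{1/2}$ and $\varkappa^{1/3}$ and a five-arm error bound, yielding a Cauchy sequence in $\varkappa=|x_3-x_4|/2$.
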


\begin{remark}
As in Remark~\ref{rem::2nfunction}, one can define $R_{i}(x_1,\ldots,x_i)$ for $i>4$ and study the asymptotic expansion of $R_i$ when two of $\{x_1,\ldots,x_i\}$ are close to each other, using the strategy in the proof of Theorem~\ref{thm::4arm_expansion}, Item~\ref{item::expansion}.
\end{remark}

Theorem \ref{thm::4arm_expansion} shows that the function $R_i$, which describes the correlation between $i$ four-arm events, is conformally covariant.
Moreover, Eq.~\eqref{eqn::expansion_R} can be interpreted as an OPE.
Writing \eqref{eqn::expansion_R} as
\begin{equation*}
    R_4(x_1,x_2,x_3,x_4) = C_3 \, C_4 \, |x_3-x_4|^{\frac{5}{4}} |x_1-x_2|^{-\frac{5}{4}}|x_1-x|^{-\frac{5}{4}}|x_2-x|^{-\frac{5}{4}} + o(|x_3-x_4|^{\frac{5}{4}}) \; \text{ as } x_3,x_4 \to x
\end{equation*}
and comparing this expression with \eqref{eq:OPE}, we see that, at the lowest order in $\vert x_3-x_4 \vert$, the ``fusion'' of two ``four-arm fields'' produces a ``four-arm field.''

To state our next result, we need to define the concept of \emph{pivotal vertex}.
Given an event $\mathcal{A}$ and a percolation configuration in which all vertices except $z^a$ have been assigned a label, we say that $z^a$ is \emph{pivotal for} $\mathcal{A}$ if $\mathcal{A}$ occurs when $z^a$ is black and does not occur when $z^a$ is white.
The event that a vertex $z^a$ is pivotal for $\mathcal{A}$ is the set of all percolation configurations such that $z^a$ is pivotal for $\mathcal{A}$.

We write
\begin{equation*}
	\orho_a=\mathbb{P}^a\left[0\xlongleftrightarrow{BWBW}\partial B_1(0)\right],
\end{equation*}
where $\{0\xlongleftrightarrow{BWBW}\partial B_1(0)\}$ denotes the event that there are four paths with alternating labels (black and white) connecting $0$ to $\partial B_1(0)$.

\begin{theorem} \label{thm::pivotal}
	Suppose that $0<\theta^1<\theta^2<\theta^3<\theta^4<\pi$ and write $x_j=\exp(2\ii\theta^j)$ for $1\leq j\leq 4$ and $(x_j x_{j+1}):=\{\exp(2\ii\theta): \theta^j<\theta<\theta^{j+1}\}$, with the convention that $\theta^5=\theta^1$. Let $\Omega^a=B_1(0)\cap a\mathcal{T}$ and assume that $x_1^a,x_2^a,x_3^a,x_4^a\in \partial \Omega^a$ satisfy $x_j^a\to x_j$ as $a\to 0$ for $1\leq j\leq 4$. Then there exists a universal constant $C_5\in (0,\infty)$ such that
	\begin{equation*}
\lim_{a\to 0} \orho_a^{-1}\times \mathbb{P}^a\left[ 0\text{ is pivotal for }\{(x_1^a x_2^a)\xlongleftrightarrow[\Omega^a]{B} (x_3^ax_4^a)\} \right]=C_5\prod_{1\leq j<k\leq 4}	|\sin(\theta^k-\theta^j)|^{\frac{1}{3}},
	\end{equation*}
where $\{(x_1^a x_2^a)\xlongleftrightarrow[\Omega^a]{B} (x_3^ax_4^a)\}$ denotes the event that there exists a black path contained in $\Omega^a$ and connecting $(x_1^ax_2^a)$ to $(x_3^ax_4^a)$. 
\end{theorem}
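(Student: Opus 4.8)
The plan is to recast the pivotal event as a polychromatic four‑arm event reaching four macroscopic boundary arcs, to show that its rescaled probability converges by the method used for Theorem~\ref{thm::4arm_expansion}, and then to identify the limit. For the first step, note that by the self‑duality of critical site percolation on $\mathcal T$ the complement of $\{(x_1^a x_2^a)\xlongleftrightarrow[\Omega^a]{B}(x_3^a x_4^a)\}$ is the event that there is a white path in $\Omega^a$ from $(x_2^a x_3^a)$ to $(x_4^a x_1^a)$; hence, on the event that $0$ is pivotal, declaring the hexagon of $0$ black creates a black crossing from $(x_1^ax_2^a)$ to $(x_3^ax_4^a)$ through $0$, while declaring it white creates a white crossing from $(x_2^ax_3^a)$ to $(x_4^ax_1^a)$ through $0$. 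Topologically, the two halves of these two crossings form four arms emanating from the hexagon of $0$ which, read in cyclic order, are black, white, black, white and land on the four arcs $(x_1^ax_2^a)$, $(x_2^ax_3^a)$, $(x_3^ax_4^a)$, $(x_4^ax_1^a)$, respectively; conversely, such a configuration makes $0$ pivotal. Writing $\mathcal A^a$ for this color‑ and landing‑prescribed four‑arm event, we have $\mathbb P^a[\,0\text{ pivotal}\,]=\mathbb P^a[\mathcal A^a]\le\orho_a$.

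Next, following the strategy of the proof of Theorem~\ref{thm::4arm_expansion}, Item~\ref{item::existence_limit}, one shows that $\orho_a^{-1}\mathbb P^a[\mathcal A^a]$ is Cauchy as $a\to 0$: couple the configurations at two mesh sizes; use Kesten's separation‑of‑arms and quasi‑multiplicativity estimates, in their versions for an annular neighborhood of $0$ and for boundary neighborhoods near the arcs and the points $x_j^a$, to localize and decouple the microscopic regions; and invoke the convergence of macroscopic connectivity events for critical percolation (equivalently, the convergence of the interface loops to $\CLE_6$), together with the continuity of its boundary touchings, to control the mesoscopic and larger scales. The normalization by $\orho_a$ exactly compensates the microscopic four‑arm cost at $0$. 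This yields a finite positive limit, which we denote $R_0(0;x_1,\dots,x_4)$; the same construction in a general simply connected domain $\Omega$ with a marked interior point $z$ and four marked boundary points gives $R_0(z;x_1,\dots,x_4)$.

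For the identification, arm separation, the lattice scale covariance and the conformal invariance of the scaling limit give the covariance
\begin{equation*}
R_0\big(\varphi(z);\varphi(x_1),\dots,\varphi(x_4)\big)=|\varphi'(z)|^{-5/4}\Big(\prod_{j=1}^{4}|\varphi'(x_j)|^{-1/3}\Big)R_0(z;x_1,\dots,x_4)
\end{equation*}
for every conformal $\varphi$, with $5/4$ the bulk polychromatic four‑arm exponent and $1/3$ the boundary exponent of the associated operator (the one governing Cardy's formula). For $\Omega=\unitD$, $z=0$ and $x_j=\exp(2\ii\theta^j)$ the Möbius maps fixing $0$ are rotations, so $R_0(0;\cdot)$ depends only on the angular gaps $\theta^j-\theta^k$; since conformal covariance still leaves three undetermined moduli, the precise form must be obtained by a computation in the continuum. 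There one analyzes the macroscopic four‑arm configuration through $\SLE_6/\CLE_6$: by locality this reduces to Cardy‑type crossing probabilities, i.e.\ to combinations of $\hF(1/3,2/3;4/3;\cdot)$ and their degenerations as the arcs shrink, together with known $\SLE_6$ Green's‑function‑type quantities for the four‑arm insertion at $0$. Carrying this out, and using the rotational and black/white symmetries (under which the answer must be invariant), gives $R_0(0;x_1,\dots,x_4)=C_5\prod_{1\le j<k\le 4}|\sin(\theta^k-\theta^j)|^{1/3}$ for a universal $C_5\in(0,\infty)$, proving the theorem.

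The two crux points are the second and third steps. In the second one must upgrade the two‑sided bounds of quasi‑multiplicativity to a genuine limit, which requires the sharp four‑arm asymptotics near $0$ and the convergence of a macroscopic four‑arm event of \emph{mixed colors} reaching the boundary — so the full scaling limit of percolation, not merely Cardy's formula, is needed, and the boundary versions of the separation lemmas must be handled with care. In the third an honest continuum computation is unavoidable, since conformal covariance leaves a function of three variables undetermined; that the answer is the clean product $\prod_{j<k}|\sin(\theta^k-\theta^j)|^{1/3}$, symmetric in all six pairs although a priori only the dihedral symmetry in $x_1,\dots,x_4$ is forced, is the non‑trivial output of that computation.
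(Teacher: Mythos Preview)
Your reduction of the pivotal event to a four‑arm event with prescribed landing arcs is correct and matches the paper, and your convergence argument (coupling plus separation of arms plus continuity in the full scaling limit) is essentially the paper's Step~1: the paper writes the rescaled probability as a conditional probability of a macroscopic event given a microscopic four‑arm event, and uses the coupling of Lemma~\ref{lem::four_arm_coupling_inner} together with~\eqref{eqn::scal_four} to pass to the limit.

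Two points, however, are genuine gaps.

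First, your covariance formula is wrong: the arcs $(x_jx_{j+1})$ are macroscopic, so there is no local ``field insertion'' at the marked boundary points and no factor $|\varphi'(x_j)|^{-1/3}$. The correct covariance is only $|\varphi'(z)|^{-5/4}$ at the interior point, as in Corollary~\ref{prop::pivotal}. In particular, covariance alone does not reduce the problem to a function of fewer variables in the way you suggest, and the appearance of $1/3$ in the final formula is not explained by a boundary scaling dimension.

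Second, and more seriously, your identification step is a placeholder, not a proof. Saying ``carrying this out\ldots gives $\prod_{j<k}|\sin(\theta^k-\theta^j)|^{1/3}$'' hides the entire content. The paper's route is concrete: it factorizes the limit as
\[
\Big(\delta_m^{-5/4}\,\mathbb P\big[\mathcal F_{\delta_m}(x_1,\dots,x_4)\big]\Big)\times f(x_1,\dots,x_4;\delta_m),
\]
shows via a coupling of faces (Lemma~\ref{lem::four_aux2}) that $f(\cdot;\delta_m)$ tends to a \emph{universal} constant independent of the $x_j$, and computes the first factor (Lemma~\ref{lem::four_aux1}) by (i) identifying the macroscopic four‑arm‑to‑arcs event with $\{d(0,\gamma_j)\le\delta_m,\,1\le j\le 4\}$ for the multiple‑$\SLE_6$ system in $\unitD$ (Lemma~\ref{lem::four_aux11}) and (ii) invoking Zhan's explicit Green's function for $2$‑$\SLE_\kappa$ at $\kappa=6$ (Lemma~\ref{lem::four_aux12}), which is precisely where the product $\prod_{j<k}|\sin(\theta^k-\theta^j)|^{1/3}$ comes from. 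Your sketch mentions ``$\SLE_6$ Green's‑function‑type quantities'' but neither names this input nor explains the factorization that isolates it; without Zhan's result (or an equivalent explicit computation) the formula is simply asserted.
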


\begin{remark}
    Conformal covariance implies that Theorem~\ref{thm::pivotal} can be easily extended to more general domains and sets of points.
\end{remark}

{We now turn to percolation on the upper half-plane $\mathbb{H}:=\{z \in \mathbb{C}:\Im z>0\}$, which can be realized by deterministically declaring white all vertices contained in the lower half-plane $\mathbb{C} \setminus \overline{\mathbb{H}}$.
Our first result in this context is an analog of Theorem~\ref{thm::expansion} and concerns the lattice field \eqref{def:lattice-field} defined on the closed upper half-plane $\overline{\mathbb{H}}$, that is, with $S_{x^a}=0$ for all $x^a \in \mathbb{C} \setminus \overline{\mathbb{H}}$.
With this definition, if $x_1^a<x_2^a<x_3^a<x_4^a \in a\mathcal{T}\cap\mathbb{R}$, we have 
\begin{align*}
	& \langle S_{x_1^a}\ldots S_{x_4^a}\rangle^a_{\mathbb{H}}=\mathbb{P}^a\left[x_1^a \xlongleftrightarrow[\mathbb{H}]{B} x_2^a \xlongleftrightarrow[\mathbb{H}]{B} x_3^a \xlongleftrightarrow[\mathbb{H}]{B} x_4^a\right]+\mathbb{P}^a\left[x_1^a \xlongleftrightarrow[\mathbb{H}]{B} x_2^a \centernot{\xlongleftrightarrow[\mathbb{H}]{B}} x_3^a \xlongleftrightarrow[\mathbb{H}]{B} x_4^a\right] \\
	& \qquad \qquad \qquad \quad + \mathbb{P}^a\left[x_1^a \xlongleftrightarrow[\mathbb{H}]{B} x_4^a \centernot{\xlongleftrightarrow[\mathbb{H}]{B}} x_2^a \xlongleftrightarrow[\mathbb{H}]{B} x_3^a\right],
\end{align*}
where the subscript $\mathbb{H}$ in $\langle\cdot\rangle^a_{\mathbb{H}}$ indicates that $S_{x^a}=0$ for all $x^a \in \mathbb{C} \setminus \overline{\mathbb{H}}$, $\{x^a_i \xlongleftrightarrow[\mathbb{H}]{B} x^a_j \}$ denotes the event that $x^a_i$ and $x^a_j$ are connected by a black path contained in the upper half-plane and $\{x^a_i \centernot{\xlongleftrightarrow[\mathbb{H}]{B}} x^a_j \}$ denotes its complement.
Now let
\begin{equation} \label{def::opi}
	\opi_a:=\mathbb{P}^a\left[0\xlongleftrightarrow[\mathbb{H}]{B}\partial B_{1}(0)\right],
\end{equation}
where $\{0\xlongleftrightarrow{B}\partial B_{1}(0)\}$ denotes a \emph{boundary one-arm event}, i.e., the event that there exists a black path in $\mathbb{H}$ connecting $0$ to $\partial B_{1}(0)$.

\begin{theorem} \label{thm::expansion-C_H}
	Consider $x_1<x_2<x_3<x_4\in \mathbb{R}$ and assume that $x_1^a<x_2^a<x_3^a<x_4^a \in a\mathcal{T}\cap\mathbb{R}$ satisfy $x_j^a\to x_j$ as $a\to 0$ for $1\leq j\leq 4$. Then
	\begin{equation*}
		C_{\mathbb{H}}(x_1,x_2,x_3,x_4) := \lim_{a\to 0}\opi_a^{-4}\times \langle S_{x_1^a}\cdots S_{x_4^a}\rangle^a_{\mathbb{H}}
	\end{equation*}
exists and belongs to $(0,\infty)$.
Moreover, for any non-constant M\"obius transformation $\varphi: \overline{\mathbb{H}}\to \overline{\mathbb{H}}$ with $\varphi(x_1),\varphi(x_2), \varphi(x_3), \varphi(x_4) \neq \infty$, we have 
\begin{equation} \label{eqn::conf-cov-C_H}
	C_{\mathbb{H}}(\varphi(x_1),\varphi(x_2),\varphi(x_3),\varphi(x_4)) = C_{\mathbb{H}}(x_1,x_2,x_3,x_4)\times \prod_{j=1}^4 |\varphi'(x_j)|^{-\frac{1}{3}}. 
\end{equation}
	Furthermore, there are two universal constants $C^{\mathbb{H}}_1,C^{\mathbb{H}}_2\in(0,\infty)$ such that
	\begin{equation} \label{eqn::C_H-expansion}
		C_{\mathbb{H}}(x_1,x_2,x_3,x_4)\sim C^{\mathbb{H}}_1(x_2-x_1)^{-\frac{2}{3}}\left((x_4-x_3)^{-\frac{2}{3}}+ C^{\mathbb{H}}_2 (x_2-x_1)^{2} F_{\mathbb{H}}(x,x_3,x_4) \left|\log(x_2-x_1)\right|\right),
	\end{equation}
as $x_1,x_2\to x<x_3$, namely,
\begin{equation*}
	\lim_{x_1,x_2\to x} \frac{C_{\mathbb{H}}(x_1,x_2,x_3,x_4)-C^{\mathbb{H}}_1(x_4-x_3)^{-\frac{2}{3}}\times (x_2-x_1)^{-\frac{2}{3}}}{(x_2-x_1)^{\frac{4}{3}}\times \left| \log(x_2-x_1)\right|}= C^{\mathbb{H}}_1C^{\mathbb{H}}_2F_{\mathbb{H}}(x,x_3,x_4),
\end{equation*}
where 
\begin{equation*} \label{def:F}
	F_{\mathbb{H}}(x,x_3,x_4)=(x_3-x)^{-2}(x_4-x)^{-2}(x_4-x_3)^{\frac{4}{3}}.
\end{equation*}
\end{theorem}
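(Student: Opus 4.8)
The plan is to reproduce, in the half-plane, the proof of Theorem~\ref{thm::expansion}, systematically replacing bulk arm events and exponents by their boundary counterparts. Since $x_1<x_2<x_3<x_4$ all lie on $\mathbb{R}$, planarity forbids the ``crossing'' partition $\{13|24\}$, so $\langle S_{x_1^a}\cdots S_{x_4^a}\rangle^a_{\mathbb{H}}$ is the sum of exactly the three connection probabilities displayed before the statement (partitions $\{1234\}$, $\{12|34\}$, $\{14|23\}$). For each of these, as well as for $\mathbb{P}^a[x_1^a\xlongleftrightarrow[\mathbb{H}]{B}x_2^a]$ and $\mathbb{P}^a[x_3^a\xlongleftrightarrow[\mathbb{H}]{B}x_4^a]$, I would first establish that after rescaling by the appropriate power of $\opi_a$ (see \eqref{def::opi}) the limit exists and is Möbius covariant with exponent $1/3$, i.e.\ \eqref{eqn::conf-cov-C_H} --- here $1/3$ is the boundary one-arm exponent, so that $\opi_a$ is of order $a^{1/3}$. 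This is the half-plane analog of \cite[Theorem~1.5]{Cam23}: existence follows exactly as there, with boundary arm-event scaling limits and quasi-multiplicativity (available from convergence of the percolation exploration path to $\mathrm{SLE}_6$ and boundary arm-exponent estimates) in place of the bulk ones; covariance follows by the argument of \cite[Proofs of Theorems~1.1 and~1.4]{Cam23}, which is lighter here because the Möbius maps preserving $\overline{\mathbb{H}}$ form a smaller group. This yields existence of $C_{\mathbb{H}}$, the covariance \eqref{eqn::conf-cov-C_H}, and reduces \eqref{eqn::C_H-expansion} to the asymptotics of the three connectivity terms as $x_1,x_2\to x$.

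For the expansion I would first peel off the leading term. Because odd correlations vanish, $\langle S_{x_1^a}\cdots S_{x_4^a}\rangle^a_{\mathbb{H}}-\mathbb{P}^a[x_1^a\xlongleftrightarrow[\mathbb{H}]{B}x_2^a]\,\mathbb{P}^a[x_3^a\xlongleftrightarrow[\mathbb{H}]{B}x_4^a]$ is precisely the connected four-point function, and one checks directly from the three-term decomposition that it equals $\mathbb{P}^a[14|23]+\mathrm{Cov}^a$, where $\mathrm{Cov}^a$ is the covariance of the indicators of $\{x_1^a\xlongleftrightarrow[\mathbb{H}]{B}x_2^a\}$ and $\{x_3^a\xlongleftrightarrow[\mathbb{H}]{B}x_4^a\}$; in particular the $\{1234\}$ term has already cancelled. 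Dividing by $\opi_a^4$ and letting $a\to0$, the subtracted product converges to $C_{\mathbb{H}}(x_1,x_2)\,C_{\mathbb{H}}(x_3,x_4)$, and by covariance and the boundary two-point formula $C_{\mathbb{H}}(x_i,x_j)=\mathrm{const}\cdot|x_i-x_j|^{-2/3}$ this is exactly $C^{\mathbb{H}}_1(x_2-x_1)^{-2/3}(x_4-x_3)^{-2/3}$ for a suitable $C^{\mathbb{H}}_1\in(0,\infty)$; by quasi-multiplicativity near $x$ and the approximate independence of the configurations near $x$ and near $[x_3,x_4]$, this is the dominant behavior. It remains to show that $\opi_a^{-4}\big(\mathbb{P}^a[14|23]+\mathrm{Cov}^a\big)$ converges, as $a\to0$ and then $x_1,x_2\to x$, to $C^{\mathbb{H}}_1C^{\mathbb{H}}_2(x_2-x_1)^{4/3}F_{\mathbb{H}}(x,x_3,x_4)\,|\log(x_2-x_1)|$.

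The heart of the matter --- and the step I expect to be the main obstacle --- is the appearance of the logarithm, i.e.\ the half-plane analog of Lemma~\ref{lem::expansion_aux1} and of equation~\eqref{eqn::independence_log}. Both $\{14|23\}$ and the bulk of $\mathrm{Cov}^a$ force the black cluster of $\{x_1^a,x_2^a\}$ to escape a neighborhood of $x$ and reach towards $[x_3,x_4]$; near $x$ this escape looks like a boundary black--white--black three-arm configuration, whose exponent is the half-plane polychromatic three-arm exponent $h_3=3\cdot4/6=2$ --- precisely the dimension one reads off from $F_{\mathbb{H}}(x,x_3,x_4)$, and the value at which, in CFT language, this operator collides with the stress tensor at central charge zero, which is the mechanism behind the logarithm. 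One decomposes the escape event over the dyadic annuli $B_{2^{k}(x_2-x_1)}(x)\setminus B_{2^{k-1}(x_2-x_1)}(x)$ for $k$ from $1$ to $\lfloor\log_2((x_2-x_1)^{-1})\rfloor+O(1)$, and uses the independence of percolation in disjoint regions --- this is exactly where the Bernoulli product structure is indispensable, as flagged in the remark after Theorem~\ref{thm::expansion} --- to express the relevant probability as an approximate sum over the scale at which the escape ``completes,'' the summands being of comparable size once correctly normalized. There are $\asymp|\log(x_2-x_1)|$ of them, which produces the logarithm; the boundary one-arm estimates near $x_1^a,x_2^a$ supply the $(x_2-x_1)^{4/3}$ prefactor; and the conformal covariance of the limiting boundary three-arm quantity --- a function of $x,x_3,x_4$ transforming with dimension $2$ at $x$ and $1/3$ at $x_3,x_4$ --- forces its functional form to be $F_{\mathbb{H}}(x,x_3,x_4)$.

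The genuinely delicate points inside this last step are: (i) showing that the per-scale contributions, divided by $(x_2-x_1)^{4/3}$, neither decay nor blow up, so that the sum is a true logarithm and not a power or a constant; (ii) controlling the errors in the quasi-multiplicativity and independence decompositions uniformly in $a$, so that they vanish faster than $(x_2-x_1)^{4/3}|\log(x_2-x_1)|$ after $a\to0$ and then $x_1,x_2\to x$; and (iii) verifying the cancellation of the crude intermediate-order contributions --- the $\{1234\}$ probability and the non-factorizing part of $\{12|34\}$, each of order $\opi_a^4(x_2-x_1)^{-1/3}$ --- so that the connected correlation has no term between the leading $(x_2-x_1)^{-2/3}$ and the logarithmic $(x_2-x_1)^{4/3}|\log(x_2-x_1)|$; at the level of the putative CFT this cancellation is the statement that the spin (density) field does not appear in its own fusion. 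All three points are handled, as in the proof of Theorem~\ref{thm::expansion}, by the half-plane versions of the arm-exponent and quasi-multiplicativity estimates, now built on $h_1=1/3$ and $h_3=2$ in place of the bulk one- and four-arm exponents.
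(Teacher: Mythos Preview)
Your approach is correct and matches the paper's: the paper explicitly states that the proof of Theorem~\ref{thm::expansion-C_H} is essentially the same as that of Theorem~\ref{thm::expansion} and omits it, and you have correctly identified the substitutions (boundary one-arm exponent $1/3$ for the bulk $5/48$, and boundary polychromatic three-arm exponent $2$ for the bulk four-arm exponent $5/4$). One minor clarification: in the paper's scheme the $\{14|23\}$ term is bounded separately as $O((x_2-x_1)^{4/3})$ via the half-plane analog of Lemma~\ref{lem::expansion_aux3}, while the logarithm comes entirely from the covariance term via the dyadic decomposition of Lemma~\ref{lem::expansion_aux11}, so your concern (iii) about intermediate-order cancellation does not arise as a separate step.
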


The proof of Theorem~\ref{thm::expansion-C_H} is essentially the same as that of Theorem~\ref{thm::expansion}, so we omit it in the present article. Eq.~\eqref{eqn::conf-cov-C_H} shows that $C_{\mathbb{H}}$ transforms covariantly under conformal maps, as expected of a CFT correlation function.
When seen as a correlation function, \eqref{eqn::conf-cov-C_H} shows that the scaling dimension of the boundary spin (density) field is $1/3$.

As in the case of \eqref{eqn::C_expansion}, Eq.~\eqref{eqn::C_H-expansion} can be interpreted as an OPE.
As such, it reveals that the fusion of two boundary spin fields produces the ``unit field'' and a new boundary field of scaling dimension $2$.
The latter is related to the \emph{polychromatic boundary three-arm event} $\{0\xlongleftrightarrow[\mathbb{H}]{BWB} \partial B_1(0)\}$, corresponding to the presence of two black paths and one white path in $\mathbb{H}$ connecting $0$ and its neighbors to $\partial B_1(0)$, which is relevant for our next theorem.
We remark that the polychromatic boundary three-arm event is related to the so-called \emph{boundary stress-energy tensor}, as explained in Sections~2.2 and~3.3 of~\cite{CamiaFeng2024logarithmic}.

Now let
	\begin{equation*}
		\iota_a:=\mathbb{P}^a\left[0\xlongleftrightarrow[\mathbb{H}]{BW} \partial B_1(0)\right],
	\end{equation*}
where $\{0\xlongleftrightarrow[\mathbb{H}]{BW} \partial B_1(0)\}$ denotes the event that there are a black path and a white path in $\mathbb{H}$ connecting $0$ to $\partial B_1(0)$.

Given four vertices $x_1^a<x_2^a<x_3^a<x_4^a$ of $a\mathcal{T}\cap \mathbb{R}$, we let $\mathcal{K}^a(x_1^a,x_2^a,x_3^a,x_4^a)$ denote the event that there are a black path and a white path in the upper half-plane connecting $x_1^a$ to $x_2^a$, with the black path ``below'' the white one, and the same for $x_3^a$ and $x_4^a$ (see Figure~\ref{fig::K4} for an illustration of the event).
		\begin{figure} 
			\begin{center}
				\includegraphics[width=0.7\textwidth]{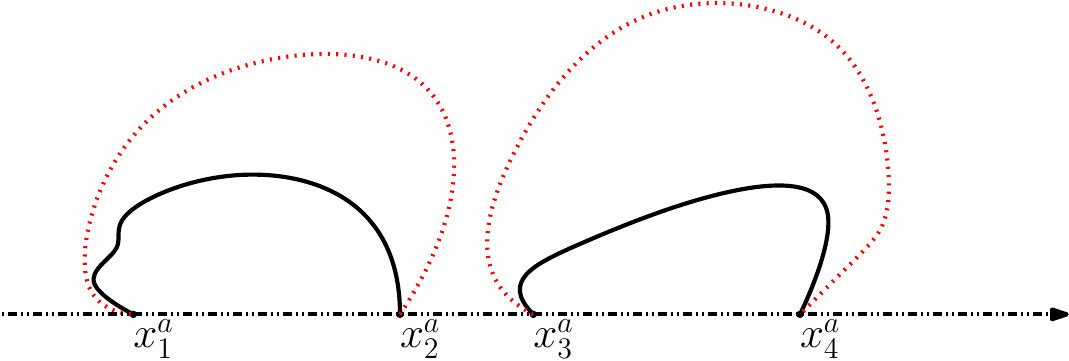}
			\end{center}
			\caption{The event ${\mathcal{K}}^a(x_1^a,x_2^a,x_3^a,x_4^a)$. The black, solid lines represent black paths and the red, dotted lines represent white paths.}
			\label{fig::K4}
		\end{figure}
	
Using the same strategy as in the proof of Theorem~\ref{thm::4arm_expansion}, one can show:
	\begin{theorem} \label{thm::three_arm}
		Let $-\infty<x_1<x_2<x_3<x_4<\infty$ be four distinct points. Suppose that $x_1^a<x_2^a<x_3^a<x_4^a\in a\mathcal{T}\cap \mathbb{R}$ satisfy $x_j^a\to x_j$ as $a\to 0$ for $1\leq j\leq 4$. Then the limit
		\begin{equation*}
			K(x_1,x_2,x_3,x_4):=\lim_{a\to 0} \iota_a^{-4}\times \mathbb{P}^a\left[\mathcal{K}^a(x_1^a,x_2^a,x_3^a,x_4^a)\right]
		\end{equation*}
		exists and belongs to $(0,\infty)$ and, for any non-constant M\"obius transformation $\varphi:\overline{\mathbb{H}}\to \overline{\mathbb{H}}$, we have
		\begin{equation*}
			K(\varphi(x_1),\varphi(x_2),\varphi(x_3),\varphi(x_4))=K(x_1,x_2,x_3,x_4)\times \prod_{j=1}^4 |\varphi'(x_j)|^{-1}. 
		\end{equation*}
		Moreover, there exists a universal constant $C_K\in (0,\infty)$ such that, for any $x\in (x_1,x_4)$, we have
		\begin{equation} \label{eqn::lim-K}
			\lim_{x_{2},x_3\to x} K(x_1,x_2,x_3,x_4)=C_K (x-x_1)^{-2}(x_4-x)^{-2}. 
		\end{equation}
	\end{theorem}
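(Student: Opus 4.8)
To prove Theorem~\ref{thm::three_arm} I would closely follow the strategy behind Theorem~\ref{thm::4arm_expansion}, replacing bulk arm events by their half-plane (boundary) analogs. The structural point is that, since $x_1^a<\cdots<x_4^a$ lie on $\partial\mathbb{H}=\mathbb{R}$, the event $\mathcal{K}^a(x_1^a,x_2^a,x_3^a,x_4^a)$ forces near each $x_j^a$ exactly one black and one white arm issuing into $\mathbb{H}$ from a microscopic neighbourhood of $x_j^a$ --- a \emph{polychromatic boundary two-arm event}, whose probability from scale $a$ to a fixed macroscopic scale is comparable to $\iota_a$ --- together with two macroscopic ``bridges'': the black (resp.\ white) arms at $x_1^a$ and $x_2^a$ lie in a common black (resp.\ white) cluster, with the black path below the white one, and similarly for $x_3^a,x_4^a$. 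Thus one expects $\mathbb{P}^a[\mathcal{K}^a]\asymp\iota_a^{4}\times(\text{macroscopic bridge probability})$, the second factor converging as $a\to0$ to a conformally covariant continuum quantity. The boundary polychromatic $j$-arm exponent is $\alpha_j^{+}=j(j+1)/6$, so $\alpha_2^{+}=1$ and $\alpha_3^{+}=2$; these are the scaling dimensions that will appear.

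The two-sided bound $\mathbb{P}^a[\mathcal{K}^a]\asymp\iota_a^4$, uniformly in small $a$, comes from standard percolation technology: for the upper bound, $\mathcal{K}^a$ implies four disjoint polychromatic boundary two-arm events realized in half-disks of fixed radius $r<\tfrac13\min_{j\ne k}|x_j-x_k|$ around the $x_j^a$; for the lower bound, one uses quasi-multiplicativity of the boundary two-arm event and an explicit Russo--Seymour--Welsh construction of the two bridges at macroscopic scale (the black/white symmetry of $\mathbb{P}^a$ is convenient here). To upgrade this to the existence of the limit $K\in(0,\infty)$ I would run the coupling argument of~\cite{Cam23}: localising the four arm events at a small scale $\varepsilon$ writes $\iota_a^{-4}\mathbb{P}^a[\mathcal{K}^a]$, up to an error that is uniformly small as $\varepsilon\to0$, as a product of four ratios $\iota_a/\iota_\varepsilon$ times a ``macroscopic event'' measurable with respect to the configuration outside the $\varepsilon$-half-disks; this macroscopic event has a scaling limit by convergence of the full percolation scaling limit, and together with the asymptotics $\iota_{\lambda a}/\iota_a\to\lambda$ one obtains the claim. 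Scale covariance then follows directly from $\mathbb{P}^a[\cdot\text{ at }\lambda x_j^a]=\mathbb{P}^{a/\lambda}[\cdot\text{ at }x_j^a]$ together with $\iota_{\lambda a}/\iota_a\to\lambda$, giving the factor $\prod_j|\varphi'(x_j)|^{-1}$ for a dilation; translations are trivial; and full M\"obius covariance --- with exponent $\alpha_2^{+}=1$ at each of the four points --- follows exactly as in~\cite[Proofs of Theorems~1.1 and 1.4]{Cam23}, by localising near each $x_j$, using that a M\"obius map of $\overline{\mathbb{H}}$ behaves near $x_j$ like $z\mapsto\varphi(x_j)+\varphi'(x_j)(z-x_j)$, and invoking the conformal invariance of the continuum scaling limit for the macroscopic part. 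M\"obius covariance does not determine $K$ itself, which is a genuine four-point object.

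For the fusion asymptotics~\eqref{eqn::lim-K}: as $x_2,x_3\to x$, the two ``inner'' boundary two-arm events at $x_2^a$ and $x_3^a$ merge. At scales between $a$ and $|x_2-x_3|$ one still sees a two-arm event at each of $x_2^a,x_3^a$; at scales above $|x_2-x_3|$ the arms emerging from a neighbourhood of $x$ --- a black and a white arm towards $x_1$ (black below white) and a black and a white arm towards $x_4$ (black below white) --- appear, in the dominant configuration, in the angular order $B,W,W,B$ around $x$, so (the two consecutive white arms being realised by a single white crossing) they form a polychromatic boundary three-arm event $\{x\xlongleftrightarrow[\mathbb{H}]{BWB}\partial B_r(x)\}$ of exponent $\alpha_3^{+}=2$. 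Because $2\alpha_2^{+}-\alpha_3^{+}=2-2=0$, a quasi-multiplicativity computation shows that the $\delta:=|x_2-x_3|$-dependence cancels, $\iota_\delta^{-2}\times\mathbb{P}^a[\text{boundary }BWB\text{ three-arm from }\delta\text{ to }r]\asymp1$, so no rescaling by a power of $|x_2-x_3|$ is needed and $\lim_{x_2,x_3\to x}K(x_1,x_2,x_3,x_4)$ exists in $(0,\infty)$. The resulting limit is a conformally covariant function of $x_1,x,x_4$ with boundary scaling dimensions $1,2,1$, hence is forced by the boundary three-point-function formula to equal $C_K(x-x_1)^{-(1+2-1)}(x_4-x)^{-(2+1-1)}(x_4-x_1)^{-(1+1-2)}=C_K(x-x_1)^{-2}(x_4-x)^{-2}$, which is~\eqref{eqn::lim-K}.

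The main obstacle is twofold and lies, as in Theorem~\ref{thm::4arm_expansion}, in the technical underpinnings rather than the structure: (i) making rigorous the assertion that the ``macroscopic'' event obtained after localising the four boundary arm events genuinely has a conformally invariant scaling limit, which requires the coupling with the full percolation scaling limit from~\cite{Cam23}, now carried out in the presence of the deterministically-white lower half-plane; and (ii) establishing quasi-multiplicativity and arm-separation estimates \emph{uniform in $a$} near the merging point $x$, so that $\lim_{a\to0}$ and $\lim_{x_2,x_3\to x}$ may be interchanged and the transition ``two boundary two-arm events $\rightsquigarrow$ one boundary $BWB$ three-arm event'' is controlled down to the lattice scale --- this is where one must verify that the ``black below white'' ordering really produces the $B,W,W,B$ pattern and that no heavier arm configuration ($BWBW$, of exponent $\alpha_4^{+}=10/3>2$) contributes at leading order. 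Both ingredients appear, in the bulk, in the proof of Theorem~\ref{thm::4arm_expansion}; the only genuinely new bookkeeping is the substitution of the half-plane exponents $\alpha_j^{+}=j(j+1)/6$ for the bulk ones.
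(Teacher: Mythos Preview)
Your proposal is correct and follows essentially the same approach as the paper: both reduce to the strategy of Theorem~\ref{thm::4arm_expansion} with the bulk four-arm coupling replaced by half-plane two-arm and three-arm couplings (the paper's Lemmas~\ref{lem::two_arm_coupling_half} and~\ref{lem::three_arm_coupling}), and the bulk four-arm scaling estimate replaced by the boundary three-arm analog (Lemma~\ref{lem::three_arm_scaling}). Your identification of the fusion mechanism --- the $B,W,W,B$ pattern collapsing to a $BWB$ boundary three-arm event, the vanishing power $2\alpha_2^{+}-\alpha_3^{+}=0$, and the resulting three-point form with dimensions $1,2,1$ --- matches the paper's computation exactly.
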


Eq.~\eqref{eqn::lim-K} can be seen as an OPE and the CFT interpretation is that two boundary two-arm fields/events, with scaling dimension $1$, fuse into a boundary three-arm field/event, with scaling dimension $2$, turning the four-point function $K(x_1,x_2,x_3,x_4)$ into a three-point function between two boundary two-arm fields/events, at $x_1$ and $x_4$, and a boundary three-arm field/event, at $x$.}

After discussing the boundary two-arm event, we consider the \emph{interior monochromatic two-arm event} $\{0\xlongleftrightarrow{BB}\partial B_1(0)\}$, i.e., the event that there are two disjoint black paths connecting $0$ to $\partial B_1(0)$, and the related concept of \emph{percolation backbone}.

Let $x_1^a, x_2^a\in a\mathcal{T}\cap \mathbb{R} $ be two vertices on the real line and let $z^a\in a\mathcal{T}\cap \mathbb{H}$ be a vertex in $\mathbb{H}$.
We denote by $\{x_1^a\xlongleftrightarrow[\mathbb{H}]{B}z^a\}\circ \{x_2^a\xlongleftrightarrow[\mathbb{H}]{B}z^a\}$ the event that there are two disjoint black paths in the upper half-plane connecting $z^a$ to $x_1^a$ and $z^a$ to $x_2^a$, respectively.
If the event happens, we say that $z^a$ belongs to the \emph{backbone} connecting $x_1^a$ and $x_2^a$.
In other words, the backbone connecting $x_1^a$ and $x_2^a$ is the set of black vertices in the upper half-plane connected to $x_1^a$ and $x_2^a$ by two black paths that have no vertex in common.
In an electrical circuit in which current can flow only through black vertices, the backbone connecting $x_1^a$ and $x_2^a$ is the set of vertices through which the current would flow if we applied a potential difference between $x_1^a$ and $x_2^a$.
The backbone is relevant to transport properties and has been extensively studied (see, e.g.,~\cite{bunde2012fractals} 
and references therein, as well as~\cite{stauffer1994introduction,sahimi1994applications,GRASSBERGER1999251}).

Let 
\begin{equation*}
	\rho_a:=\mathbb{P}^a\left[0\xlongleftrightarrow{BB}\partial B_1(0)\right]
\end{equation*}
and note that the event $\{x_1^a\xlongleftrightarrow[\mathbb{H}]{B}z^a\}\circ \{x_2^a\xlongleftrightarrow[\mathbb{H}]{B}z^a\}$ forces two boundary one-arm events near $x_1^a$ and $x_2^a$, $\{x^a_i\xlongleftrightarrow[\mathbb{H}]{B} \partial B_{\epsilon}(x^a_i)\}$ with $i=1,2$, as well as an interior monochromatic two-arm event near $z^a$, $\{z^a\xlongleftrightarrow{BB} \partial B_{\epsilon}(z^a)\}$, for any $\epsilon<\min(\vert z^a-x^a_1 \vert, \vert z^a-x^a_2 \vert)$.
The probabilities of both arm events decay like a power of $a$ as $a \to 0$, but while the exponent for the boundary one-arm event was conjectured decades ago~\cite{den1979kadanoff,cardy1983transitions}
and was proved to be $1/3$ more than twenty years ago~\cite{SmirnovWernerCriticalExponents}, the exponent governing the decay of $\rho_a$ remained unknown until very recently and was computed for the first time in~\cite{nolin2023backbone}, using Liouville quantum gravity techniques. The latter exponent is shown to be transcendental in~\cite[Theorem~1.2]{nolin2023backbone}, and we denote it by $\xi$.

\begin{theorem} \label{thm::backbone}
	Let $x_1,x_2\in \mathbb{R}$ and $z\in \mathbb{H}$ be three distinct points. Suppose that $x_1^a,x_2^a \in a\mathcal{T}\cap\mathbb{R}, z^a\in a\mathcal{T}\cap \mathbb{H}$ are vertices satisfying $x_1^a\to x_1$, $x_2^a\to x_2$ and $z^a\to z$ as $a\to 0$. Then 
	\begin{equation*}
		P(x_1,x_2,z):=\lim_{a\to 0} \opi_a^{-2}\rho_a^{-1}\times \mathbb{P}^a \left[\{x_1^a\xlongleftrightarrow[\mathbb{H}]{B}z^a\}\circ \{x_2^a\xlongleftrightarrow[\mathbb{H}]{B}z^a\}\right]
	\end{equation*}
exists and belongs to $(0,\infty)$.
Moreover, for any non-constant M\"obius transformation $\varphi: \overline{\mathbb{H}}\to \overline{\mathbb{H}}$ with $\varphi(x_1),\varphi(x_2)\neq \infty$, we have 
\begin{equation*}
	P(\varphi(x_1),\varphi(x_2),\varphi(z))=P(x_1,x_2,z)\times |\varphi'(z)|^{-\xi}\times\prod_{j=1}^2 |\varphi'(x_j)|^{-\frac{1}{3}}. 
\end{equation*}
\end{theorem}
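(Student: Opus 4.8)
\textbf{Proof proposal for Theorem~\ref{thm::backbone}.}
The plan is to follow the strategy of~\cite{Cam23}. Write $\mathcal{B}^a:=\{x_1^a\xlongleftrightarrow[\mathbb{H}]{B}z^a\}\circ\{x_2^a\xlongleftrightarrow[\mathbb{H}]{B}z^a\}$. First I would use quasi-multiplicativity of arm events to factor $\mathbb{P}^a[\mathcal{B}^a]$ into microscopic arm contributions, which reconstruct the normalization $\opi_a^2\rho_a$, times a macroscopic disjoint-connection event whose probability converges as $a\to0$ to a conformally invariant scaling-limit quantity; a sandwiching argument then yields the existence of the rescaled limit $P(x_1,x_2,z)$, with FKG and RSW a priori bounds giving $P(x_1,x_2,z)\in(0,\infty)$. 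Finally, I would obtain the conformal covariance by tracking how the microscopic cutoffs rescale under a M\"obius map, following~\cite[proofs of Theorems~1.1 and~1.4]{Cam23}.

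\emph{Existence and positivity.} Fix $\epsilon>0$ with $4\epsilon<\min(|z-x_1|,|z-x_2|,|x_1-x_2|,\Im z)$. For $r<R$ write $\opi^a(r,R)$ for the probability of a boundary one-arm event across the half-annulus $(B_R(0)\setminus B_r(0))\cap\mathbb{H}$ and $\rho^a(r,R)$ for the probability of an interior monochromatic two-arm event across $B_R(0)\setminus B_r(0)$, so that $\opi_a=\opi^a(a,1)$ and $\rho_a=\rho^a(a,1)$ up to microscopic boundary corrections. On $\mathcal{B}^a$ there must occur a boundary one-arm from $x_j^a$ to $\partial B_\epsilon(x_j)$ for $j\in\{1,2\}$, an interior monochromatic two-arm from $z^a$ to $\partial B_\epsilon(z)$, and, in the complement of the three corresponding (half-)disks, two \emph{vertex-disjoint} black paths joining $\partial B_\epsilon(z)$ to the boundary arc near $x_1$ and to the boundary arc near $x_2$; conversely, these ingredients imply $\mathcal{B}^a$ up to connecting the arm endpoints inside each small disk, which by arm separation costs only a bounded factor. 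Quasi-multiplicativity---in the classical polychromatic form near $x_1,x_2$ and in the monochromatic two-arm form near $z$, for which I would invoke the arm-separation estimates underlying~\cite{nolin2023backbone}---together with the independence of the labels on disjoint vertex sets, gives
\[
\opi_a^{-2}\rho_a^{-1}\,\mathbb{P}^a[\mathcal{B}^a]=\big(1+\eta(\epsilon)\big)^{\pm1}\,\frac{\Psi_\epsilon^a(x_1,x_2,z)}{\opi^a(\epsilon,1)^2\,\rho^a(\epsilon,1)},
\]
where $\Psi_\epsilon^a$ is the probability of the macroscopic disjoint-connection event described above and $\eta(\epsilon)\to0$ as $\epsilon\to0$, uniformly in $a$. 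As $a\to0$, the factor $\Psi_\epsilon^a\to\Psi_\epsilon\in(0,1]$ by the convergence of critical percolation crossing probabilities and of the full scaling limit~\cite{Smirnov:Critical_percolation_in_the_plane,CamiaNewmanPercolationFull} (the event being a crossing-type event with null boundary, its strict positivity following from FKG and RSW by exhibiting explicit families of black paths), and $\opi^a(\epsilon,1)$, $\rho^a(\epsilon,1)$ converge to positive limits. Letting $a\to0$ and then $\epsilon\to0$, the left-hand side is squeezed between two sequences with a common limit, which proves that the limit defining $P(x_1,x_2,z)$ exists and lies in $(0,\infty)$.

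\emph{Conformal covariance.} Let $\varphi:\overline{\mathbb{H}}\to\overline{\mathbb{H}}$ be a non-constant M\"obius map with $\varphi(x_1),\varphi(x_2)\neq\infty$, and run the decomposition above for the event at $\varphi(x_1),\varphi(x_2),\varphi(z)$, but taking the small sets to be the $\varphi$-images $\varphi(B_\epsilon(x_j))$ and $\varphi(B_\epsilon(z))$, which are comparable to disks of radii $\epsilon|\varphi'(x_j)|$ and $\epsilon|\varphi'(z)|$ since $\varphi$ acts near each point as a rotation composed with a dilation by $|\varphi'|$. This replaces $\opi^a(\epsilon,1)$ by $\opi^a(\epsilon|\varphi'(x_j)|,1)$ for $j\in\{1,2\}$ and $\rho^a(\epsilon,1)$ by $\rho^a(\epsilon|\varphi'(z)|,1)$, while, by conformal invariance of the scaling limit, the macroscopic disjoint-connection factor becomes in the limit $a\to0$ the $\varphi$-image of $\Psi_\epsilon$, and hence coincides with $\Psi_\epsilon(x_1,x_2,z)$ in that limit. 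Dividing by the expression for $x_1,x_2,z$, the remaining discrepancy is the double limit, $a\to0$ then $\epsilon\to0$, of
\[
\frac{\opi^a(\epsilon,1)^2\,\rho^a(\epsilon,1)}{\opi^a(\epsilon|\varphi'(x_1)|,1)\,\opi^a(\epsilon|\varphi'(x_2)|,1)\,\rho^a(\epsilon|\varphi'(z)|,1)},
\]
which equals $|\varphi'(x_1)|^{-1/3}|\varphi'(x_2)|^{-1/3}|\varphi'(z)|^{-\xi}$ because the boundary one-arm probability is regularly varying with index $1/3$~\cite{SmirnovWernerCriticalExponents} and the interior monochromatic two-arm probability is regularly varying with index $\xi$~\cite{nolin2023backbone}. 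This is precisely the transformation rule in the statement.

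\emph{Main obstacle.} The step I expect to be most delicate is the quasi-multiplicativity and arm-separation input near $z$: monochromatic arm events require substantially finer separation estimates than polychromatic ones, and one must moreover ensure that gluing the microscopic monochromatic two-arm configuration at $z$ to the two disjoint macroscopic black paths does not force a third arm at $z$, which would replace the exponent $\xi$ by the monochromatic three-arm exponent. A secondary point requiring care is the regular-variation input in the covariance computation---that the ratios $\opi^a(t\delta,1)/\opi^a(\delta,1)$ and $\rho^a(t\delta,1)/\rho^a(\delta,1)$ converge to $t^{-1/3}$ and $t^{-\xi}$ in the double limit $a\to0$, $\delta\to0$---which is the quantitative ingredient made available by the classical half-plane one-arm asymptotics and by the recent analysis of the backbone exponent in~\cite{nolin2023backbone}.
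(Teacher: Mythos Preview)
Your proposal is essentially correct and follows the same overall architecture as the paper: factor out the microscopic arm contributions near $x_1,x_2,z$, show the remaining macroscopic conditional probability has a scaling limit, and read off the covariance from the exact scaling of the arm probabilities $\opi^a$ and $\rho^a$.

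The one substantive difference concerns the step you flag as the main obstacle, the quasi-multiplicativity/coupling near $z$. You propose to invoke monochromatic arm-separation estimates (as in the analysis underlying~\cite{nolin2023backbone}) and worry that gluing might force a third arm. The paper bypasses this entirely with a much more elementary device: because the two arms at $z$ are \emph{black}, one can use FKG to produce a black \emph{circuit} in an annulus around $z$ that decouples the inside from the outside (Lemma~\ref{lem::backbone}, proved exactly as in~\cite[Lemma~2.1]{Cam23}). On the event that such a circuit exists, the conditioned measures $\mathbb{P}^a[\,\cdot\mid z^a\xlongleftrightarrow{BB}\partial B_\epsilon(z^a)]$ and $\mathbb{P}^a[\,\cdot\mid \mathcal{B}^a_{\eta,\epsilon}(z)]$ agree outside the circuit, and RSW gives the circuit with probability $1-o_\eta(1)$. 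No monochromatic arm separation is needed, and the ``third-arm'' worry simply does not arise. The same circuit trick, combined with the exponent value from~\cite{nolin2023backbone}, also yields the exact ratio statement $\lim_{a\to 0}\rho^a(a,\epsilon)/\rho_a=\epsilon^{-\xi}$ (equation~\eqref{eqn::backbone_scal}), which is the regular-variation input you need for the covariance computation.

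So your route works, but the paper's route is cleaner at the one delicate point and uses only FKG/RSW rather than the heavier monochromatic separation machinery.
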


Our last theorem concerns an event on the upper half-plane whose probability, when appropriately rescaled, has a scaling limit involving a logarithm.
The limit can be interpreted as related to a CFT correlation function between four boundary fields, so this result provides another example of a logarithmic correlation function.
For a more detailed discussion on the CFT interpretation of the result, the reader is referred to~\cite{CamiaFeng2024logarithmic}.

Let $x_1^a<x_2^a<x_3^a<x_4^a$ be four vertices on $a\mathcal{T}\cap \mathbb{R}$. We let $\mathcal{L}^a(x_1^a,x_2^a,x_3^a,x_4^a)$ denote the following event: (1) {there are two disjoint white paths in the upper half-plane connecting a neighbor of $x_1^a$ and $x_2^a+a$ to the segment $[x_3^a,x_4^a]$, respectively}; (2) there is a black path in the upper half-plane connecting $x_1^a$ to $x_2^a$; (3) there is no black path in the upper half-plane connecting $x_1^a$ to the segment $[x_3^a,x_4^a]$. See Figure~\ref{fig::L4} for an illustration of the event $\mathcal{L}^a(x_1^a,x_2^a,x_3^a,x_4^a)$.

\begin{figure} 
    \begin{center}
		\includegraphics[width=0.7\textwidth]{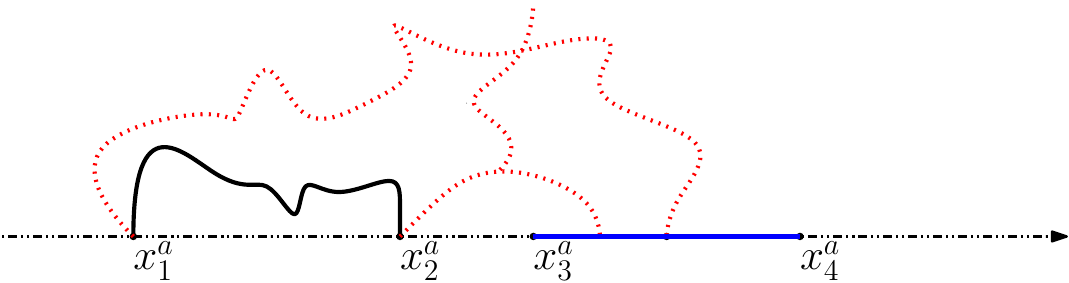}
	\end{center}
	\caption{The event $\mathcal{L}^a(x_1^a,x_2^a,x_3^a,x_4^a)$. The black, solid line represents a black path, and the red, dotted lines represent white paths.}
	\label{fig::L4}
\end{figure}

\begin{theorem} \label{thm::log2}
 Let $-\infty<x_1<x_2<x_3<x_4<\infty$ be four distinct points. Suppose that $x_1^a<x_2^a<x_3^a<x_4^a\in a\mathcal{T}\cap \mathbb{R}$ satisfy $x_j^a\to x_j$ as $a\to 0$ for $1\leq j\leq 4$. Then there exists a universal constant $C_L\in (0,\infty)$ such that 
\begin{equation}\label{eqn::log_formula2}
	L(x_1,x_2,x_3,x_4)=\lim_{a\to 0} \iota_a^{-2}\times \mathbb{P}^a\left[\mathcal{L}^a(x_1^a,x_2^a,x_3^a,x_4^a)\right]=\frac{C_L}{(x_2-x_1)^2}\times \log \frac{(x_4-x_2)(x_3-x_1)}{(x_3-x_2)(x_4-x_1)}. 
\end{equation}
	In particular, we have that
\begin{equation} \label{eqn::log_asy2}
	\lim_{x_3\to x_2}  \frac{L(x_1,x_2,x_3,x_4)}{\left| \log(x_3-x_2) \right|}= \frac{C_L}{(x_2-x_1)^2}
\end{equation}
and, for any non-constant M\"obius transformation $\varphi:\mathbb{H}\to \mathbb{H}$ satisfying $-\infty<\varphi(x_1)<\varphi(x_2)<\varphi(x_3)<\varphi(x_4)<\infty$,
\begin{equation} \label{eqn::log_cov2}
	L(\varphi(x_1),\varphi(x_2),\varphi(x_3),\varphi(x_4))=  L(x_1,x_2,x_3,x_4) \times \prod_{j=1}^2 |\varphi'(x_j)|^{-1}.
	\end{equation}
\end{theorem}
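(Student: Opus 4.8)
The plan is to establish the exact formula \eqref{eqn::log_formula2} by decomposing $\mathcal{L}^a(x_1^a,x_2^a,x_3^a,x_4^a)$ over a single ``free'' vertex on the arc $[x_3^a,x_4^a]$, after which \eqref{eqn::log_asy2} and \eqref{eqn::log_cov2} are immediate consequences. Write, for $i=1,2$, $\beta_i^a$ for the white path in the definition of $\mathcal{L}^a$ emanating from a neighbour of $x_1^a$ (resp.\ from $x_2^a+a$), and $\gamma^a$ for the black path connecting $x_1^a$ to $x_2^a$. For $t^a\in[x_3^a,x_4^a]\cap a\mathcal{T}$ let $\mathcal{L}^a_{t^a}$ be the sub-event of $\mathcal{L}^a$ on which the rightmost vertex of $[x_3^a,x_4^a]$ lying in the white cluster of $\beta_1^a$ equals $t^a$, so that $\mathbb{P}^a[\mathcal{L}^a]=\sum_{t^a}\mathbb{P}^a[\mathcal{L}^a_{t^a}]$ is a partition. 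The point of this choice is that $\mathcal{L}^a_{t^a}$ forces three well-separated boundary polychromatic two-arm events: one near $x_1^a$ (the black arm of $\gamma^a$ together with the white arm of $\beta_1^a$), one near $x_2^a$ (the black arm of $\gamma^a$ together with the white arm of $\beta_2^a$), and one near $t^a$ (a white arm reaching $[x_3^a,x_4^a]$ at $t^a$, together with a black arm immediately to its right preventing any white arrival further right). The ``no black path from $x_1^a$ to $[x_3^a,x_4^a]$'' requirement, the disjointness of $\beta_1^a$ and $\beta_2^a$, and the existence of $\beta_2^a$ become part of a \emph{macroscopic} connectivity constraint which does not affect these local pictures.

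The next step is to peel off the three local two-arm events by quasi-multiplicativity and arm separation, using exactly the technology of \cite{Cam23} and of the proofs of Theorems~\ref{thm::expansion}, \ref{thm::4arm_expansion} and~\ref{thm::three_arm} (RSW, BK--Reimer, and the separation lemmas). This gives, as $a\to0$ with $t^a\to t\in(x_3,x_4)$,
\[
a^{-1}\,\iota_a^{-2}\,\mathbb{P}^a\!\left[\mathcal{L}^a_{t^a}\right]\ \longrightarrow\ h(x_1,x_2,t)\ \in\ (0,\infty),
\]
uniformly on compact subsets of $(x_3,x_4)$, together with a matching uniform upper bound $a^{-1}\iota_a^{-2}\mathbb{P}^a[\mathcal{L}^a_{t^a}]\le C\,h(x_1,x_2,t)$ for all $t^a\in[x_3^a,x_4^a]$ and all small $a$; here $\iota_a^{-2}$ renormalizes the boundary two-arm fields at $x_1$ and $x_2$, while $a^{-1}$ is the hitting-density normalization of the third two-arm event pinned to the $a$-interval at $t^a$. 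Independence of the i.i.d.\ labels enters here, as in the Remark following Theorem~\ref{thm::expansion}: it is what lets one factor a configuration-event localized near $t^a$ off the macroscopic connectivity event, hence what makes $\mathbb{P}^a[\mathcal{L}^a]$ \emph{additive} over the $\sim a^{-1}$ choices of $t^a$, which is the mechanism behind the logarithm. The limit $h$ is then pinned down by conformal covariance: the macroscopic event in the definition of $\mathcal{L}^a_{t^a}$ combines black connections, white connections and ``no black connection'' constraints of the type covered by the scaling-limit and M\"obius-covariance results of \cite[Theorems~1.1 and~1.4]{Cam23} (the colour mixing and the dual ``no black path'' condition accommodated via colour switching and planar duality), so $h$ transforms under M\"obius maps of $\overline{\mathbb{H}}$ with weight $1$ --- the boundary two-arm dimension, cf.\ Theorem~\ref{thm::three_arm} --- at each of $x_1,x_2,t$. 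A M\"obius-covariant function of three boundary points with equal weights $1$ is, up to a universal constant $C_L\in(0,\infty)$, equal to $|x_1-x_2|^{-1}|x_1-t|^{-1}|x_2-t|^{-1}$.

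Combining the two steps, $\iota_a^{-2}\mathbb{P}^a[\mathcal{L}^a]=\sum_{t^a} a\cdot\bigl(a^{-1}\iota_a^{-2}\mathbb{P}^a[\mathcal{L}^a_{t^a}]\bigr)$ is a Riemann sum, and by the uniform bound and dominated convergence it converges to
\[
\int_{x_3}^{x_4} h(x_1,x_2,t)\,\mathrm{d}t\ =\ \frac{C_L}{x_2-x_1}\int_{x_3}^{x_4}\frac{\mathrm{d}t}{(t-x_1)(t-x_2)}\ =\ \frac{C_L}{(x_2-x_1)^2}\,\log\frac{(x_4-x_2)(x_3-x_1)}{(x_3-x_2)(x_4-x_1)},
\]
the last equality by partial fractions; this is \eqref{eqn::log_formula2}. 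Then \eqref{eqn::log_asy2} follows since, as $x_3\to x_2$, the cross-ratio diverges like $(x_3-x_2)^{-1}$ while the remaining factors stay bounded; and \eqref{eqn::log_cov2} follows either directly --- the cross-ratio is M\"obius-invariant and $|\varphi(x_2)-\varphi(x_1)|^{-2}=|x_2-x_1|^{-2}|\varphi'(x_1)|^{-1}|\varphi'(x_2)|^{-1}$ --- or, more conceptually, from the representation $L=\int_{x_3}^{x_4}h\,\mathrm{d}t$, in which the weight-$1$ covariance of $h$ at $t$ is exactly cancelled by the Jacobian of the substitution $t\mapsto\varphi(t)$, leaving covariance only at $x_1$ and $x_2$.

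The step I expect to be the main obstacle is the factorization with \emph{uniform} control in the second paragraph: the arm-separation estimates needed to detach the three local two-arm events from the macroscopic event uniformly in the position of $t^a$, with enough quantitative control in the regime where $t^a$ approaches an endpoint $x_3^a$ or $x_4^a$ (two of the relevant arms then nearly colliding) to justify the dominated-convergence passage to the integral. A close second is verifying that the rescaled macroscopic connectivity probability genuinely inherits the conformal covariance of a product of boundary two-arm fields: because $\mathcal{L}^a_{t^a}$ simultaneously involves black paths, white paths and a ``no black path'' constraint, this requires the scaling-limit and covariance results of \cite{Cam23} in a form robust to colour mixing and duality rather than a single-colour statement.
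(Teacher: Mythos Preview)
Your high-level strategy matches the paper's: partition over a free boundary vertex in $(x_3,x_4]$, identify the summand as a M\"obius-covariant three-point function with boundary two-arm weight $1$ at each marked point, and integrate $|t-x_1|^{-1}|t-x_2|^{-1}|x_1-x_2|^{-1}$ over $[x_3,x_4]$ to obtain the logarithm of the cross-ratio. Two points distinguish the paper's execution. First, the paper does not decompose $\mathcal{L}^a$ directly; it first applies a probability-preserving bijection (Lemma~\ref{lem::equal_proba}: explore the lowest black path $x_1^a\to x_2^a$ and the lowest white path $x_2^a+a\to[x_3^a,x_4^a]$, then flip all colours strictly above them), converting $\mathcal{L}^a$ into an event $\mathcal{M}^a$ in which a \emph{black} path now runs from $x_1^a$ into $[x_3^a,x_4^a]$. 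The decomposition of $\mathcal{M}^a$ over the rightmost white landing $y^a_*$ is then exact and disjoint, and the auxiliary event $\hat{\mathcal{M}}^a(x_1^a,x_2^a,y^a)$ carries a \emph{genuine macroscopic} black arm at $y^a$ coming from that black path; by contrast, your ``black arm immediately to the right preventing any white arrival further right'' is a priori only a local blocking condition, and showing it costs a full $\iota_a$ factor requires extra work that the bijection sidesteps. Second, your passage from the sum normalization $\iota_a^{-2}$ to the per-term normalization $a^{-1}\iota_a^{-2}$, and then to a Riemann integral, tacitly uses that $\iota_a/a$ has a finite nonzero limit; this is the \emph{sharp} boundary two-arm estimate $\iota_a=c_\iota a\,(1+o(1))$ of Remark~\ref{rem::sharp_arm}, which the paper invokes explicitly by writing $\iota_a^{-2}\mathbb{P}^a[\mathcal{L}^a]=(\iota_a/a)\sum_{y^a}a\cdot\iota_a^{-3}\,\mathbb{P}^a[\hat{\mathcal{M}}^a(x_1^a,x_2^a,y^a)]$ so that the summand carries the natural three-two-arm normalization $\iota_a^{-3}$ and Lemma~\ref{lem::L3} applies directly.
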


	\begin{remark} \label{rem::sharp_arm}
	It was shown in~\cite{SmirnovWernerCriticalExponents,nolin2023backbone} that 
	\begin{equation*}
		\opi_a=a^{\frac{1}{3}+o(1)},\quad \pi_a=a^{\frac{5}{48}+o(1)},\quad \iota_a=a^{1+o(1)},\quad \rho_a=a^{\xi+o(1)},\quad \orho_a=a^{\frac{5}{4}+o(1)}.
	\end{equation*}
	Recently, the estimates for $\opi_a,\iota_a$ and $\orho_a$ were improved in~\cite{SharpArmExponent} as follows: there exist universal constants $\overline{c}_{\rho},\overline{c}_{\pi}, c_{\iota},\overline{C}_{\pi}, C_{\iota}\in (0,\infty)$ such that 
	\begin{equation*}
		\opi_a=\overline{c}_{\pi} a^{\frac{1}{3}}\left(1+O\big(a^{\overline{C}_{\pi}}\big)\right),\quad \iota_a=c_{\iota}a\left(1+O\big(a^{C_{\iota}}\big)\right),\quad  \orho_a=\overline{c}_{\rho}a^{\frac{5}{4}}\left(1+o(1)\right).
	\end{equation*}
\end{remark}

\subsection{Organization of the rest of the paper and discussion of a logarithmic singularity}\label{subsec::explain_proof}
In Section~\ref{sec::2}, we consider the density of pivotal points and the probability of backbone events and prove Theorems~\ref{thm::pivotal} and~\ref{thm::backbone}, using results and ideas from~\cite{Cam23}. In Section~\ref{sec::3}, we first study the four-point function of the density (spin) field \eqref{def:lattice-field} and prove Theorem~\ref{thm::expansion}. We then consider correlations and the fusion of four-arm events and prove Theorem~\ref{thm::4arm_expansion}. Section~\ref{sec::3} ends with the proofs of Theorems~\ref{thm::three_arm} and~\ref{thm::log2} concerning, respectively, the four-point function of boundary two-arm events and a boundary connection probability with a logarithmic scaling limit.

To conclude this section, we sketch the main arguments of the proof of the logarithmic correction in the four-point function of the density field (Theorem~\ref{thm::expansion}), which represents the core of this article.
We do this for the reader's convenience and because the arguments themselves are of independent interest since they explain the physical mechanism that leads to the logarithmic singularity. As explained in Section~\ref{subsec::def_result}, the four-point function $C(x_1,x_2,x_3,x_4)$ is a linear combination of several connection probabilities. The key is to show that 
	\begin{align}
		\begin{split} \label{eqn::explain_log_aux1}
			&P(x_1\xlongleftrightarrow{B}x_2, x_3\xlongleftrightarrow{B}x_4)-P(x_1\xlongleftrightarrow{B}x_2)P(x_3\xlongleftrightarrow{B}x_4)\\
			&	\qquad=C_1C_2|x_2-x_1|^{\frac{5}{4}-\frac{5}{24}}F(x,x_3,x_4)\vert\log\vert x_2-x_1\vert\vert+o\big(|x_2-x_1|^{\frac{5}{4}-\frac{5}{24}}\vert\log\vert x_2-x_1\vert\vert\big),\quad \text{as }x_1,x_2\to x,
		\end{split}
	\end{align}
	where
	\begin{align*}
		P(x_1\xlongleftrightarrow{B}x_2, x_3\xlongleftrightarrow{B}x_4):=& P(x_1\xlongleftrightarrow{B}x_2\xlongleftrightarrow{B}x_3\xlongleftrightarrow{B}x_4)+P(x_1\xlongleftrightarrow{B}x_2\centernot{\xlongleftrightarrow{B}}x_3\xlongleftrightarrow{B} x_4).
	\end{align*}
	It is not hard to show that, as $x_1,x_2\to x$, the difference in the first line of~\eqref{eqn::explain_log_aux1} decays to $0$ at most polynomially in $|x_2-x_1|$. The more challenging part is to figure out the exact speed of this decay and identify the logarithmic correction, as we briefly explain below (see Figure~\ref{Figure}).
	
	Given two subsets of the plane, $C$ and $D$,  we consider the following events:
	\begin{itemize}
		\item $\{x_1\xlongleftrightarrow{B;C}x_2\}$: there is a black path connecting $x_1$ to $x_2$ contained in $C$;
		\item $\{x_1\xlongleftrightarrow[D]{B}x_2\}$: $x_1$ and $x_2$ belong to the same black cluster but there is no black path fully contained in $D$;
		\item $\{x_1\xlongleftrightarrow[D]{B;C}x_2\}$: there is a black path connecting $x_1$ to $x_2$ contained in $C$ but no black path fully contained in $D$. 
	\end{itemize}
	Now consider disks $B_m=\{z: |z-\frac{x_1+x_2}{2}|\leq 2^m |x_2-x_1|\}$ for $m=1,\ldots, M$, where $M$ is chosen so that $2^M\sim 1/|x_2-x_1|$,  that is, $M\sim -\log| x_2-x_1|$, and so that $x_3$ and $x_4$ are outside $B_M$. Then using the independence of labels at different vertices in percolation, one can show that
	\begin{align*}
		& P(x_1\xlongleftrightarrow{B}x_2, x_3\xlongleftrightarrow{B}x_4)-P(x_1\xlongleftrightarrow{B}x_2)P(x_3\xlongleftrightarrow{B}x_4)\\
		& \quad = \big[P(x_3\xlongleftrightarrow[B_1^c]{B}x_4 |x_1\xlongleftrightarrow{B;B_1}x_2)-P(x_3\xlongleftrightarrow[B_1^c]{B}x_4 )\big]P(x_1\xlongleftrightarrow{B;B_1}x_2)\\
		& \qquad + \sum_{m=2}^M \big[P(x_1\xlongleftrightarrow[B_{m-1}]{B}x_2,x_3\xlongleftrightarrow[B_m^c]{B} x_4| x_1\xlongleftrightarrow{B;B_m}x_2)-P(x_1\xlongleftrightarrow[B_{m-1}]{B}x_2| x_1\xlongleftrightarrow{B;B_m}x_2)P(x_3\xlongleftrightarrow[B_m^c]{B} x_4)\big]\\
		& \qquad\qquad\qquad\qquad\times P(x_1\xlongleftrightarrow{B;B_m}x_2)\\
		& \qquad + \big[P(x_1\xlongleftrightarrow[B_M]{B}x_2,x_3\xlongleftrightarrow{B}x_4| x_1\xlongleftrightarrow{B}x_2)-P(x_1\xlongleftrightarrow[B_M]{B}x_2| x_1\xlongleftrightarrow{B}x_2)P(x_3\xlongleftrightarrow{B}x_4)\big] \\
        & \qquad\qquad\qquad\qquad\times P(x_1\xlongleftrightarrow{B}x_2). 
	\end{align*}

	For $m=2,\ldots, M$, on the one hand, the event $\{x_1\xlongleftrightarrow[B_{m-1}]{B}x_2\}$ implies that the annulus $B_{m-1}\setminus B_1$ is crossed by two black paths and two white paths.  Since the four-arm exponent equals $5/4$~\cite{SmirnovWernerCriticalExponents}, we then conclude that $P(x_1\xlongleftrightarrow[B_{m-1}]{B}x_2|x_1\xlongleftrightarrow{B;B_m}x_2)\sim \big((1/2)^{m-2}\big)^{5/4}$. On the other hand, the event $\{x_3\xlongleftrightarrow[B_{m}^c]{B}x_4\}$ implies that the annulus $B_{M}\setminus B_m$ is crossed by two black paths and two white paths. We call this event $\mathcal{F}(x_3,x_4;B_{m}^c)$ and note, using again the four-arm exponent, that its probability is of order $(2^m|x_2-x_1|)^{5/4}$. Consequently, we can write 
	\begin{align*}
		&\big[P(x_1\xlongleftrightarrow[B_{m-1}]{B}x_2,x_3\xlongleftrightarrow[B_m^c]{B} x_4| x_1\xlongleftrightarrow{B;B_m}x_2)-P(x_1\xlongleftrightarrow[B_{m-1}]{B}x_2| x_1\xlongleftrightarrow{B;B_m}x_2)P(x_3\xlongleftrightarrow[B_m^c]{B} x_4)\big]\\
		&\qquad\qquad\qquad\qquad\times P(x_1\xlongleftrightarrow{B;B_m}x_2)\\
		&\qquad\sim g_{m}(x_1,x_2,x_3,x_4)|x_2-x_1|^{-\frac{5}{24}}|x_2-x_1|^{\frac{5}{4}},
	\end{align*}
	where 
	\begin{align*}
		g_m(x_1,x_2,x_3,x_4):=P\big(x_3\xlongleftrightarrow[B_m^c]{B}x_4| x_1\xlongleftrightarrow[B_{m-1}]{B;B_m}x_2,\mathcal{F}(x_3,x_4;B_m^c)\big)-P(x_3\xlongleftrightarrow[B_m^c]{B}x_4 |\mathcal{F}(x_3,x_4;B_m^{c})).
	\end{align*}
	Roughly speaking, thanks to the positive association of percolation (FKG inequality), the black path connecting $x_1$ to $x_2$ in the event $\{x_1\xlongleftrightarrow[B_{m-1}]{B;B_m}x_2\}$ ``helps" the connectivity event $\{x_3\xlongleftrightarrow[B_m^c]{B}x_4\}$ to occur, which implies that $g_m(x_1,x_2,x_3,x_4)\geq 0$. 
 In Section~\ref{sec::tech_log}, maybe the most intricate part of the proof of Theorem~\ref{thm::expansion}, we show that $g_m$ is bounded away from zero uniformly in $m$ and in $|x_2-x_1|$. Consequently, summing over $m$ from $2$ to $M$ (recall that $M\sim -\log |x_2-x_1|$) gives the logarithmic correction in~\eqref{eqn::explain_log_aux1}.

 \begin{figure}
	\includegraphics[width= 0.5\textwidth]{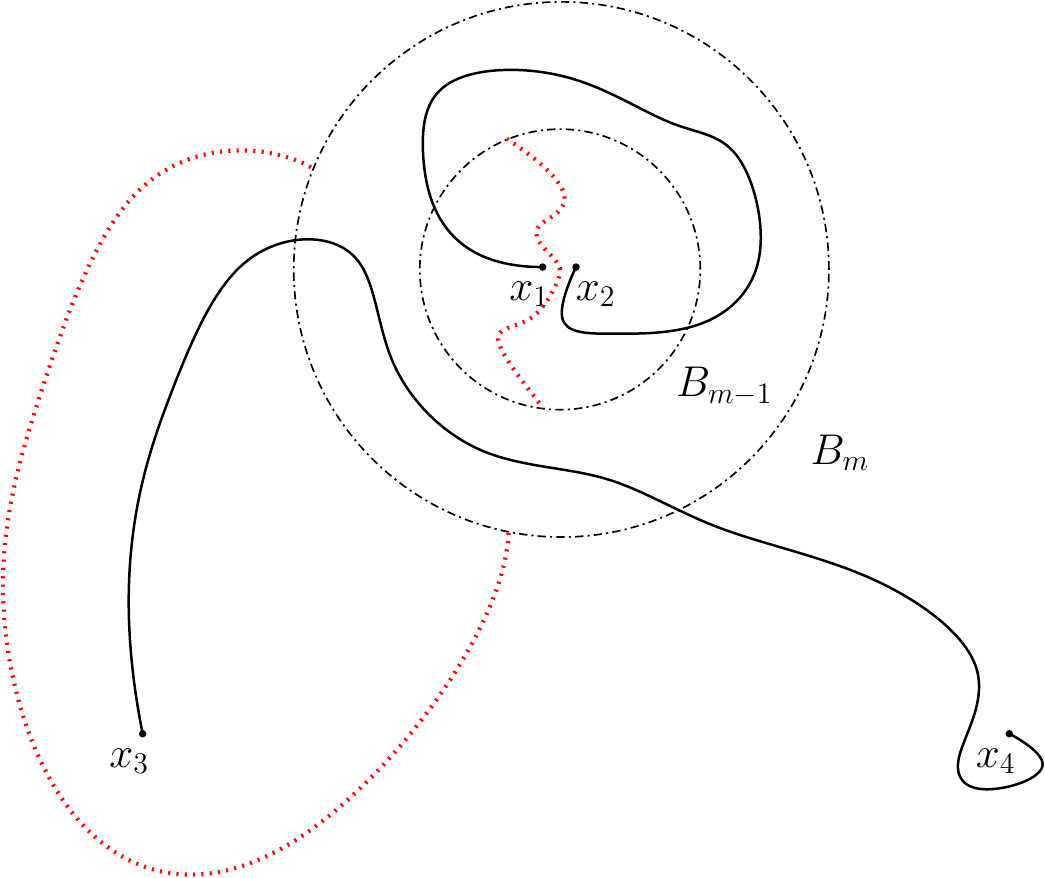}
	\caption{The event $\{ x_1 \xleftrightarrow[B_{m-1}]{B;B_m} x_2, x_3 \xleftrightarrow[B_m^c]{B} x_4 \}$. The black, solid lines denote black paths, while the red, dotted lines denote white paths. $x_1$ and $x_2$ are contained in $B_{m-1}$. They are not connected by a black path within the disk $B_{m-1}$, but are connected within the larger disk $B_{m}$, with radius twice that of $B_{m-1}$. $x_3$ and $x_4$ are connected by a black path, but not outside $B_m$.  The number, $M$, of disks one can insert between the two groups of points $\{x_1,x_2\}$ and $\{x_3,x_4\}$ is of order $-\log{\vert x_1-x_2 \vert}$.}
	\label{Figure}
\end{figure}
	
To summarize, as $x_1,x_2\to 0$, the two events $\{x_1\xlongleftrightarrow{B}x_2\}$ and $\{x_3\xlongleftrightarrow{B}x_4\}$ become asymptotically independent; however, the weak ``interaction" between these two events produces a logarithmic correction.
More precisely, the four-point function $C(x_1,x_2,x_3,x_4)$ contains terms that correspond to the probabilities of events of the following type (see Figure~\ref{Figure}):
given disks $B_m=\{z: |z-\frac{x_1+x_2}{2}|\leq 2^m |x_2-x_1|\}$,
\begin{itemize}
\item there is an open path between $x_1$ and $x_2$ contained inside $B_{m}$ but not inside $B_{m-1}$,
\item there is an open path between $x_3$ and $x_4$ contained inside $B^c_{m-1}$ but not inside $B^c_m$.
\end{itemize}
Due to scale invariance, the probabilities of events of this type are of the same order for different values of $m$, ranging from $1$ to $M\sim -\log| x_2-x_1|$.
The logarithm in~\eqref{eqn::explain_log_aux1} corresponds to the number of annuli $B_{m+1} \setminus B_m$ contained in the space between $x_1,x_2$ and $x_3,x_4$, which is of order $\log\frac{1}{\vert x_1-x_2 \vert}$.

It would be interesting to explore if a similar mechanism is at work in higher dimensions and in other models, and if our methods can be applied to higher dimensions and to models such as FK percolation and the $O(n)$ loop model.

\paragraph*{Acknowledgments.} Yu Feng thanks NYUAD for its hospitality during a visit in the fall of 2023, when this project was started.
The visit was partially supported by the Short-Term Visiting Fund for Doctoral Students of Tsinghua University.

\section{Conformal covariance of pivotal and backbone probabilities} \label{sec::2}

In this section, we prove Theorems~\ref{thm::pivotal} and Theorem~\ref{thm::backbone}.

\subsection{Percolation interfaces and their scaling limit}
We briefly recall the definition of percolation interfaces, which are curves separating black and white clusters. The scaling limit of the full collection of interfaces for critical site percolation on the triangular lattice is obtained in~\cite{CamiaNewmanPercolationFull}.

We let $a>0$ and consider critical Bernoulli site percolation on $a\mathcal{T}$. Given a percolation configuration,  the percolation interfaces between black and white clusters are polygonal circuits (with probability one) on the edges of the hexagonal lattice $a\mathcal{H}$ dual to the triangular lattice $a\mathcal{T}$. We give these circuits an orientation in such a way that they wind counterclockwise around black clusters and clockwise around white clusters (in other words, they are oriented in such a way that black hexagons are on the left and white hexagons are on the right). Note that the interfaces form a nested collection of loops with alternating orientations and a natural tree structure. 

In order to state the weak convergence of the collection of percolation interfaces, we need to specify a topology on the space of collections of loops. First, we introduce a distance function $\Delta$ on $\mathbb{C}\times \mathbb{C}$,
\begin{equation*}
	\Delta(u,v):=\inf_{f} \int_0^1 \frac{|f'(t)|}{1+|f(t)|^2}\ud t,
\end{equation*}
where the infimum is over all differentiable curves $f: [0,1]\to \mathbb{C}$ with $f(0)=0$ and $f(1)=v$. 
Second, for two planar oriented curves $\gamma_1,\gamma_2:[0,1]\to \mathbb{C}$, we define
\begin{equation}  \label{eqn::curve_metric}
	\dist\left(\gamma_1,\gamma_2\right):=\inf_{\psi,\tilde{\psi}}\sup_{t\in[0,1]} \Delta\left( \gamma_1(\psi(t)), \gamma_2(\tilde{\psi}(t))\right),
\end{equation}
where the infimum is taken over all increasing homeomorphisms $\psi,\tilde{\psi}:[0,1]\to [0,1]$. Note that planar oriented loops can be viewed as planar oriented curves. Third, we define a distance between two closed sets of loops, $\Gamma_1$ and $\Gamma_2$, as follows:
\begin{equation} \label{eqn::loop_metric}
	\mathrm{Dist}(\Gamma_1,\Gamma_2):=\inf\{\epsilon>0: \forall \gamma_1\in \Gamma_1 \enspace \exists \gamma_2\in \Gamma_2 \text{ s.t. }\dist(\gamma_1,\gamma_2)\leq \epsilon \text{ and vice versa}\}. 
\end{equation}
The space $X$ of collections of loops with this distance is a separable metric space.

It was shown in~\cite{CamiaNewmanPercolationFull} that, as $a\to 0$, the collection of percolation interfaces has a unique limit in distribution in the topology induced by~\eqref{eqn::loop_metric}. We call this limit the \textit{full scaling limit of percolation} and let $\mathbb{P}$ denote its distribution. We let $\Lambda$ denote a loop configuration distributed according to $\mathbb{P}$.  As explained in~\cite{SLECLE6}, $\Lambda$ is distributed like the full-plane, nested $\mathrm{CLE}_6$. It is invariant, in a distributional sense, under all  non-constant M\"obius transformations~\cite{CamiaNewmanPercolationFull,GwynneMillerQianCLE}.

\subsection{The density of pivotal points: Proof of Theorem~\ref{thm::pivotal}} \label{sec::pivotal}
From now on, we denote by $\{\delta_m\}_{m=1}^{\infty}$ a decreasing sequence with $\delta_m\in (0,1)$ and $\lim_{m\to \infty}\delta_m=0$. 
For $0<\eta<\epsilon$ and $z\in \mathbb{C}$, we denote by $A_{\eta,\epsilon}(z)$ the annulus $B_{\epsilon}(z)\setminus B_{\eta}(z)$ and denote by $\mathcal{F}^a_{\eta,\epsilon}(z)$ the event that there are four paths with alternating labels, black and white, crossing the annulus $A_{\eta,\epsilon}(z)$. Furthermore, we denote by $\mathcal{F}^a_{\eta}(x_1,x_2, x_3, x_4)$ the event that there are four paths with alternating labels (BWBW) starting from the arcs $(x_1 x_2), (x_2 x_3), (x_3 x_4), (x_4 x_1)$, respectively, and crossing the annulus $A_{\eta,1}(0)$. 
One can express these two events in terms of percolation interface loops. More precisely, $\mathcal{F}_{\eta,\epsilon}^a(z)$ means that there are four distinct segments of interface loops, with alternating orientations, crossing the annulus $A_{\eta,\epsilon}(z)$. The situation for $\mathcal{F}^a_{\eta}(x_1,x_2,x_3,x_4)$ is similar. The event $\mathcal{F}^a_{\eta}(x_1,x_2,x_3,x_4)$ means that there are four distinct segments of interface loops, $S_1,\ldots,S_4$, with alternating orientations, crossing the annulus $A_{\eta,1}(0)$ in such a way that the following holds: for $j$ odd (respectively, even), let $H_j$ denote the black (resp., white) cluster immediately to the left of $S_j$, then $H_j\cap (x_j x_{j+1})\neq \emptyset$, where we use the convention $x_5:=x_1$.

Using the loop definition of $\mathcal{F}_{\eta,\epsilon}^a(z)$ and the percolation full scaling limit $\Lambda$ in terms of interface loops given in~\cite{CamiaNewmanPercolationFull}, we can define the analog of $\mathcal{F}_{\eta,\epsilon}^a(z)$ in the continuum, which we denote by $\mathcal{F}_{\eta,\epsilon}(z)$.
Now note that Theorems~1 and~11 and Lemma~9 of~\cite{MR4044275} imply that the collection of critical percolation clusters has a well-defined scaling limit, which is measurable with respect to the collection of interface loops $\Lambda$.
This allows us to define a continuum analog of $\mathcal{F}_{\eta}^a(x_1,x_2,x_3,x_4)$, which we denote by $\mathcal{F}_{\eta}(x_1,x_2,x_3,x_4)$.
Moreover, since the polychromatic boundary $3$-arm exponent for critical site percolation is strictly larger than $1$~\cite{SmirnovWernerCriticalExponents}, the events $\mathcal{F}_{\eta,\epsilon}(z)$ and $\mathcal{F}_{\eta}(x_1,\ldots,x_4)$ are continuity events for $\mathbb{P}$.
This follows from the fact that the boundary of $\mathcal{F}_{\eta,\epsilon}(z)$ (resp., $\mathcal{F}_{\eta}(x_1,x_2,x_3,x_4)$), defined using \eqref{eqn::loop_metric}, implies a boundary $3$-arm event along $\partial B_{\epsilon}(z)\cup \partial B_{\eta}(z)$ (resp., $\partial B_{1}(0)\cup \partial B_{\eta}(0)$).
According to Lemma~6.1 of~\cite{CamiaNewmanPercolationFull}, the latter has probability zero.
Weak convergence then implies the claim.
Therefore, we have 
\begin{equation} \label{eqn::continuity_four_arm}
\lim_{a\to 0}\mathbb{P}^a\left[\mathcal{F}_{\eta,\epsilon}^a(z)\right]=\mathbb{P}\left[\mathcal{F}_{\eta,\epsilon}(z)\right],\quad \text{and}\quad \lim_{a\to 0}\mathbb{P}^a\left[\mathcal{F}_{\eta}^a(x_1,x_2,x_3,x_4)\right]=\mathbb{P}\left[\mathcal{F}_{\eta}(x_1,x_2,x_3,x_4)\right].
\end{equation}

We write 
\begin{equation*}
	\mathcal{E}^a(x_1,x_2,x_3,x_4):=\{0\text{ is pivotal for }\{(x_1 x_2)\xlongleftrightarrow[\Omega^a]{B} (x_3 x_4)\}\}.
\end{equation*}
Note that the event $\mathcal{E}^a(x_1,x_2,x_3,x_4)$ forces a four-arm event surrounding $0$.
In the proof of Theorem~\ref{thm::pivotal}, we will use the following coupling result concerning four-arm events.

\begin{lemma} \label{lem::four_arm_coupling_inner}
		Consider $\epsilon>\delta>a$. There exists a universal constant $c_1\in (0,\infty)$ such that the following holds. For any $\delta>\eta>a$, there exist an event $\mathcal{S}$ and a coupling, $\mathbb{P}_{\eta}^a$, between $\mathbb{P}^a\left[\, \cdot \, |0\xlongleftrightarrow{BWBW}\partial B_{\epsilon}(0)\right]$ and $\mathbb{P}^a\left[\, \cdot \, | \mathcal{F}_{\eta,\epsilon}^a(0)\right]$ (that is, a joint distribution on pairs $(\tilde{\Lambda}^a,\hat{\Lambda}^a)$ such that $\tilde{\Lambda}^a$ and $\hat{\Lambda}^a$ are distributed according to $\mathbb{P}^a\left[\, \cdot \, |0\xlongleftrightarrow{BWBW}\partial B_{\epsilon}(0)\right]$ and $\mathbb{P}^a\left[\, \cdot \, | \mathcal{F}_{\eta,\epsilon}^a(0)\right]$, respectively) with the following properties:
	\begin{equation*}
\mathbb{P}_{\eta}^a\left[\mathcal{S}\right]\geq 1-\left(\frac{\eta}{\delta}\right)^{c_1},
	\end{equation*}
	and, for any event $\mathcal{A}$ that depends only on the states of hexagons of a single percolation configuration outside $B_{\delta}(0)$,
	\begin{equation*}
		\mathbb{P}_{\eta}^a\left[\tilde{\Lambda}^a\in \mathcal{A}|\mathcal{S}\right]=\mathbb{P}_{\eta}^a\left[\hat{\Lambda}^a\in \mathcal{A}|\mathcal{S}\right].
	\end{equation*} 
\end{lemma}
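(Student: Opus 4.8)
The plan is to take $\mathcal{S}$ to be the event that the two coupled configurations $\tilde\Lambda^a$ and $\hat\Lambda^a$ \emph{coincide on the states of all hexagons lying outside $B_\delta(0)$}, and to take $\mathbb{P}^a_\eta$ to be the coupling obtained by first maximally coupling the laws of $\omega|_{\mathbb{C}\setminus B_\delta(0)}$ under the two conditional measures and then, given the outside configuration, resampling the states inside $B_\delta(0)$ independently under the respective conditional laws. With this choice the identity $\mathbb{P}^a_\eta[\tilde\Lambda^a\in\mathcal{A}\mid\mathcal{S}]=\mathbb{P}^a_\eta[\hat\Lambda^a\in\mathcal{A}\mid\mathcal{S}]$ is immediate for any $\mathcal{A}$ depending only on hexagons outside $B_\delta(0)$, since on $\mathcal{S}$ the two configurations are literally equal there; and $\mathbb{P}^a_\eta[\mathcal{S}]=1-d_{\mathrm{TV}}(\mu_1,\mu_2)$, where $\mu_1,\mu_2$ denote the laws of $\omega|_{\mathbb{C}\setminus B_\delta(0)}$ under $\mathbb{P}^a[\,\cdot\mid 0\xlongleftrightarrow{BWBW}\partial B_\epsilon(0)]$ and under $\mathbb{P}^a[\,\cdot\mid\mathcal{F}^a_{\eta,\epsilon}(0)]$, respectively. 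Everything thus reduces to the estimate $d_{\mathrm{TV}}(\mu_1,\mu_2)\le(\eta/\delta)^{c_1}$, which is trivial (total variation distance is at most $1$) whenever $\delta/\eta$ is bounded, provided $c_1$ is chosen small enough, so we may and do assume $\delta/\eta$, and hence also $\delta/a$, is large.

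To estimate $d_{\mathrm{TV}}(\mu_1,\mu_2)$ I would decompose both conditioning events through an intermediate scale. Fix a thin annular layer of hexagons $\mathcal{N}\subset B_\delta(0)$ separating the interior of $B_\delta(0)$ from its exterior, and write $\omega=(\omega_I,\omega_{\mathcal{N}},\omega_O)$ for the restrictions of the percolation configuration to the hexagons strictly inside $\mathcal{N}$, on $\mathcal{N}$, and strictly outside $\mathcal{N}$; these three pieces are independent. Using the standard disjoint decomposition of a four-arm event through a circle together with Kesten's arm-separation and quasi-multiplicativity technology (see, e.g., \cite{SmirnovWernerCriticalExponents,nolin2023backbone,GPSNearCriticalPercolation} and references therein), one writes, up to a canonical choice making the unions disjoint,
\[
\{0\xlongleftrightarrow{BWBW}\partial B_\epsilon(0)\}=\bigsqcup_{\Pi}\big(A^{0}_{\Pi}\cap B_{\Pi}\cap C_{\Pi}\big),\qquad \mathcal{F}^a_{\eta,\epsilon}(0)=\bigsqcup_{\Pi}\big(A^{\eta}_{\Pi}\cap B_{\Pi}\cap C_{\Pi}\big),
\]
where $\Pi$ runs over ``landing patterns'' on $\mathcal{N}$ (four cyclically ordered, alternating, disjoint colour strands crossing $\mathcal{N}$), $B_\Pi\in\sigma(\omega_{\mathcal{N}})$ is the event that $\omega_{\mathcal{N}}$ realises $\Pi$, $C_\Pi\in\sigma(\omega_O)$ is the event that four disjoint alternating arms extend from the outer side of $\Pi$ to $\partial B_\epsilon(0)$, and $A^{0}_\Pi,A^{\eta}_\Pi\in\sigma(\omega_I)$ are the events that four disjoint alternating arms join, respectively, the origin hexagon, or the circle $\partial B_\eta(0)$, to the inner side of $\Pi$. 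The point is that $B_\Pi$ and $C_\Pi$ are the \emph{same} in the two decompositions; only $A^{0}_\Pi$ versus $A^{\eta}_\Pi$ differ. Plugging these into the Radon--Nikodym formulas for $\mu_1,\mu_2$ and using independence, one gets that $\mu_1$ and $\mu_2$ are both mixtures of the \emph{same} family $\mathbb{P}^a[\,\cdot\mid C_\Pi]$, with mixing weights $\lambda^{(i)}_\Pi\propto \mathbb{P}^a[A^{s_i}_\Pi]\,\mathbb{P}^a[B_\Pi]\,\mathbb{P}^a[C_\Pi]$ ($s_1=0$, $s_2=\eta$), so that $d_{\mathrm{TV}}(\mu_1,\mu_2)\le d_{\mathrm{TV}}(\lambda^{(1)},\lambda^{(2)})$.

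It then remains to compare the weight vectors $\lambda^{(1)},\lambda^{(2)}$, which differ only through the factors $\mathbb{P}^a[A^{0}_\Pi]$ and $\mathbb{P}^a[A^{\eta}_\Pi]$; the common factor $\mathbb{P}^a[B_\Pi]\mathbb{P}^a[C_\Pi]$ has ratios bounded above and below uniformly over well-separated patterns (again by arm separation), hence does not spoil total-variation closeness. The key input is a \emph{quantitative} form of the separation phenomenon: as the landing pattern propagates outward through the $\asymp\log_2(\delta/a)$ (resp.\ $\asymp\log_2(\delta/\eta)$) dyadic scales between the source and $\partial B_\delta(0)$, its conditional law relaxes geometrically fast to a universal ``equilibrium'' distribution $q^\ast$ on well-separated patterns, with error $O\big((a/\delta)^{c_1}\big)$ from the origin hexagon and $O\big((\eta/\delta)^{c_1}\big)$ from $\partial B_\eta(0)$, while the total weight carried by non-well-separated patterns is controlled by the same power (bunching of arms forces a five-arm crossing of an annulus, whose exponent strictly exceeds that of the four-arm event). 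Since $a\le\eta$, both laws lie within $O\big((\eta/\delta)^{c_1}\big)$ of $q^\ast$, hence within $O\big((\eta/\delta)^{c_1}\big)$ of each other, and absorbing constants into $c_1$ gives $d_{\mathrm{TV}}(\lambda^{(1)},\lambda^{(2)})\le(\eta/\delta)^{c_1}$. I expect this last step to be the main obstacle: making the combination ``disjoint decomposition $+$ arm separation'' genuinely quantitative for the non-monotone four-arm event, so that relaxation to $q^\ast$ comes with the explicit power-law rate $(\eta/\delta)^{c_1}$; this is where one invokes the Markov-chain-on-landing-patterns viewpoint and its Doeblin-type minorization across each dyadic annulus, as developed in \cite{GPSNearCriticalPercolation}.
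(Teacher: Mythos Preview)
The paper's own proof is a one-line reference to Proposition~3.6 of \cite{GarbanPeteSchrammPivotalClusterInterfacePercolation} (GPS), which builds the coupling \emph{constructively}: one explores outward from the source, records the ``faces'' configuration at each dyadic scale, and uses a Doeblin-type minorization at each scale to merge the two faces processes with uniformly positive probability; iterating over the $\asymp\log_2(\delta/\eta)$ available scales yields the power-law bound, and $\mathcal{S}$ is the concrete event that the merger has occurred by scale~$\delta$. Your route is a different but equally valid packaging of the same input: instead of building the coupling by hand, you take the abstract maximal coupling of the two outside marginals (so that the identity on~$\mathcal{S}$ is automatic) and reduce everything to the total-variation estimate $d_{\mathrm{TV}}(\mu_1,\mu_2)\le(\eta/\delta)^{c_1}$, which you then obtain from the same GPS mixing machinery. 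Conceptually your framing is cleaner; the exploration coupling has the advantage of being explicit and of producing~$\mathcal{S}$ as a stopping-time event.

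The one place where your write-up is loose is the factorisation $A^{s_i}_\Pi\cap B_\Pi\cap C_\Pi$ through a single layer~$\mathcal{N}$. Because the four arms in $\{0\xlongleftrightarrow{BWBW}\partial B_\epsilon(0)\}$ must be \emph{globally} disjoint, a generic $\omega_{\mathcal{N}}$ may carry more than four alternating crossings, and pinning down which four the inner and outer arms attach to (your ``canonical choice'') is not automatically measurable with respect to $\omega_I$ or $\omega_O$ separately; moreover, it is the approximate \emph{constancy} of the ratio $\mathbb{P}^a[A^{0}_\Pi]/\mathbb{P}^a[A^{\eta}_\Pi]$ over well-separated~$\Pi$, not the boundedness of the common factors $\mathbb{P}^a[B_\Pi]\mathbb{P}^a[C_\Pi]$, that drives $d_{\mathrm{TV}}(\lambda^{(1)},\lambda^{(2)})\to 0$. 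Making this precise---the canonical selection, the control of non-well-separated patterns via the five-arm exponent, and the geometric relaxation of the landing law to~$q^\ast$---is exactly the content of the GPS faces/exploration technology you invoke at the end, so at that point your argument and the paper's converge to the same proof.
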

\begin{proof}
Lemma~\ref{lem::four_arm_coupling_inner} can be proved using the strategy in the proof of Proposition~3.6 of~\cite{GarbanPeteSchrammPivotalClusterInterfacePercolation}, with an exploration process that starts at the origin and moves outwards (see also~\cite[Lemma~3.9]{FWYIsingRadial} for a general result concerning a coupling of measures conditioned on an alternating $2N$-arm event). 
\end{proof}
The last ingredient we need for the proof of Theorem~\ref{thm::pivotal} is 
\begin{equation} \label{eqn::scal_four}
	\lim_{a\to 0} \frac{\mathbb{P}^a\left[0\xlongleftrightarrow{BWBW} \partial B_{\epsilon}(0)\right]}{\orho_a}=\epsilon^{-\frac{5}{4}},
\end{equation}
which follows from~\cite[Proposition~4.9]{GarbanPeteSchrammPivotalClusterInterfacePercolation} (or just by mimicking the proof of~\eqref{eqn::backbone_scal} in Section~\ref{sec::thm::backbone} below, with Lemma~\ref{lem::backbone} replaced by Lemma~\ref{lem::four_arm_coupling_inner}).

\begin{proof}[Proof of Theorem~\ref{thm::pivotal}]
	 We let $a<\eta<\delta_m<\epsilon$ and denote by $\mathcal{E}^{(a,\eta,\delta_m,\epsilon)}(x_1,x_2,x_3,x_4)$ the following event: (1) there are four clusters $\mathcal{C}_j$, for $1\leq j\leq 4$, with alternating labels, black and white, in counterclockwise order, connecting $\partial B_{\eta}(0)$ to $\partial B_{\epsilon}(0)$; (2) the cluster $\mathcal{C}_j$ is connected to $(x_j x_{j+1})$ by a black (resp., white) path inside $\mathbb{C}\setminus B_{\delta_m}(0)$ when $j$ is odd (resp., even).
   As in the case of the event $\mathcal{F}_{\eta}^a(x_1,x_2,x_3,x_4)$ discussed above, we can use \cite{CamiaNewmanPercolationFull} and~\cite{MR4044275} to define a continuum analog of the event $\mathcal{E}^{(a,\eta,\delta_m,\epsilon)}(x_1,x_2,x_3,x_4)$, which we denote by $\mathcal{E}^{(\eta,\delta_m,\epsilon)}(x_1,x_2,x_3,x_4)$. Since the polychromatic boundary $3$-arm exponent for critical site percolation is strictly larger than $1$~\cite{SmirnovWernerCriticalExponents}, the event $\mathcal{E}^{(\eta,\delta_m,\epsilon)}(x_1,x_2,x_3,x_4)$ is a continuity event for $\mathbb{P}$, {by the same argument used earlier to prove the continuity of $\mathcal{F}_{\eta,\epsilon}(z)$ and $\mathcal{F}_{\eta}(x_1,x_2,x_3,x_4)$.}

Next, we prove that 
\begin{equation}\label{eqn::existence_pivotal_proba_limit}
	 	\lim_{a\to 0} \overline{\rho}_a^{-1}\times \mathbb{P}^a\left[\mathcal{E}^a\left(x_1,x_2,x_3,x_4\right)\right]= \epsilon^{-\frac{5}{4}}\lim_{m\to \infty}\lim_{\eta\to 0}\mathbb{P}\left[\mathcal{E}^{(\eta,\delta_m,\epsilon)}(x_1,x_2,x_3,x_4)|\mathcal{F}_{\eta,\epsilon}(0)\right]\in (0,\infty).
	 \end{equation}
A standard application of RSW estimates (see, e.g., the proofs of Lemmas~2.1 and~2.2 of~\cite{CamiaNewman2009ising}) implies that there exist constants $0<K_1<K_2<\infty$, independent of $a$, such that 
	\begin{equation} \label{eqn::RSW_0}
		\overline{\rho}_a^{-1}\times \mathbb{P}^a\left[\mathcal{E}^a\left(x_1,x_2,x_3,x_4\right)\right]\in [K_1,K_2],
	\end{equation} 
which shows that all subsequential limits of the left-hand side of~\eqref{eqn::RSW_0} belong to $(0,\infty)$.  
We let $\epsilon\in \big(0,\frac{1}{100}\big)$ and 
write 
\begin{align*}
	 \overline{\rho}_a^{-1}\times \mathbb{P}^a\left[\mathcal{E}^a\left(x_1,x_2,x_3,x_4\right)\right]= \underbrace{\mathbb{P}^a\left[\mathcal{E}^a(x_1,x_2,x_3,x_4)\vert {0}\xlongleftrightarrow{BWBW}\partial B_{\epsilon}{(0)}\right]}_{T_1}\times \underbrace{\frac{\mathbb{P}^a\left[0\xlongleftrightarrow{BWBW}\partial B_{\epsilon}(0)\right]}{\overline{\rho}_a}}_{T_2}.
\end{align*}
For $T_2$, according to~\eqref{eqn::scal_four}, we have 
\begin{equation*}
	\lim_{a\to 0}T_2= \epsilon^{-\frac{5}{4}}.
\end{equation*}
For $T_1$, with~\cite[Lemma~2.1]{Cam23} replaced by Lemma~\ref{lem::four_arm_coupling_inner}, we can proceed as in~\cite[Proof of Theorem~1.1]{Cam23} to show that 
\begin{align*}
	\lim_{a\to 0} T_1 =\mathbb{P}\left[\mathcal{E}(x_1,x_2,x_3,x_4)|0\xlongleftrightarrow{BWBW}\partial B_{\epsilon}(0)\right] := \lim_{m\to \infty}\lim_{\eta\to 0}\mathbb{P}\left[\mathcal{E}^{(\eta,\delta_m,\epsilon)}(x_1,x_2,x_3,x_4)|\mathcal{F}_{\eta,\epsilon}(0)\right].
\end{align*}
Combining all of these observations, we get~\eqref{eqn::existence_pivotal_proba_limit}.

Now let us derive the desired explicit formula. 
We write 
\begin{align*}
	&\orho_a^{-1}\times \mathbb{P}^a\left[\mathcal{E}^a\left(x_1,x_2,x_3,x_4\right)\right] = \underbrace{\frac{\mathbb{P}^a\left[0\xlongleftrightarrow{BWBW}\partial B_{\delta_m}(0)\right]}{\orho_a}}_{T_3}\times \underbrace{\mathbb{P}^a\left[\mathcal{F}^a_{\delta_m}(x_1,x_2,x_3,x_4)\right]}_{T_4}\\
	& \qquad \qquad \qquad \times \underbrace{\frac{\mathbb{P}^a\left[\mathcal{E}^a(x_1,x_2,x_3,x_4)| \mathcal{F}^a_{\delta_m}\left(x_1,x_2,x_3,x_4\right)\right]}{\mathbb{P}^a\left[0\xlongleftrightarrow{BWBW} \partial B_{\delta_m}(0)\right]}}_{T_5}.
\end{align*}
According to~\eqref{eqn::scal_four}, we have
\begin{equation} \label{eqn::four_arm_aux1} 
	\lim_{a\to 0} T_3=\delta_{m}^{-\frac{5}{4}}.
\end{equation}
For $T_4$, according to~\eqref{eqn::continuity_four_arm}, we have
\begin{equation}\label{eqn::four_arm_aux2} 
	\lim_{a\to 0} T_4=\mathbb{P}\left[\mathcal{F}_{\delta_m}(x_1,x_2,x_3,x_4)\right].
\end{equation}
Combining~\eqref{eqn::four_arm_aux1} and \eqref{eqn::four_arm_aux2} with the existence of $\lim_{a\to 0}\orho_a^{-1}\mathbb{P}^a\left[\mathcal{E}^a\left(x_1,x_2,x_3,x_4\right)\right]$, we can define 
\begin{equation*}
	f(x_1,x_2,x_3,x_4;\delta_m):=\lim_{a\to 0} T_5.
\end{equation*}
Combining all these observations together, we obtain
\begin{equation} \label{eqn::four_aux3}
\lim_{a\to 0} \orho_a^{-1}\times \mathbb{P}^a\left[\mathcal{E}^a\left(x_1,x_2,x_3,x_4\right)\right]=\underbrace{\delta_m^{-\frac{5}{4}}\times \mathbb{P}\left[\mathcal{F}_{\delta_m}(x_1,x_2,x_3,x_4)\right]}_{T_6}\times f(x_1,x_2,x_3,x_4;\delta_m).
\end{equation} 

The rest of the proof uses two lemmas which are stated below and proved in the next section.
Thanks to Lemma~\ref{lem::four_aux1} below, we have 
\begin{equation} \label{eq::four_aux4}
	\lim_{m\to \infty} T_6=C_6\prod_{1\leq j<k\leq 4}|\sin(\theta^k-\theta^j)|^{\frac{1}{3}}. 
\end{equation}
{Since the left-hand side of \eqref{eqn::four_aux3} does not contain $\delta_m$, \eqref{eq::four_aux4} implies that $\lim_{m \to \infty}f(x_1,x_2,x_3,x_3;\delta_m)$ exists.
In particular, we let $C_7:=\lim_{m \to \infty}f(1,\ii,-1,-\ii;\delta_m) \in (0,\infty)$.} Then, according to Lemma~\ref{lem::four_aux2} below, we have 
\begin{equation*}
	\lim_{m \to \infty} f(x_1,x_2,x_3,x_4;\delta_m)=C_7.
\end{equation*}
{Therefore, letting $m \to \infty$ in~\eqref{eqn::four_aux3} yields}
\begin{equation} \label{eqn::C7_independent}
\lim_{a\to 0}	\orho_a^{-1}\times \mathbb{P}^a\left[\mathcal{E}^a\left(x_1,x_2,x_3,x_4\right)\right]=C_6C_7\prod_{1\leq j<k\leq 4} |\sin(\theta^k-\theta^j)|^{\frac{1}{3}},
\end{equation}
which gives the desired result with $C_5=C_6C_7$.
\end{proof}

\begin{lemma} \label{lem::four_aux1}
	Assume the same setup as in Theorem~\ref{thm::pivotal}. There exists a universal constant $C_6\in (0,\infty)$ such that 
	\begin{equation*}
		\lim_{m\to \infty} \delta_m^{-\frac{5}{4}}\times \mathbb{P}\left[\mathcal{F}_{\delta_m}\left(x_1,x_2,x_3,x_4\right)\right]= C_6\prod_{1\leq j<k\leq 4}	|\sin(\theta^k-\theta^j)|^{\frac{1}{3}}.
	\end{equation*}
\end{lemma}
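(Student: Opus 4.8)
\textbf{Proof plan for Lemma~\ref{lem::four_aux1}.}

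The plan is to recognize the continuum event $\mathcal{F}_{\delta_m}(x_1,x_2,x_3,x_4)$ as essentially a monochromatic-alternating four-arm event in the annulus $A_{\delta_m,1}(0)$ with the four arms forced to land, at the outer boundary, on prescribed arcs $(x_1x_2),(x_2x_3),(x_3x_4),(x_4x_1)$ of $\partial B_1(0)$. I would first factor out the scale dependence: using \eqref{eqn::scal_four} in the form $\mathbb{P}[\mathcal{F}_{\delta_m,1}(0)]/\overline\rho_a$-type normalization, one expects $\mathbb{P}[\mathcal{F}_{\delta_m}(x_1,x_2,x_3,x_4)]$ to behave like $\delta_m^{5/4}$ times a bounded quantity, so that $\delta_m^{-5/4}\mathbb{P}[\mathcal{F}_{\delta_m}(x_1,x_2,x_3,x_4)]$ converges as $m\to\infty$ to a conditional ``landing probability'' $g(x_1,x_2,x_3,x_4)$, namely the limit of $\mathbb{P}[\,\text{arms land on the four arcs}\mid \mathcal{F}_{\delta_m,1}(0)\,]$. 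The existence of this limit follows from the same coupling machinery already invoked in the proof of Theorem~\ref{thm::pivotal}: conditionally on the four-arm event in a small annulus around $0$, the configuration far from $0$ is asymptotically independent of $\delta_m$ (Lemma~\ref{lem::four_arm_coupling_inner}), so the landing event stabilizes. The positivity and finiteness of $g$ is an RSW/quasi-multiplicativity statement, already encapsulated in \eqref{eqn::RSW_0}.

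The heart of the matter is then to identify $g(x_1,x_2,x_3,x_4)$ explicitly and show it equals $C_6\prod_{1\le j<k\le 4}|\sin(\theta^k-\theta^j)|^{1/3}$ where $x_j=e^{2\ii\theta^j}$. The strategy I would use is conformal covariance plus the known boundary behavior. The conditioned measure ``$\mathbb{P}[\,\cdot\mid\mathcal{F}_{\delta_m,1}(0)]$'', in the $\delta_m\to 0$ limit, should be Möbius-covariant with the appropriate weight at each landing point; since each of the four arms terminates transversally at $\partial B_1(0)$ and locally looks like a boundary one-arm (polychromatic boundary pattern with the right count), the natural covariance exponent at each $x_j$ is the boundary one-arm exponent $1/3$. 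Thus $g$ transforms under a Möbius map $\varphi$ of the disk as $g(\varphi(x_1),\dots,\varphi(x_4))=g(x_1,\dots,x_4)\prod_j|\varphi'(x_j)|^{-1/3}$. A function on four boundary points of the disk (equivalently four points of $\mathbb{R}$ after mapping $\partial B_1(0)$ to $\mathbb{R}$) with this covariance and no further singularities is, up to a constant and a function of the single cross-ratio, of the form $\prod_{j<k}|x_j-x_k|^{-1/3}$ times a cross-ratio factor; using $|e^{2\ii\theta^k}-e^{2\ii\theta^j}|=2|\sin(\theta^k-\theta^j)|$ this matches $C_6\prod_{j<k}|\sin(\theta^k-\theta^j)|^{1/3}$ up to the cross-ratio ambiguity. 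To pin down that there is no extra cross-ratio dependence, I would argue that the four landing arcs are ``rigidly'' attached to the four arms coming out of the annulus around $0$, so the landing law is a product over the four arms in the scaling limit (each arm independently finds its arc), killing the cross-ratio factor; alternatively, one deduces the exponents $1/3$ and the absence of cross-ratio dependence from the fact that, after conformally mapping $B_1(0)\setminus B_{\delta_m}(0)$ so that $0$ goes to a boundary point, $\mathcal{F}_{\delta_m}$ becomes a boundary-to-boundary four-arm landing event whose scaling limit factorizes.

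I expect the main obstacle to be precisely the identification step: transferring the known conformal covariance of the percolation scaling limit (established in \cite{Cam23} and \cite{CamiaNewmanPercolationFull}) from statements about connection probabilities to this particular conditioned/normalized quantity, and then ruling out a nontrivial cross-ratio prefactor. The cleanest route is probably to relate $\delta_m^{-5/4}\mathbb{P}[\mathcal{F}_{\delta_m}(x_1,x_2,x_3,x_4)]$ directly to a four-point-type connection probability already controlled by the methods of \cite{Cam23} --- for instance, by observing that the event is, up to the inner four-arm normalization, a configuration of four disjoint arms from a tiny disk at $0$ to the four arcs, which is exactly the kind of object whose conformally covariant scaling limit and explicit three-point-like form are available there --- and to borrow the covariance exponent $1/3$ from the boundary one-arm exponent of \cite{SmirnovWernerCriticalExponents}. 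Once the covariance with exponent $1/3$ at each point is in hand and the factorization over the four arms is established, the explicit formula with the $\sin$ factors is immediate, and absorbing all remaining universal constants into $C_6$ completes the proof.
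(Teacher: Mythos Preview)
Your approach has a genuine gap at the identification step, and the paper proceeds quite differently. Conformal covariance with boundary exponent $1/3$ at each $x_j$ only determines $g(x_1,\ldots,x_4)$ up to a function of the cross-ratio, and your arguments for eliminating that residual freedom are not sound. The four arms are \emph{not} asymptotically independent: they arise from a single percolation configuration, must alternate in color, and are topologically intertwined near $0$. There is no ``product over the four arms'' factorization of the landing law, and mapping the annulus so that $0$ goes to a boundary point does not produce a boundary-to-boundary event that factorizes either. Likewise, the quantities controlled in \cite{Cam23} are connection probabilities, not four-arm landing probabilities with prescribed arcs, so there is no ready-made covariant object there to which you can directly match.

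The paper's route avoids this entirely. It imposes the boundary condition \eqref{eqn::decla}, which forces two interfaces in $B_1(0)$ starting from the four arc endpoints, and observes that $\mathcal{F}_{\delta_m}(x_1,\ldots,x_4)$ is exactly the event that all four interface strands come within distance $\delta_m$ of $0$. By \cite{LiuPeltolaWuUST} the quadruple of interfaces converges to a mixture $\mathcal{Z}_{\alpha_1}\mathbb{P}_{\alpha_1}+\mathcal{Z}_{\alpha_2}\mathbb{P}_{\alpha_2}$ of global $2$-multiple $\SLE_6$ (Lemma~\ref{lem::four_aux11}). Then the exact asymptotics $\delta_m^{-5/4}\mathbb{P}_{\alpha_j}[\,d(0,\gamma_k)\le\delta_m,\ 1\le k\le 4\,]\to C_9\prod_{j<k}|\sin(\theta^k-\theta^j)|^{1/3}/\mathcal{Z}_{\alpha_j}$ is a direct citation of Zhan's Green function for $2$-SLE \cite{ZhanGreen2SLE} (Lemma~\ref{lem::four_aux12}). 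Summing over $j$ cancels the partition functions $\mathcal{Z}_{\alpha_j}$ and yields $C_6=2C_9$. The explicit $\sin$ product, including the absence of any extra cross-ratio factor, is thus an input from the SLE literature rather than something deduced from covariance.
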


\begin{lemma} \label{lem::four_aux2}
Let $f(x_1,x_2,x_3,x_4;\delta_m)$ be defined as in the proof of Theorem~\ref{thm::pivotal}.
{Then, we have 
\begin{equation*}
	\lim_{m\to \infty} f(x_1,x_2,x_3,x_4;\delta_m)=\lim_{m\to \infty} f(1,\ii,-1,-\ii;\delta_m).
\end{equation*}}
\end{lemma}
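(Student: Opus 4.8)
\textbf{Proof proposal for Lemma~\ref{lem::four_aux2}.}

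The plan is to show that the quantity $\lim_{m\to\infty} f(x_1,x_2,x_3,x_4;\delta_m)$ does not actually depend on the positions $x_1,x_2,x_3,x_4$, only on a normalization that can be computed at the reference configuration $(1,\ii,-1,-\ii)$. The key observation is that $f(x_1,x_2,x_3,x_4;\delta_m) = \lim_{a\to 0} T_5$, where $T_5$ is the ratio of $\mathbb{P}^a[\mathcal{E}^a(x_1,x_2,x_3,x_4) \mid \mathcal{F}^a_{\delta_m}(x_1,x_2,x_3,x_4)]$ to $\mathbb{P}^a[0\xlongleftrightarrow{BWBW}\partial B_{\delta_m}(0)]$. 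The point of the conditioning inside $T_5$ is that, given the four-arm event $\mathcal{F}^a_{\delta_m}$ reaching the four prescribed arcs, what remains is a question about the percolation configuration \emph{inside} $B_{\delta_m}(0)$: whether the four incoming arms of $\mathcal{F}^a_{\delta_m}$ can be continued all the way to $\partial B_\eta(0)$ (for $\eta\to 0$) so as to realize $\mathcal{E}^a$. First I would use Lemma~\ref{lem::four_arm_coupling_inner}: the coupling there says that, conditionally on $\mathcal{S}$, the configuration outside $B_{\delta_m}(0)$ under $\mathbb{P}^a[\,\cdot\mid 0\xlongleftrightarrow{BWBW}\partial B_\epsilon(0)]$ — and hence under the four-arm conditioning relevant here — is \emph{independent} of which four-arm-reaching pattern is prescribed near the outer boundary. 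Consequently, the event inside $B_{\delta_m}(0)$ that ``completes'' the arms to the origin has, in the limit $\eta\to 0$ and then $a\to 0$, a conditional probability that is the same whether we impose the boundary arcs $(x_j x_{j+1})$ or any other set of four arcs.

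The main steps, in order, would be: (i) Express $f(x_1,x_2,x_3,x_4;\delta_m)$ via the decomposition that isolates the ``inside $B_{\delta_m}(0)$'' completion event from the ``outside'' arm-reaching event — i.e. condition on the configuration in the annulus $A_{\delta_m,1}(0)$ and write $T_5$ as (limit of) $\mathbb{P}^a[\text{arms complete to }\partial B_\eta\text{ inside }B_{\delta_m} \mid \mathcal{F}^a_{\delta_m}, \text{landing pattern on }\partial B_{\delta_m}]$ integrated against the law of the landing pattern. (ii) Apply the coupling of Lemma~\ref{lem::four_arm_coupling_inner} (with $\epsilon = \delta_m$, or a slightly larger radius) to show that, up to an error of order $(\eta/\delta_m)^{c_1}$ which vanishes as $\eta\to 0$, the landing pattern on $\partial B_{\delta_m}(0)$ and the continuation event are governed by a law that is the \emph{same} for all choices of the four outer arcs — because both $\mathcal{E}^a(x_1,x_2,x_3,x_4)$ and the reference event only constrain the configuration outside $B_{\delta_m}(0)$, and on the event $\mathcal{S}$ this part is distributed identically to the unconditioned four-arm configuration $\mathbb{P}^a[\,\cdot\mid\mathcal{F}^a_{\delta_m,\cdot}]$. (iii) Pass to the continuum limit using \eqref{eqn::continuity_four_arm} and the fact (established in the proof of Theorem~\ref{thm::pivotal}) that the relevant events are continuity events for $\mathbb{P}$, so all the $a\to 0$ limits exist; then take $m\to\infty$. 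Since after step (ii) the only remaining dependence on $(x_1,x_2,x_3,x_4)$ has been shown to be through quantities already absorbed into $T_6$ in \eqref{eqn::four_aux3} (which carries the $\prod|\sin(\theta^k-\theta^j)|^{1/3}$ factor), the leftover factor $\lim_{m\to\infty} f$ must be a universal constant, which we may evaluate at the symmetric configuration $(1,\ii,-1,-\ii)$.

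The main obstacle I anticipate is step (ii): making rigorous the claim that the ``completion to the origin'' probability is \emph{asymptotically the same} for different boundary arc data. The subtlety is that $\mathcal{E}^a(x_1,x_2,x_3,x_4)$ is not literally an event depending only on hexagons outside $B_{\delta_m}(0)$ — it also requires the four arms to reach $0$ — so one cannot apply the last display of Lemma~\ref{lem::four_arm_coupling_inner} directly to $\mathcal{E}^a$ itself. One must instead split $\mathcal{E}^a$ into its ``outside $B_{\delta_m}$'' part (the landing-pattern/arm-reaching event $\mathcal{E}^{(a,\eta,\delta_m,\epsilon)}$-type event, which \emph{is} measurable outside $B_{\delta_m}$) and its ``inside $B_{\delta_m}$'' completion part, apply the coupling to the former, and argue that the conditional law of the latter given the former depends on the boundary data only through the (random) landing pattern on $\partial B_{\delta_m}(0)$ — whose law, by the coupling, is insensitive to the arc data. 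Keeping track of the order of limits ($\eta\to 0$, then $a\to 0$, then $m\to\infty$) and ensuring the $(\eta/\delta_m)^{c_1}$ errors are uniform is the technical heart; once that bookkeeping is done, the conclusion that $\lim_{m\to\infty} f$ is a location-independent constant, equal to $\lim_{m\to\infty} f(1,\ii,-1,-\ii;\delta_m)$, follows.
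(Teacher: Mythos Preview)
Your intuition is correct --- the conditional probability of ``completing the four arms from $\partial B_{\delta_m}(0)$ to the origin'' should become asymptotically insensitive to the location of the boundary arcs $(x_j x_{j+1})$ --- but the tool you reach for, Lemma~\ref{lem::four_arm_coupling_inner}, does not furnish this.  That lemma compares two \emph{inner} conditionings, $\mathbb{P}^a[\,\cdot\mid 0\xlongleftrightarrow{BWBW}\partial B_\epsilon(0)]$ versus $\mathbb{P}^a[\,\cdot\mid\mathcal{F}^a_{\eta,\epsilon}(0)]$, both with the same outer radius and no arc data whatsoever, and it asserts agreement on events measurable \emph{outside} $B_\delta(0)$.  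It says nothing about how the conditional law \emph{inside} $B_{\delta_m}(0)$ reacts to changing the outer boundary arcs from $(x_1,\dots,x_4)$ to $(1,\ii,-1,-\ii)$, which is exactly what step~(ii) needs.  Your claim that ``by the coupling, the landing pattern on $\partial B_{\delta_m}(0)$ is insensitive to the arc data'' is therefore unsupported: neither marginal in Lemma~\ref{lem::four_arm_coupling_inner} carries any arc data to be insensitive to.

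What is actually required is an \emph{outside-in} coupling: one must couple $\mathbb{P}^a[\,\cdot\mid\mathcal{F}^a_{\delta_m}(x_1^a,\dots,x_4^a)]$ with $\mathbb{P}^a[\,\cdot\mid\mathcal{F}^a_{\delta_m}(z_1^a,\dots,z_4^a)]$ so that the two configurations agree \emph{inside} a neighbourhood of $B_{\delta_m}(0)$ with high probability.  The paper does this via the ``faces'' machinery of Garban--Pete--Schramm (their Proposition~3.1): explore the four interfaces from $\partial B_1(0)$ inward until they first hit $\partial B_{\delta_m}(0)$, recording the configuration of faces $\boldsymbol{\eta}_{\delta_m}$.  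The resulting coupling (Lemma~\ref{lem::faces}) guarantees that with probability at least $1-\delta_m^{c_2}$ the faces for the two sets of arcs coincide and lie in $A_{\delta_m,\sqrt{\delta_m}}(0)$, and that on this event the configurations in the enclosed domain match.  Since, conditionally on $\boldsymbol{\eta}_{\delta_m}$, the event $\mathcal{E}^a$ reduces to ``each face is connected to $0$ by a path of its colour inside the enclosed domain,'' one gets $|f(x_1,x_2,x_3,x_4;\delta_m)-f(1,\ii,-1,-\ii;\delta_m)|\le\delta_m^{c_2}$ directly.  To repair your argument, replace the appeal to Lemma~\ref{lem::four_arm_coupling_inner} by this outside-in faces coupling.
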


We end this section with a corollary of the proof of Theorem~\ref{thm::pivotal}.

\begin{corollary} \label{coro::four_arm}
	There exists  a universal constant $C_8\in (0,\infty)$ such that 
	\begin{align} 
	\lim_{m\to \infty} \delta_{m}^{-\frac{5}{4}}\times \mathbb{P}\left[\mathcal{F}_{\delta_m,1}(0)\right]=C_8. \label{eqn::coro_aux2}
\end{align}
\end{corollary}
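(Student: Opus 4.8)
The corollary says precisely that the scaling-limit quasi-multiplicativity constant of the polychromatic four-arm event stabilizes as the inner radius shrinks, and the plan is to prove it with the same tools used for Theorem~\ref{thm::pivotal}. First I would record a reduction. By~\eqref{eqn::continuity_four_arm}, $\mathbb{P}[\mathcal{F}_{\delta_m,1}(0)]=\lim_{a\to0}\mathbb{P}^a[\mathcal{F}^a_{\delta_m,1}(0)]$; standard quasi-multiplicativity of the four-arm event on $a\mathcal{T}$ gives universal constants $0<c_*\le C_*<\infty$ with
\[
c_*\,\mathbb{P}^a\big[0\xlongleftrightarrow{BWBW}\partial B_{\delta_m}(0)\big]\,\mathbb{P}^a\big[\mathcal{F}^a_{\delta_m,1}(0)\big]\ \le\ \orho_a\ \le\ C_*\,\mathbb{P}^a\big[0\xlongleftrightarrow{BWBW}\partial B_{\delta_m}(0)\big]\,\mathbb{P}^a\big[\mathcal{F}^a_{\delta_m,1}(0)\big];
\]
and~\eqref{eqn::scal_four} gives $\mathbb{P}^a[0\xlongleftrightarrow{BWBW}\partial B_{\delta_m}(0)]/\orho_a\to\delta_m^{-5/4}$. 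Combining these, all relevant limits in $a$ exist and
\[
J(\delta_m)\ :=\ \delta_m^{-\frac54}\,\mathbb{P}\big[\mathcal{F}_{\delta_m,1}(0)\big]\ =\ \lim_{a\to0}\frac{\mathbb{P}^a[0\xlongleftrightarrow{BWBW}\partial B_{\delta_m}(0)]\;\mathbb{P}^a[\mathcal{F}^a_{\delta_m,1}(0)]}{\orho_a}\ \in\ \Big[\tfrac1{C_*},\tfrac1{c_*}\Big],
\]
so $\{J(\delta_m)\}_m$ is bounded away from $0$ and $\infty$ (one can also read boundedness off from~\eqref{eqn::RSW_0}). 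It then remains only to prove that $\lim_{m\to\infty}J(\delta_m)$ exists; that limit is the constant $C_8$.

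Second, to obtain convergence I would prove an approximate multiplicativity for $G(t):=\mathbb{P}[\mathcal{F}_{t,1}(0)]$, $t\in(0,1)$, using that scale invariance of the full scaling limit $\Lambda$ gives $\mathbb{P}[\mathcal{F}_{\eta,\epsilon}(0)]=G(\eta/\epsilon)$. Decomposing the four-arm crossing of $A_{st,1}(0)$ at the intermediate radius $t$ shows that $\mathcal{F}_{st,1}(0)$ occurs iff $\mathcal{F}_{t,1}(0)$ and $\mathcal{F}_{st,t}(0)$ both occur and in addition the crossing arms of the two annuli connect near $\partial B_t(0)$. Applying the coupling of Lemma~\ref{lem::four_arm_coupling_inner} at outer radius $t$, with the exploration started at the origin and the inner radius sent to $0$ — exactly as in the derivation of~\eqref{eqn::scal_four} and of the limits $T_1$, $T_5$ in the proof of Theorem~\ref{thm::pivotal} — one shows that, conditionally on the inner crossing, the configuration in a fixed annular neighbourhood of $\partial B_t(0)$ seen from outside is asymptotically insensitive to scales $\ll t$; this yields $G(st)=\lambda\,G(s)\,G(t)\,(1+o(1))$ as $s\wedge t\to0$ for a universal "junction constant" $\lambda\in(0,\infty)$. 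A routine telescoping argument (writing $\delta_m$ as a product of scales and iterating, in the same spirit as the sharpenings of $\orho_a=a^{5/4+o(1)}$) then forces $J(\delta_m)=\delta_m^{-5/4}G(\delta_m)$ to converge to a finite positive limit, proving~\eqref{eqn::coro_aux2}. Alternatively, one can deduce convergence from Lemma~\ref{lem::four_aux1}: fixing $x_1,\dots,x_4\in\partial B_1(0)$ in counterclockwise order one has $\mathcal{F}_{\delta_m}(x_1,\dots,x_4)\subseteq\mathcal{F}_{\delta_m,1}(0)$, so $\delta_m^{-5/4}\mathbb{P}[\mathcal{F}_{\delta_m}(x_1,\dots,x_4)]=\big(\delta_m^{-5/4}\mathbb{P}[\mathcal{F}_{\delta_m,1}(0)]\big)\,\mathbb{P}[\mathcal{F}_{\delta_m}(x_1,\dots,x_4)\mid\mathcal{F}_{\delta_m,1}(0)]$; the left-hand side converges by Lemma~\ref{lem::four_aux1}, the conditional probability is bounded below by RSW, and it converges to a strictly positive constant by the analogous coupling argument applied near the outer circle $\partial B_1(0)$, whence $\delta_m^{-5/4}\mathbb{P}[\mathcal{F}_{\delta_m,1}(0)]$ converges.

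The main obstacle is the approximate multiplicativity with a genuinely vanishing error — equivalently, the stabilization of the junction constant $\lambda$: one must show that, conditionally on the four-arm crossing of the outer annulus, the law of that crossing near the inner circle, and hence the probability that it can be extended one further dyadic scale inward, does not merely stay bounded (which is RSW) but actually converges as the scale separation grows. This is precisely the point where the coupling Lemma~\ref{lem::four_arm_coupling_inner} does the work; once it is in place, the remaining steps are routine manipulations with RSW estimates and the limits already established in the proof of Theorem~\ref{thm::pivotal}.
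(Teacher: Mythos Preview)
Your proposal is correct in spirit and invokes the right tool --- the four-arm coupling of Lemma~\ref{lem::four_arm_coupling_inner} (and its faces version, Lemma~\ref{lem::faces}) --- but the paper takes a more direct route than either of your two alternatives. Rather than proving an approximate multiplicativity for $G(t)=\mathbb{P}[\mathcal{F}_{t,1}(0)]$ and then telescoping (your route (a)), or factoring through Lemma~\ref{lem::four_aux1} and showing the outer conditional probability converges (your route (b)), the paper simply re-runs the decomposition~\eqref{eqn::four_aux3} with the event $\{0\xlongleftrightarrow{BWBW}\partial B_1(0)\}$ in place of $\mathcal{E}^a(x_1,\dots,x_4)$. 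Since $\orho_a^{-1}\,\mathbb{P}^a[0\xlongleftrightarrow{BWBW}\partial B_1(0)]=1$ tautologically, this yields the exact identity
\[
1\;=\;\delta_m^{-5/4}\,\mathbb{P}\big[\mathcal{F}_{\delta_m,1}(0)\big]\times f_{\delta_m},\qquad
f_{\delta_m}:=\lim_{a\to 0}\frac{\mathbb{P}^a\big[0\xlongleftrightarrow{BWBW}\partial B_1(0)\,\big|\,\mathcal{F}^a_{\delta_m,1}(0)\big]}{\mathbb{P}^a\big[0\xlongleftrightarrow{BWBW}\partial B_{\delta_m}(0)\big]},
\]
so that $J(\delta_m)=1/f_{\delta_m}$ with no error term. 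The convergence of $f_{\delta_m}$ is then obtained by coupling it with the already-established limit $f(x_1,\dots,x_4;\delta_m)\to C_7$ from Theorem~\ref{thm::pivotal}, via the faces argument of Lemma~\ref{lem::four_aux2} (with the caveat, noted in the paper, that the absence of initial boundary faces requires a little extra care \`a la Proposition~3.6 of Garban--Pete--Schramm). This gives $C_8=1/C_7$ immediately.

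Your route (b) is essentially the dual factorization and would also work; it just requires proving separately that the conditional probability $\mathbb{P}[\mathcal{F}_{\delta_m}(x_1,\dots,x_4)\mid\mathcal{F}_{\delta_m,1}(0)]$ converges, which is the same faces-coupling argument. Your route (a) is workable but less routine than you suggest: approximate multiplicativity $G(st)=\lambda G(s)G(t)(1+o(1))$ with a mere $o(1)$ error does \emph{not} by itself force $t^{-5/4}G(t)$ to converge --- you need the polynomial rate supplied by the coupling lemma to make the telescoping product converge, and that step should be made explicit. The paper's argument sidesteps this entirely by leveraging the convergence already established inside the proof of Theorem~\ref{thm::pivotal}.
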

\begin{proof}[Sketch of the proof]
	 As in the proof of~\eqref{eqn::four_aux3}, we can show that
	\begin{equation*}
			1=\lim_{a\to 0} \orho_a^{-1}\times \mathbb{P}^a\left[0\xlongleftrightarrow{BWBW}\partial B_1(0)\right]=\delta_m^{-\frac{5}{4}}\times\mathbb{P}\left[\mathcal{F}_{\delta_m,1}(0)\right]\times f_{\delta_m},
	\end{equation*}
where 
\begin{equation*}
	f_{\delta_m}:=\lim_{a\to 0}\frac{\mathbb{P}^a\left[0\xlongleftrightarrow{BWBW}\partial B_1(0)|\mathcal{F}_{\delta_m,1}^a(0)\right]}{\mathbb{P}^a\left[0\xlongleftrightarrow{BWBW}\partial B_{\delta_m}(0)\right]}.
\end{equation*}
From the proof of Theorem~\ref{thm::pivotal}, we know that $\lim_{m\to\infty}f(x_1,x_2,x_3,x_4;\delta_m)$ exists; we call this limit $C_{7}$.
One then can proceed as in the proof of Lemma~\ref{lem::four_aux2}, given in the next section, to show that\footnote{Compared with the proof of Lemma~\ref{lem::four_aux2}, there is some additional work to do, due to the lack of initial faces. However, the strategy is still clear, see~\cite[Sketch of the proof of Proposition~3.6]{GarbanPeteSchrammPivotalClusterInterfacePercolation}.}
\begin{equation*}
	\lim_{m\to \infty}\frac{f_{\delta_m}}{f(x_1,x_2,x_3,x_4;\delta_m)} =1.
\end{equation*} 
Combining all of these observations together, we obtain
\begin{equation*}
	\lim_{m\to \infty}\delta_m^{-\frac{5}{4}}\times \mathbb{P}\left[\mathcal{F}_{\delta_m}(0)\right]=C_8:=1/ C_{7}.
\end{equation*}
\end{proof}

\subsection{Proof of technical lemmas}
The goal of this section is to prove Lemmas~\ref{lem::four_aux1} and \ref{lem::four_aux2}. 
We start with Lemma~\ref{lem::four_aux1}.
We assume the following boundary condition on $\partial B_1(0)$:
\begin{equation} \label{eqn::decla}
	\text{the hexagons intersecting $(x_1 x_2) \cup (x_3 x_4)$ are black and other hexagons intersecting $\partial B_1(0)$ are white.}
\end{equation}
Note that this has no influence on the events $\mathcal{F}^a_{\delta_m}(x_1,x_2,x_3,x_4)$ and $\mathcal{E}^a(x_1,x_2,x_3,x_4)$ or on their probabilities.
The choice of boundary conditions induces two interfaces inside $B_1(0)$.
We use these interfaces to define four paths, $\gamma_j^a, j=1,2,3,4$, corresponding to parametrizations of the interfaces starting at the points, $y^a_1,\ldots,y^a_4$, where they intersect $\partial B_{1}(0)$.
Let $y_j^a\in a\mathcal{H}$ be the vertex of $a\mathcal{H}$ where the interface $\gamma_j^a$ starts, for $j=1,2,3,4$. For $z\in \mathbb{C}$ and $A\subseteq \mathbb{C}$, define
	\begin{equation*}
		d(z,A):=\inf_{w\in A} |w-z|. 
\end{equation*}
Note that 
\begin{align} \label{eqn::proba_discre_to_conti}
	\mathcal{E}^a\left(x_1,x_2,x_3,x_4\right)=\{\gamma_j^a \text{ reaches }0,\enspace 1\leq j\leq 4\},\quad \mathcal{F}_{\delta_m}^a\left(x_1,x_2,x_3,x_4\right)=\{d(0,\gamma_j^a)\leq \delta_m, \enspace 1\leq j\leq 4\}.
\end{align}

Write
\begin{equation*}
	\alpha_1:=\{\{1,2\},\{3,4\}\},\quad \alpha_2:=\{\{1,4\},\{2,3\}\}.
\end{equation*}
Let $\varphi $ be a conformal map from $\mathbb{D}$ onto $\mathbb{H}$ with $-\infty<\mathring{x}_1<\cdots<\mathring{x}_4<\infty$, where $\mathring{x}_j=\varphi(x_j)$ for $1\leq j\leq 4$. 
Define
\begin{align}
	\mathcal{Z}_{\alpha_1}(\mathring{x}_1,\mathring{x}_2,\mathring{x}_3,\mathring{x}_4)=&\chi(\mathring{x}_1,\mathring{x}_2,\mathring{x}_3,\mathring{x}_4)^{\frac{1}{3}} \frac{H\left(\chi(\mathring{x}_1,\mathring{x}_2,\mathring{x}_3,\mathring{x}_4)\right)}{H(1)}, \label{eqn::alpha_1}\\
	\mathcal{Z}_{\alpha_2}(\mathring{x}_1,\mathring{x}_2,\mathring{x}_3,\mathring{x}_4)=&\left(1-\chi(\mathring{x}_1,\mathring{x}_2,\mathring{x}_3,\mathring{x}_4)\right)^{\frac{1}{3}} \frac{H\left(1-\chi(\mathring{x}_1,\mathring{x}_2,\mathring{x}_3,\mathring{x}_4)\right)}{H(1)},\label{eqn::alpha_2}
\end{align}
where $\chi$ is the cross-ratio and $H$ is a hypergeometric function:
\begin{equation*}
	\chi(\mathring{x}_1,\mathring{x}_2,\mathring{x}_3,\mathring{x}_4):=\frac{(\mathring{x}_2-\mathring{x}_1)(\mathring{x}_4-\mathring{x}_3)}{(\mathring{x}_4-\mathring{x}_2)(\mathring{x}_3-\mathring{x}_1)},\quad H(z)=\ _2F_1\Big(\frac{2}{3},\frac{1}{3},\frac{4}{3};z\Big).
\end{equation*}
 
For two collections of planar continuous oriented curves $\left(\gamma_j\right)_{1\leq j\leq 4}$ and $\left(\tilde{\gamma}_j\right)_{1\leq j\leq 4}$,
we define 
\begin{equation}\label{eqn::def_dist4}
	\dist \left(\left(\gamma_j\right)_{1\leq j\leq 4},\left(\tilde{\gamma}_j\right)_{1\leq j\leq 4}\right):=\sup_{1\leq j\leq 4} \dist\left(\gamma_j.\tilde{\gamma}_j\right),
\end{equation}
where $\dist(\gamma_j,\tilde{\gamma}_j)$ is defined in~\eqref{eqn::curve_metric}. 

\begin{lemma} \label{lem::four_aux11}
	Assume the same setup as in Theorem~\ref{thm::pivotal}. Then as $a\to 0$, the law of $\left(\gamma_j^a\right)_{1\leq j\leq 4}$ converges weakly in the topology induced by $\dist$ in~\eqref{eqn::def_dist4} to 
	\begin{equation*}
		\mathbb{P}_4:=\mathcal{Z}_{\alpha_1}(\mathring{x}_1,\mathring{x}_2,\mathring{x}_3,\mathring{x}_4)\times \mathbb{P}_{\alpha_1}+\mathcal{Z}_{\alpha_2}(\mathring{x}_1,\mathring{x}_2,\mathring{x}_3,\mathring{x}_4)\times \mathbb{P}_{\alpha_2},
	\end{equation*}
where $\mathbb{P}_{\alpha_j}$ is the so-called global $2$-multiple $\SLE_6$ on $B_1(0)$ with marked points $(x_1,x_2,x_3,x_4)$ and link pattern $\alpha_j$, whose definition is given by~\cite[Definition~1.1  and Section~1.3]{BeffaraPeltolaWuUniqueness}, uniqueness is given by~\cite[Theorem~A.1]{NonSimple} and whose existence is given by~\cite[Corollary~B.2]{LiuPeltolaWuUST}.
\end{lemma}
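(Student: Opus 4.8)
\noindent\emph{Proof plan.}
The boundary condition~\eqref{eqn::decla} is a four-arc ``generalized Dobrushin'' condition, with colour changes located exactly at the four boundary points $y_1^a,\dots,y_4^a$ near $x_1,\dots,x_4$; it therefore produces exactly two percolation interfaces in $B_1(0)$, each joining two of these four endpoints, and the way they pair the endpoints is recorded by a random \emph{link pattern} $\beta^a\in\{\alpha_1,\alpha_2\}$. By planarity and the colour pattern, $\{\beta^a=\alpha_1\}$ coincides with one of the two complementary monochromatic crossing events of $B_1(0)$ between complementary boundary arcs (and $\{\beta^a=\alpha_2\}$ with the other). The plan is to establish: \emph{(i)} tightness of the law of $(\gamma_j^a)_{1\le j\le 4}$ in the topology induced by $\dist$ in~\eqref{eqn::def_dist4}; \emph{(ii)} $\mathbb{P}^a[\beta^a=\alpha_i]\to\mathcal{Z}_{\alpha_i}(\mathring{x}_1,\mathring{x}_2,\mathring{x}_3,\mathring{x}_4)$ for $i=1,2$; \emph{(iii)} conditionally on $\{\beta^a=\alpha_i\}$, weak convergence of the $4$-tuple $(\gamma_j^a)_j$ to the global $2$-multiple $\SLE_6$ measure $\mathbb{P}_{\alpha_i}$.

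For \emph{(i)}, each $\gamma_j^a$ is a parametrization of a percolation interface in $B_1(0)$, and the laws of a single such interface form a tight family in the curve metric~\eqref{eqn::curve_metric} by the equicontinuity estimates underlying the construction of the percolation full scaling limit~\cite{CamiaNewmanPercolationFull}; tightness of each coordinate marginal yields tightness of the joint law, and every subsequential limit is supported on $4$-tuples of continuous curves with the feature that $\gamma_j$ and the curve started from the endpoint paired with $y_j^a$ are the same curve traversed with opposite orientations. For \emph{(ii)}, $\{\beta^a=\alpha_1\}$ is a percolation crossing event of the type governed by Cardy's formula, so by the convergence of critical percolation crossing probabilities to the conformally invariant Cardy--Smirnov formula~\cite{Smirnov:Critical_percolation_in_the_plane}, together with a matching of the link-pattern convention to the crossing convention, $\mathbb{P}^a[\beta^a=\alpha_i]\to\mathcal{Z}_{\alpha_i}$ as in~\eqref{eqn::alpha_1}--\eqref{eqn::alpha_2}; the identity $\mathcal{Z}_{\alpha_1}+\mathcal{Z}_{\alpha_2}=1$, i.e.\ $\chi^{1/3}H(\chi)+(1-\chi)^{1/3}H(1-\chi)=H(1)$, is the classical hypergeometric connection formula, equivalently the statement that exactly one of the two dual crossings occurs.

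For \emph{(iii)}, fix $i$. By RSW, $\mathbb{P}^a[\beta^a=\alpha_i]$ is bounded away from $0$ and $1$, so we may condition on $\{\beta^a=\alpha_i\}$; on this event one interface $\gamma^a$ joins $y_1^a$ to the endpoint paired with it by $\alpha_i$ and the other joins the remaining two endpoints. By the domain Markov property of site percolation, conditionally on $\gamma^a$ and on $\{\beta^a=\alpha_i\}$, the configuration in the component $D^a$ of $B_1(0)\setminus\gamma^a$ containing the remaining two endpoints is critical percolation in $D^a$ with Dobrushin boundary conditions having those two endpoints as marked points (the two arcs of $\partial D^a$ are monochromatic because the colour along $\gamma^a$ is constant on the $D^a$-side), so the second interface is a chordal percolation interface in $D^a$. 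Passing to a subsequential limit $(\gamma_j)_j$ and using the convergence of chordal percolation interfaces to chordal $\SLE_6$, including in the rough domains that arise after removing an already-drawn interface~\cite{CamiaNewmanPercolation,CamiaNewmanPercolationFull}, one obtains the resampling property: conditionally on either of the two limiting curves, the other is chordal $\SLE_6$ in the relevant component of $B_1(0)$ minus that curve, run between the two marked points prescribed by $\alpha_i$. This is the characterizing property of the global $2$-multiple $\SLE_6$ on $B_1(0)$ with marked points $(x_1,x_2,x_3,x_4)$ and link pattern $\alpha_i$; by its uniqueness~\cite[Theorem~A.1]{NonSimple} (in the formulation of~\cite[Definition~1.1 and Section~1.3]{BeffaraPeltolaWuUniqueness}) and existence~\cite[Corollary~B.2]{LiuPeltolaWuUST}, the conditional law of $(\gamma_j)_j$ given $\beta=\alpha_i$ is $\mathbb{P}_{\alpha_i}$.

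Combining \emph{(i)}--\emph{(iii)}, every subsequential limit of the law of $(\gamma_j^a)_j$ equals $\mathcal{Z}_{\alpha_1}\mathbb{P}_{\alpha_1}+\mathcal{Z}_{\alpha_2}\mathbb{P}_{\alpha_2}=\mathbb{P}_4$; since this limit is independent of the subsequence, $(\gamma_j^a)_j$ converges weakly to $\mathbb{P}_4$. The main obstacle is step \emph{(iii)}: carefully transferring the discrete domain Markov property to the scaling limit to verify the resampling characterization of global multiple $\SLE_6$, which requires controlling the chordal interface in the fractal-boundary complement of an already-explored interface, allowing the marked points to approach that rough boundary, and checking that conditioning on $\{\beta^a=\alpha_i\}$ remains compatible with these conditional laws in the limit. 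This follows the by-now-standard template for identifying scaling limits of interface ensembles with multiple SLE, but needs care because the curves are non-simple ($\kappa=6$) and so may touch each other and the boundary.
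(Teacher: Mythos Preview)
Your proof sketch is correct and follows the standard route for establishing convergence of multiple percolation interfaces to global multiple $\SLE_6$: tightness via equicontinuity estimates, identification of the link-pattern weights with Cardy's formula, and identification of the conditional laws via the resampling characterization and uniqueness of global multiple $\SLE_\kappa$. The paper, by contrast, does not reprove any of this: its entire proof is a one-line citation of~\cite[Theorem~B.1 and Corollary~B.2]{LiuPeltolaWuUST}, which packages exactly the argument you outline (tightness, Cardy weights, and the resampling/uniqueness identification for $\kappa=6$) into a ready-made statement. So your approach is not genuinely different from the paper's---it is an unpacking of the black-box result the paper invokes. What your write-up buys is self-containment and an explicit acknowledgement of the delicate step (passing the domain Markov property through the limit when the complementary domain has fractal boundary and the curves are non-simple); what the paper's citation buys is brevity and a clean pointer to where those details have already been carried out in full.
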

\begin{proof}
This is a consequence of~\cite[Theorem~B.1 and Corollary~B.2]{LiuPeltolaWuUST}.
\end{proof}

\begin{lemma} \label{lem::four_aux12}
	Assume the same setup as in Theorem~\ref{thm::pivotal}. Then there exists a universal constant $C_{9}\in (0,\infty)$ such that for $1\leq j\leq 2$, we have 
	\begin{equation}
	\lim_{m\to \infty}	\delta_m^{-\frac{5}{4}}\mathbb{P}_{\alpha_j}\left[d(0,\gamma_k)\leq \delta_m,\enspace 1\leq k\leq 4\right]=C_{9}\times \frac{\prod_{1\leq k<r\leq 4} |\sin(\theta^k-\theta^r)|^{\frac{1}{3}}}{\mathcal{Z}_{\alpha_j}\left(\mathring{x}_1,\mathring{x}_2,\mathring{x}_3,\mathring{x}_4\right)}.
	\end{equation}
\end{lemma}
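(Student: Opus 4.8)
The plan is to...\textbf{Proof proposal for Lemma~\ref{lem::four_aux12}.}
The plan is to show that the \emph{interior four-arm density} of the global $2$-multiple $\SLE_6$ exists, is conformally covariant, and then to evaluate it explicitly by reducing it to a ratio of multiple $\SLE_6$ partition functions.

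\emph{Step 1 (existence and a two-sided bound).} Set $G^{(m)}_{\alpha_j}:=\delta_m^{-5/4}\,\mathbb{P}_{\alpha_j}\!\left[\,d(0,\gamma_k)\le \delta_m,\ 1\le k\le 4\,\right]$. By~\eqref{eqn::proba_discre_to_conti}, this event forces the two curves of the multiple $\SLE_6$ to each pass within $\delta_m$ of $0$, and hence forces a polychromatic four-arm event in every annulus $A_{\delta_m,\epsilon_0}(0)$ with $\epsilon_0$ small and fixed. RSW estimates for the percolation full scaling limit (as in the proofs of Lemmas~2.1 and~2.2 of~\cite{CamiaNewman2009ising}) give $G^{(m)}_{\alpha_j}\in[c,C]$ for universal $0<c<C<\infty$. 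To upgrade boundedness to convergence, I would condition on the trace of the curve configuration outside $B_{\epsilon_0}(0)$: by quasi-multiplicativity and arm separation for the polychromatic four-arm event — which hold for the percolation scaling limit, and transfer to $\mathbb{P}_{\alpha_j}$ via the coupling philosophy behind Lemma~\ref{lem::four_arm_coupling_inner} together with the convergence in Lemma~\ref{lem::four_aux11}, or, alternatively, via the sharp estimates of~\cite{SharpArmExponent} — the conditional probability of continuing the four-arm event from $\partial B_{\epsilon_0}(0)$ down to $\partial B_{\delta_m}(0)$ equals $(\delta_m/\epsilon_0)^{5/4}\bigl(1+o_m(1)\bigr)$ times a constant that depends only on the (well-separated) trace on $\partial B_{\epsilon_0}(0)$ and that converges. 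Letting $m\to\infty$ and then $\epsilon_0\to 0$ shows $G_{\alpha_j}(0):=\lim_{m\to\infty}G^{(m)}_{\alpha_j}$ exists and lies in $(0,\infty)$.

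\emph{Step 2 (conformal covariance).} The law $\mathbb{P}_{\alpha_j}$ is a conformally invariant probability measure on curve configurations with marked points $(x_1,x_2,x_3,x_4)$ and link pattern $\alpha_j$ (this is part of its characterization in~\cite{BeffaraPeltolaWuUniqueness,NonSimple,LiuPeltolaWuUST}). Since the event $\{d(z,\gamma_k)\le\delta\}$ is conformally natural at small scales, the interior four-arm density $G_{\alpha_j}(z;x_1,\ldots,x_4;\Omega)$ satisfies
\[
G_{\alpha_j}\bigl(\psi(z);\psi(x_1),\ldots,\psi(x_4);\psi(\Omega)\bigr)=|\psi'(z)|^{-5/4}\,G_{\alpha_j}(z;x_1,\ldots,x_4;\Omega)
\]
for every conformal map $\psi$; in particular $G_{\alpha_j}(0;x_1,\ldots,x_4;B_1(0))$ is invariant under rotations of $B_1(0)$, hence a function of the angular differences $\theta^k-\theta^j$ only.

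\emph{Step 3 (explicit evaluation; the main obstacle).} Under $\mathbb{P}_{\alpha_j}$, the marginal of the curve joining the first pair of $\alpha_j$ is a chordal $\SLE_6$ reweighted by a ratio of partition functions, and, given that curve, the second curve is a chordal $\SLE_6$ in the slit complementary domain, again reweighted. Conditioning the first curve to pass within $\delta_m$ of $0$ produces, in the limit, the explicit interior Green's function of chordal $\SLE_6$; conditioning the second curve to do likewise, given the slit left by the first, produces a second Green's-type factor; the arm-separation analysis of Step~1 shows that the product of these contributions, divided by $\mathcal{Z}_{\alpha_j}(\mathring{x}_1,\ldots,\mathring{x}_4)$, equals $G_{\alpha_j}(0;x_1,\ldots,x_4;B_1(0))$ up to a universal constant. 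Equivalently, $G_{\alpha_j}(0;\cdot;B_1(0))=C_9\,\widehat{\mathcal{Z}}_{\alpha_j}(x_1,\ldots,x_4)\big/\mathcal{Z}_{\alpha_j}(\mathring{x}_1,\ldots,\mathring{x}_4)$, where $\widehat{\mathcal{Z}}_{\alpha_j}$ is the multiple $\SLE_6$ partition function with an additional interior four-arm insertion at $0$. The crux is to prove that $\widehat{\mathcal{Z}}_{\alpha_j}$, evaluated with the interior point at the center of the disk, equals $C_9\prod_{1\le j<k\le 4}|x_k-x_j|^{1/3}$ with a universal constant $C_9$ independent of $\alpha_j$ and of $(x_1,\ldots,x_4)$: the a priori cross-ratio dependence collapses because of the second-order (BPZ/null-vector, $\kappa=6$) differential equations satisfied by $\widehat{\mathcal{Z}}_{\alpha_j}$ together with the covariance of Step~2, and the $\alpha_j$-independence reflects that the local four-arm picture at the origin forgets the global link pattern. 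Since $|x_k-x_j|=2|\sin(\theta^k-\theta^j)|$ for $x_j=\exp(2\ii\theta^j)$, this is the claimed identity. I expect this step — pinning down $\widehat{\mathcal{Z}}_{\alpha_j}$ and tracking the arm-separation constants carefully enough to rule out any residual moduli dependence — to be the main technical obstacle. As a consistency check, summing the identity over $j=1,2$ against the weights $\mathcal{Z}_{\alpha_j}$ and invoking Lemma~\ref{lem::four_aux11} reproduces Lemma~\ref{lem::four_aux1} with $C_6=2C_9$.
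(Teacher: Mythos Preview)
The paper's proof is a single sentence: it cites \cite[Theorem~1.1]{ZhanGreen2SLE} (Zhan's Green's function for $2$-SLE) with $\kappa=6$, $D=\unitD$, $z_0=0$, which gives exactly this limit, including the explicit product $\prod|\sin(\theta^k-\theta^r)|^{1/3}$ and the $\alpha_j$-independent constant. So the lemma is quoted from the literature, not proved from scratch.

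Your proposal is, in effect, an outline of how one might \emph{reprove} Zhan's theorem. Steps~1 and~2 are fine in spirit (existence of the two-curve interior Green's function and its conformal covariance are indeed the first two ingredients in Zhan's approach), though your Step~1 mixes percolation RSW with SLE in a way that would need to be made precise: $\mathbb{P}_{\alpha_j}$ is an SLE measure, and the quasi-multiplicativity/separation you invoke should be argued directly for SLE (as Zhan does), not transported from percolation.

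The genuine gap is Step~3. The statement that the ``interior four-arm insertion'' partition function $\widehat{\mathcal{Z}}_{\alpha_j}$ collapses to a universal constant times $\prod_{j<k}|x_k-x_j|^{1/3}$, with no residual cross-ratio dependence and no dependence on the link pattern $\alpha_j$, is precisely the content of the theorem you are trying to prove. Saying that ``BPZ/null-vector equations together with covariance'' force this is not a proof: those equations admit multi-dimensional solution spaces in general, and isolating the correct solution (and showing it is the same for both $\alpha_j$) is the actual work. Zhan's argument proceeds instead through a careful two-curve SLE analysis (conditioning on one curve, computing the Green's function of the second in the slit domain, and controlling the limit), which is substantial and not reducible to the heuristic you give. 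Your consistency check at the end ($C_6=2C_9$) is correct and matches the paper, but it checks the answer rather than deriving it.

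In short: either cite \cite{ZhanGreen2SLE} directly, as the paper does, or recognize that your Step~3 is a placeholder for that paper's main theorem.
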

\begin{proof}
	This is a special case of~\cite[Theorem~1.1]{ZhanGreen2SLE} with $\kappa=6$, $D=\unitD$ and $z_0=0$.
\end{proof}

\begin{proof}[Proof of Lemma~\ref{lem::four_aux1}]
We have 
\begin{align*}
& \lim_{m\to \infty} \delta_m^{-\frac{5}{4}}\times \mathbb{P}\left[\mathcal{F}_{\delta_m}\left(x_1,x_2,x_3,x_4\right)\right] = \lim_{m\to \infty}\delta_m^{-\frac{5}{4}}\lim_{a\to 0} \mathbb{P}^a\left[\mathcal{F}_{\delta_m}^a\left(x_1,x_2,x_3,x_4\right)\right] \\
& \qquad \qquad = \lim_{m\to \infty}\delta_m^{-\frac{5}{4}}\lim_{a\to 0} \mathbb{P}^a\left[d(0,\gamma_j^a)\leq \delta_m, \enspace 1\leq j\leq 4\right] \\
& \qquad \qquad = \lim_{m\to \infty}\delta_{m}^{-\frac{5}{4}} \left(\sum_{j=1}^2 \mathcal{Z}_{\alpha_j}(\mathring{x}_1,\mathring{x}_2,\mathring{x}_3,\mathring{x}_4)\times \mathbb{P}_{\alpha_j}\left[d(0,\gamma_k)\leq \delta_m,\enspace 1\leq k\leq 4 \right]\right) \\
& \qquad \qquad = C_6\prod_{1\leq j<k\leq 4}|\sin(\theta^j-\theta^k)|^{\frac{1}{3}},
\end{align*}
where $C_6:=2C_{9}$, the second equality is due to the observation~\eqref{eqn::proba_discre_to_conti}, the third equality is due to Lemma~\ref{lem::four_aux11}, and the last equality is due to Lemma~\ref{lem::four_aux12}.
\end{proof}

\begin{proof}[Proof of Lemma~\ref{lem::four_aux2}]

Let $z_1^a,z_2^a,z_3^a,z_4^a \in a\mathcal{T}$ denote four hexagons which intersect $\partial B_1(0)$ and whose centers are close to $1,\ii,-1,-\ii$, respectively.
To prove the lemma, we need to compare the conditional probabilities $\mathbb{P}^a\left[\mathcal{E}^a(x_1^a,x_2^a,x_3^a,x_4^a)| \mathcal{F}^a_{\delta_m}(x_1^a,x_2^a,x_3^a,x_4^a)\right]$ and $\mathbb{P}^a\left[\mathcal{E}^a(z_1^a,z_2^a,z_3^a,z_4^a)| \mathcal{F}^a_{\delta_m}(z_1^a,z_2^a,z_3^a,z_4^a)\right]$, where the events $\mathcal{E}^a\left(x_1^a,x_2^a,x_3^a,x_4^a\right)$ and $\mathcal{E}^a\left(z_1^a,z_2^a,z_3^a,z_4^a\right)$ depend on the labels of vertices inside the annulus $B_{1}(0)\setminus B_{\delta_m}(0)$, including vertices that are very close to $\partial B_{1}(0)$.
To do this, we will couple the two conditional measures $\mathbb{P}^a\left[\, \cdot \, |\mathcal{F}^a_{\delta_m}(x_1^a,x_2^a,x_3^a,x_4^a)\right]$ and $\mathbb{P}^a\left[\, \cdot \,  |\mathcal{F}_{\delta_m}^a(z_1^a,z_2^a,z_3^a,z_4^a)\right]$.
To this end, we will use the notion, introduced in~\cite{GarbanPeteSchrammPivotalClusterInterfacePercolation}, of ``faces" induced by arms of alternating labels. We briefly recall its definition.

 Let $r\in (0,1)$ and, for $1\leq j\leq 4$, let $z_j\in\partial B_r(0)$ be on the boundary of some hexagon $\tilde{z}_j\in a\mathcal{T}$ which intersects $\partial B_r(0)$, where $z_1,\ldots,z_4$ are chosen in counterclockwise order. A \textit{configuration of faces} $\boldsymbol{\eta}$ around the circle $\partial B_r(0)$ with endpoints $z_1,\ldots,z_4$ is a collection of four oriented simple paths $(\eta_1,\eta_2,\eta_3,\eta_4)$ consisting of hexagons of $a\mathcal{H}$ such that, for $j=1,2,3,4$ (with the convention that $z_5=z_1$),
		\begin{itemize}
			\item $z_j$ is on the boundary of the first hexagon of the path $\eta_j$ and $z_{j+1}$ is on the boundary of the last hexagon of $\eta_j$;
			\item $\eta_j$ is a path consisting of black (resp., white) hexagons if $j$ is odd (resp., even);
			\item there are no hexagons in $\eta_j$ that are entirely contained in $B_r(0)$.
		\end{itemize}
		
Now recall that $y_j^a$ is the starting point of the interface $\gamma_j^a$ and note that the assumption~\eqref{eqn::decla} induces a configuration of faces around $\partial B_1(0)$ with endpoints $(y_1^a,\ldots,y_4^a)$. We denote this configuration of faces by $\boldsymbol{\eta}_1(x_1^a,x_2^a,x_3^a,x_4^a)$.

Let $r\in (0,1)$ and note that, on the event $\mathcal{F}_r(x_1^a,x_2^a,x_3^a,x_4^a)$, $\gamma_j^a$ intersects $\partial B_r(0)$.
For $j=1,2,3,4$,  we denote by $\tilde{\gamma}_j^a$ the portion of $\gamma_j^a$ that runs from $y_j^a$ until $\gamma_j^a$ first hits $\partial B_r(0)$, and we call $y_j^{(r,a)}$ this first hitting point.
Let $H$ be the set of hexagons of $a\mathcal{H}$ that are adjacent to $\cup_{j=1}^4 \tilde{\gamma}_j^a$ and define $\tilde{H}$ to be the union of
\begin{itemize}
    \item black hexagons in $a\mathcal{H}\cap A_{r,1}(0)$ that are connected to some black hexagon in $H$ by a black path inside $a\mathcal{H}\cap A_{r,1}(0)$ and
    \item white hexagons in $a\mathcal{H}\cap A_{r,1}(0)$ that are connected to some white hexagon in $H$ by a white path inside $a\mathcal{H}\cap A_{r,1}(0)$.
\end{itemize}
Then, the connected component of $\mathbb{C}\setminus \left(H\cup \tilde{H}\right)$ containing $0$ is a bounded domain and the hexagons in $a\mathcal{H}$ that lie on the boundary of this domain form a configuration of faces around $\partial B_r(0)$ with endpoints $y_1^{(r,a)},\ldots,y_4^{(r,a)}$. We let
		\begin{equation*}
			\boldsymbol{\eta}_r(x_1^a,x_2^a,x_3^a,x_4^a)=\Big(\eta_1^{(r,a)},\eta_2^{(r,a)},\eta_2^{(r,a)},\eta_4^{(r,a)}\Big)
		\end{equation*}
		denote this configuration of faces. Similarly, one can define 
		\begin{equation*}
		\boldsymbol{\eta}_r(z_1^a,z_2^a,z_3^a,z_4^a)=\Big(\tilde{\eta}_1^{(r,a)},\tilde{\eta}_2^{(r,a)},\tilde{\eta}_3^{(r,a)},\tilde{\eta}_4^{(r,a)}\Big),
		\end{equation*}
starting with the configuration of faces by $\boldsymbol{\eta}_1(z_1^a,z_2^a,z_3^a,z_4^a)$ determined by the boundary condition \eqref{eqn::decla} with the points $(x_1^a,x_2^a,x_3^a,x_4^a)$ replaced by $(z_1^a,z_2^a,z_3^a,z_4^a)$.

The following coupling result can be proved using the same strategy as in~\cite[Proof of Proposition~3.1]{GarbanPeteSchrammPivotalClusterInterfacePercolation}, therefore we omit its proof. 
 \begin{lemma} \label{lem::faces}
	Assume the same setup as in Theorem~\ref{thm::pivotal}. 
	Then there exists a universal constant $c_2\in (0,\infty)$ such that the following holds. There exists a coupling $\mathbb{P}_m^a$, between $\mathbb{P}^a\left[\, \cdot \,  |\mathcal{F}_{\delta_m}^a(x_1^a,x_2^a,x_3^a,x_4^a)\right]$ and $\mathbb{P}^a\left[\, \cdot \,  |\mathcal{F}_{\delta_m}^a(z_1^a,z_2^a,z_3^a,z_4^a)\right]$, that is, a joint distribution on pairs $(\tilde{\Lambda}^a,\hat{\Lambda}^a)$ such that $\tilde{\Lambda}^a$ and $\hat{\Lambda}^a$ are distributed according to $\mathbb{P}^a\left[\, \cdot \, |\mathcal{F}_{\delta_m}^a(x_1^a,x_2^a,x_3^a,x_4^a)\right]$ and $\mathbb{P}^a\left[\, \cdot \, |\mathcal{F}_{\delta_m}^a(z_1^a,z_2^a,z_3^a,z_4^a)\right]$, respectively, such that the following holds:
	\begin{equation*}
		\mathbb{P}_{m}^a\left[\boldsymbol{\eta}_{\delta_m}(x_1^a,x_2^a,x_3^a,x_4^a)=\boldsymbol{\eta}_{\delta_m}(z_1^a,z_2^a,z_3^a,z_4^a)\subseteq A_{\delta_m,\sqrt{\delta_m}}(0)\right]\geq 1-\delta^{c_2}_{m}
	\end{equation*}
	and, if the event $\mathcal{G}=\{\boldsymbol{\eta}_{\delta_m}(x_1^a,x_2^a,x_3^a,x_4^a)=\boldsymbol{\eta}_{\delta_m}(z_1^a,z_2^a,z_3^a,z_4^a)\subseteq A_{\delta_m,\sqrt{\delta_m}}(0)\}$ happens, then $\tilde{\Lambda}^a$ and $\hat{\Lambda}^a$ coincide inside the domain enclosed by $\boldsymbol{\eta}_{\delta_m}(x_1^a,x_2^a,x_3^a,x_4^a)=\boldsymbol{\eta}_{\delta_m}(z_1^a,z_2^a,z_3^a,z_4^a)$.
\end{lemma}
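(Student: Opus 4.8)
The plan is to adapt the exploration-and-coupling argument of~\cite[Proof of Proposition~3.1]{GarbanPeteSchrammPivotalClusterInterfacePercolation} to the two face configurations at hand, using RSW estimates and the quasi-multiplicativity of the alternating four-arm event for critical site percolation on $\mathcal{T}$ throughout. First I would run a joint, scale-by-scale exploration of the two models: the one with marked boundary points $(x_1^a,\ldots,x_4^a)$ and the one with $(z_1^a,\ldots,z_4^a)$, each equipped with the boundary condition~\eqref{eqn::decla} and conditioned on its four-arm event $\mathcal{F}^a_{\delta_m}$. Proceeding from $\partial B_1(0)$ inward, one reveals, at each dyadic scale $2^{-k}$, the portions of the four interfaces down to $\partial B_{2^{-k}}(0)$, the hexagons they touch, the clusters attached to those hexagons inside the current annulus, and hence the face configuration $\boldsymbol{\eta}_{2^{-k}}$. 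By the domain Markov property, conditionally on everything revealed down to $\partial B_{2^{-k}}(0)$ the law in the unexplored region is that of critical percolation conditioned on the four remaining arms running from $\boldsymbol{\eta}_{2^{-k}}$ to $\partial B_{\delta_m}(0)$, and this conditional law depends on $\boldsymbol{\eta}_{2^{-k}}$ and the annulus only, not on the marked points.

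The heart of the argument is to produce, with probability at least $1-\delta_m^{c_2}$, a common ``localized'' face configuration at some dyadic scale $\rho_0\in(\delta_m,\sqrt{\delta_m}\,]$. For each such scale $\rho$ I would introduce a good event supported in the annulus $A_{\rho/2,\rho}(0)$ which (i) re-roots the four incoming arms to four arcs of $\partial B_{\rho/2}(0)$ in a prescribed combinatorial position, (ii) forces all clusters attached to the arms to stay inside $A_{\rho/2,\rho}(0)$, and (iii) is measurable with respect to the percolation configuration in $A_{\rho/2,\rho}(0)$ alone. Points (i)--(iii) are designed so that, conditionally on the configuration revealed so far and on the four-arm event persisting to $\partial B_{\delta_m}(0)$, this good event has probability at least a universal $p_0>0$ (by RSW, uniformly over the incoming face configuration), and, on the first scale $\rho_0$ at which it occurs, the resulting $\boldsymbol{\eta}_{\rho_0}$ has a law depending only on $\rho_0$ and $a$. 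One then couples the two explorations scale by scale so that they realize this good event at the same scale $\rho_0$ and with the same $\boldsymbol{\eta}_{\rho_0}$; since the number of available scales is of order $\log(1/\delta_m)$ and each carries conditional success probability at least $p_0$ given the past, the probability that no common localized configuration is ever found is at most $(1-p_0)^{c\log_2(1/\delta_m)}\le\delta_m^{c_2}$ for suitable universal constants $c,c_2>0$.

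On the complementary event $\mathcal{G}$ one concludes as follows. The localized configuration $\boldsymbol{\eta}_{\rho_0}$ separates $0$ from $\partial B_{\rho_0}(0)$ and hence screens the interior from everything outside $B_{\rho_0}(0)$; therefore the remaining inner exploration from scale $\rho_0$ down to $\delta_m$, together with the clusters it attaches (which cannot extend past $\partial B_{\rho_0}(0)$), produces a face configuration $\boldsymbol{\eta}_{\delta_m}$ contained in $\overline{A_{\delta_m,\rho_0}(0)}\subseteq A_{\delta_m,\sqrt{\delta_m}}(0)$, and this configuration is identical for the two models because below scale $\rho_0$ they have been coupled to agree. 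In particular $\boldsymbol{\eta}_{\delta_m}(x_1^a,x_2^a,x_3^a,x_4^a)=\boldsymbol{\eta}_{\delta_m}(z_1^a,z_2^a,z_3^a,z_4^a)$ and $\tilde{\Lambda}^a,\hat{\Lambda}^a$ coincide inside $B_{\rho_0}(0)$, hence inside the domain enclosed by $\boldsymbol{\eta}_{\delta_m}$; on $\mathcal{G}^c$ the two conditional measures are coupled arbitrarily. I expect the main obstacle to be the construction of the good events with property (iii)---engineering them so that the localized face configuration genuinely forgets the incoming arms and their entry points---since this is exactly what makes the scale-by-scale coupling of the two \emph{conditioned} measures valid, and it is the most delicate part of the argument in~\cite[Section~3]{GarbanPeteSchrammPivotalClusterInterfacePercolation}.
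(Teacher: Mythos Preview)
Your proposal is correct and follows exactly the route the paper takes: the paper simply cites~\cite[Proof of Proposition~3.1]{GarbanPeteSchrammPivotalClusterInterfacePercolation} and omits the argument entirely, and your sketch accurately reproduces that scale-by-scale inward exploration, the RSW-based good events with uniform conditional probability under the four-arm conditioning (via quasi-multiplicativity), and the resulting geometric tail bound $(1-p_0)^{c\log_2(1/\delta_m)}\le\delta_m^{c_2}$. You have also correctly identified the localization step (your point~(iii)) as the delicate part, which is indeed where the work in~\cite[Section~3]{GarbanPeteSchrammPivotalClusterInterfacePercolation} lies.
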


Going back to the proof of Lemma~\ref{lem::four_aux2}, using Lemma~\ref{lem::faces} above in the third equality below, on event $\mathcal{G}$, we have 
\begin{align*}
& \mathbb{P}^a\left[\mathcal{E}^a(x_1^a,x_2^a,x_3^a,x_4^a) |\mathcal{F}_{\delta_m}^a(x_1^a,x_2^a,x_3^a,x_4^a), \boldsymbol{\eta}_{\delta_m}(x_1^a,x_2^a,x_3^a,x_4^a)\right]\\
& \qquad = \frac{\mathbb{P}^a\left[\mathcal{E}^a(x_1^a,x_2^a,x_3^a,x_4^a),\mathcal{F}_{\delta_m}^a(x_1^a,x_2^a,x_3^a,x_4^a)|\boldsymbol{\eta}_{\delta_m}(x_1^a,x_2^a,x_3^a,x_4^a)\right]}{\mathbb{P}^a\left[\mathcal{F}_{\delta_m}^a(x_1^a,x_2^a,x_3^a,x_4)|\boldsymbol{\eta}_{\delta_m}(x_1^a,x_2^a,x_3^a,x_4^a)\right]}\\
& \qquad = \mathbb{P}^a\left[\eta_1^{(\delta_m,a)},\eta_3^{(\delta_m,a)}\xlongleftrightarrow{B}0,\enspace\eta_2^{(\delta_m,a)},\eta_4^{(\delta_m,a)}\xlongleftrightarrow{W}0|\boldsymbol{\eta}_{\delta_m}(x_1^a,x_2^a,x_3^a,x_4^a)\right]\\
& \qquad = \mathbb{P}^a\left[\tilde{\eta}_1^{(\delta_m,a)},\tilde{\eta}_3^{(\delta_m,a)}\xlongleftrightarrow{B}0,\enspace\tilde{\eta}_2^{(\delta_m,a)},\tilde{\eta}_4^{(\delta_m,a)}\xlongleftrightarrow{W}0|\boldsymbol{\eta}_{\delta_m}(z_1^a,z_2^a,z_3^a,z_4^a)\right]\\
& \qquad = \mathbb{P}^a\left[\mathcal{E}^a(z_1^a,z_2^a,z_3^a,z_4^a) |\mathcal{F}_{\delta_m}^a(z_1^a,z_2^a,z_3^a,z_4^a), \boldsymbol{\eta}_{\delta_m}(z_1^a,z_2^a,z_3^a,z_4^a)\right].
	\end{align*}
Thus, 
\begin{align*}
	\left|\frac{\mathbb{P}^a\left[\mathcal{E}^a(x_1^a,x_2^a,x_3^a,x_4^a)| \mathcal{F}^a_{\delta_m}\left(x_1^a,x_2^a,x_3^a,x_4^a\right)\right]}{\mathbb{P}^a\left[0\xlongleftrightarrow{BWBW} \partial B_{\delta_m}(0)\right]}-\frac{\mathbb{P}^a\left[\mathcal{E}^a(z_1^a,z_2^a,z_3^a,z_4^a)| \mathcal{F}^a_{\delta_m}\left(z_1^a,z_2^a,z_3^a,z_4^a\right)\right]}{\mathbb{P}^a\left[0\xlongleftrightarrow{BWBW} \partial B_{\delta_m}(0)\right]}\right|\leq \mathbb{P}_m^{a}\left[\mathcal{G}^c\right]\leq \delta_m^{c_2}.
\end{align*}
Letting $a\to 0$ gives
\begin{equation*}
	|f(x_1,x_2,x_3,x_4;\delta_m)-f(1,\ii,-1,-\ii;\delta_m)|\leq \delta_m^{c_2},
\end{equation*}
which implies the desired result readily. 
\end{proof}
\subsection{Percolation backbone: Proof of Theorem~\ref{thm::backbone}} \label{sec::thm::backbone}
We start with a coupling result. For $z\in \mathbb{C}$, we denote by $\mathcal{A}^a_{\eta,\epsilon}(z)$ the event that there is a black path  connecting $\partial B_{\eta}(z)$ to $\partial B_{\epsilon}(z)$ 
and denote by $\mathcal{B}_{\eta,\epsilon}^a(z)$ the event that there are two disjoint black paths  connecting $\partial B_{\eta}(z)$ to $\partial B_{\eta}(z)$. 

The two events $\mathcal{A}^a_{\eta,\epsilon}(z), \mathcal{B}_{\eta,\epsilon}^a(z)$ can be expressed in terms of interfaces.  The event $\mathcal{A}^a_{\eta,\epsilon}(z)$ means that there is no interface loop between black and white hexagons separating $B_{\eta}(z)$ from the complement of $B_{\epsilon}(z)$ and that one of the two following events occurs:
	\begin{itemize}
		\item there is a counterclockwise interface intersecting both $B_{\eta}(z)$ and the complement of $B_{\epsilon}(z)$;
		\item the innermost interface surrounding $B_{\eta}(z)$ is oriented counterclockwise.
\end{itemize}
Thus, we can define the natural analog of event $\mathcal{A}^a_{\eta,\epsilon}(z)$ in the continuum for the full scaling limit $\Lambda$, which we denote by $\mathcal{A}_{\eta,\epsilon}(z)$.
Moreover, since the boundary polychromatic $3$-arm exponent for critical site percolation is strictly larger than $1$~\cite{SmirnovWernerCriticalExponents}, using Lemma~6.1 of~\cite{CamiaNewmanPercolationFull}, one can conclude that $\mathcal{A}_{\eta,\epsilon}(z)$ is a continuity event for $\mathbb{P}$.
Therefore, we have 
\begin{equation*}
\lim_{a\to 0}\mathbb{P}^a\left[\mathcal{A}_{\eta,\epsilon}^a(z)\right]=\mathbb{P}\left[\mathcal{A}_{\eta,\epsilon}(z)\right].
\end{equation*}

The situation for $\mathcal{B}_{\eta,\epsilon}^a(z)$ is similar.
The event $\mathcal{B}_{\eta,\epsilon}^a(z)$ occurs if $\mathcal{A}_{\eta,\epsilon}^a(z)$ occurs and there is no pivotal vertex for $\mathcal{A}_{\eta,\epsilon}^a(z)$.
The last condition implies that one of the three following events occurs:
\begin{itemize}
	\item there are at least $2$ counterclockwise interfaces intersecting both $B_{\eta}(z)$ and the complement of $B_{\epsilon}(z)$;
	\item there is only one counterclockwise interface intersecting both $B_{\eta}(z)$ and the complement of $B_{\epsilon}(z)$ and, if we call $H_1$ the black cluster that is adjacent to the left-hand side of this interface, then there is no black hexagon $x^a\in H_1\cap A_{\eta,\epsilon}(z)$ such that $H_1\setminus \{x^a\}$ does not connect $\partial B_{\eta}(z)$ to $\partial B_{\epsilon}(z)$;
	\item the innermost interface surrounding $B_{\eta}(z)$ is oriented counterclockwise and, if we call $H_2$ the black cluster that is adjacent to the left-hand side of this interface, then there is no black hexagon $x^a\in H_2\cap A_{\eta,\epsilon}(z)$ such that $H_2\setminus \{x^a\}$ does not connect $\partial B_{\eta}(z)$ to $\partial B_{\epsilon}(z)$. 
\end{itemize}
Now we can define the analog of event $\mathcal{B}^a_{\eta,\epsilon}(z)$ in the continuum, which we denote by $\mathcal{B}_{\eta,\epsilon}(z)$.
The definition of $\mathcal{B}_{\eta,\epsilon}(z)$ is similar to that of $\mathcal{B}^a_{\eta,\epsilon}(z)$, but with $\mathcal{A}^a_{\eta,\epsilon}(z)$ replaced by $\mathcal{A}_{\eta,\epsilon}(z)$ and the events in the last two bullet points above replaced by:
\begin{itemize}
\item there is only one counterclockwise interface intersecting both $B_{\eta}(z)$ and the complement of $B_{\epsilon}(z)$; we call the closure of the union of bounded domains surrounded by this interface and the rightmost (clockwise) interface that is on the left of it $H_1$; then there is no $x\in H_1\cap A_{\eta,\epsilon}(z)$ such that $H_1\setminus \{x\}$ does not connect $\partial B_{\eta}(z)$ to $\partial B_{\epsilon}(z)$; 
\item the innermost interface surrounding $B_{\eta}(z)$ is oriented counterclockwise; we call the closure of the union of bounded domains surrounded by this interface and the rightmost (clockwise) interface that is on the left of it $H_2$ ; then there is no $x\in H_2\cap A_{\eta,\epsilon}(z)$ such that $H_2\setminus \{x\}$ does not connect $\partial B_{\eta}(z)$ to $\partial B_{\epsilon}(z)$. 
\end{itemize}
Since for critical site percolation, the polychromatic boundary $3$-arm exponent is strictly larger than $1$ and the polychromatic interior $6$-arm exponent is strictly larger than $2$~\cite{SmirnovWernerCriticalExponents}, using Lemma~6.1 of~\cite{CamiaNewmanPercolationFull}, one can conclude that $\mathcal{B}_{\eta,\epsilon}(z)$ is also a continuity event for $\mathbb{P}$. Therefore, we have 
\begin{equation*}
	\lim_{a\to 0}\mathbb{P}^a\left[\mathcal{B}_{\eta,\epsilon}^a(z)\right]=\mathbb{P}\left[\mathcal{B}_{\eta,\epsilon}(z)\right].
\end{equation*}

\begin{lemma} \label{lem::backbone}
	Fix $\epsilon>\delta>a$. For any $\delta>\eta>a$, there exist an event $\mathcal{S}$ and a coupling, $\mathbb{P}_{\eta}^a$, between $\mathbb{P}^a\left[\, \cdot \, |0\xlongleftrightarrow{BB}\partial B_{\epsilon}(0)\right]$ and $\mathbb{P}^a\left[\, \cdot \, | \mathcal{B}_{\eta,\epsilon}^a(0)\right]$ (that is, a joint distribution on pairs $(\tilde{\Lambda}^a,\hat{\Lambda}^a)$ such that $\tilde{\Lambda}^a$ and $\hat{\Lambda}^a$ are distributed according to $\mathbb{P}^a\left[\, \cdot \, |0\xlongleftrightarrow{BB}\partial B_{\epsilon}(0)\right]$ and $\mathbb{P}^a\left[\, \cdot \, | \mathcal{B}_{\eta,\epsilon}^a(0)\right]$, respectively) with the following properties:
	\begin{equation} \label{eqn::proba_lower_bound_open_circuit}
\mathbb{P}_{\eta}^a\left[\mathcal{S}\right]\geq \mathbb{P}^a\left[\exists \text{ black circuit in }A_{\eta,\delta}(0) \text{ surrounding }0\right],
	\end{equation}
and, for any event $\mathcal{A}$ that depends only on the states of hexagons of a single percolation configuration outside $B_{\delta}(0)$,
\begin{equation} \label{eqn::banckbone_proba_coupling}
	\mathbb{P}_{\eta}^a\left[\tilde{\Lambda}^a\in \mathcal{A}|\mathcal{S}\right]=\mathbb{P}_{\eta}^a\left[\hat{\Lambda}^a\in \mathcal{A}|\mathcal{S}\right].
\end{equation} 
\end{lemma}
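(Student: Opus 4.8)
The plan is to adapt the ``gluing'' (arm-separation-plus-coupling) technique of Garban, Pete and Schramm (cf.\ \cite[Proof of Proposition~3.1]{GarbanPeteSchrammPivotalClusterInterfacePercolation}, together with the treatment of monochromatic arm events therein) to the monochromatic two-arm event relevant for the backbone. The new feature compared with Lemma~\ref{lem::four_arm_coupling_inner} is that, for monochromatic arms, there is no separating arm of the opposite colour, and the role of arm separation is played instead by a \emph{black circuit} around the origin --- which is exactly the object appearing in the lower bound of the statement (recall the interface description of $\mathcal{B}_{\eta,\epsilon}^a(0)$ as $\mathcal{A}_{\eta,\epsilon}^a(0)$ with no pivotal vertex).

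First I would record the elementary inclusion $\{0\xlongleftrightarrow{BB}\partial B_\epsilon(0)\}\subseteq\mathcal{B}_{\eta,\epsilon}^a(0)$ (restrict to $A_{\eta,\epsilon}(0)$ the two disjoint black arms emanating from $0$), so that $\mathbb{P}^a[\,\cdot\,|\,0\xlongleftrightarrow{BB}\partial B_\epsilon(0)]$ is just $\mathbb{P}^a[\,\cdot\,|\,\mathcal{B}_{\eta,\epsilon}^a(0)]$ further conditioned on the event $\{0\xlongleftrightarrow{BB}\partial B_\epsilon(0)\}$, which is ``localized near the origin.'' Write $\mathcal{D}$ for the event $\{\exists\text{ black circuit surrounding }0\text{ in }A_{\eta,\delta}(0)\}$. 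On $\mathcal{D}$ I would reveal the innermost such circuit $\mathcal{C}$ together with the whole configuration inside it, by exploring outward from $B_\eta(0)$; this is a stopping set, so by the domain Markov property the configuration in $\ext(\mathcal{C})$ is a fresh percolation configuration. Since $\mathcal{C}$ separates $B_\eta(0)$ from $\partial B_\epsilon(0)$, any pair of disjoint black arms realizing $\{0\xlongleftrightarrow{BB}\partial B_\epsilon(0)\}$ or $\mathcal{B}_{\eta,\epsilon}^a(0)$ must cross $\mathcal{C}$, and a planar surgery through arcs of the black circuit $\mathcal{C}$ (pairing the entry points of the inner arms on $\mathcal{C}$ with the entry points of the outer arms, straight or crossed according to their cyclic order) yields, on the event $\{\mathcal{C}=c\}$, the factorizations
\begin{align*}
\{0\xlongleftrightarrow{BB}\partial B_\epsilon(0)\}\cap\{\mathcal{C}=c\} &= \big\{0\xlongleftrightarrow{BB}c \text{ within }\inte(c)\big\}\cap\big\{c\xlongleftrightarrow{BB}\partial B_\epsilon(0)\text{ within }\ext(c)\big\}\cap\{\mathcal{C}=c\},\\
\mathcal{B}_{\eta,\epsilon}^a(0)\cap\{\mathcal{C}=c\} &= \big\{\partial B_\eta(0)\xlongleftrightarrow{BB}c\text{ within }\inte(c)\big\}\cap\big\{c\xlongleftrightarrow{BB}\partial B_\epsilon(0)\text{ within }\ext(c)\big\}\cap\{\mathcal{C}=c\},
\end{align*}
where $\xlongleftrightarrow{BB}$ denotes joined by two disjoint black paths (with the obvious closures). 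The crucial point is that the second (exterior) factor is \emph{the same} in both lines and is measurable with respect to $\mathcal{C}$ and the configuration in $\ext(\mathcal{C})$, whereas the first (interior) factor is measurable with respect to $\mathcal{C}$ and the revealed interior.

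Consequently, conditionally on $(\mathcal{C},\text{interior})$, the configuration in $\ext(\mathcal{C})$ --- and in particular outside $B_\delta(0)$, since $\mathcal{C}\subseteq A_{\eta,\delta}(0)$ --- has the same law under $\mathbb{P}^a[\,\cdot\,|\,0\xlongleftrightarrow{BB}\partial B_\epsilon(0),\mathcal{D}]$ and under $\mathbb{P}^a[\,\cdot\,|\,\mathcal{B}_{\eta,\epsilon}^a(0),\mathcal{D}]$, namely $\mathbb{P}^a_{\ext(\mathcal{C})}$ conditioned on the common exterior factor; the only effect of the additional ``near-origin'' conditioning is to reweight the law of $(\mathcal{C},\text{interior})$ by a factor depending on $\mathcal{C}$ alone (the ratio of the probabilities of the two interior factors). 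One then assembles the coupling $\mathbb{P}_\eta^a$ by sampling $(\mathcal{C},\text{interior})$ and then the exterior consistently for the two marginals (filling in the remaining interior degrees of freedom last), and takes $\mathcal{S}$ to be the event that both $\tilde\Lambda^a$ and $\hat\Lambda^a$ lie in $\mathcal{D}$ and coincide outside $B_\delta(0)$; this immediately gives the identity $\mathbb{P}_\eta^a[\tilde\Lambda^a\in\mathcal{A}\,|\,\mathcal{S}]=\mathbb{P}_\eta^a[\hat\Lambda^a\in\mathcal{A}\,|\,\mathcal{S}]$ for events $\mathcal{A}$ determined outside $B_\delta(0)$. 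The lower bound $\mathbb{P}_\eta^a[\mathcal{S}]\geq\mathbb{P}^a[\mathcal{D}]$ then follows from the Harris--FKG inequality: $\mathcal{D}$, $\{0\xlongleftrightarrow{BB}\partial B_\epsilon(0)\}$ and $\mathcal{B}_{\eta,\epsilon}^a(0)$ are all increasing events, so $\mathbb{P}^a[\mathcal{D}\,|\,0\xlongleftrightarrow{BB}\partial B_\epsilon(0)]$ and $\mathbb{P}^a[\mathcal{D}\,|\,\mathcal{B}_{\eta,\epsilon}^a(0)]$ both dominate $\mathbb{P}^a[\mathcal{D}]=\mathbb{P}^a[\exists\text{ black circuit in }A_{\eta,\delta}(0)\text{ surrounding }0]$.

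The main obstacle is making precise the assertion that, conditionally on $(\mathcal{C},\text{interior})$, the two exterior laws coincide, and that the $\mathcal{C}$-dependent reweighting does not disturb the law of the configuration outside $B_\delta(0)$ --- equivalently, controlling, inside the annulus $A_{\eta,\delta}(0)$, the interplay between the conditioning on the exterior two-arm event and the choice of circuit. This is exactly the delicate bookkeeping handled in \cite{GarbanPeteSchrammPivotalClusterInterfacePercolation} for monochromatic arms (arms of the same colour are not a priori well separated, and arm events may re-enter $B_\delta(0)$ many times), and I would import those arguments; the present setting is simpler in that only a single pair of scales $(\eta,\delta)$ is involved and the explicit circuit event $\mathcal{D}$ serves as the ``good'' event, which is what makes the clean probability lower bound available.
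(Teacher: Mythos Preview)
Your overall plan --- factor both conditionings through a black circuit in $A_{\eta,\delta}(0)$, observe that the exterior conditional law given the circuit is the same in both cases, and invoke FKG for the lower bound --- is on the right track and matches the paper's argument at the structural level. However, the coupling construction has a genuine gap. You correctly show that, given the innermost circuit $\mathcal{C}=c$, the exterior law under either conditioning is $\mathbb{P}^a_{\ext(c)}[\,\cdot\,|\,c\xlongleftrightarrow{BB}\partial B_\epsilon]$; but the law of $\mathcal{C}$ itself is \emph{different} under $\mathbb{P}^a[\,\cdot\,|\,0\xlongleftrightarrow{BB}\partial B_\epsilon,\mathcal{D}]$ and under $\mathbb{P}^a[\,\cdot\,|\,\mathcal{B}_{\eta,\epsilon}^a,\mathcal{D}]$, and you never explain how to couple the two copies of $\mathcal{C}$ to agree. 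Your FKG step only yields $\mathbb{P}^a[\mathcal{D}\,|\,E_i]\geq\mathbb{P}^a[\mathcal{D}]$ for each marginal separately; it says nothing about the probability that $\tilde\Lambda^a$ and $\hat\Lambda^a$ share the \emph{same} circuit and agree outside $B_\delta$. With your definition of $\mathcal{S}$, property~\eqref{eqn::banckbone_proba_coupling} is trivial, but the bound~\eqref{eqn::proba_lower_bound_open_circuit} is precisely what is not established. Deferring to \cite{GarbanPeteSchrammPivotalClusterInterfacePercolation} does not help here: their construction is tailored to polychromatic arms and configurations of faces, not to the monochromatic situation where a black circuit replaces the faces.

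The paper closes this gap by a different device (taken from \cite[Lemma~2.1]{Cam23}): rather than coupling the two conditional measures to each other, it couples \emph{both} monotonically to a single unconditional configuration $\Lambda^a$, using that $\{0\xlongleftrightarrow{BB}\partial B_\epsilon\}$ and $\mathcal{B}^a_{\eta,\epsilon}$ are increasing. Then $\tilde\Lambda^a\geq\Lambda^a$ and $\hat\Lambda^a\geq\Lambda^a$ pointwise, so any black circuit in $\Lambda^a$ is automatically a \emph{common} black circuit in $\tilde\Lambda^a$ and $\hat\Lambda^a$. Taking $\mathcal{S}$ to be the existence of such a common circuit immediately gives $\mathbb{P}_\eta^a[\mathcal{S}]\geq\mathbb{P}^a[\Lambda^a\in\mathcal{D}]=\mathbb{P}^a[\mathcal{D}]$, and the exploration is arranged so that, once the innermost common circuit $\gamma^a$ is revealed, the exterior is sampled once (under the common conditioning $\gamma^a\xlongleftrightarrow{BB}\partial B_\epsilon$) and assigned to both configurations. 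This is the missing ingredient in your construction.
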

\begin{proof}
Since the conditioning in the two probabilities that need to be coupled involves crossings of a single label, one can basically repeat the argument in~\cite[Proof of Lemma~2.1]{Cam23}, which we explain briefly below. 

We start by generating a critical percolation configuration, $\Lambda^a$. Then, one can proceed as in~\cite[Proof of Lemma~2.1]{Cam23} to explore $a\mathcal{H}$ starting from the innermost circuit $G_0$ of hexagons in $\mathcal{A}_{\eta,\epsilon}$ surrounding $0$ and, thanks to the stochastic domination given by the FKG inequality, use $\Lambda^a$ to construct two configurations, $\tilde{\Lambda}\sim \mathbb{P}^a\left[\cdot |0\xlongleftrightarrow{BB}\partial B_{\epsilon}(0\right]$ and $\hat{\Lambda}^a\sim \mathbb{P}^a\left[\cdot | \mathcal{B}_{\eta,\epsilon}^a(0\right]$, such that 
\begin{equation}\label{eqn::configu_domination}
    \mathbb{1}_{\{x_i^a \text{ is open in }\Lambda^a\}}\leq \mathbb{1}_{\{x_i^a \text{ is open in }\tilde{\Lambda}^a\}}, \mathbb{1}_{\{x_i^a \text{ is open in }\hat{\Lambda}^a\}},\quad \text{for all }x_{i}^a \in a\mathcal{H}, 
\end{equation}
where $\mathbb{1}_{\{\cdot\}}$ is the indicator function. 

We denote by $\hat{\mathcal{S}}$ the event that $\Lambda^a$ has an open circuit surrounding $0$ that is fully contained in $\mathcal{A}_{\eta,\epsilon}$ and by $\mathcal{S}$ the event that there is a common open circuit in $\tilde{\Lambda}^a$ and $\hat{\Lambda}^a$ that is fully contained in $\mathcal{A}_{\eta,\epsilon}$.
Thanks to the relation~\eqref{eqn::configu_domination}, we have $\hat{\mathcal{S}}\subseteq \mathcal{S}$, which implies~\eqref{eqn::proba_lower_bound_open_circuit}.
If $\mathcal{S}$ occurs, we denote by $\gamma^a$ the innermost common black circuit in $\tilde{\Lambda}^a$ and $\hat{\Lambda}^a$ that is fully contained in $\mathcal{A}_{\eta,\epsilon}$.

Assuming that $\mathcal{S}$ occurs and given $\gamma^a$, let $\tilde{\omega}_{\gamma^a}$ and $\hat{\omega}_{\gamma^a}$ be the configurations generated inside $\gamma^a$ for $\tilde{\Lambda}^a$ and $\hat{\Lambda}^a$, respectively. For any event $\mathcal{A}$ that depends only on the labels of hexagons of a single percolation configuration outside $B_{\delta}(0)$, we have 
\begin{align*}
\mathbb{P}^a_{\eta}\left[\tilde{\Lambda}^a\in\mathcal{A}|\gamma^a, \tilde{\omega}_{\gamma^a}\right] = &\frac{\mathbb{P}^a\left[\mathcal{A}, 0\xlongleftrightarrow{BB}\partial B_{\epsilon}(0)|\gamma^a,\tilde{\omega}_{\gamma^a}\right]}{\mathbb{P}^a\left[ 0\xlongleftrightarrow{BB}\partial B_{\epsilon}(0)|\gamma^a,\tilde{\omega}_{\gamma^a}\right]}\\
=& \frac{\mathbb{P}^a\left[\mathcal{A},\gamma^a\xlongleftrightarrow{BB}\partial B_{\epsilon}(0)|\gamma^a,\tilde{\omega}_{\gamma^a}\right]}{\mathbb{P}^a\left[\gamma^a\xlongleftrightarrow{BB}\partial B_{\epsilon}(0)|\gamma^a,\tilde{\omega}_{\gamma^a}\right]}\\
=&\mathbb{P}^a\left[\mathcal{A}| \gamma^a\xlongleftrightarrow{BB}\partial B_{\epsilon}(0)\right],
\end{align*}
where the second equality is due to the fact that, since we assume that $\mathcal{S}$ occurs, $\tilde{\omega}_{\gamma^a}$ must contain two disjoint black paths connecting $0$ to $\gamma^a$, and the last equality follows from the independence of percolation and the fact that $\mathcal{A}$ depends only on the labels of hexagons outside $B_{\delta}(0)$. Similarly, we have 
\begin{equation*}
\mathbb{P}^a_{\eta}\left[\hat{\Lambda}^a\in\mathcal{A}|\gamma^a, \hat{\omega}_{\gamma^a}\right]= \mathbb{P}^a\left[\mathcal{A}| \gamma^a\xlongleftrightarrow{BB}\partial B_{\epsilon}(0)\right].
\end{equation*}
Consequently, we obtain~\eqref{eqn::banckbone_proba_coupling}.
\end{proof}

According to~\cite[Theorem~1]{nolin2023backbone}, 
\begin{equation}\label{eqn::backbone_o1}
\lim_{a\to 0} \frac{\log \rho_a}{\log a}=\lim_{a\to 0} \frac{\log \mathbb{P}^a\left[0\xlongleftrightarrow{BB}\partial B_{1}(0)\right]}{\log a}=\xi.
\end{equation}
With Lemma~\ref{lem::backbone} and~\eqref{eqn::backbone_o1} at hand, one can conclude that
\begin{equation} \label{eqn::backbone_scal}
	\lim_{a\to 0} \frac{\mathbb{P}^a\left[0\xlongleftrightarrow{BB}\partial B_{\epsilon}(0)\right]}{\mathbb{P}^a\left[0\xlongleftrightarrow{BB}\partial B_{1}(0)\right]} = \epsilon^{-\xi},\quad \forall \epsilon>0. 
\end{equation}
Indeed,  without loss of generality, we may assume that $\epsilon\in(0,1)$.  With~\cite[Lemma~2.1]{Cam23} replaced by our Lemma~\ref{lem::backbone}, one can proceed as in~\cite[Proof of Theorem~1.1]{Cam23} to show that 
\begin{equation}\label{eqn::backbone_aux1}
	\lim_{a\to 0} \frac{\mathbb{P}^a\left[0\xlongleftrightarrow{BB}\partial B_{\epsilon}(0)\right]}{\mathbb{P}^a\left[0\xlongleftrightarrow{BB}\partial B_{1}(0)\right]}= \frac{1}{\mathbb{P}\left[ 0\xlongleftrightarrow{BB} \partial B_1(0)| 0\xlongleftrightarrow{BB} \partial B_\epsilon(0)\right]},
\end{equation}
where
\begin{equation*}
	\mathbb{P}\left[ 0\xlongleftrightarrow{BB} \partial B_1(0)| 0\xlongleftrightarrow{BB} \partial B_\epsilon(0)\right]:=\lim_{m\to \infty}\lim_{\eta\to 0} \mathbb{P}\left[ \partial B_{\delta_m}(0)\xlongleftrightarrow{BB}\partial B_{1}(0)|\mathcal{B}_{\eta,\epsilon}(0)\right]. 
\end{equation*}
Combining~\eqref{eqn::backbone_o1} with~\eqref{eqn::backbone_aux1}, the value of the limit in~\eqref{eqn::backbone_aux1} must be $\epsilon^{-\xi}$.

To see this, let $C_{BB}$ denote the limit in~\eqref{eqn::backbone_aux1} and note that, for any $n\geq1$, we can write
\begin{equation*}
    \mathbb{P}^{\epsilon^n}\left[0\xlongleftrightarrow{BB}\partial B_1(0)\right]=\frac{\mathbb{P}^{\epsilon^n}\left[0\xlongleftrightarrow{BB}\partial B_1(0)\right]}{\mathbb{P}^{\epsilon^n}\left[0\xlongleftrightarrow{BB}\partial B_{\epsilon}(0)\right]} \frac{\mathbb{P}^{\epsilon^{n-1}}\left[0\xlongleftrightarrow{BB}\partial B_1(0)\right]}{\mathbb{P}^{\epsilon^{n-1}}\left[0\xlongleftrightarrow{BB}\partial B_{\epsilon}(0)\right]} \ldots \frac{\mathbb{P}^{\epsilon}\left[0\xlongleftrightarrow{BB}\partial B_1(0)\right]}{1} \, ,
\end{equation*}
which implies
\begin{equation}\label{eqn::backbone_aux3}
    \frac{1}{n}\log\mathbb{P}^{\epsilon^n}\left[0\xlongleftrightarrow{BB}\partial B_1(0)\right] = \frac{1}{n}\sum_{k=1}^{n}\log\frac{\mathbb{P}^{\epsilon^k}\left[0\xlongleftrightarrow{BB}\partial B_1(0)\right]}{\mathbb{P}^{\epsilon^k}\left[0\xlongleftrightarrow{BB}\partial B_{\epsilon}(0)\right]} \, ,
\end{equation}
where we have conventionally set $\mathbb{P}^{\epsilon}\left[0\xlongleftrightarrow{BB}\partial B_{\epsilon}(0)\right]=1$.
Since $\epsilon^{n}\to 0$ as $n\to \infty$, \eqref{eqn::backbone_aux1} implies 
\begin{equation*}
    \lim_{n\to \infty} \frac{\mathbb{P}^{\epsilon^n}\left[0\xlongleftrightarrow{BB}\partial B_1(0)\right]}{\mathbb{P}^{\epsilon^n}\left[0\xlongleftrightarrow{BB}\partial B_{\epsilon}(0)\right]} = \frac{1}{C_{BB}} \, . 
\end{equation*}
Using this fact together with~\eqref{eqn::backbone_aux3}, the convergence of the Ces\`aro mean gives 
\begin{equation}\label{eqn::backbone_aux4}
    \lim_{n\to \infty} \frac{1}{n} \log \mathbb{P}^{\epsilon^n}\left[0\xlongleftrightarrow{BB}\partial B_1(0)\right]= -\log C_{BB}. 
\end{equation}
Comparing~\eqref{eqn::backbone_o1} with~\eqref{eqn::backbone_aux4}, we conclude that $C_{BB}=\epsilon^{-\xi}$.

Similarly, we have
\begin{equation} \label{eqn::bd_one_scal}
\lim_{a\to 0}\frac{\mathbb{P}^a\left[0\xlongleftrightarrow[\mathbb{H}]{B} \partial B_{\epsilon}(0)\right]}{\mathbb{P}^a\left[0\xlongleftrightarrow[\mathbb{H}]{B} \partial B_{1}(0)\right]}=\epsilon^{-\frac{1}{3}}, \quad \lim_{a\to 0} \frac{\mathbb{P}^a\left[0\xlongleftrightarrow{B}\partial B_{\epsilon}(0)\right]}{\mathbb{P}^a\left[0\xlongleftrightarrow{B} \partial B_1(0)\right]}=\epsilon^{-\frac{5}{48}}.
\end{equation}
We denote by $\mathbb{P}_{\mathbb{H}}$ the law of nested $\mathrm{CLE}_6$ in $\mathbb{H}$. Now, we are ready to prove Theorem~\ref{thm::backbone}.

\begin{proof}[Proof of Theorem~\ref{thm::backbone}]
A standard application of RSW estimates (see, e.g., the proofs of Lemmas~2.1 and~2.2 of~\cite{CamiaNewman2009ising}) implies that there exist constants $0<K_1<K_2<\infty$, independent of $a$, such that 
	\begin{equation} \label{eqn::RSW_1}
		\opi_a^{-2}\rho_a^{-1}\times \mathbb{P}^a \left[\{x_1^a\xlongleftrightarrow[\mathbb{H}]{B}z^a\}\circ \{x_2^a\xlongleftrightarrow[\mathbb{H}]{B}z^a\}\right]\in [K_1,K_2]
	\end{equation} 
which shows that all subsequential limits of the left-hand side of~\eqref{eqn::RSW_1} belong to $(0,\infty)$.  

We let $\epsilon \in (0,\frac{1}{100}\min\{|z-x_1|,|z-x_2|,|x_1-x_3|\} )$. Then, for small enough $a$, we can write 
\begin{align*}
	&\opi_a^{-2}\rho_a^{-1}\times \mathbb{P}^a \left[\{x_1^a\xlongleftrightarrow[\mathbb{H}]{B}z^a\}\circ \{x_2^a\xlongleftrightarrow[\mathbb{H}]{B}z^a\}\right] \\
	& \qquad = \underbrace{\mathbb{P}^a \left[\{x_1^a\xlongleftrightarrow[\mathbb{H}]{B}z^a\}\circ \{x_2^a\xlongleftrightarrow[\mathbb{H}]{B}z^a\}| z^a\xlongleftrightarrow{BB}\partial B_{\epsilon}(z^a), x_j^a\xlongleftrightarrow[\mathbb{H}]{B}\partial B_{\epsilon}(x_j^a),\enspace j\in \{1,2\}\right]}_{T_1} \\
	& \qquad \quad \times \underbrace{\left(\opi_a^{-2}\times \left(\mathbb{P}^a\left[0\xlongleftrightarrow[\mathbb{H}]{B}\partial B_{\epsilon}(0)\right]\right)^2\right)}_{T_2} \times \underbrace{\left(\rho_a^{-1}\times \mathbb{P}^a\left[0\xlongleftrightarrow{BB}\partial B_{\epsilon}(0)\right]\right) }_{T_3}.
\end{align*}
According to~\eqref{eqn::backbone_scal} and~\eqref{eqn::bd_one_scal}, we have 
\begin{equation*}
	\lim_{a\to 0}T_2= \epsilon^{-\frac{2}{3}},\quad \lim_{a\to 0} T_3=\epsilon^{-\xi}.
\end{equation*}
For $T_1$, with~\cite[Lemma~2.1]{Cam23} replaced by Lemma~\ref{lem::backbone}, we can proceed as in~\cite[Proof of Theorem~1.1]{Cam23} to show that 
\begin{align*}
	&\lim_{a\to 0} T_1 = \mathbb{P}_{\mathbb{H}}\left[ \{x_1\xlongleftrightarrow{B}z\}\circ \{x_2\xlongleftrightarrow{B}z\}| z\xlongleftrightarrow{BB}\partial B_{\epsilon}(z), x_j\xlongleftrightarrow{B}\partial B_{\epsilon}(x_j), \enspace j\in \{1,2\}\right]\\ 
	& \quad = \lim_{m\to \infty}\lim_{\eta\to 0}\mathbb{P}_{\mathbb{H}}\left[ \{\partial B_{\delta_m}(x_1)\xlongleftrightarrow{B} \partial B_{\delta_m}(z)\}\circ \{\partial B_{\delta_m}(x_2)\xlongleftarrow{B}\partial B_{\delta_m}(z)\}| \mathcal{B}_{\eta,\epsilon}(z), \mathcal{A}_{\eta,\epsilon}(x_j), \enspace j\in \{1,2\}\right].
\end{align*}
Combining all of these observations together, we get the existence of the limit
\begin{equation*}
	P(x_1,x_2,z):=\lim_{a\to 0} \opi_a^{-2}\rho_a^{-1}\times \mathbb{P}^a \left[\{x_1^a\xlongleftrightarrow[\mathbb{H}]{B}z^a\}\circ \{x_2^a\xlongleftrightarrow[\mathbb{H}]{B}z^a\}\right].
\end{equation*}

The desired conformal covariance property for $P(x_1,x_2,z)$ can be derived using essentially the same method as in~\cite[Proof of Theorems~1.1\&1.4]{Cam23}, with the help of~\eqref{eqn::backbone_scal} and~\eqref{eqn::bd_one_scal} to get the correct exponents. 
\end{proof}

\section{Operator product expansions and logarithmic correlations} \label{sec::3}
In this section, we prove Theorems~\ref{thm::expansion}, \ref{thm::4arm_expansion}, \ref{thm::three_arm} and \ref{thm::log2}.
\subsection{Logarithmic correlations of bulk fields: Proof of Theorem~\ref{thm::expansion} modulo a key lemma}
Fix $\epsilon\in (0,1)$ such that $\epsilon<\frac{\min\{|x_3-x_4|,|x-x_3|,|x-x_4|\}}{20}$.
First, we consider 
\begin{equation*}
P\left(x_1\xlongleftrightarrow{B}x_2, x_3\xlongleftrightarrow{B}x_4\right):= P(x_1\xlongleftrightarrow{B}x_2\xlongleftrightarrow{B}x_3\xlongleftrightarrow{B}x_4)+P(x_1\xlongleftrightarrow{B}x_2\centernot{\xlongleftrightarrow{B}}x_3\xlongleftrightarrow{B} x_4).
\end{equation*}
According to~\cite[Eq.~(1.3)]{Cam23}, there exists a universal constant $C_*\in (0,\infty)$ such that 
\begin{equation} \label{eqn::twopoint}
\begin{split}
    	P(x_1\xlongleftrightarrow{B}x_2):=\lim_{a\to 0} \pi_a^{-2}\times \mathbb{P}^2\left[x_1^a\xlongleftrightarrow{B}x_2^a\right]=C_*|x_2-x_1|^{-\frac{5}{24}},\\
     P(x_3\xlongleftrightarrow{B}x_4):=\lim_{a\to 0} \pi_a^{-2}\times \mathbb{P}^2\left[x_3^a\xlongleftrightarrow{B}x_4^a\right]=C_*|x_4-x_3|^{-\frac{5}{24}}.
\end{split}
\end{equation}
\begin{lemma} \label{lem::expansion_aux1}
		Assume the same setup as in Theorem~\ref{thm::expansion}. Then there exists a universal constant $C_{10}\in (0,\infty)$ such that 
	\begin{align}\label{eqn::expansion_aux1_g1}
    \begin{split}
	   & \lim_{x_1,x_2\to x}\frac{P\left(x_1\xlongleftrightarrow{B}x_2, x_3\xlongleftrightarrow{B}x_4\right)-P(x_1\xlongleftrightarrow{B}x_2)P(x_3\xlongleftrightarrow{B}x_4)}{|x_2-x_1|^{\frac{25}{24}}\times \left|\log\left|x_2-x_1\right|\right|} \\
      & \qquad = C_{10} |x-x_3|^{-\frac{5}{4}}|x-x_4|^{-\frac{5}{4}}|x_3-x_4|^{\frac{25}{24}}.
	\end{split}
    \end{align}
\end{lemma}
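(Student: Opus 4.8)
The plan is to work at the lattice level with the increasing events $A_a:=\{x_1^a\xlongleftrightarrow{B}x_2^a\}$ and $B_a:=\{x_3^a\xlongleftrightarrow{B}x_4^a\}$. Since the limits in \eqref{eqn::twopoint} exist and the one defining $P(x_1\xlongleftrightarrow{B}x_2,x_3\xlongleftrightarrow{B}x_4)$ exists by \cite[Theorem~1.5]{Cam23}, the numerator in \eqref{eqn::expansion_aux1_g1} equals $\lim_{a\to0}\pi_a^{-4}\big(\mathbb{P}^a[A_a\cap B_a]-\mathbb{P}^a[A_a]\mathbb{P}^a[B_a]\big)$, a nonnegative quantity by the FKG inequality; writing $r:=|x_2-x_1|$, the task is to show this covariance is asymptotically $\pi_a^4\cdot C_{10}\,|x-x_3|^{-5/4}|x-x_4|^{-5/4}|x_3-x_4|^{25/24}\cdot r^{25/24}|\log r|$ as $a\to0$ and then $x_1,x_2\to x$, the delicate point being to isolate the logarithm. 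The first step is to condition on the black cluster $\mathcal{C}^a$ of $x_1^a$ (which on $A_a$ also contains $x_2^a$): because the labels are i.i.d., conditionally on $\mathcal{C}^a$ the configuration off $\mathcal{C}^a\cup\partial\mathcal{C}^a$ is again critical site percolation with $\mathcal{C}^a$ forced black and $\partial\mathcal{C}^a$ forced white, so, denoting this law by $\mathbb{P}^a_{\mathcal{C}^a}$,
\begin{equation}\label{eqn::independence_log}
\mathbb{P}^a[A_a\cap B_a]-\mathbb{P}^a[A_a]\mathbb{P}^a[B_a]=\mathbb{E}^a\Big[\mathbb{1}_{A_a}\big(\mathbb{P}^a_{\mathcal{C}^a}[B_a]-\mathbb{P}^a[B_a]\big)\Big].
\end{equation}
This is the identity that genuinely uses the independence of the labels; it would fail, and the logarithm would disappear, in a correlated model such as FK percolation. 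The summand in \eqref{eqn::independence_log} is nonzero only when the cluster of $x_3^a,x_4^a$ comes within $O(\diam\mathcal{C}^a)$ of $x$, and it is a difference of two competing effects that nearly cancel: the forced black blob $\mathcal{C}^a$ can be absorbed into that cluster (an ``all four connected'' contribution, of order $\pi_a^4r^{-5/48}$ after rescaling), whereas the forced white sheath $\partial\mathcal{C}^a$ can obstruct it (of the same order, with opposite sign). It is precisely this near-cancellation of two diverging power laws that converts the naive power law into a logarithm.

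To make this quantitative I would decompose $A_a$ according to the dyadic scale $s\in[r,\epsilon]$ to which $\mathcal{C}^a$ reaches around $x$, and combine quasi-multiplicativity of arm events (with the sharp four-arm exponent $5/4$), the coupling/decoupling lemmas in the spirit of Lemma~\ref{lem::four_arm_coupling_inner} and of \cite{GarbanPeteSchrammPivotalClusterInterfacePercolation}, and the continuity for the full scaling limit ($\mathrm{CLE}_6$) of the relevant arm events (boundary $3$-arm exponent $>1$, interior $6$-arm exponent $>2$), used exactly as in Section~\ref{sec::2}. This should factor the scale-$s$ contribution to $\pi_a^{-4}\big(\mathbb{P}^a[A_a\cap B_a]-\mathbb{P}^a[A_a]\mathbb{P}^a[B_a]\big)$ into a near-$x$ factor depending only on $r$ (essentially the two-point function $C_* r^{-5/24}$), a four-arm factor across $A_{O(r),s}(x)$ of size $(r/s)^{5/4}$, and an exterior factor — how the cluster of $x_3^a,x_4^a$ reaches the scale-$s$ neighborhood of $x$ — carrying the remaining $s$-dependence and scaling, as $\sim s^{5/4}$ times a conformal factor, so as to cancel $(r/s)^{5/4}$; this cancellation reflects the fact that the four-arm (``energy'') dimension is exactly $5/4$. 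The scale-$s$ contribution is therefore, asymptotically, independent of $s$ and equal to $r^{25/24}$ times a continuum quantity $\Phi(x,x_3,x_4)$, and summing over the $\asymp\log(\epsilon/r)\sim|\log r|$ dyadic scales — checking that scales outside $[r,\epsilon]$ and all coupling errors contribute only $o(r^{25/24}|\log r|)$, which also makes the answer independent of $\epsilon$ — gives
\[
\lim_{a\to 0}\pi_a^{-4}\big(\mathbb{P}^a[A_a\cap B_a]-\mathbb{P}^a[A_a]\mathbb{P}^a[B_a]\big)=\Phi(x,x_3,x_4)\,r^{25/24}\,|\log r|\,(1+o(1))\quad\text{as }x_1,x_2\to x.
\]

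It then remains to identify $\Phi$. It is the correlation between a four-arm event at $x$ and the connection $\{x_3\xlongleftrightarrow{B}x_4\}$; since the left-hand side above (before dividing) is M\"obius covariant with exponents $5/48$ at each of the four points — by \cite[Theorem~1.5]{Cam23} and \eqref{eqn::twopoint}, using that $|\varphi(x_1)-\varphi(x_2)|^2=|\varphi'(x_1)||\varphi'(x_2)||x_1-x_2|^2$ for a M\"obius $\varphi$ — while $r^{25/24}|\log r|$ picks up the extra factor $|\varphi'(x)|^{25/24}$ in the limit $x_1,x_2\to x$, the function $\Phi$ is M\"obius covariant with exponent $5/4$ at $x$ and $5/48$ at $x_3$ and $x_4$. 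A three-point function with these exponents is unique up to a multiplicative constant, so $\Phi(x,x_3,x_4)=C_{10}\,|x-x_3|^{-5/4}|x-x_4|^{-5/4}|x_3-x_4|^{25/24}$, with $C_{10}\in(0,\infty)$ — nonnegativity from FKG (the covariance is nonnegative), strict positivity from a standard RSW lower bound on the covariance of the scale $\pi_a^4 r^{25/24}|\log r|$. This is exactly \eqref{eqn::expansion_aux1_g1}.

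The main obstacle is the scale-by-scale analysis of the second step: one needs the arm estimates and the decoupling lemmas in \emph{multiplicative} (rather than merely logarithmic-exponent) form, with errors summable over $\sim|\log r|$ scales, and one must keep precise track of the cancellation in \eqref{eqn::independence_log} between the ``all four connected'' and the ``two clusters'' contributions — it is this cancellation, and not either contribution on its own, that produces the surviving $|\log r|$, and it is here that the independence of the labels is essential (the clean factorization across scales has no counterpart for a correlated model). A secondary technical point, handled as in the proofs of Theorems~\ref{thm::pivotal} and \ref{thm::backbone}, is the existence of the continuum limit $\Phi$, which rests on the continuity for the full scaling limit of the four-arm events involved.
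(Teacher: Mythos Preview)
Your high-level skeleton---dyadic decomposition over scales between $r=|x_2-x_1|$ and $\epsilon$, each scale contributing $\asymp r^{25/24}$, summing to produce the factor $|\log r|$, and then M\"obius covariance to identify the three-point function $\Phi$---matches the paper's proof. The endgame, where you pass the covariance of the four-point difference through the limit $x_1,x_2\to x$ to read off the exponents $5/4$ at $x$ and $5/48$ at $x_3,x_4$, is exactly what the paper does (Eqs.~\eqref{eqn::expansion_aux1_g4}--\eqref{eqn::expansion_aux1_g5}).

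Two substantive points, however, separate your outline from a proof and from the paper's argument.

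\textbf{The decomposition is not the paper's, and it is harder.} You condition on the full cluster $\mathcal{C}^a$ of $x_1^a$ and propose to slice by the scale $\mathcal{C}^a$ reaches. The paper instead slices by the \emph{smallest ball $B_m=B_{2^m r}(\tfrac{x_1+x_2}{2})$ that contains a black $x_1$--$x_2$ path} (Lemma~\ref{lem::expansion_aux11}). The point of this choice is that the event $\{x_1^a\xlongleftrightarrow{B;B_m^a}x_2^a\}$ depends only on labels in $B_m^a$, so one can use \emph{spatial} independence---the paper's \eqref{eqn::independence_log} is simply
\[
\mathbb{P}^a\big[x_1^a\xlongleftrightarrow{B;B_m^a}x_2^a,\ x_3^a\xlongleftrightarrow{B;(B_m^a)^c}x_4^a\big]
=\mathbb{P}^a\big[x_1^a\xlongleftrightarrow{B;B_m^a}x_2^a\big]\,\mathbb{P}^a\big[x_3^a\xlongleftrightarrow{B;(B_m^a)^c}x_4^a\big],
\]
which is different from your cluster-conditioning identity. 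This localises the covariance at scale $m$ to configurations where the $x_3$--$x_4$ connection is forced to enter $B_m^a$, giving the $\hat{\mathcal{F}}^{(o,a)}_m$ four-arm event on the outer annulus and the $\hat{\mathcal{F}}^{(i,a)}_{m-1}$ four-arm event on the inner annulus; your ``four-arm across $A_{O(r),s}(x)$'' is only the inner half of this picture. With your cluster-based slicing you would instead have to average $\mathbb{P}^a_{\mathcal{C}^a}[B_a]-\mathbb{P}^a[B_a]$ over the \emph{shape} of $\mathcal{C}^a$ at a given scale, which does not factor cleanly.

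\textbf{The mechanism for the logarithm, and the missing positivity step.} Your description---``near-cancellation of two diverging power laws converts the naive power law into a logarithm''---is not how the logarithm arises. Nothing diverges after rescaling: each $T_m$ in the paper's decomposition is a \emph{positive} quantity of order $r^{25/24}$ (Eqs.~\eqref{eqn::asy_modif_four_arm}, \eqref{eqn::expansion_aux1_g3}), and the logarithm comes from \emph{adding up} $M\sim|\log r|/\log 2$ such terms. The entire content is that the constant in front is strictly positive; this is Lemma~\ref{lem::expansion_aux14}, which shows $V_3>V_2=\tfrac12$ by (i) black/white symmetry in the continuum to get $V_2=\tfrac12$, and (ii) an FKG comparison plus RSW/separation of arms to get the strict inequality $V_3>\tfrac12$. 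Your outline does not isolate an analogue of this step; the FKG bound you invoke gives only $C_{10}\ge 0$, and your RSW remark for strict positivity is not sufficient without first showing that the scale-$s$ contribution to the \emph{difference} stabilises to a universal positive constant---this is exactly where Lemmas~\ref{lem::expansion_aux13} and~\ref{lem::expansion_aux14} do the work.
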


We postpone the proof of Lemma~\ref{lem::expansion_aux1}  to Section~\ref{sec::tech_log}. Next, we consider $P(x_1\xlongleftrightarrow{B}x_3\centernot{\xlongleftrightarrow{B}}x_2\xlongleftrightarrow{B}x_4)$ and $P(x_1\xlongleftrightarrow{B}x_4\centernot{\xlongleftrightarrow{B}}x_2\xlongleftrightarrow{B}x_3)$. In the limit $x_1,x_2\to0$, these two terms clearly behave in the same way, so it suffices to consider the first one. 

\begin{lemma} \label{lem::expansion_aux3}
		Assume the same setup as in Theorem~\ref{thm::expansion}. Then when $x_1,x_2$ are close enough to $x$, we have 
		\begin{equation*}
			P(x_1\xlongleftrightarrow{B}x_3\centernot{\xlongleftrightarrow{B}}x_2\xlongleftrightarrow{B}x_4)\leq 2C_8 \left(2 \frac{|x_2-x_1|}{\epsilon}\right)^{\frac{5}{4}}\times \epsilon^{-\frac{5}{24}}\times \left(\frac{|x_2-x_1|}{2}\right)^{-\frac{5}{24}},
		\end{equation*}
		where $C_8$ is the constant in Corollary~\ref{coro::four_arm}.
\end{lemma}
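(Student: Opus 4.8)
The plan is to bound the probability $P(x_1\xlongleftrightarrow{B}x_3\centernot{\xlongleftrightarrow{B}}x_2\xlongleftrightarrow{B}x_4)$ by exploiting the fact that the event $\{x_1^a\xlongleftrightarrow{B}x_3^a\}$ together with $\{x_2^a\xlongleftrightarrow{B}x_4^a\}$ and the non-connection $\{x_1^a\centernot{\xlongleftrightarrow{B}}x_2^a\}$ \emph{forces a four-arm event in an annulus around the pair $x_1,x_2$}. Indeed, since $x_1^a$ and $x_2^a$ are both connected to far-away points by black paths, but to \emph{different} clusters, there must be a black arm from near $x_1^a$ and a black arm from near $x_2^a$ reaching the outside of a ball $B_{\epsilon'}(x)$ for $\epsilon' \approx |x_2-x_1|$, and these must be separated by white arms --- this is exactly a four-arm (BWBW) configuration crossing the annulus $A_{c|x_2-x_1|, \epsilon}(x)$ for a suitable radius comparison. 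Writing $r:=|x_2-x_1|$, I would first choose a ball $B_{r/2}$ around the midpoint of $x_1,x_2$ (or around $x_1$) which contains both points, and an outer ball $B_\epsilon(x)$, and observe that the event is contained in $\{\exists\ \text{BWBW four-arm crossing}\ A_{r/2,\epsilon}\}$ \emph{intersected with} independent one-arm type events connecting the outer structure to $x_3$ and $x_4$.

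The key steps, in order: (i) Show the inclusion of events at the discrete level: $\mathcal{R}^a := \{x_1^a\xlongleftrightarrow{B}x_3^a\} \cap \{x_2^a\xlongleftrightarrow{B}x_4^a\} \cap \{x_1^a\centernot{\xlongleftrightarrow{B}}x_2^a\}$ implies a four-arm event in an annulus of inner radius $\sim r$ around a point near $x_1,x_2$, plus the requirement that two of the resulting arms extend (as black arms outside $B_\epsilon$) to reach $x_3^a$ and $x_4^a$ respectively. (ii) Use independence of percolation on disjoint regions to factor: the probability is bounded by $\mathbb{P}^a[\text{BWBW in } A_{r/2,\epsilon}]$ times (a constant bounding) the conditional probability that the black arms extend from $\partial B_\epsilon$ to $x_3^a, x_4^a$; the latter is of order $\pi_a^{-\text{(something)}}$ after renormalization, but here we only need an upper bound of order $\epsilon^{-5/24}$-type from the two-point function asymptotics in \eqref{eqn::twopoint}, since connecting $\partial B_\epsilon(x)$ to $x_3$ and to $x_4$ costs at most a bounded multiple of $|x-x_3|^{-5/24}|x-x_4|^{-5/24}$ which we absorb into the allowed region $\epsilon < \min\{\ldots\}/20$. (iii) Apply Corollary~\ref{coro::four_arm} and the quasi-multiplicativity / scaling relation \eqref{eqn::scal_four} for the four-arm probability to get $\mathbb{P}[\mathcal{F}_{r/2,\epsilon}]$-type quantity $\le C_8 (r/(2\epsilon))^{5/4} \cdot \epsilon^{-5/4}\cdot(\ldots)$ in the limit, matching $2C_8 (2r/\epsilon)^{5/4}\epsilon^{-5/24}(r/2)^{-5/24}$ after bookkeeping the normalization powers $\pi_a^{-4} = \pi_a^{-2}\cdot\pi_a^{-2}$ split between the four-arm factor (which scales like $\orho_a/\pi_a^4 \sim$ a power) and the two one-arm-to-$x_j$ factors.

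The main obstacle I anticipate is the \emph{bookkeeping of normalization exponents}: the left-hand side is a $\pi_a^{-4}$-renormalized probability, whereas the natural decomposition produces a four-arm event (renormalized by $\orho_a$) times two connection events to $x_3,x_4$ (each wanting $\pi_a^{-2}$), and one must verify that $\pi_a^{-4} \cdot \mathbb{P}^a[\cdots] \approx (\orho_a^{-1}\mathbb{P}^a[\text{4-arm in }A_{r/2,\epsilon}]) \cdot (\text{bounded}) $ with the powers of $r$ and $\epsilon$ tracked correctly, using the sharp arm-exponent asymptotics recalled in Remark~\ref{rem::sharp_arm} and the relation $\orho_a \pi_a^{-4} \to$ a finite constant times the appropriate power. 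Concretely, $\orho_a = a^{5/4+o(1)}$ and $\pi_a^4 = a^{5/12+o(1)}$ do \emph{not} match, so the split must instead be: $\mathcal{R}^a$ forces a four-arm event near $x_1,x_2$ scaled by $\orho_a$, \emph{and} the one-arm from the outer annulus boundary to $x_3$ (scaled by $\pi_a$) \emph{and} to $x_4$ (scaled by $\pi_a$), and separately two arms of the four-arm configuration are "used up" connecting back inward --- so the correct identity is $\pi_a^{-4} = \orho_a^{-1}\cdot \pi_a^{-2} \cdot (\orho_a \pi_a^{-2})$, and one checks $\orho_a \pi_a^{-2} \to \epsilon$-power-type constant via RSW/quasi-multiplicativity. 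I would handle this by working with finite $\epsilon$ throughout, applying RSW to control all conditional probabilities by universal constants, invoking Corollary~\ref{coro::four_arm} for the $\delta_m \to 0$ behavior of the four-arm probability in the annulus, and only at the end passing to the bound stated, where the explicit constant $2C_8$ and the powers $5/4$, $5/24$ emerge from \eqref{eqn::scal_four}, \eqref{eqn::twopoint}, and the geometric factor $(2r/\epsilon)^{5/4}$ coming from the ratio of annulus radii.
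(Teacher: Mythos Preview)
Your core idea is right and matches the paper: the event $\{x_1^a\xlongleftrightarrow{B}x_3^a\centernot{\xlongleftrightarrow{B}}x_2^a\xlongleftrightarrow{B}x_4^a\}$ forces a four-arm crossing of an annulus around $\frac{x_1+x_2}{2}$, and one then invokes Corollary~\ref{coro::four_arm}. But your bookkeeping of the normalization is genuinely confused, and your proposed resolution is wrong.

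The issue is that you are treating the four-arm event as a \emph{point} event requiring $\orho_a$-normalization, when in fact it is an \emph{annulus} crossing $\mathcal{F}^a_{2|x_2^a-x_1^a|,\epsilon}\big(\tfrac{x_1^a+x_2^a}{2}\big)$ with both radii macroscopic. Its probability converges to a continuum quantity with no $a$-divergence at all, so no $\orho_a$ enters. Your claimed identity $\pi_a^{-4}=\orho_a^{-1}\cdot\pi_a^{-2}\cdot(\orho_a\pi_a^{-2})$ and the assertion that $\orho_a\pi_a^{-2}$ tends to a constant are simply false: $\orho_a\pi_a^{-2}=a^{25/24+o(1)}\to 0$.

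The paper's decomposition is cleaner and avoids this entirely. Writing $r=|x_2^a-x_1^a|$, the inclusion is
\[
\{x_1^a\xlongleftrightarrow{B}x_3^a\centernot{\xlongleftrightarrow{B}}x_2^a\xlongleftrightarrow{B}x_4^a\}\subseteq
\bigcap_{j=1,2}\{x_j^a\xlongleftrightarrow{B}\partial B_{r/2}(x_j^a)\}\ \cap\ \bigcap_{k=3,4}\{x_k^a\xlongleftrightarrow{B}\partial B_{\epsilon}(x_k^a)\}\ \cap\ \mathcal{F}^a_{2r,\epsilon}\big(\tfrac{x_1^a+x_2^a}{2}\big),
\]
with all five events supported on disjoint regions. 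By independence and~\eqref{eqn::bd_one_scal}, the four $\pi_a^{-1}$'s pair off exactly with the four one-arm probabilities, contributing $(r/2)^{-5/48}$ twice and $\epsilon^{-5/48}$ twice. The remaining factor $\mathbb{P}[\mathcal{F}_{2r,\epsilon}]\le 2C_8(2r/\epsilon)^{5/4}$ comes from Corollary~\ref{coro::four_arm} and scale invariance. Note in particular that the one-arm events near $x_3,x_4$ are local balls $B_\epsilon(x_3),B_\epsilon(x_4)$, not connections from $\partial B_\epsilon(x)$ to $x_3,x_4$ as you wrote; this is what makes the factorization by independence immediate. Neither~\eqref{eqn::twopoint} nor Remark~\ref{rem::sharp_arm} is needed.
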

\begin{proof}
	Let $x_1,x_2$ be close enough to $x$ so that $\max\{|x_1-x|,|x_2-x|\} \leq \frac{\epsilon}{2} $. 
	Recall that
	\begin{equation} \label{eqn::expansion_aux31}
		P(x_1\xlongleftrightarrow{B}x_3\centernot{\xlongleftrightarrow{B}}x_2\xlongleftrightarrow{B}x_4)=\lim_{a\to 0} \pi_a^{-4} \times \mathbb{P}^a\left[x_1^a\xlongleftrightarrow{B}x_3^a\centernot{\xlongleftrightarrow{B}}x_2^a\xlongleftrightarrow{B}x_4^a\right]. 
	\end{equation}
	When $a>0$ is small enough, we have
	\begin{equation}\label{eqn::expansion_aux32}
\begin{split}
			&\{x_1^a\xlongleftrightarrow{B}x_3^a\centernot{\xlongleftrightarrow{B}}x_2^a\xlongleftrightarrow{B}x_4^a\}\\
	&\subseteq 	\{x_j^a\xlongleftrightarrow{B}\partial B_{\frac{\vert x_2^a-x_1^a \vert}{2}}(x_j^a), \enspace j=1,2\}\cap 	\{x_k^a\xlongleftrightarrow{B}\partial B_{\epsilon}(x_k^a), \enspace k=3,4\}\cap \mathcal{F}^a_{2\vert x_2^a-x_1^a \vert,\epsilon}\left(\frac{x_1^a+x_2^a}{2}\right).
\end{split}
	\end{equation}
Combining~\eqref{eqn::expansion_aux31},~\eqref{eqn::expansion_aux32} and Corollary~\ref{coro::four_arm}, and using the second limit in~\eqref{eqn::bd_one_scal} , gives the desired inequality.
\end{proof}
Theorem~\ref{thm::expansion} is a direct consequence of Lemmas~\ref{lem::expansion_aux1} and~\ref{lem::expansion_aux3}, as shown below.
\begin{proof}[Proof of Theorem~\ref{thm::expansion}]
Let $C_1=C_*^2$, where $C_*\in (0,\infty)$ is the constant in~\eqref{eqn::twopoint}. Note that 
\begin{equation} \label{eqn::expansion_proof}
	\begin{split}
	&	\frac{C(x_1,x_2,x_3,x_4)-C_1|x_3-x_4|^{-\frac{5}{24}}\times |x_2-x_1|^{-\frac{5}{24}}}{|x_2-x_1|^{\frac{25}{24}}\times \left| \log\left|x_2-x_1\right|\right|}\\
		=&\frac{P\left(x_1\xlongleftrightarrow{B}x_2, x_3\xlongleftrightarrow{B}x_4\right)-P(x_1\xlongleftrightarrow{B}x_2)P(x_3\xlongleftrightarrow{B}x_4)}{|x_2-x_1|^{\frac{25}{24}}\times \left|\log\left|x_2-x_1\right|\right|}+\frac{P\left(x_1\xlongleftrightarrow{B}x_3\centernot{\xlongleftrightarrow{B}}x_2\xlongleftrightarrow{B}x_4\right)}{|x_2-x_1|^{\frac{25}{24}}\times \left|\log\left|x_2-x_1\right|\right|}\\
		&+\frac{P\left(x_1\xlongleftrightarrow{B}x_4\centernot{\xlongleftrightarrow{B}}x_2\xlongleftrightarrow{B}x_3\right)}{|x_2-x_1|^{\frac{25}{24}}\times \left|\log\left|x_2-x_1\right|\right|} \to C_{10}|x-x_3|^{-\frac{5}{4}}|x-x_4|^{-\frac{5}{4}}|x_3-x_4|^{\frac{25}{24}}, \enspace \text{as }x_1,x_2\to x,
	\end{split}
\end{equation}
where we used Lemmas~\ref{lem::expansion_aux1} and~\ref{lem::expansion_aux3}, and $C_{10}\in (0,\infty)$ is the constant in Lemma~\ref{lem::expansion_aux1}.
This proves Theorem~\ref{thm::expansion} with $C_2=C_{10}/C_1$.
\end{proof}

\subsection{The origin of the logarithmic correction: Proof of Lemma~\ref{lem::expansion_aux1}} \label{sec::tech_log}
Without loss of generality, we may assume that $|x_2-x_1|$ is small enough in this section. Recall that $\epsilon\in (0,1)$ such that $\epsilon<\frac{\min\{|x_3-x_4|,|x-x_3|,|x-x_4|\}}{20}$ is fixed.

Let $M=\lfloor \log_2 \frac{\epsilon}{|x_2-x_1|}\rfloor$ and write $B^a_m=B_{2^m|x_2^a-x_1^a|}(\frac{x_1^a+x_2^a}{2})$, $B_m=B_{2^m|x_2-x_1|}(\frac{x_1+x_2}{2})$ for $m\in \{1,2,\ldots,M\}$.
Given two subsets of the plane, $C$ and $D$, and two vertices $z_1^a,z_2^a\in a\mathcal{T}$, we consider the following events:
\begin{itemize}
\item $\{z_1^a\xlongleftrightarrow{B;C}z_2^a\}$: there is a black path connecting $z_1^a$ to $z_2^a$ contained in $C$;
\item $\{z_1^a\xlongleftrightarrow[D]{B}z_2^a\}$: $z_1^a$ and $z_2^a$ belong to the same black cluster but there is no black path fully contained in $D$;
\item $\{z_1^a\xlongleftrightarrow[D]{B;C}z_2^a\}$: there is a black path connecting $z_1^a$ to $z_2^a$ contained in $C$ but no black path fully contained in $D$. 
\end{itemize}
We also let $(B_m^a)^c:=\mathbb{C}\setminus B_m^a, B_m^c:=\mathbb{C}\setminus B_m$ and use the convention that 
\[B_0^a=B_0:=\emptyset,\quad  B_{M+1}^a=B_{M+1}:=\mathbb{C}. \]
Using the strategy in~\cite[Proof of Theorem~1.5]{Cam23}, one can show that for $m\in \{1,2,\ldots,M,M+1\}$, the following limits exist and belong to $(0,\infty)$:
\begin{align*}
	P\left(x_1\xlongleftrightarrow[B_{m-1}]{B;B_m}x_2\right):=&\lim_{a\to 0} \pi_a^{-2}\times\mathbb{P}\left[x_1^a\xlongleftrightarrow[B_{m-1}]{B;B_{m}}x_2^a\right],\\
	P\left(x_1\xlongleftrightarrow[B_{m-1}]{B;B_{m}}x_2, x_3\xlongleftrightarrow[B_m^c]{B}x_4\right):=&\lim_{a\to 0} \pi_a^{-4}\times \mathbb{P}^a\left[x_1^a\xlongleftrightarrow[B_{m-1}^a]{B;B_{m}^a}x_2^a, x_3^a\xlongleftrightarrow[(B_m^a)^c]{B}x_4^a\right],\\
	P\left(x_1\xlongleftrightarrow[B_M]{B}x_2, x_3\xlongleftrightarrow{B}x_4\right):=&\lim_{a\to 0} \pi_a^{-4}\times \mathbb{P}^a\left[x_1^a\xlongleftrightarrow[B_M^a]{B}x_2^a, x_3^a\xlongleftrightarrow{B}x_4^a\right].
\end{align*}

\begin{lemma}\label{lem::expansion_aux11}
	We have 
	\begin{equation} \label{eqn::decomp}
		\begin{split}
			& P(x_1\xlongleftrightarrow{B}x_2,x_3\xlongleftrightarrow{B}x_4)- P\left(x_1\xlongleftrightarrow{B}x_2\right)P\left(x_3\xlongleftrightarrow{B}x_4\right) \\
			& \quad = \underbrace{P\left(x_1\xlongleftrightarrow{B;B_1}x_2,x_3\xlongleftrightarrow[B_1^c]{B}x_4\right)-P\left(x_1\xlongleftrightarrow{B;B_1}x_2\right)P\left(x_3\xlongleftrightarrow[B_1^c]{B}x_4\right)}_{T_1} \\
			& \qquad + \sum_{m=2}^M \underbrace{P\left(x_1\xlongleftrightarrow[B_{m-1}]{B;B_m}x_2,x_3\xlongleftrightarrow[B_m^c]{B}x_4\right)-P\left(x_1\xlongleftrightarrow[B_{m-1}]{B;B_m}x_2\right) P\left(x_3\xlongleftrightarrow[B_m^c]{B}x_4\right)}_{T_m} \\
			& \qquad + \underbrace{P\left(x_1\xlongleftrightarrow[B_M]{B}x_2,x_3\xlongleftrightarrow{B}x_4\right)-P\left(x_1\xlongleftrightarrow[B_M]{B}x_2\right)P\left(x_3\xlongleftrightarrow{B}x_4\right)}_{T_{M+1}}.
		\end{split}
	\end{equation}
\end{lemma}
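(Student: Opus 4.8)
The plan is to establish \eqref{eqn::decomp} at the discrete level, for each small $a$, and then pass to the scaling limit term by term; this last step is legitimate because $M=M(x_1,x_2)$ does not depend on $a$, so \eqref{eqn::decomp} is a \emph{finite} sum.

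Fix $a$ and, for $m\in\{1,\ldots,M+1\}$, set $E_m^a:=\{x_1^a\xlongleftrightarrow[B_{m-1}^a]{B;B_m^a}x_2^a\}$ and $F_m^a:=\{x_3^a\xlongleftrightarrow{B;(B_m^a)^c}x_4^a\}$, keeping the conventions $B_0^a=\emptyset$ and $B_{M+1}^a=\mathbb{C}$. First I would observe that $(E_m^a)_{m=1}^{M+1}$ is a partition of $\{x_1^a\xlongleftrightarrow{B}x_2^a\}$: on this event there is a well-defined smallest index $m$ for which some black path from $x_1^a$ to $x_2^a$ is contained in $B_m^a$ (it exists because $B_{M+1}^a=\mathbb{C}$, and it is $\geq 1$ because $B_0^a=\emptyset$ and $x_1^a\neq x_2^a$), and exactly $E_m^a$ occurs for that $m$. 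Second, for each $m$ I would split $\{x_3^a\xlongleftrightarrow{B}x_4^a\}$ as the disjoint union of $F_m^a$ and $\{x_3^a\xlongleftrightarrow[(B_m^a)^c]{B}x_4^a\}$. For $m=M+1$ one has $F_{M+1}^a=\emptyset$ and $\{x_3^a\xlongleftrightarrow[\emptyset]{B}x_4^a\}=\{x_3^a\xlongleftrightarrow{B}x_4^a\}$, so the $(M+1)$-st term of \eqref{eqn::decomp} is recovered from the conventions; for $m\leq M$, the hypothesis $\epsilon<\min\{|x_3-x_4|,|x-x_3|,|x-x_4|\}/20$ together with $x_1,x_2\to x$ guarantees $x_3^a,x_4^a\in(B_m^a)^c$ for all small $a$, so this is the intended decomposition.

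The crucial point is independence. Once the lattice ball $B_m^a$ is fixed so that ``contained in $B_m^a$'' and ``contained in $(B_m^a)^c$'' refer to disjoint collections of hexagons, the event $E_m^a$ is measurable with respect to the labels of the hexagons in $B_m^a$ (both of its defining sub-events, the existence of a black path in $B_m^a$ and the non-existence of one in $B_{m-1}^a\subseteq B_m^a$, are), while $F_m^a$ is measurable with respect to the labels of the hexagons in $(B_m^a)^c$; hence $\mathbb{P}^a[E_m^a\cap F_m^a]=\mathbb{P}^a[E_m^a]\,\mathbb{P}^a[F_m^a]$ by the independence of the labels in Bernoulli percolation. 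Inserting the two decompositions into $\mathbb{P}^a[x_1^a\xlongleftrightarrow{B}x_2^a,\,x_3^a\xlongleftrightarrow{B}x_4^a]$ and into $\mathbb{P}^a[x_1^a\xlongleftrightarrow{B}x_2^a]\,\mathbb{P}^a[x_3^a\xlongleftrightarrow{B}x_4^a]$, using $\mathbb{P}^a[F_m^a]=\mathbb{P}^a[x_3^a\xlongleftrightarrow{B}x_4^a]-\mathbb{P}^a[x_3^a\xlongleftrightarrow[(B_m^a)^c]{B}x_4^a]$, and summing over $m$, the terms involving $F_m^a$ cancel by independence and one is left with
\begin{align*}
& \mathbb{P}^a\big[x_1^a\xlongleftrightarrow{B}x_2^a,\,x_3^a\xlongleftrightarrow{B}x_4^a\big]-\mathbb{P}^a\big[x_1^a\xlongleftrightarrow{B}x_2^a\big]\,\mathbb{P}^a\big[x_3^a\xlongleftrightarrow{B}x_4^a\big] \\
& \quad =\sum_{m=1}^{M+1}\Big(\mathbb{P}^a\big[E_m^a,\,x_3^a\xlongleftrightarrow[(B_m^a)^c]{B}x_4^a\big]-\mathbb{P}^a\big[E_m^a\big]\,\mathbb{P}^a\big[x_3^a\xlongleftrightarrow[(B_m^a)^c]{B}x_4^a\big]\Big).
\end{align*}

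Finally I would multiply through by $\pi_a^{-4}$ and let $a\to0$. The left-hand side converges to the left-hand side of \eqref{eqn::decomp} by \eqref{eqn::twopoint} and the definition of $P(x_1\xlongleftrightarrow{B}x_2,x_3\xlongleftrightarrow{B}x_4)$; and since the sum is finite, each summand converges to the corresponding $T_m$ by the existence of the scaling limits recorded just before the statement of the lemma (the factor $P(x_3\xlongleftrightarrow[B_m^c]{B}x_4)=\lim_{a\to0}\pi_a^{-2}\mathbb{P}^a[x_3^a\xlongleftrightarrow[(B_m^a)^c]{B}x_4^a]$ exists by the same argument, being equal to $P(x_3\xlongleftrightarrow{B}x_4)-\lim_{a\to0}\pi_a^{-2}\mathbb{P}^a[F_m^a]$). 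I expect the only genuinely delicate point to be the independence step — concretely, fixing the discretization $B_m^a$ carefully enough that $E_m^a$ and $F_m^a$ depend on disjoint sets of hexagons. This is exactly where the independence of the percolation labels is used, as highlighted in the remark following Theorem~\ref{thm::expansion}; the rest is bookkeeping.
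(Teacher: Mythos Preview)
Your proposal is correct and follows essentially the same approach as the paper: both argue at the discrete level, partition $\{x_1^a\xlongleftrightarrow{B}x_2^a\}$ according to the smallest $m$ with a black path in $B_m^a$, use the independence of $E_m^a$ (measurable in $B_m^a$) and $F_m^a$ (measurable in $(B_m^a)^c$) to cancel the $F_m^a$-terms, and then pass to the scaling limit term by term. The paper records the key independence step as its Eq.~\eqref{eqn::independence_log}, exactly the point you single out as the only delicate one.
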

\begin{proof}
	First of all, we observe that
	\begin{align*}
		& \mathbb{P}^a\left[x_1^a\xlongleftrightarrow{B}x_2^a,x_3^a\xlongleftrightarrow{B}x_4^a\right] = \mathbb{P}^a\left[x_1^a\xlongleftrightarrow{B;B_1^a}x_2^a,x_3^a\xlongleftrightarrow{B}x_4^a\right]+\mathbb{P}^a\left[x_1^a\xlongleftrightarrow[B_M^a]{B}x_2^a,x_3^a\xlongleftrightarrow{B}x_4^a\right] \\
		& \qquad \qquad \qquad \qquad \qquad +\sum_{m=2}^M \mathbb{P}^a\left[x_1^a\xlongleftrightarrow[B_{m-1}^a]{B;B_m^a}x_2^a,x_3^a\xlongleftrightarrow{B}x_4^a\right], \\
		& \mathbb{P}^a\left[x_1^a\xlongleftrightarrow{B}x_2^a\right]\mathbb{P}^a\left[x_3^a\xlongleftrightarrow{B}x_4^a\right] = \mathbb{P}^a\left[x_1^a\xlongleftrightarrow{B;B_1^a}x_2^a\right]\mathbb{P}^a\left[x_3^a\xlongleftrightarrow{B}x_4^a\right] \\
        & \qquad \qquad \qquad \qquad \qquad + \mathbb{P}^a\left[x_1^a\xlongleftrightarrow[B_M^a]{B}x_2^a\right]\mathbb{P}^a\left[x_3^a\xlongleftrightarrow{B}x_4^a\right] + \sum_{m=2}^M \mathbb{P}^a\left[x_1^a\xlongleftrightarrow[B_{m-1}^a]{B;B_m^a} x_2^a\right]\mathbb{P}^a\left[ x_3^a\xlongleftrightarrow{B}x_4^a\right].
	\end{align*}
	Now note that 
	\begin{align*}
		& \mathbb{P}^a\left[x_1^a\xlongleftrightarrow{B;B_1^a}x_2^a,x_3^a\xlongleftrightarrow{B}x_4^a\right]-\mathbb{P}^a\left[x_1^a\xlongleftrightarrow{B;B_1^a}x_2^a\right]\mathbb{P}^a\left[x_3^a\xlongleftrightarrow{B}x_4^a\right] \\
		& \quad = \mathbb{P}^a\left[x_1^a\xlongleftrightarrow{B;B_1^a}x_2^a,x_3^a\xlongleftrightarrow[(B_1^a)^c]{B}x_4^a\right]- \mathbb{P}^a\left[x_1^a\xlongleftrightarrow{B;B_1^a}x_2^a\right]\mathbb{P}^a\left[x_3^a\xlongleftrightarrow[(B_1^a)^c]{B}x_4^a\right],
	\end{align*}
	where we used the fact that 
	\begin{equation} \label{eqn::independence_log}
		\mathbb{P}^a\left[x_1^a\xlongleftrightarrow{B;B_1^a}x_2^a,x_3^a\xlongleftrightarrow{B;(B_1^a)^c}x_4^a\right]= \mathbb{P}^a\left[x_1^a\xlongleftrightarrow{B;B_1^a}x_2^a\right]\mathbb{P}^a\left[x_3^a\xlongleftarrow{B;(B_1^a)^c}x_4^a\right].
	\end{equation}
	Similarly, we have 
	\begin{align*}
& \mathbb{P}^a\left[x_1^a\xlongleftrightarrow[B_{m-1}^a]{B;B_m^a},x_3^a\xlongleftrightarrow{B}x_4^a\right]- \mathbb{P}^a\left[x_1^a\xlongleftrightarrow[B_{m-1}^a]{B;B_m^a}\right]\mathbb{P}^a\left[x_3^a\xlongleftrightarrow{B}x_4^a\right] \\
& \quad = \mathbb{P}^a\left[x_1^a\xlongleftrightarrow[B_{m-1}^a]{B;B_m^a},x_3^a\xlongleftrightarrow[(B_m^a)^c]{B}x_4^a\right]- \mathbb{P}^a\left[x_1^a\xlongleftrightarrow[B_{m-1}^a]{B;B_m^a}\right]\mathbb{P}^a\left[x_3^a\xlongleftrightarrow[(B_m^a)^c]{B}x_4^a\right],
	\end{align*}
	for $m\in \{2,3,\ldots,M\}$. 
	Combining these observations together and letting $a\to 0$ gives~\eqref{eqn::decomp}.
\end{proof}

\begin{lemma}\label{lem::expansion_aux12}
	Consider the terms $T_1$ and $T_{M+1}$ defined in~\eqref{eqn::decomp}. Then there exists a constant $\hat{C}\in (0,\infty)$ that may depend on $\epsilon$ such that when $x_1,x_2$ are close enough to $x$, we have 
	\begin{equation*}
		|T_1|+|T_{M+1}|\leq \hat{C} |x_2-x_1|^{\frac{25}{24}}.
	\end{equation*}
\end{lemma}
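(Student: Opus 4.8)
\textbf{Proof proposal for Lemma~\ref{lem::expansion_aux12}.}

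The plan is to bound the two ``boundary'' terms $T_1$ and $T_{M+1}$ of the decomposition \eqref{eqn::decomp} separately, showing each is $O(|x_2-x_1|^{25/24})$, since these are the terms whose geometry differs from the generic ``annular'' terms $T_m$ ($2 \le m \le M$) and which therefore cannot be absorbed into the logarithmic sum. For $T_1$, note that the event $\{x_1^a \xlongleftrightarrow{B;B_1^a} x_2^a\}$ involves a black connection inside a ball of radius $2|x_2^a-x_1^a|$ around the midpoint, while $\{x_3^a \xlongleftrightarrow[(B_1^a)^c]{B} x_4^a\}$ involves a connection outside this same ball. In particular $T_1$ is bounded in absolute value by (a constant times) $P(x_1 \xlongleftrightarrow{B;B_1} x_2) \cdot P(x_3 \xlongleftrightarrow{B} x_4)$, so it suffices to control $P(x_1 \xlongleftrightarrow{B;B_1} x_2)$. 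Using RSW/quasi-multiplicativity, a black path joining $x_1^a$ and $x_2^a$ within $B_1^a = B_{2|x_2^a-x_1^a|}(\tfrac{x_1^a+x_2^a}{2})$ forces two one-arm events from $x_1^a$ and $x_2^a$ out to distance of order $|x_2^a-x_1^a|/4$ (say), but \emph{does not} force a one-arm event to the much larger scale $\epsilon$; hence $P(x_1 \xlongleftrightarrow{B;B_1} x_2)$ is comparable to $\pi_a^{-2} (\mathbb{P}^a[0 \xlongleftrightarrow{B} \partial B_{|x_2^a-x_1^a|}(0)])^2$ times a bounded factor, which by the second limit of \eqref{eqn::bd_one_scal} (the bulk one-arm scaling $\mathbb{P}^a[0\xlongleftrightarrow{B}\partial B_\epsilon(0)]/\mathbb{P}^a[0\xlongleftrightarrow{B}\partial B_1(0)] \to \epsilon^{-5/48}$) is of order $|x_2-x_1|^{2 \cdot 5/48} = |x_2-x_1|^{5/24}$. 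Multiplying by the bounded quantity $P(x_3 \xlongleftrightarrow{B} x_4) = C_*|x_4-x_3|^{-5/24}$ (which is a fixed finite constant once $x_3,x_4$ are fixed) gives $|T_1| \lesssim |x_2-x_1|^{5/24}$. Since $5/24 < 25/24$ and $|x_2-x_1|$ is small, this is $\le \hat C |x_2-x_1|^{25/24}$ after possibly enlarging $\hat C$ — wait, that inequality goes the wrong way; so for $T_1$ one actually needs the \emph{decay rate} $25/24$, which means the leading $|x_2-x_1|^{5/24}$ behaviour must cancel. This forces a more careful argument: I expect $T_1$ itself (the covariance, not just its bound by the product) to be smaller, because the black path inside $B_1^a$ and the connection outside $(B_1^a)^c$ are nearly independent — the correction comes precisely from whether $x_1^a,x_2^a$ connect to $x_3^a$ through the region outside $B_1^a$. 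So the bound should be: $|T_1| \le C\, P(x_1 \xlongleftrightarrow{B;B_1} x_2)\cdot \big(\text{probability of a four-arm event in } A_{2|x_2-x_1|,\epsilon}\big)$, and the four-arm factor contributes $|x_2-x_1|^{5/4}\epsilon^{-5/4}$ by Corollary~\ref{coro::four_arm} and \eqref{eqn::scal_four}, yielding $|T_1| \lesssim |x_2-x_1|^{5/24+5/4} = |x_2-x_1|^{35/24}$, which is even smaller than $|x_2-x_1|^{25/24}$. Hmm, $35/24 > 25/24$ so that is fine.

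Let me restate the strategy cleanly. For $T_1$: I would write $T_1$ as a difference of a joint probability and a product, express it (using the same kind of inclusion–exclusion as in the proof of Lemma~\ref{lem::expansion_aux11}) in terms of the event that $x_1^a$ or $x_2^a$ additionally connects to $\{x_3^a,x_4^a\}$ via a path meeting $(B_1^a)^c$ in a controlled way, and then observe that any such configuration forces \emph{both} (i) a two-arm-type connection of $x_1^a,x_2^a$ inside $B_1^a$, contributing the factor $|x_2-x_1|^{5/24}$ via \eqref{eqn::bd_one_scal}, \emph{and} (ii) a four-arm event in the annulus $A_{c|x_2-x_1|,\epsilon}(\tfrac{x_1+x_2}{2})$ around the cluster of $x_1^a,x_2^a$ (this is exactly the mechanism of Lemma~\ref{lem::expansion_aux3}), contributing a factor of order $|x_2-x_1|^{5/4}$ via Corollary~\ref{coro::four_arm} and \eqref{eqn::scal_four}. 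Together these give $|T_1| \le \hat C |x_2-x_1|^{5/24+5/4}$ with $\hat C$ depending on $\epsilon$ (through $\epsilon^{-5/24}$ and $\epsilon^{-5/4}$), and since $5/24+5/4 = 35/24 > 25/24$, in particular $|T_1| \le \hat C|x_2-x_1|^{25/24}$ for $|x_2-x_1|$ small. The treatment of $T_{M+1}$ is symmetric under the inversion that swaps ``inside $B_M$'' with ``outside $B_M$'': here $B_M$ has radius of order $\epsilon$ (a \emph{fixed} scale), the connection of $x_3^a,x_4^a$ is unconstrained (to the full plane), and the constraint is that $x_1^a,x_2^a$ connect but \emph{not} within $B_M^a$; the covariance is again controlled by a four-arm event in $A_{\epsilon,R}$ for large $R$ — but since here both scales involved are $\Theta(\epsilon)$ or larger and $x_1,x_2 \to x$ only makes $|x_2-x_1| \to 0$ without affecting the $B_M$-scale geometry, the term $T_{M+1}$ is in fact bounded by a constant depending on $\epsilon$ times the probability that $x_1^a,x_2^a$ connect out to scale $\epsilon$ but the connection leaves $B_M$, which again forces a one-arm event of $x_1^a,x_2^a$ to scale $\sim|x_2-x_1|$ giving the factor $|x_2-x_1|^{5/24}$, together with the ``disconnection inside $B_M$'' forcing a four-arm event across $A_{c|x_2-x_1|,\epsilon}$ giving a further $|x_2-x_1|^{5/4}$; hence $|T_{M+1}| \le \hat C |x_2-x_1|^{35/24} \le \hat C|x_2-x_1|^{25/24}$ as well.

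To make the four-arm mechanism rigorous at the level of the limiting quantities $P(\cdots)$ rather than the discrete probabilities $\mathbb{P}^a[\cdots]$, I would argue as in Lemma~\ref{lem::expansion_aux3}: establish the event-inclusions at the discrete level, apply the discrete estimates ($\pi_a^{-2}\mathbb{P}^a[0\xlongleftrightarrow{B}\partial B_r(0)] \asymp r^{5/24}$ from \eqref{eqn::bd_one_scal}, and $\overline\rho_a^{-1}\mathbb{P}^a[\mathcal{F}^a_{\eta,\epsilon}] \asymp (\eta/\epsilon)^{5/4}$ from \eqref{eqn::scal_four}/Corollary~\ref{coro::four_arm}), use quasi-multiplicativity to glue, and then pass to the limit $a \to 0$; since the right-hand sides are explicit powers of $|x_2-x_1|$ and $\epsilon$ with no $a$-dependence, the limit inequality follows. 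The main obstacle I anticipate is the bookkeeping in the inclusion–exclusion step that expresses the covariance $T_1$ (resp.\ $T_{M+1}$) purely in terms of events that \emph{force} a four-arm crossing of the relevant annulus — one has to be careful that the ``no black path fully contained in $B_{m-1}$'' type constraints in the definitions of the intermediate events are correctly handled so that the four-arm event genuinely appears, rather than merely a three-arm event (which would give a weaker power $|x_2-x_1|^{2/3}$-type contribution on the boundary and could still suffice here, but makes the constants messier). This is the same subtlety that arises in \cite{Cam23} and in the proof of Lemma~\ref{lem::expansion_aux3}, so it can be handled by the same techniques; I would lean on RSW and quasi-multiplicativity throughout and not attempt sharp constants, since only the exponent $25/24$ (with slack) is needed.
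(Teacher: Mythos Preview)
Your overall instinct---that a four-arm event across an annulus of aspect ratio $\epsilon/|x_2-x_1|$ must appear and contributes the crucial factor $|x_2-x_1|^{5/4}$---is correct, and the references to Corollary~\ref{coro::four_arm} and~\eqref{eqn::bd_one_scal} are the right tools. However, two things are off.

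First, a sign error. The normalized one-arm contribution $P(x_1\xlongleftrightarrow{B;B_1}x_2) = \lim_{a\to 0}\pi_a^{-2}\mathbb{P}^a[x_1^a\xlongleftrightarrow{B;B_1^a}x_2^a]$ is of order $|x_2-x_1|^{-5/24}$, not $|x_2-x_1|^{+5/24}$: by~\eqref{eqn::bd_one_scal}, $\pi_a^{-1}\mathbb{P}^a[0\xlongleftrightarrow{B}\partial B_r(0)]\to r^{-5/48}$, so two such factors give $r^{-5/24}$, which \emph{blows up} as $r\to 0$. With the correct sign, your product gives $|x_2-x_1|^{-5/24}\cdot|x_2-x_1|^{5/4}=|x_2-x_1|^{25/24}$, not $|x_2-x_1|^{35/24}$. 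There is no slack; $25/24$ is exactly the exponent claimed.

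Second, and more importantly, you take a detour through covariance cancellation that is both unnecessary and not justified as written. You suggest that the four-arm event should emerge from an inclusion--exclusion argument relating the covariance to configurations where ``$x_1^a$ or $x_2^a$ additionally connects to $\{x_3^a,x_4^a\}$.'' The paper does nothing of the sort. The point you are missing is that the event $\{x_3^a\xlongleftrightarrow[(B_1^a)^c]{B}x_4^a\}$ \emph{by itself} forces a four-arm event in $A_{2|x_2^a-x_1^a|,\epsilon}\big(\tfrac{x_1^a+x_2^a}{2}\big)$: if $x_3^a,x_4^a$ are black-connected but not through $(B_1^a)^c$, every connecting black path must enter $B_1^a$ (giving two black arms across the annulus), and the failure of connection in $(B_1^a)^c$ forces a white dual separation (giving two white arms). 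Since this event appears in \emph{both} terms of $T_1$, one bounds each term separately---no cancellation is used. Symmetrically, for $T_{M+1}$ the event $\{x_1^a\xlongleftrightarrow[B_M^a]{B}x_2^a\}$ directly forces the same four-arm event. The paper's proof is then two one-line inclusions followed by passing to the limit.
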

\begin{proof}
		Let $x_1,x_2$ be close enough to $x$ so that $\max\{|x_1-x|,|x_2-x|\} \leq \frac{\epsilon}{2} $.
        Then, when $a>0$ is small enough, we have 
		\begin{align*}
			\{x_3^a\xlongleftrightarrow[(B_1^a)^c]{B}x_4^a\}\subseteq & \; \mathcal{F}_{2|x_2^a-x_1^a|,\epsilon}\left(\frac{x_1^a+x_2^a}{2}\right)\cap \{x_k^a\xlongleftrightarrow{B}\partial B_{\epsilon}(x_k^a), \enspace k\in \{3,4\}\},\\
			 \{x_1^a\xlongleftrightarrow[B_M^a]{B}x_2^a\}\subseteq & \; \mathcal{F}_{2|x_2^a-x_1^a|,\epsilon}\left(\frac{x_1^a+x_2^a}{2}\right)\cap \{x_j^a\xlongleftrightarrow{B} \partial B_{\frac{|x_2^a-x_1^a|}{2}}(x_j^a),\enspace j\in\{1,2\}\}.
		\end{align*}
		Letting $a\to 0$ gives 
		\begin{equation*} \label{eqn::esti_M01_aux1}
			|T_1|+|T_{M+1}|\leq 4 \left(\frac{|x_2-x_1|}{2}\right)^{-\frac{5}{24}}\times \epsilon^{-\frac{5}{24}}\times \mathbb{P}\left[\mathcal{F}_{2|x_2-x_1|,\epsilon}(\frac{x_1+x_2}{2})\right].
		\end{equation*}
		Combining the last inequality with Corollary~\ref{coro::four_arm} gives the desired inequality with some constant $\hat{C}\in (0,\infty)$, which may depend on $\epsilon$. 
\end{proof}
Let $m\in \{2,\ldots,M\}$. We define $\mathcal{F}^{(o,a)}_m$ to be the following event: (1) there is no black path in $(B_m^a)^c$ connecting $x_3^a$ to $x_4^a$; and (2) there are two black paths connecting $x_3^a$ and $x_4^a$ to $\partial B_m^a$, respectively. We define $\mathcal{F}^{(i,a)}_m$ to be the following event: (1) there is no black path in $B_{m}^a$ connecting $x_1^a$ to $x_2^a$; and (2) there are two black paths connecting $x_1^a$ and $x_2^a$ to $\partial B_m^a$, respectively.
Furthermore, we let
\begin{align*}
\hat{\mathcal{F}}_m^{(o,a)}:=& \; \mathcal{F}^a_{2^m|x_2^a-x_1^a|,\epsilon}\left(\frac{x_1^a+x_2^a}{2}\right)\cap \{x_k^a\xlongleftrightarrow{B}\partial B_{\epsilon}(x_k^a),\enspace k\in \{3,4\}\},\\
\hat{\mathcal{F}}_m^{(i,a)}:=& \; \mathcal{F}^a_{2|x^a_2-x_1^a|,2^m|x_2^a-x_1^a|}\left(\frac{x_1^a+x_2^a}{2}\right)\cap \{x_j^a\xlongleftrightarrow{B}\partial B_{\frac{|x_2^a-x_1^a|}{2}}(x_j^a),\enspace j\in \{1,2\}\}.
\end{align*}
See Fig.~\ref{fig::fourarmvariants1} for an illustration of these events.

\begin{figure} 
\begin{subfigure}{1\textwidth}
		\begin{center}
			\includegraphics[width=0.5\textwidth]{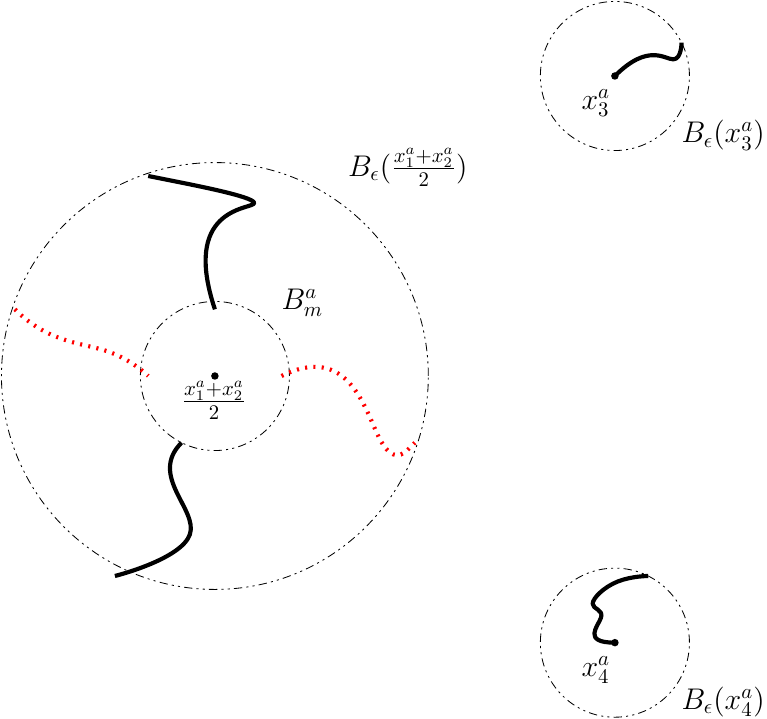}
		\end{center}
		\caption{The event $\hat{\mathcal{F}}_{m}^{(o,a)}$.}
	\end{subfigure}
	\begin{subfigure}{0.43\textwidth}
		\begin{center}
			\includegraphics[width=0.65\textwidth]{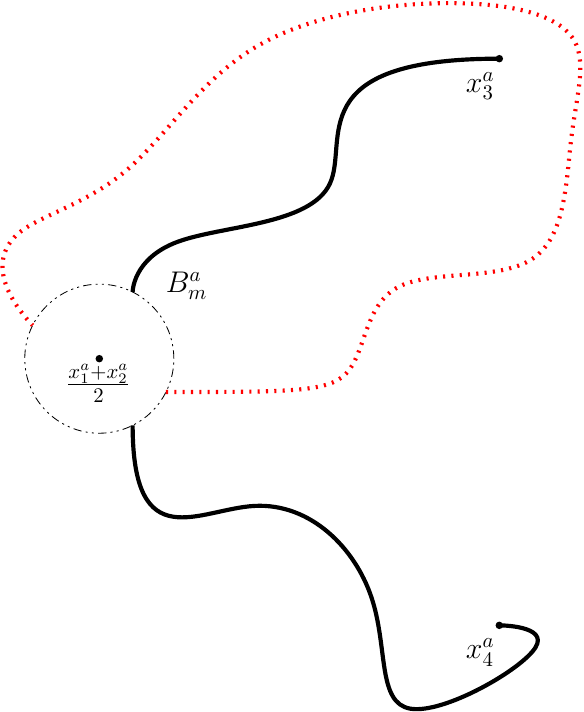}
		\end{center}
		\caption{The event $\mathcal{F}_m^{(o,a)}$.}
	\end{subfigure}
	\begin{subfigure}{0.43\textwidth}
		\begin{center}
			\includegraphics[width=0.75\textwidth]{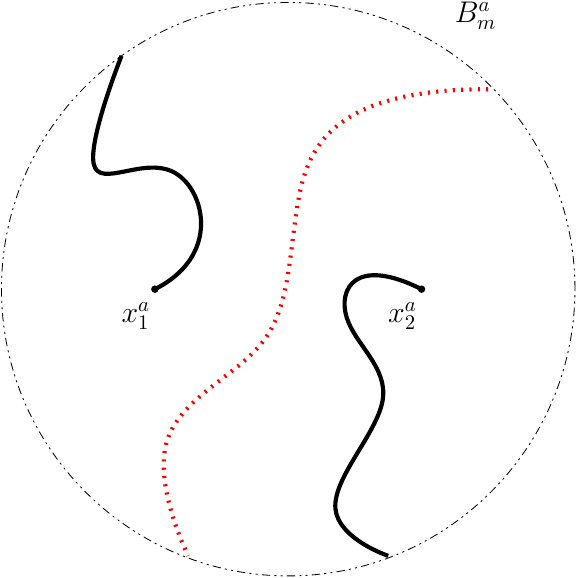}
		\end{center}
		\caption{The event $\mathcal{F}_{m}^{(i,a)}$.}
	\end{subfigure}
	\begin{subfigure}{0.45\textwidth}
		\begin{center}
			\includegraphics[width=0.65\textwidth]{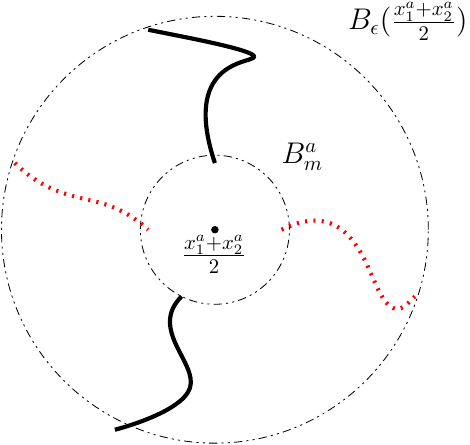}
		\end{center}
		\caption{The event $\mathcal{F}_{2^m|x_2^a-x_1^a|,\epsilon}\big( \frac{x_1^a+x_2^a}{2}\big)$.}
	\end{subfigure}
	\begin{subfigure}{0.45\textwidth}
		\begin{center}
			\includegraphics[width=0.6\textwidth]{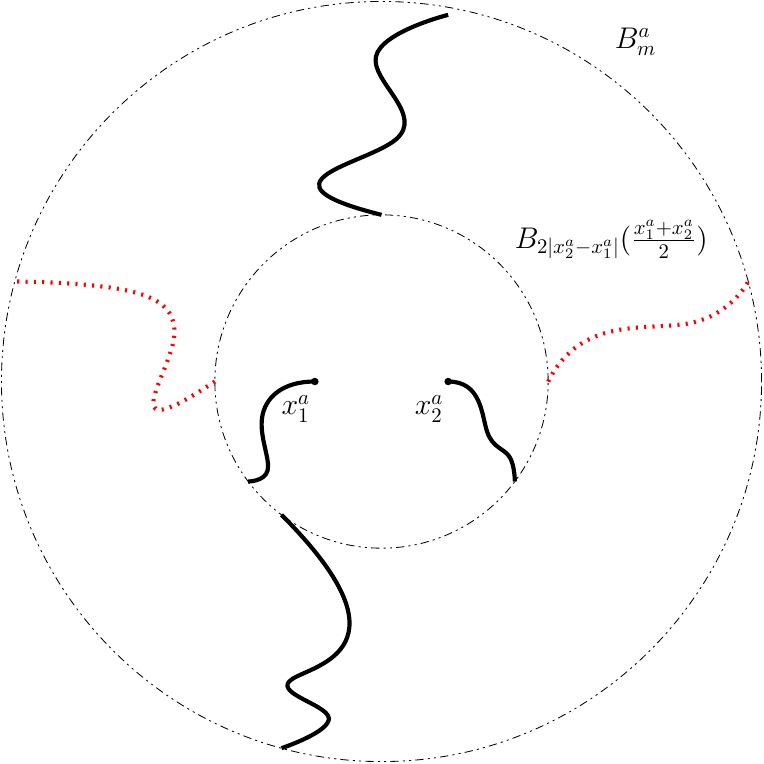}
		\end{center}
		\caption{The event $\hat{\mathcal{F}}_{m}^{(i,a)}$.}
	\end{subfigure}
	\caption{The events $\hat{\mathcal{F}}_m^{(o,a)}$, $\mathcal{F}_m^{(o,a)}$, $\mathcal{F}_{m}^{(i,a)}$, $\mathcal{F}_{2^m|x_2^a-x_1^a|,\epsilon}\big(\frac{x_1^a+x_2^a}{2}\big)$ and $\hat{\mathcal{F}}_m^{(i,a)}$. The black, solid lines represent black paths, and the red, dotted lines represent white paths.}
	\label{fig::fourarmvariants1}
\end{figure}

It follows from~\eqref{eqn::bd_one_scal} and Corollary~\ref{coro::four_arm} that 
\begin{equation}\label{eqn::asy_modif_four_arm}
\begin{split}
		\lim_{x_1,x_2\to x} \left(\frac{2^m|x_2-x_1|}{\epsilon}\right)^{-\frac{5}{4}}\times\lim_{a\to 0} \pi_a^{-2}\times  \mathbb{P}^a\left[\hat{\mathcal{F}}^{(o,a)}_m\right] = & \; C_8 \, \epsilon^{-\frac{5}{24}},\\
		\quad \lim_{x_1,x_2\to x} 2^{\frac{5}{4}(m-2)}\times \left(\frac{x_2-x_1}{2}\right)^{\frac{5}{24}}\times \lim_{a\to 0} \pi_a^{-2}\times \mathbb{P}^a\left[\hat{\mathcal{F}}^{(i,a)}_{m-1}\right] = & \; C_8,
\end{split}
\end{equation}
where $C_8$ is the constant in Corollary~\ref{coro::four_arm}. 
Note that when $x_1^a,x_2^a$ are close enough to $x$, we have 
\begin{equation} \label{eqn::F_inclusion_1}
\{x_3^a\xlongleftrightarrow[(B_m^a)^c]{B}x_4^a\}\subseteq \mathcal{F}^{(o,a)}_m\subseteq\hat{\mathcal{F}}^{(o,a)}_m,
\end{equation}
\begin{equation}
 \{x_1^a\xlongleftrightarrow[B_m^a]{B;B_{m+1}^a}x_2^a\}\subseteq \mathcal{F}^{(i,a)}_{m}\subseteq \hat{\mathcal{F}}_m^{(i,a)}.
\end{equation}
For $m\in \{2,3,\ldots, M\}$, using the strategy in~\cite[Proof of Theorem~1.5]{Cam23}, one  can show that the following limits exist and belong to $(0,1)$:
\begin{align*}
	P\left(\mathcal{F}^o_m| \hat{\mathcal{F}}_m^{o}\right) := & \; \lim_{a\to 0} \mathbb{P}^a\left[\mathcal{F}^{(o,a)}_m| \hat{\mathcal{F}}_m^{(o,a)}\right],\\
P\left(x_3\xlongleftrightarrow[B_m^c]{B}x_4| \mathcal{F}^{o}_m\right) := & \; \lim_{a\to 0}  \mathbb{P}^a\left[x_3^a\xlongleftrightarrow[(B_m^a)^c]{B}x_4^a| \mathcal{F}^{(o,a)}_m\right],\\
P\left(x_3\xlongleftrightarrow[B_m^c]{B}x_4| x_1\xlongleftrightarrow[B_{m-1}]{B;B_m}x_2,\mathcal{F}_m^o\right) := & \; \lim_{a\to 0} \mathbb{P}^a\left[x_3^a\xlongleftrightarrow[(B_m^a)^c]{B}x_4^a| x_1^a\xlongleftrightarrow[(B_{m-1}^a)^c]{B;B_m^a}x_2^a,\mathcal{F}_m^{(o,a)}\right],\\
P\left(x_3\xlongleftrightarrow[B_m^c]{B}x_4| \mathcal{F}_{m-1}^i,\mathcal{F}_m^o\right) := & \; \lim_{a\to 0} \mathbb{P}^a\left[x_3^a\xlongleftrightarrow[(B_m^a)^c]{B}x_4^a| \mathcal{F}_{m-1}^{(i,a)},\mathcal{F}_m^{(o,a)}\right],\\
P\left(x_1\xlongleftrightarrow[B_{m-1}]{B;B_m}x_2|\hat{\mathcal{F}}^i_{m-1}\right) := & \; \lim_{a\to 0} \mathbb{P}^a\left[x_1^a\xlongleftrightarrow[B_{m-1}^a]{B;B_m^a}x_2^a|\hat{\mathcal{F}}^{(i,a)}_{m-1}\right].
\end{align*}

\begin{lemma} \label{lem::expansion_aux13}
There are universal constants $\tilde{C},\tilde{c}\in (0,\infty)$ such that the following holds. There exist universal constants $V_2,V_3,V_4\in (0,\infty)$ and a function $V_1(x,x_3,x_4,\epsilon)$ taking values in $(0,\infty)$ such that for any $\delta\in(0,1/2)$, any $y\in [\delta,1-\delta]$ and $m_y\in \{\lfloor y\cdot M\rfloor, \lceil y\cdot M\rceil\}$, we have 
\begin{align}
	\left| P\left(\mathcal{F}_{m_y}^o|\hat{\mathcal{F}}_{m_y}^o\right)-V_1(x,x_3,x_4,\epsilon)\right| \leq & \; \tilde{C} \left(\frac{|x_2-x_1|}{\epsilon}\right)^{\tilde{c}\delta}, \label{eqn::expansion_aux131}\\
	\left| P\left(x_3\xlongleftrightarrow[B_{m_y}^c]{B}x_4| \mathcal{F}^{o}_{m_y}\right)-V_2\right| \leq & \; \tilde{C} \left(\frac{|x_2-x_1|}{\epsilon}\right)^{\tilde{c}\delta},\label{eqn::expansion_aux132}\\
	\left| P\left(x_3\xlongleftrightarrow[B_{m_y}^c]{B}x_4| x_1\xlongleftrightarrow[B_{m_y-1}]{B;B_{m_y}}x_2,\mathcal{F}_{m_y}^o\right)-V_3\right| \leq & \; \tilde{C} \left(\frac{|x_2-x_1|}{\epsilon}\right)^{\tilde{c}\delta},\\
	\left|P\left(x_1\xlongleftrightarrow[B_{m_y-1}]{B;B_{m_y}}x_2|\hat{\mathcal{F}}^i_{m_y-1}\right)-V_4\right| \leq & \; \tilde{C} \left(\frac{|x_2-x_1|}{\epsilon}\right)^{\tilde{c}\delta}.\label{eqn::expansion_aux133}
\end{align}
\end{lemma}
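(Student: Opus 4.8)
\emph{Strategy.} The plan is to treat all four quantities uniformly: each is a conditional probability of a refinement event ``localized'' at the intermediate scale $r:=2^{m_y}|x_2-x_1|$, given a polychromatic arm event crossing a wide annulus whose two boundary scales are comparable to $|x_2-x_1|$ and to $\epsilon$, respectively. Since $m_y\in\{\lfloor yM\rfloor,\lceil yM\rceil\}$ with $y\in[\delta,1-\delta]$ and $2^M|x_2-x_1|$ is comparable to $\epsilon$, the scale $r$ is separated from both $|x_2-x_1|$ and $\epsilon$ by a multiplicative factor at least of order $(\epsilon/|x_2-x_1|)^{\delta}$; conversely $r/\epsilon\le 2(|x_2-x_1|/\epsilon)^{\delta}$ and $|x_2-x_1|/r\le(|x_2-x_1|/\epsilon)^{\delta}$. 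I would then show that, because of this separation, each of the four conditional probabilities differs from a limiting value---$V_1(x,x_3,x_4,\epsilon)$ in the first case, universal constants $V_2,V_3,V_4$ in the others---by at most a power of $r/\epsilon$ or of $|x_2-x_1|/r$, which by the above is bounded by $\tilde C(|x_2-x_1|/\epsilon)^{\tilde c\delta}$.

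\emph{Main tool.} The engine is quasi-multiplicativity and \emph{separation of arms} for critical site percolation in its quantitative, polynomial-rate form, together with the gluing constructions already used in this paper (cf.\ Lemma~\ref{lem::four_arm_coupling_inner}, Lemma~\ref{lem::backbone}, and~\cite{GarbanPeteSchrammPivotalClusterInterfacePercolation,GPSNearCriticalPercolation,Cam23}). The key fact I would invoke is that, conditionally on a polychromatic $k$-arm event crossing a long annulus $A_{r,R}(0)$, the $k$ interface segments hit the inner circle $\partial B_r(0)$ in a well-separated configuration of marked points and ``faces'' (in the sense of~\cite{GarbanPeteSchrammPivotalClusterInterfacePercolation}) whose law, rescaled by $r$, converges as $R/r\to\infty$ to a universal measure, with error bounded by $(r/R)^{c}$ for a universal $c>0$; the analogous statement holds at the outer circle, and both are insensitive to the boundary data outside the annulus. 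Combining this with a gluing/coupling argument of the type appearing in Lemmas~\ref{lem::four_arm_coupling_inner} and~\ref{lem::backbone}, one couples, for two bulk values $m,m'$, the two conditioned measures so that the rescaled arm configurations near the intermediate circle agree off an event whose probability is at most $\tilde C$ times a power of the relevant ratio of scales; hence any refinement event measurable with respect to a bounded-aspect-ratio neighbourhood of that circle together with one fixed side of it has the same conditional probability under the two measures up to that error. Passing to the limit $a\to 0$ turns these discrete statements into the corresponding statements for the continuum ($\mathrm{CLE}_6$) quantities in the lemma.

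\emph{Carrying it out.} For each of the four quantities one first checks that, given the conditioning event, the refinement event is measurable with respect to a configuration of the required type: for $P(\mathcal{F}_{m_y}^{o}\mid\hat{\mathcal{F}}_{m_y}^{o})$ and for $P(x_3\xlongleftrightarrow[B_{m_y}^c]{B}x_4\mid\mathcal{F}_{m_y}^{o})$ everything happens outside $B_r(\frac{x_1+x_2}{2})$, between $\partial B_r$ and the data $\{x_3,x_4,\partial B_\epsilon(x_3),\partial B_\epsilon(x_4)\}$; for $P(x_3\xlongleftrightarrow[B_{m_y}^c]{B}x_4\mid x_1\xlongleftrightarrow[B_{m_y-1}]{B;B_{m_y}}x_2,\mathcal{F}_{m_y}^{o})$ one additionally conditions on a connection event inside $B_r$ near the midpoint that lives at scale at most $2^{-m_y}r=|x_2-x_1|$; and for $P(x_1\xlongleftrightarrow[B_{m_y-1}]{B;B_{m_y}}x_2\mid\hat{\mathcal{F}}^i_{m_y-1})$ everything is governed by the four-arm structure in $A_{2|x_2-x_1|,\,r/2}(\frac{x_1+x_2}{2})$ together with the two small one-arms at $x_1,x_2$. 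In each case the refinement probability, conditionally on the rescaled well-separated arm configuration on the relevant circle, is a universal function of that configuration---essentially the probability that the arms ``close up'' in the prescribed way inside or outside a bounded-aspect-ratio annulus---and only in the first case does the outer data $(x,x_3,x_4,\epsilon)$ survive the limit, because $\hat{\mathcal{F}}_{m_y}^{o}$ retains the one-arms to $\partial B_\epsilon(x_3)$ and $\partial B_\epsilon(x_4)$; this gives $V_1=V_1(x,x_3,x_4,\epsilon)$, while in the other three cases the limiting arm configuration and the ``closing up'' probability are universal, yielding $V_2,V_3,V_4$. Adding the at most three coupling errors---one each for the inner, intermediate and outer couplings---and bounding each ratio by a constant times $(|x_2-x_1|/\epsilon)^{\delta}$ gives~\eqref{eqn::expansion_aux131}--\eqref{eqn::expansion_aux133}.

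\emph{Main obstacle.} The delicate point will be the quantitative separation-of-arms input for the particular \emph{colored} events here---especially the ``no black path in $B_{m_y}^c$ connecting $x_3$ to $x_4$'' condition in $\mathcal{F}_{m_y}^{o}$, which amounts to a polychromatic three-arm constraint with a white separating path, and the mixed conditioning (a single black arm to one side together with a decoupling requirement) entering $V_2$ and $V_3$. One has to verify that the arm-separation and gluing machinery of~\cite{GarbanPeteSchrammPivotalClusterInterfacePercolation}, as used in~\cite{Cam23}, applies to these colored configurations and that the resulting polynomial rate is uniform in the boundary data, so that after passing to the $a\to 0$ limit and iterating over scales the errors telescope into a single bound of the stated form, with $\tilde C,\tilde c$ independent of $\delta$, $y$ and the points $x,x_3,x_4$.
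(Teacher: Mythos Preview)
Your proposal is correct and follows essentially the same approach as the paper's proof: both rely on the faces/coupling machinery of~\cite{GarbanPeteSchrammPivotalClusterInterfacePercolation} (as in the proof of Lemma~\ref{lem::four_aux2}) together with separation of arms, exploiting that for $y\in[\delta,1-\delta]$ the intermediate scale $2^{m_y}|x_2-x_1|$ is polynomially separated from both $|x_2-x_1|$ and $\epsilon$, so the conditional probabilities form Cauchy sequences with error $\tilde C(|x_2-x_1|/\epsilon)^{\tilde c\delta}$. The paper merely sketches this by reference to Lemma~\ref{lem::four_aux2}, noting additionally that the same faces argument shows $V_2,V_3,V_4$ are independent of $x,x_3,x_4,\epsilon$ while $V_1$ retains this dependence---exactly the distinction you identify.
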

\begin{lemma}\label{lem::expansion_aux14}
	The constants $V_2,V_3$ in Lemma~\ref{lem::expansion_aux13} satisfy 
	\begin{equation}\label{eqn::lem::expansion_aux141}
		V_3>V_2=\frac{1}{2}.
	\end{equation}
\end{lemma}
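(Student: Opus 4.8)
The plan is to first read off what the constants are: by Lemma~\ref{lem::expansion_aux13} the limits
\[
V_2=\lim_{x_1,x_2\to x}P\!\left(x_3\xlongleftrightarrow[B_{m_y}^c]{B}x_4\ \Big|\ \mathcal F^{o}_{m_y}\right),\qquad
V_3=\lim_{x_1,x_2\to x}P\!\left(x_3\xlongleftrightarrow[B_{m_y}^c]{B}x_4\ \Big|\ x_1\xlongleftrightarrow[B_{m_y-1}]{B;B_{m_y}}x_2,\ \mathcal F^{o}_{m_y}\right)
\]
exist and are independent of $y\in[\delta,1-\delta]$ and of the choice $m_y\in\{\lfloor yM\rfloor,\lceil yM\rceil\}$, so it is enough to work with one representative scale $m=m_y$. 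The key starting observation is the exact identity
\[
\mathcal F^{(o,a)}_m=\{x_3^a\xlongleftrightarrow[(B_m^a)^c]{B}x_4^a\}\ \sqcup\ \big(\mathcal F^{(o,a)}_m\cap\{x_3^a\centernot{\xlongleftrightarrow{B}}x_4^a\}\big),
\]
which holds because on $\mathcal F^{(o,a)}_m$ the two black arms reaching $\partial B_m^a$ belong to the same cluster if and only if they are joined \emph{inside} $B_m^a$. Thus $V_2=\tfrac12$ is the statement that, in the limit $a\to 0$ followed by $x_1,x_2\to x$, the two sub-events carry equal conditional probability given $\mathcal F^{(o,a)}_m$.

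To prove $V_2=\tfrac12$ I would condition on the configuration outside $B_m^a$, which makes the configuration inside $B_m^a$ an independent critical percolation. On $\mathcal F^{(o,a)}_m$ the exterior forces an alternating four-arm pattern (BWBW) on $\partial B_m^a$: the two black arms to $x_3^a,x_4^a$, together with the two white arms that form a white crosscut of $(B_m^a)^c$ separating the two disjoint black clusters (such a crosscut exists, by planar duality in the disk $(B_m^a)^c$, precisely because $x_3^a$ and $x_4^a$ are not black-connected outside $B_m^a$, and it cannot be a circuit around either point because of the black arms). Given this pattern, planar crossing duality inside $B_m^a$ shows that ``the two black arms are joined inside $B_m^a$'' is exactly the complement of ``the two white arms are joined inside $B_m^a$''. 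Now I would rescale $B_{m_y}$ to the unit disk and let $x_1,x_2\to x$ (so that $x_3,x_4$ are sent to $\infty$): using quasi-multiplicativity of arm events both to discard configurations carrying more than four arms near $\partial B_{m_y}$ and to decouple the \emph{local} landing data on $\partial B_{m_y}$ from the \emph{global} feature that the two black arms are disconnected in the exterior, the conditional law of the interior configuration together with the four landing arcs converges to critical percolation in the unit disk whose boundary datum has a law invariant under the colour-swap symmetry $B\leftrightarrow W$ of critical site percolation on $\mathcal T$. This symmetry interchanges the two complementary events above, so each has limiting probability $\tfrac12$, i.e.\ $V_2=\tfrac12$.

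For $V_3>V_2$ I would add a further layer of conditioning: fix $\mathcal F^{(o,a)}_m$, the configuration outside $B_m^a$, and the entire configuration inside $B_{m-1}^a$, so that the configuration in the annulus $B_m^a\setminus B_{m-1}^a$ is an independent critical percolation. On $\{x_1^a\xlongleftrightarrow[B_{m-1}^a]{B;B_m^a}x_2^a\}$ there is no black path from $x_1^a$ to $x_2^a$ confined to $B_{m-1}^a$, and, conditionally on the fixed data, this event reduces to the occurrence in the annulus of a black crossing joining the two black clusters that $x_1^a$ and $x_2^a$ send out to $\partial B_{m-1}^a$ — an event increasing in the black hexagons of the annulus, as is ``the two incoming black arms are joined inside $B_m^a$'' whenever the latter is not already decided by the configuration inside $B_{m-1}^a$. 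Running the FKG inequality in the annulus and then averaging over the fixed data (using that the conditioning $\mathcal F^{(o,a)}_m$ concerns only the exterior of $B_m^a$, which is independent of everything inside $B_m^a$) gives $V_3\ge V_2$ in the limit. For strictness I would exhibit, with probability bounded below uniformly in $a$ and in $x_1,x_2\to x$, a configuration of the $B_{m-1}^a$-interior on which the two incoming black arms are still ``undecided'' at $\partial B_{m-1}^a$ and on which an RSW gluing estimate lets the extra black annulus-crossing coming from $\{x_1^a\xlongleftrightarrow[B_{m-1}^a]{B;B_m^a}x_2^a\}$ be connected to \emph{both} incoming black arms inside $B_m^a\setminus B_{m-1}^a$; this produces a strictly positive, limit-uniform gap in the FKG step, so $V_3\ge V_2+c'$ for some universal $c'>0$, and combining with Step~1 yields $V_3>V_2=\tfrac12$.

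The main obstacle is Step~1, and specifically the justification that the limiting boundary datum inside $B_{m_y}$ has a colour-swap-invariant law: one must show that the (symmetric) local landing configuration on $\partial B_{m_y}$ asymptotically decouples from the (manifestly non-symmetric) global constraint that the two black arms are disconnected in the exterior, and control the resulting error terms via quasi-multiplicativity of arm exponents. The identity and the duality inputs in Step~1 are elementary, and Step~2 is essentially routine once one has the quantitative RSW estimate that witnesses the strict positivity of the correlation.
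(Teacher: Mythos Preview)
Your argument for $V_2=\tfrac12$ is essentially the paper's: you use planar duality to identify the two complementary sub-events of $\mathcal F^{(o,a)}_m$ (the paper calls them $E^a_{1,b}$ and $E^a_{1,w}$), and then appeal to a decoupling argument to argue that in the limit the colour-swap symmetry of critical percolation forces them to carry equal weight. The paper implements this decoupling precisely via the coupling technology of Lemma~\ref{lem::four_aux2}, which is exactly the mechanism you point to (quasi-multiplicativity / decoupling of the local landing data from the global asymmetric constraint).

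The gap is in your route to $V_3>V_2$. You fix the configuration on $(B_m^a)^c\cup B_{m-1}^a$, apply FKG in the annulus to the two increasing events, and then ``average over the fixed data''. The parenthetical justification you give only covers the averaging over $\Omega_{\mathrm{out}}$ (indeed $\mathcal F^{(o,a)}_m$ is an exterior event and this averaging is harmless). The problem is the averaging over the $B_{m-1}^a$-interior configuration $\Omega_{\mathrm{in}}$: the conditioning event $\{x_1^a\xlongleftrightarrow[B_{m-1}^a]{B;B_m^a}x_2^a\}$ is \emph{not} monotone in $\Omega_{\mathrm{in}}$ --- it contains the decreasing clause ``no black path in $B_{m-1}^a$'' as well as the increasing clause ``black path in $B_m^a$''. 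Hence the posterior law of $\Omega_{\mathrm{in}}$ given this conditioning is not comparable to the unconditioned law, and in particular the annulus-FKG inequality $P_{\mathrm{ann}}(A\mid C,\Omega_{\mathrm{out}},\Omega_{\mathrm{in}})\ge P_{\mathrm{ann}}(A\mid \Omega_{\mathrm{out}},\Omega_{\mathrm{in}})$ does not integrate up to $V_3\ge V_2$: the left-hand side is averaged against a measure on $\Omega_{\mathrm{in}}$ that is reweighted by $P_{\mathrm{ann}}(C\mid \Omega_{\mathrm{in}})$, which is itself non-monotone in $\Omega_{\mathrm{in}}$. Your strictness step inherits the same issue.

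The paper avoids this by inserting an intermediate quantity. It defines events $E_b^a,E_w^a$ (roughly, the outer black/white arms are eventually merged, possibly via the inner four-arm structure near $x_1,x_2$) and proves $\lim P(E_b\mid \mathcal F^{i}_{m-1},\mathcal F^{o}_m)>\tfrac12$ \emph{not} by FKG but by the same colour-swap symmetry used for $V_2$: the limits of $P(E_b\mid\cdot)$ and $P(E_w\mid\cdot)$ agree, while $E_b^a\cup E_w^a$ is the full event and $E_b^a\cap E_w^a$ contains an explicit ``good linking'' event $E^a$ of uniformly positive probability (RSW/separation of arms). Only then is FKG invoked, and only in the annulus, to pass from the conditioning $\mathcal F^{i}_{m-1}\cap\mathcal F^{o}_m$ (which factorises over the three regions, so the inner marginal is genuinely fixed) to the stronger conditioning $\{x_1\xlongleftrightarrow[B_{m-1}]{B;B_m}x_2\}\cap\mathcal F^{o}_m$, yielding $V_3\ge \lim P(E_b\mid \mathcal F^{i}_{m-1},\mathcal F^{o}_m)>\tfrac12$. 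The missing idea in your proposal is this second use of symmetry; a direct FKG comparison between $V_3$ and $V_2$ does not go through.
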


We postpone the proof of Lemmas~\ref{lem::expansion_aux13} and~\ref{lem::expansion_aux14} to the end of this section.
First, we prove Lemma~\ref{lem::expansion_aux1} with the help of Lemmas~\ref{lem::expansion_aux11}-\ref{lem::expansion_aux14}.
\begin{proof}[Proof of Lemma~\ref{lem::expansion_aux1}]

For each $m\in \{2,3,\ldots,M\}$, we write 
	\begin{align*}
		T_m = & \; P\left(x_1\xlongleftrightarrow[B_{m-1}]{B;B_m}x_2,x_3\xlongleftrightarrow[B_m^c]{B}x_4\right)-P\left(x_1\xlongleftrightarrow[B_{m-1}]{B;B_m}x_2\right) P\left(x_3\xlongleftrightarrow[B_m^c]{B}x_4\right)\\
		= & \; \underbrace{P\left(\mathcal{F}_m^o|\hat{\mathcal{F}}_m^o\right)}_{T_{m,1}}\times\underbrace{ \lim_{a\to 0} \pi_a^{-2}\times \mathbb{P}^a\left[\hat{\mathcal{F}}^{(o,a)}_m\right]}_{T_{m,2}}\times \underbrace{P\left(x_1\xlongleftrightarrow[B_{m-1}]{B;B_{m}}x_2|\hat{\mathcal{F}}^i_{m-1}\right)}_{T_{m,3}}\times \underbrace{\lim_{a\to 0} \pi_a^{-2} \times \mathbb{P}^a\left[\hat{\mathcal{F}}^{(i,a)}_m\right]}_{T_{m,4}}\\
		& \times \left(\underbrace{P\left(x_3\xlongleftrightarrow[B_{m}^c]{B}x_4| x_1\xlongleftrightarrow[B_{m-1}]{B;B_{m}}x_2,\mathcal{F}_{m}^o\right)}_{T_{m,5}}-\underbrace{P\left(x_3\xlongleftrightarrow[B_m^c]{B}x_4| \mathcal{F}^{o}_{m}\right)}_{T_{m,6}}\right).
	\end{align*}
Thanks to~\eqref{eqn::asy_modif_four_arm}, we have 
	\begin{equation*}
		\lim_{x_1,x_2\to x} \frac{T_{m,2}\times T_{m,4}}{|x_2-x_1|^{\frac{25}{24}}} =  C_8^2 \, 2^{\frac{65}{24}} \, \epsilon^{-\frac{35}{24}}.
	\end{equation*}
		Let $\delta\in (0,1/2)$.  On the one hand, since $T_{m,j}\in (0,1)$ for $j\in \{1,3,5,6\}$, we have 
		\begin{equation}\label{eqn::expansion_aux1_g2}
			\limsup_{x_1,x_2\to x} \left|\frac{\sum_{m\in \{2,\ldots,M\}\setminus [\delta M, (1-\delta)M]}T_m}{|x_2-x_1|^{\frac{25}{24}}\times |\log|x_2-x_1||}\right| \leq 2 \,\vert \log_2 \epsilon \vert \, \delta \, C_8^2 \, 2^{\frac{65}{24}} \, \epsilon^{-\frac{35}{24}}/\log 2 \, .
		\end{equation}  
	On the other hand, thanks to Lemma~\ref{lem::expansion_aux13}, we have 
	\begin{equation} \label{eqn::expansion_aux1_g3}
		\lim_{x_1,x_2\to x}\frac{\sum_{m\in \{2,3,\ldots,M\}\cap [\delta M,(1-\delta)M]}T_m}{|x_2-x_1|^{\frac{25}{24}}\times |\log|x_2-x_1||}= \frac{(1-2\delta)}{\log 2}  \vert \log_2 \epsilon \vert \, C_8^2 \, 2^{\frac{65}{24}} \, \epsilon^{-\frac{35}{24}} V_1(x,x_3,x_4,\epsilon)V_4\left(V_3-V_2\right).
	\end{equation}
	Combining Lemmas~\ref{lem::expansion_aux11},~\ref{lem::expansion_aux12},~\eqref{eqn::expansion_aux1_g2} with~\eqref{eqn::expansion_aux1_g3} and letting $\delta\to 0$ gives 
	\begin{equation} \label{eqn::expansion_aux1_g4}
\begin{split}
			G(x,x_3,x_4) := & \; \lim_{x_1,x_2\to x} \frac{P\left(x_1\xlongleftrightarrow{B}x_2,x_3\xlongleftrightarrow{B}x_4\right)-P\left(x_1\xlongleftrightarrow{B}x_2\right)P\left(x_3\xlongleftrightarrow{B}x_4\right)}{|x_2-x_1|^{\frac{25}{24}}\times \left|\log\left|x_2-x_1\right|\right|} \\
	        = & \; \vert \log_2 \epsilon \vert \, C_8^2 \, 2^{\frac{65}{24}} \, \epsilon^{-\frac{35}{24}} V_1(x,x_3,x_4,\epsilon)V_4\left(V_3-V_2\right)/\log2.
\end{split}
	\end{equation}
	Thanks to Lemma~\ref{lem::expansion_aux14}, we have $V_3-V_2>0$, which implies 
	\begin{equation}\label{eqn::expansion_aux1_g6}
		G(x,x_3,x_4)>0.
	\end{equation}
	
	 According to~\cite[Theorem~1.5]{Cam23}, for any non-constant M\"obius transformation $\varphi: \mathbb{C}\cup \{\infty\} \to \mathbb{C}\cup \{\infty\}$ with $\varphi(x), \varphi(x_3),\varphi(x_4)\neq \infty$, we have (when $x_1, x_2$ are close enough to $x$)
	 \begin{equation} \label{eqn::expansion_aux1_g5}
\begin{split}
	&P\left(\varphi(x_1)\xlongleftrightarrow{B}\varphi(x_2),\varphi
	(x_3)\xlongleftrightarrow{B}\varphi(x_4)\right)-P\left(\varphi(x_1)\xlongleftrightarrow{B}\varphi(x_2)\right)P\left(\varphi(x_3)\xlongleftrightarrow{B}\varphi(x_4)\right)\\
	& \quad = \left(P\left(x_1\xlongleftrightarrow{B}x_2,x_3\xlongleftrightarrow{B}x_4\right)-P\left(x_1\xlongleftrightarrow{B}x_2\right)P\left(x_3\xlongleftrightarrow{B}x_4\right)\right)\times \prod_{j=1}^4 |\varphi'(x_j)|^{-\frac{5}{48}}. 
\end{split}
	 \end{equation}
	 Combining~\eqref{eqn::expansion_aux1_g4} with~\eqref{eqn::expansion_aux1_g6} and~\eqref{eqn::expansion_aux1_g5}, we have 
	 \begin{align*}
	 & G(\varphi(x),\varphi(x_3),\varphi(x_4))\\
	 & =\lim_{x_1,x_2\to x} \frac{P\left(\varphi(x_1)\xlongleftrightarrow{B}\varphi(x_2),\varphi
	 		(x_3)\xlongleftrightarrow{B}\varphi(x_4)\right)-P\left(\varphi(x_1)\xlongleftrightarrow{B}\varphi(x_2)\right)P\left(\varphi(x_3)\xlongleftrightarrow{B}\varphi(x_4)\right)}{|\varphi(x_2)-\varphi(x_1)|^{\frac{25}{24}}\times |\log|\varphi(x_2)-\varphi(x_1)||}\\
	 		& = G(x,x_3,x_4)\times |\varphi'(x)|^{-\frac{5}{4}}\times \prod_{j=3}^4 |\varphi'(x_j)|^{-\frac{5}{48}}>0,
	 \end{align*}
	 which gives~\eqref{eqn::expansion_aux1_g1} with some $C_{10}\in (0,\infty)$, as we set out to prove.
\end{proof}
\begin{proof}[Proof of Lemma~\ref{lem::expansion_aux13}]
We only sketch the proof, as the key argument here is very similar to that in the proof of Lemma~\ref{lem::four_aux2}.  We start with~\eqref{eqn::expansion_aux131}. To keep track of the influence of $x_1$ and $x_2$ on $P\left(\mathcal{F}_m^o|\hat{\mathcal{F}}_m^o\right)$, we write 
\begin{equation*}
	P^{(x_1,x_2)}\left(\mathcal{F}_m^o|\hat{\mathcal{F}}_m^o\right)=P\left(\mathcal{F}_m^o|\hat{\mathcal{F}}_m^o\right).
\end{equation*}	
	
Let $m_y\in \{\lfloor y\cdot M\rfloor, \lceil y\cdot M\rceil\}$. Choose a sequence $\{(x_1^{(k)},x_2^{(k)})\}_{k=1}^{\infty}$ such that $x_1^{(k)}\neq x_2^{(k)}$ and that $\lim_{k\to \infty}x_1^{(k)}=\lim_{k\to \infty} x_2^{(k)}=x$. Then, one can proceed as in the proof of Lemma~\ref{lem::four_aux2} to show that there exist universal constants $\tilde{C},\tilde{c}\in (0,\infty)$ such that, for any $K\geq 1$ and $k_1,k_2\geq K$, we have 
\begin{equation*}
	\left| P^{(x_1^{(k_1)},x_2^{(k_1)})}\left(\mathcal{F}_{m_y}^{o}| \hat{\mathcal{F}}_{m_y}^o\right)-P^{(x_1^{(k_2)},x_2^{(k_2)})}\left(\mathcal{F}_{m_y}^{o}| \hat{\mathcal{F}}_{m_y}^o\right)\right| \leq \tilde{C}\left(\frac{\sup_{k\geq K}|x_1^{(k)}-x_2^{(k)}|}{\epsilon}\right)^{\tilde{c}\delta},
\end{equation*}
which gives~\eqref{eqn::expansion_aux131} with some function $V_1(x,x_3,x_4,\epsilon)$ taking values on $[0,1]$. A standard application of RSW estimates (see, e.g., the proofs of Lemmas~2.1 and~2.2 of~\cite{CamiaNewman2009ising}) shows that $V_1(x,x_3,x_4,\epsilon)>0$, as desired. 

The proof of~\eqref{eqn::expansion_aux132}-\eqref{eqn::expansion_aux133} is similar.
One additional comment is that the arguments in the proof of Lemma~\ref{lem::four_aux2} imply that the limits $V_j$ for $2\leq j\leq 4$ in~\eqref{eqn::expansion_aux132}-\eqref{eqn::expansion_aux133} do not depend on $x,x_3,x_4$ or on $\epsilon$. 
\end{proof}

Finally, we prove~Lemma~\ref{lem::expansion_aux14}.
Let $m\in \{2,\ldots, M\}$ and define ${E}_{b}^{a}$ to be the following event: the event $\mathcal{F}_{m-1}^{(i,a)} \cap \mathcal{F}_{m}^{(o,a)}$ occurs and either (1) the two black paths in the definition of $\mathcal{F}^{(o,a)}_m$ are in the same black cluster or (2) {each black path in the definition of $\mathcal{F}^{(o,a)}_m$ is connected by a black path to one of the black paths in the definition of $\mathcal{F}_{m-1}^{(i,a)}$.}
{Furthermore, let ${E}_{w}^{a}$ denote the event that $\mathcal{F}_{m-1}^{(i,a)} \cap \mathcal{F}_{m}^{(o,a)}$ occurs and the white cluster in the definition of $\mathcal{F}_{m}^{(o,a)}$ continues inside $B^a_m$ in such a way that the two black paths in the definition of $\mathcal{F}_{m}^{(o,a)}$ belong to different clusters (i.e., they are not connected inside $B^a_m$).}
Using the strategy in~\cite[Proof of Theorem~1.5]{Cam23}, one  can show that the following limits exist and belong to $(0,1)$:
\begin{align*}
	P\left(E_b | \mathcal{F}_{m-1}^{i}, \mathcal{F}_m^o\right) := & \; \lim_{a\to 0} \mathbb{P}^a\left[E_b^a| \mathcal{F}_{m-1}^{(i,a)}, \mathcal{F}_m^{(o,a)}\right],\\
		P\left(E_w| \mathcal{F}_{m-1}^{i}, \mathcal{F}_m^o\right) := & \; \lim_{a\to 0} \mathbb{P}^a\left[E_w^a| \mathcal{F}_{m-1}^{(i,a)}, \mathcal{F}_m^{(o,a)}\right].
\end{align*}

\begin{figure} 
\begin{subfigure}{0.45\textwidth}
		\begin{center}
\includegraphics[width=1\textwidth]{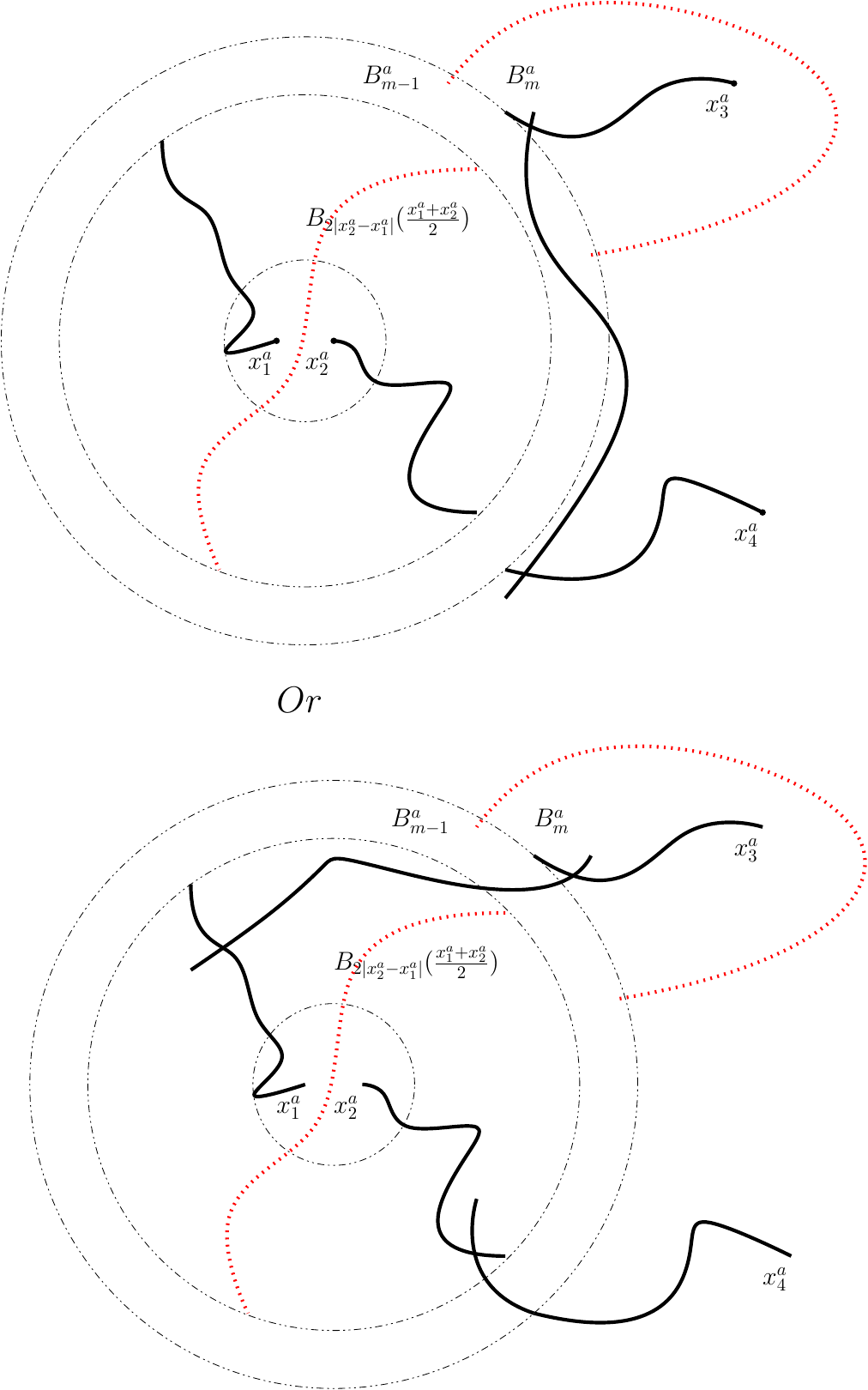}
		\end{center}
		\caption{The event $E_b^a$; }
	\end{subfigure}
 \begin{subfigure}{0.45\textwidth}
		\begin{center}
\includegraphics[width=1\textwidth]{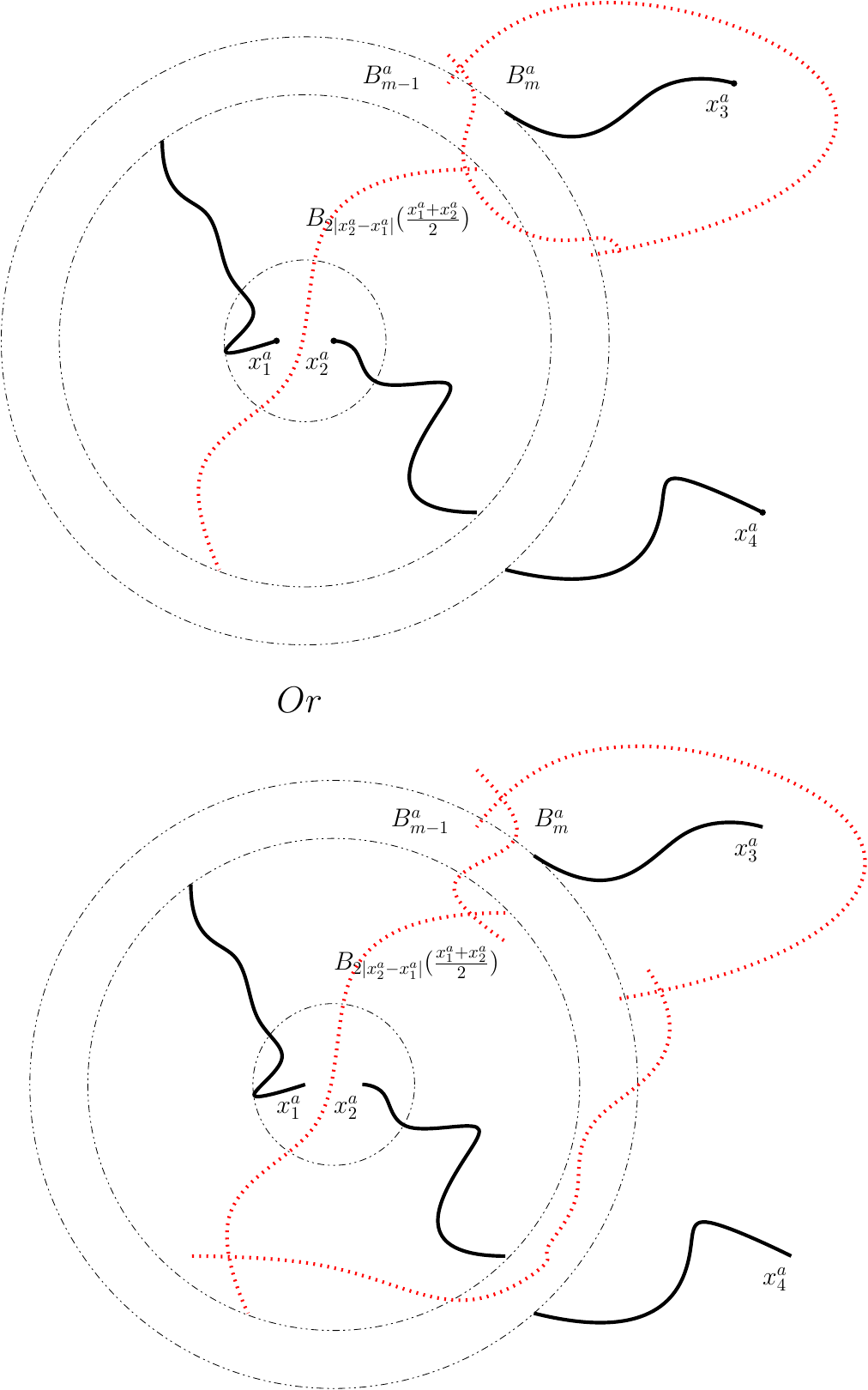}
		\end{center}
		\caption{The event $E_w^a$; }
	\end{subfigure}
	\begin{subfigure}{0.45\textwidth}
		\begin{center}
			\includegraphics[width=0.85\textwidth]{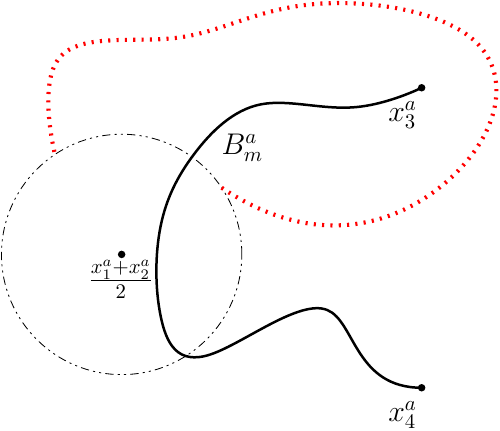}
		\end{center}
		\caption{The event $E_{1,b}^a$;}
	\end{subfigure}
	\begin{subfigure}{0.45\textwidth}
		\begin{center}
\includegraphics[width=0.85\textwidth]{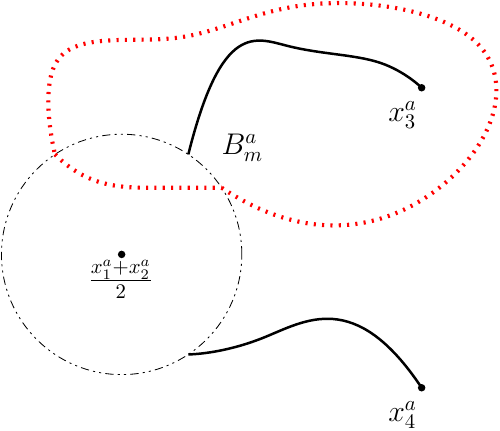}
		\end{center}
		\caption{The event $E_{1,w}^a$. }
	\end{subfigure}
	\caption{The events $E_b^a$, $E_w^a$, $E_{1,b}^a$ and $E_{1,w}^a$. The black, solid lines represent black paths, and the red, dotted lines represent white paths.}
	\label{fig::fourarmvariants2}
\end{figure}

\begin{figure}
		\begin{center}
\includegraphics[width=0.7\textwidth]{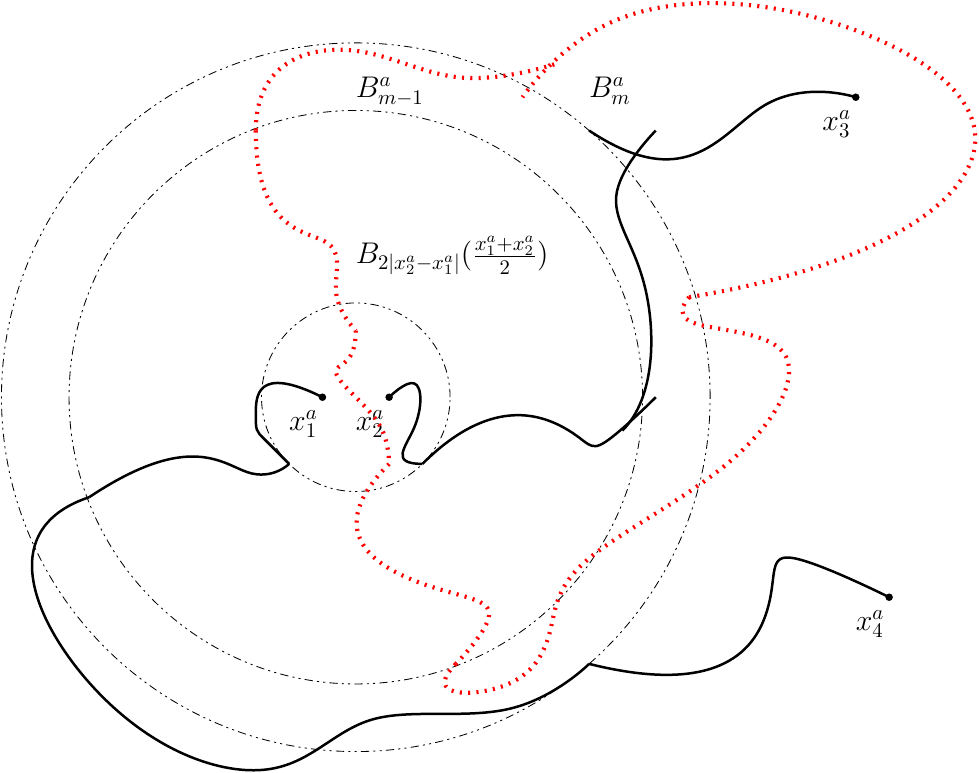}
		\end{center}
		\caption{The event $E^a$. The black, solid lines represent black paths, and the red, dotted lines represent white paths.}
  \label{fig::Ea}
\end{figure}

\begin{proof}[Proof of Lemma~\ref{lem::expansion_aux14}]
	Fix $\delta\in (0,1/2)$, $y\in [\delta,1-\delta]$. Let $m\in \{\lfloor y\cdot M\rfloor, \lceil y\cdot M\rceil\}$. 
	First, we show that 
	\begin{equation} \label{eqn::expansion_aux141}
		V_2=\frac{1}{2}.
	\end{equation}
	Define $E_{1,b}^{a}$ (resp., $E_{1,w}^{a}$) to be the event that the two black (resp., white) paths in the definition of $\mathcal{F}_m^{(o,a)}$ are in the same black (resp., white) cluster. Note that
	\begin{align} \label{eqn::expansion_aux1411}
\begin{split}
			\mathbb{P}^a\left[x_3^a\xlongleftrightarrow[(B_m^a)^c]{B}x_4^a|\mathcal{F}^{(o,a)}_{m}\right] & = \, \mathbb{P}^a\left[E_{1,b}^a|\mathcal{F}^{(o,a)}_{m}\right],\\
				\mathbb{P}^a\left[E_{1,b}^a|\mathcal{F}^{(o,a)}_{m}\right]+\mathbb{P}^a\left[E_{1,w}^a|\mathcal{F}^{(o,a)}_{m}\right] & = \, 1.
\end{split}
	\end{align}
	Using the strategy in~\cite[Proof of Theorem~1.5]{Cam23}, one  can show that the limits $\lim_{a\to 0}\mathbb{P}^a\left[E_{1,b}^a|\mathcal{F}^{(o,a)}_{m}\right]$ and $\lim_{a\to 0}\mathbb{P}^a\left[E_{1,w}^a|\mathcal{F}^{(o,a)}_{m}\right]$ exist and belong to $(0,1)$. Using the argument in the proof of Lemma~\ref{lem::four_aux2}, one can show that 
	\begin{equation}\label{eqn::expansion_aux1412}
		\lim_{x_1,x_2\to x}\lim_{a\to 0}\mathbb{P}^a\left[E_{1,w}^a|\mathcal{F}^{(o,a)}_{m}\right]=\lim_{x_1,x_2\to x}\lim_{a\to 0}				\mathbb{P}^a\left[E_{1,b}^a|\mathcal{F}^{(o,a)}_{m}\right],
	\end{equation}
	which means that the two limits in~\eqref{eqn::expansion_aux1412} exist and that they are equal. 
	Combining~\eqref{eqn::expansion_aux1411} with~\eqref{eqn::expansion_aux1412} gives~\eqref{eqn::expansion_aux141}, as desired. 
	
	Second, we show that 
	\begin{equation} \label{eqn::expansion_aux142}
\lim_{x_1,x_2\to x}	P\left(E_b| \mathcal{F}_{m-1}^{i}, \mathcal{F}_m^o\right)>\frac{1}{2}.
	\end{equation}
On the one hand, note that 
\begin{equation} \label{eqn::expansion_aux1421}
 \mathbb{P}^a\left[E_w^a\cup E_b^a| \mathcal{F}_{m-1}^{(i,a)}, \mathcal{F}_m^{(o,a)}\right]=1,\quad  \mathbb{P}^a\left[E_w^a\cap E_b^a| \mathcal{F}_{m-1}^{(i,a)}, \mathcal{F}_m^{(o,a)}\right]\geq  \mathbb{P}^a\left[E^a| \mathcal{F}_{m-1}^{(i,a)}, \mathcal{F}_m^{(o,a)}\right],
\end{equation}
where $E^a$ is defined as follows.
First of all, note that $\mathcal{F}_{m}^{(o,a)}$ implies the presence of two black paths connecting $x^a_3$ and $x^a_4$ to $\partial B^a_m$ and a white cluster separating them, and $\mathcal{F}_{m-1}^{(i,a)}$ implies that the are two black paths connecting $x^a_1$ and $x^a_2$ to $\partial B^a_{m-1}$ and two white paths crossing the annulus $B^a_{m-1} \setminus B_{2 \vert x^a_2-x^a_1 \vert}\Big(\frac{x^a_1+x^a_2}{2}\Big)$ (see Fig.~\ref{fig::fourarmvariants1}).
Then, $E^a$ is the event that both black paths from $\mathcal{F}_{m}^{(o,a)}$ are connected to the union of the two black paths from $\mathcal{F}_{m-1}^{(i,a)}$ by black paths and the white cluster from $\mathcal{F}_{m}^{(o,a)}$ is connected to both of the white paths crossing $B^a_{m-1} \setminus B_{2 \vert x^a_2-x^a_1 \vert}\Big(\frac{x^a_1+x^a_2}{2}\Big)$ from $\mathcal{F}_{m-1}^{(i,a)}$ (see Fig.~\ref{fig::Ea}).

A standard application of RSW estimates and separation of arms (see, e.g.,~\cite[Theorem~11]{NolinNearCriticalPercolation}) implies the existence of a constant $C_{11}>0$ that may depend only on $\delta$ and $\epsilon$ such that for all $a>0$ small enough, we have  
\begin{equation} \label{eqn::expansion_aux1422}
	\mathbb{P}^a\left[E^a| \mathcal{F}_{m-1}^{(i,a)}, \mathcal{F}_m^{(o,a)}\right]\geq C_{11}. 
\end{equation}
On the other hand, using the argument in the proof of Lemma~\ref{lem::four_aux2}, one can show that 
\begin{equation} \label{eqn::expansion_aux1423}
	\lim_{x_1,x_2\to x} 	P\left(E_b | \mathcal{F}_{m-1}^{i}, \mathcal{F}_m^o\right)= \lim_{x_1,x_2\to x}	P\left(E_w | \mathcal{F}_{m-1}^{i}, \mathcal{F}_m^o\right),
\end{equation}
which means that the two limits in~\eqref{eqn::expansion_aux1423} exist and that they are equal. Combining~\eqref{eqn::expansion_aux1421}-\eqref{eqn::expansion_aux1423} yields
\begin{align*}
& \lim_{x_1,x_2\to x} P\left(E_b | \mathcal{F}_{m-1}^{i}, \mathcal{F}_m^o\right) = \frac{\lim_{x_1,x_2\to x}	P\left(E_b | \mathcal{F}_{m-1}^{i}, \mathcal{F}_m^o\right)+	\lim_{x_1,x_2\to x} 	P\left(E_w | \mathcal{F}_{m-1}^{i}, \mathcal{F}_m^o\right)}{2}\\
& \qquad \qquad \qquad = \lim_{x_1,x_2\to x} \frac{P\left(E_b\cup E_w| \mathcal{F}_{m-1}^{i}, \mathcal{F}_m^o\right)+P\left(E_b \cap E_w| \mathcal{F}_{m-1}^{i}, \mathcal{F}_m^o\right)}{2}\\
& \qquad \qquad \qquad \geq \frac{1}{2}+\frac{C_{11}}{2}>\frac{1}{2}, 
\end{align*}
as desired. 
	
	Third, we show that 
	\begin{equation} \label{eqn::expansion_aux143}
V_3\geq \lim_{x_1,x_2\to x}	P\left(E_b | \mathcal{F}_{m-1}^{i}, \mathcal{F}_m^o\right). 
	\end{equation}
	Since 
	\begin{equation*}
	\mathbb{P}^a\left[x_3^a\xlongleftrightarrow[(B_m^a)^c]{B}x_4^a \bigg| x_1^a\xlongleftrightarrow[B_{m-1}^a]{B;B_m^a}x_2^a,\mathcal{F}_m^{(o,a)}\right]=\mathbb{P}^a\left[E_b^a \bigg| x_1^a\xlongleftrightarrow[B_{m-1}^a]{B;B_m^a}x_2^a,\mathcal{F}_m^{(o,a)}\right],
	\end{equation*}
	 it suffices to show that 
	 \begin{equation}\label{eqn::expansion_aux1431}
	 \mathbb{P}^a\left[E_b^a \bigg| x_1^a\xlongleftrightarrow[B_{m-1}^a]{B;B_m^a}x_2^a,\mathcal{F}_m^{(o,a)}\right]\geq \mathbb{P}^a\left[E_b^a \bigg| \mathcal{F}_{m-1}^{(i,a)}, \mathcal{F}_m^{(o,a)}\right]. 
	 \end{equation}
	 
Given a percolation configuration $\Lambda^a$. For each vertex $v\in a\mathcal{T} $, we define $\Lambda^a(v)$ to be $1$ (resp., $0$) if the label at $v$ under $\Lambda^a$ is black (reps., white). For two percolation configurations $\Lambda^a$, $\hat{\Lambda}^a$ , we say $\Lambda\leq \hat{\Lambda}^a$  if  $\Lambda^a(v)\leq \hat{\Lambda}^a(v)$ for all $v$.  We say an event $E$ is increasing if $\mathbb{1}_{E}(\Lambda^a)\leq \mathbb{1}_{E}(\hat{\Lambda}^a)$ whenever $\Lambda^a\leq \hat{\Lambda}^a$, where $\mathbb{1}_E$ is the indicator function of the event $E$. 
Note that:
\begin{itemize}
	\item the events $\mathcal{F}_{m}^{(o,a)}$ and $\mathcal{F}_{m-1}^{(i,a)}$ depend only on the labels of vertices in $a\mathcal{T}\cap (B_m^a)^c$ and $a\mathcal{T}\cap B_{m-1}^a$, respectively;
	\item given the labels of vertices in $a\mathcal{T}\cap \left(\left(B_{m}^a\right)^c\cup B_{m-1}^a\right)$ such that $\mathcal{F}_{m}^{(o,a)}\cap \mathcal{F}_{m-1}^{(i,a)}$ happens, then the event $\{x_1^a\xlongleftrightarrow[B_{m-1}^a]{B;B_{m}^a}x_2^a\}$ is an increasing event if we view it as an event for percolation of black vertices on $a\mathcal{T}\cap \left(B_m^a\setminus B_{m-1}^a\right)$.
\end{itemize}

The two observations above and the FKG inequality imply that there is a coupling $(\Lambda^a,\hat{\Lambda}^a)$ such that $\Lambda^a\sim \mathbb{P}^a\left[\cdot | \mathcal{F}_{m-1}^{(i,a)},\mathcal{F}_m^{(o,a)}\right]$, $\hat{\Lambda}^a\sim \mathbb{P}^a\left[\cdot |x_1^a\xlongleftrightarrow[B_{m-1}^a]{B;B_{m}^a}x_2^a, \mathcal{F}_m^{(o,a)}\right]$ and that 
\begin{equation*}
	\begin{cases}
		\Lambda^a(v)= \hat{\Lambda}^a(v), &\text{if }v\in a\mathcal{T}\cap \left(\left(B_{m}^a\right)^c\cup B_{m-1}^a\right),\\
		\Lambda^a(v)\leq \hat{\Lambda}^a(v),& \text{if }v\in a\mathcal{T}\cap \left(B_m^a\setminus B_{m-1}^a\right).
	\end{cases}
\end{equation*}
The existence of such a coupling implies readily~\eqref{eqn::expansion_aux1431}, which implies \eqref{eqn::expansion_aux143}.	
Combining~\eqref{eqn::expansion_aux141} and \eqref{eqn::expansion_aux142} with~\eqref{eqn::expansion_aux143} gives the desired result. 
\end{proof}

\subsection{
Correlations and fusion of four-arm events: Proof of Theorem~\ref{thm::4arm_expansion}}

\begin{figure} 
	\begin{subfigure}{1\textwidth}
		\begin{center}
			\includegraphics[width=0.6\textwidth]{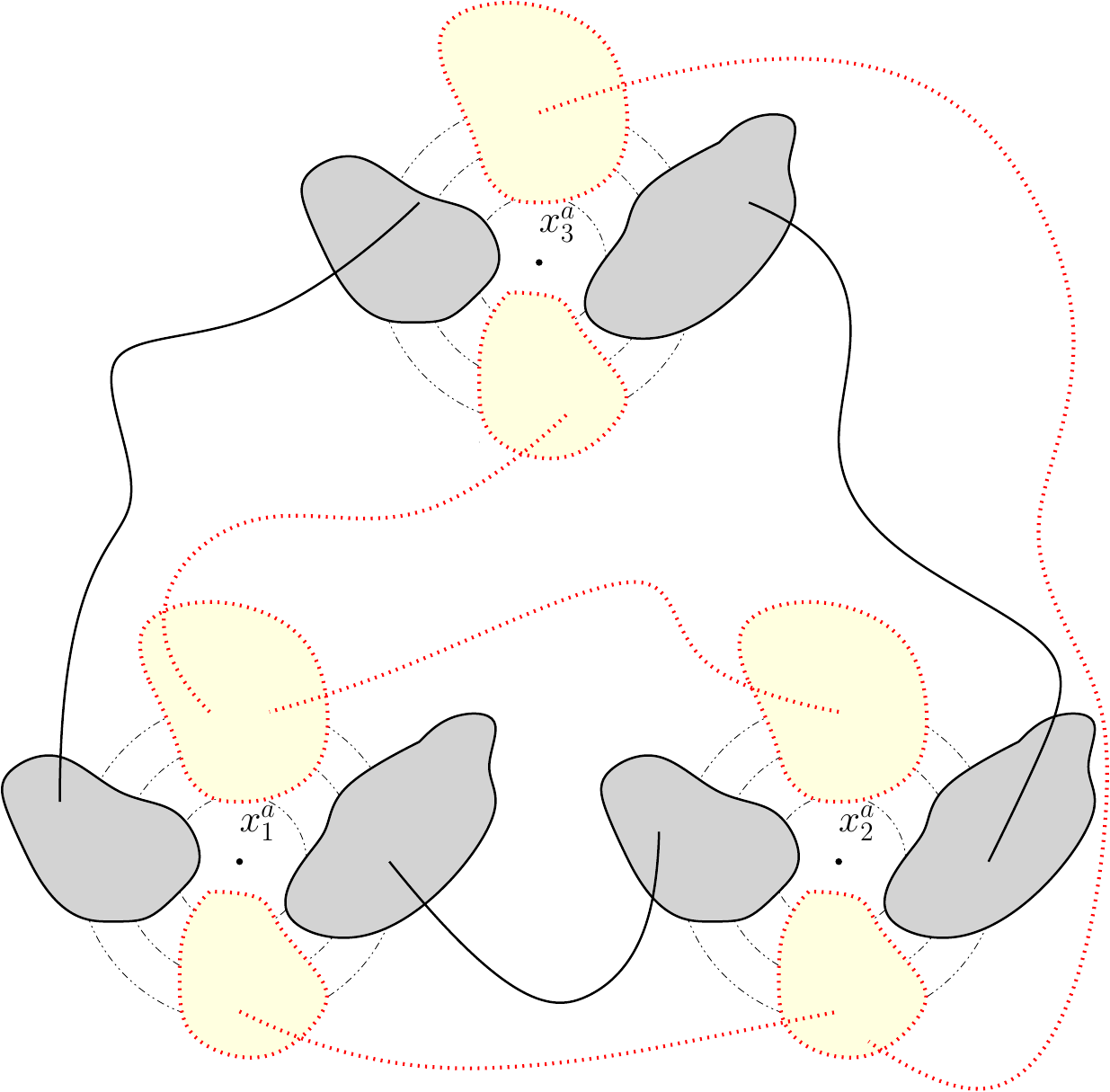}
		\end{center}
		\caption{The event $\mathcal{R}^{(a,\eta,\delta_m,\epsilon)}_3(x_1^a,x_2^a,x_3^a)$;}
	\end{subfigure}
	\begin{subfigure}{1\textwidth}
		\begin{center}
			\includegraphics[width=0.6\textwidth]{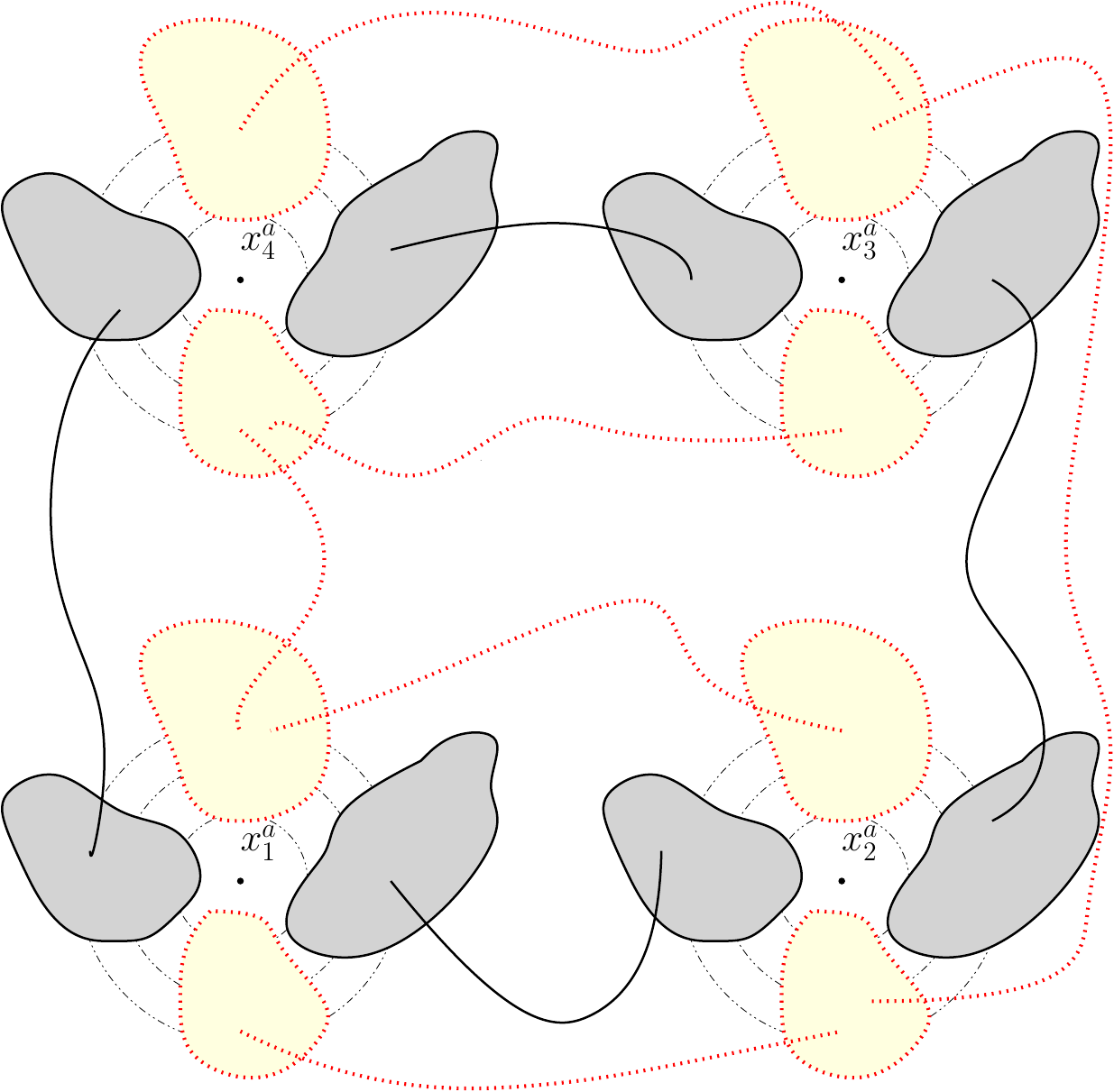}
		\end{center}
		\caption{The event $\mathcal{R}^{(a,\eta,\delta_m,\epsilon)}_4(x_1^a,x_2^a,x_3^a,x_4^a)$. }
	\end{subfigure}
	\caption{The events $\mathcal{R}_i^{{(a,\eta,\delta_m,\epsilon)}}(x_1^a,\ldots,x_i^{a})$ for $i \in\{3,4\}$. For each $1\leq j\leq 4$, there are $3$ concentric circles centered at $x_j^a$, with radii $\eta<\delta_m<\epsilon$, respectively. The black, solid lines represent black paths and the red, dotted lines represent white paths. The grey domains represent parts of black clusters, while the yellow domains represent parts of white clusters.}
	\label{fig::Rclusterfigures}
\end{figure}

The goal of this section is to prove Theorem~\ref{thm::4arm_expansion}. 
\begin{proof}[Proof of Theorem~\ref{thm::4arm_expansion} (\ref{item::existence_limit}).]
We first observe that standard RSW arguments imply that, if the limit in~\eqref{eqn::def_R} exists, it must be in $(0,\infty)$.
Therefore, we focus on the proof of the existence of the limit, which follows from the coupling result in Lemma~\ref{lem::four_arm_coupling_inner}.

More precisely, we  let $0<\eta<\delta_m<\epsilon<\min_{1\leq j<k\leq 4}\frac{|x_j-x_k|}{10}$ and define  $\mathcal{R}_4^{(a,\eta,\delta_m,\epsilon)}(x_1^a,x_2^a,x_3^a,x_4^a)$ to be the following event (with the convention that $x_5^a=x_1^a$):
\begin{enumerate}
    \item for $1\leq j\leq 4$, there are four paths, with labels black, white, black, white, in counterclockwise order, connecting $\partial B_{\eta}(x_j^a)$ to $\partial B_{\epsilon}(x_j^a)$ (we say that these four paths are \emph{adjacent to} $x_j^a$);
    \item for $1\leq j\leq 4$, a black path adjacent to $x_j^a$ and a black path adjacent to $x_{j+1}^a$ are connected by a black path inside $\mathbb{C}\setminus \left(\cup_{k=1}^4 B_{\delta_m}(x_k^a)\right)$;
    \item for $1\leq j\leq 4$, one of the two white paths adjacent to $x_j^a$ is connected to one of the two white paths adjacent to $x_{j+1}^a$ by a white path inside $\mathbb{C}\setminus \left(\cup_{k=1}^4 B_{\delta_m}(x_k^a)\right)$, and the other white path adjacent to $x_j^a$ is connected to the other white path adjacent to $x_{j+1}^a$ by a white path inside $\mathbb{C}\setminus \left(\cup_{k=1}^4 B_{\delta_m}(x_k^a)\right)$.
\end{enumerate}
We define the event  $\mathcal{R}_3^{(a,\eta,\delta_m,\epsilon)}(x_1^a.x_2^a,x_3^a)$ in a similar way. See Figure~\ref{fig::Rclusterfigures} for a schematic illustration of these two events.

As in Sections~\ref{sec::pivotal} and~\ref{sec::thm::backbone}, one can express the events $\mathcal{R}_4^{(a,\eta,\delta_m,\epsilon)}(x^a_1,x^a_2,x^a_3,x^a_4)$ and $\mathcal{R}_3^{(a,\eta,\delta_m,\epsilon)}(x^a_1,x^a_2,x^a_3)$ in terms of percolation interface loops and define analogous events in the continuum for the full scaling limit $\Lambda$.
The latter events will be denoted $\mathcal{R}_4^{(\eta,\delta_m,\epsilon)}(x_1,x_2,x_3,x_4)$ and $\mathcal{R}_3^{(\eta,\delta_m,\epsilon)}(x_1,x_2,x_3)$ and it is easy to see, using arguments described earlier for similar events, that they are continuity events for $\mathbb{P}$.
Then, one can proceed as in the proof of Theorem~\ref{thm::backbone}, with Lemma~\ref{lem::backbone} replaced by Lemma~\ref{lem::four_arm_coupling_inner}, to show that, for $i\in\{3,4\}$,
\begin{align} \label{eqn::cvg_fourarm_four}
\begin{split}
& R_i(x_1,\ldots,x_i):=\lim_{a\to 0} \overline{\rho}_a^{-i} \times \mathbb{P}^a\left[\mathcal{R}_i^a\left(x_1^a,\ldots,x_i^a\right)\right] \\
& \qquad \qquad = \epsilon^{-\frac{5i}{4}} \times \lim_{m\to \infty}\lim_{\eta\to 0} \mathbb{P}\left[\mathcal{R}_i^{{(\eta,\delta_m,\epsilon)}}(x_1,\ldots,x_i)|\mathcal{F}_{\eta,\epsilon}(x_j), \; 1\leq j\leq i\right].
\end{split}
\end{align}
\end{proof}

\begin{proof}[Proof of Theorem~\ref{thm::4arm_expansion} (\ref{item::conformal_cova})]
	The goal is to show that, for $i\in \{3,4\}$, the functions $R_i(x_1,\ldots,x_3)$ satisfy the desired M\"obius covariance property. This can be done using essentially the same method as in~\cite[Proofs of Theorems~1.1\&1.4]{Cam23}, with the help of~\eqref{eqn::scal_four} to get the correct exponent. The explicit functional expression for $R_3$ follows directly from the M\"obius covariance property. 
\end{proof}

\begin{proof}[Proof of Theorem~\ref{thm::4arm_expansion} (\ref{item::expansion})]
We assume the same setup as in the proof of Theorem~\ref{thm::4arm_expansion} (\ref{item::existence_limit}). Let $x\in \mathbb{C}\setminus \{x_1,x_2\}$ and assume that $x_3,x_4$ are much closer to $x$ than to $x_1$ or $x_2$. We fix a number $\epsilon\in (0,\min\{\frac{|x_1-x_2|}{10},  \frac{|x_1-x|}{10}, \frac{|x_2-x|}{10}\})$ sufficiently small and define the event $\overline{\mathcal{R}}_4^{(\eta,\delta_m,\epsilon)}(x_1,x_2,x_3,x_4)$ in the same way as $\mathcal{R}_4^{(\eta,\delta_m,\epsilon)}$, except that we replace the disks around $x_3$ and $x_4$ with smaller disks, with radii $\varkappa\eta,\varkappa\delta_m,\varkappa\epsilon$, where $\varkappa:=\frac{|x_4-x_3|}{2}$. We write $\mathcal{F}_{\varkappa}$ for the event $\{\mathcal{F}_{\eta,\epsilon}(x_j), \mathcal{F}_{\varkappa\eta,\varkappa\epsilon}(x_k), \; j=1,2, \, k=3,4\}$.
Then, the same proof as for \eqref{eqn::cvg_fourarm_four} gives
\begin{align} \label{eqn::fourarm_expansion_aux1}
	R_4(x_1,x_2,x_3,x_4)=\lim_{m\to \infty}\lim_{\eta\to 0} \mathbb{P}\left[\overline{\mathcal{R}}_4^{(\eta,\delta_m,\epsilon)}(x_1,x_2,x_3,x_4)|\mathcal{F}_{\varkappa} \right]
	\times \epsilon^{-5}\times \left(\frac{|{x_4-x_3}|}{2}\right)^{-\frac{5}{2}} .
\end{align}
We claim that
\begin{equation} \label{eqn::fourarm_expansion_aux2}
	\lim_{x_3,x_4\to x} |{x_4-x_3}|^{-\frac{5}{4}}\times \lim_{m\to \infty}\lim_{\eta\to 0} \mathbb{P}\left[\overline{\mathcal{R}}_4^{(\eta,\delta_m,\epsilon)}(x_1,x_2,x_3,x_4)| \mathcal{F}_{\varkappa}\right] \textit{ exists  and belongs to }(0,\infty). 
\end{equation}
Combining~\eqref{eqn::fourarm_expansion_aux1} with~\eqref{eqn::fourarm_expansion_aux2}, we conclude that 
\begin{equation*}
\overline{R}_3(x_1,x_2,x):=	\lim_{x_3,x_4\to x} \frac{R_4(x_1,x_2,x_3,x_4)}{|x_4-x_3|^{-\frac{5}{4}}}
\end{equation*}
exists and belongs to $(0,\infty)$. Moreover, for any non-constant M\"obius transformation $\varphi$ such that 
\[\varphi(x),\varphi(x_1),\varphi(x_2)\neq \infty,\]
we have 
\begin{align*}
& \overline{R}_3(\varphi(x_1), \varphi(x_2),\varphi(x)) = \lim_{x_3,x_4\to x} \frac{R_4(\varphi(x_1),\varphi(x_2),\varphi(x_3),\varphi(x_4))}{|\varphi(x_4)-\varphi(x_3)|^{-\frac{5}{4}}} \\
& \qquad \qquad \qquad = \overline{R}_3(x_1,x_2,x)\times |\varphi'(x)|^{-\frac{5}{4}}\times \prod_{j=1}^2 |\varphi'(x_j)|^{-\frac{5}{4}},
\end{align*}
where the last equality follows from Theorem~\ref{thm::4arm_expansion}~(\ref{item::conformal_cova}).
As a consequence, there exists a universal constant $C_4\in (0,\infty)$ such that 
\begin{equation*}
	\overline{R}_3(x_1,x_2,x)=C_4 R_3(x_1,x_2,x),
\end{equation*}
as desired.

Now, we prove~\eqref{eqn::fourarm_expansion_aux2}. We denote by $\mathcal{E}_{\varkappa}$ the event $\mathcal{F}_{10\varkappa,5\epsilon}(x)$. Note that, for $x_3$ and $x_4$ sufficiently close to $x$ {and $\eta$ sufficiently small}, we have 
\begin{equation*}
\overline{\mathcal{R}}_4^{(\eta,\delta_m,\epsilon)}(x_1,x_2,x_3,x_4)\subseteq 	\mathcal{E}_{\varkappa}.
\end{equation*}
Then, {taking $\epsilon$ sufficiently small,} we can write 
\begin{equation*}
	|x_4-x_3|^{-\frac{5}{4}} \times \mathbb{P}\left[\overline{\mathcal{R}}_4^{(\eta,\delta_m,\epsilon)}(x_1,x_2,x_3,x_4)|\mathcal{F}_{\varkappa}\right]= 	\underbrace{|x_4-x_3|^{-\frac{5}{4}} \times \mathbb{P}\left[\mathcal{E}_{\varkappa}\right]}_{T_1}\times \underbrace{\mathbb{P}\left[\overline{\mathcal{R}}_4^{(\eta,\delta_m,\epsilon)}(x_1,x_2,x_3,x_4)|\mathcal{F}_{\varkappa}, \mathcal{E}_{\varkappa}\right]}_{T_2}.
\end{equation*}
For the term $T_1$, thanks to the M\"obius invariance of $\mathrm{CLE}_6$ \cite{CamiaNewmanPercolationFull,GwynneMillerQianCLE} and~\eqref{eqn::coro_aux2}, we have 
\begin{equation*}
	\lim_{x_3,x_4\to x} T_1= \lim_{x_3,x_4\to x} |x_4-x_3|^{-\frac{5}{4}}\times \mathbb{P}\left[\mathcal{F}_{2\varkappa,\epsilon}(0)\right]= C_8\epsilon^{-\frac{5}{4}},  
\end{equation*}
where $C_8\in (0,\infty)$ is the universal constant in~\eqref{eqn::coro_aux2}.

It remains to show that $\lim_{x_3,x_4\to x}\lim_{m\to \infty}\lim_{\eta\to 0} T_2$ exists\footnote{Once the existence of this limit is proved, the fact that it belongs to $(0,\infty)$ can be derived using standard RSW arguments.}. The idea is to use once again Lemmas~\ref{lem::four_arm_coupling_inner} and~\ref{lem::faces}.
{As before, we assume that $x_3$ and $x_4$ are sufficiently close to $x$ and that $\eta$ and $\epsilon$ are sufficiently small.}
We define three auxiliary events\footnote{We choose to define these events in terms of paths and clusters in the continuum, since this might be a more intuitive way. It is clear that these events can also be expressed in terms of loops. {To this effect, we say that two sets, $A$ and $B$, are connected by a black (resp., white) path if there is no loop such that one of the two sets is contained in its interior and the other in the complement and the smallest loop containing both in its interior is oriented counterclockwise (resp., clockwise).} Moreover, since the polychromatic boundary $3$-arm exponent is strictly larger than $1$~\cite{SmirnovWernerCriticalExponents}, these events are actually continuity events for $\mathbb{P}$, {by the same argument used earlier for similar events (e.g., $\mathcal{A}_{\eta,\epsilon}(z)$ and $\mathcal{B}_{\eta,\epsilon}(z)$)}.}, see Figure~\ref{fig::Rfusion} for an illustration. 
\begin{itemize}
	\item First, we define $\tilde{R}_4^{(\mathrm{int})}$ to be the following event:
 \begin{enumerate}
     \item $\mathcal{E}_{\varkappa}\cap \mathcal{F}_{\varkappa}$ happens.
     \item {One black path adjacent to $x_3$ and one black path adjacent to $x_4$ from $\mathcal{F}_{\varkappa}$ are connected to each other by a black path contained inside $B_{10\sqrt{\varkappa}}(x)\setminus (\cup_{j=3,4}B_{\varkappa\delta_m}(x_j))$}.
     \item The second black path from $\mathcal{F}_{\varkappa}$ adjacent to $x_3$ is connected to one of the black paths from the event $\mathcal{E}_{\varkappa}$ by a black path contained inside $B_{10\sqrt{\varkappa}}(x)\setminus (\cup_{j=3,4}B_{\varkappa\delta_m}(x_j))$ and the second black path from $\mathcal{F}_{\varkappa}$ adjacent to $x_4$ is connected to the other black path from the event $\mathcal{E}_{\varkappa}$ by a black path contained inside $B_{10\sqrt{\varkappa}}(x)\setminus (\cup_{j=3,4}B_{\varkappa\delta_m}(x_j))$.
     \item One white path adjacent to $x_3$ and one white path adjacent to $x_4$ from $\mathcal{F}_{\varkappa}$ are connected to each other by a white path contained inside $B_{10\sqrt{\varkappa}(x)}\setminus (\cup_{j=3,4}B_{\varkappa\delta_m}(x_j))$.
     Moreover, this white path is connected to one of the white paths defining the event $\mathcal{E}_{\varkappa}$ by a white path contained inside $B_{10\sqrt{\varkappa}}(x)\setminus (\cup_{j=3,4}B_{\varkappa\delta_m}(x_j))$.
     \item The other white path adjacent to $x_3$ from $\mathcal{F}_{\varkappa}$ is connected to the other white path adjacent to $x_4$ from $\mathcal{F}_{\varkappa}$ by a white path contained inside $B_{10\sqrt{\varkappa}}(x)\setminus (\cup_{j=3,4}B_{\varkappa\delta_m}(x_j))$.
     Moreover, this white path is connected to the second white path defining the event $\mathcal{E}_{\varkappa}$ by a white path contained inside $B_{10\sqrt{\varkappa}}(x)\setminus (\cup_{j=3,4}B_{\varkappa\delta_m}(x_j))$.
 \end{enumerate}  
	\item Second, we define the event $\tilde{R}_4^{(\mathrm{ext})}$ in a similar way to $\tilde{R}_4^{(\mathrm{int})}$, but with $(x_3,x_4)$ replaced by $(x_1, x_2)$, 
 and $B_{10\sqrt{\varkappa}}(x)\setminus (\cup_{j=3,4}B_{\varkappa\delta_m}(x_j))$ replaced by $\{z: |z-x|>10\varkappa^{1/3}\}\setminus (\cup_{j=1,2} B_{\delta_m}(x_j))$.
	\item Third, we define $\tilde{\mathcal{R}}_4(x_1,x_2,x_3,x_4)=\tilde{R}_4^{(\mathrm{int})}\cap \tilde{R}_4^{(\mathrm{ext})}$. 
\end{itemize}

Then, one can use Lemma~\ref{lem::four_arm_coupling_inner} and Lemma~\ref{lem::faces} to prove the following observations.
\begin{itemize}
	\item There exist constants $c_3,c_4\in (0,\infty)$, independent of $\varkappa,\eta,\delta_m$, such that
	\begin{align}
		\left| T_2- \mathbb{P}\left[\tilde{\mathcal{R}}_4(x_1,x_2,x_3,x_4)| \mathcal{F}_{\varkappa},\mathcal{E}_{\varkappa}\right] \right|& \leq c_3\varkappa^{c_4},\label{eqn::four_arm_proof_aux1}\\
		\left|\mathbb{P}\left[\tilde{\mathcal{R}}_4(x_1,x_2,x_3,x_4)| \mathcal{F}_{\varkappa},\mathcal{E}_{\varkappa}\right] -\mathbb{P}\left[\tilde{\mathcal{R}}_4^{(\mathrm{int})}| \mathcal{F}_{\varkappa},\mathcal{E}_{\varkappa}\right] \times \mathbb{P}\left[\tilde{\mathcal{R}}_4^{(\mathrm{\ext})}| \mathcal{F}_{\varkappa},\mathcal{E}_{\varkappa}\right] \right| & \leq c_3\varkappa^{c_4},\label{eqn::four_arm_proof_aux2}
		\end{align}
  where~\eqref{eqn::four_arm_proof_aux1} is a consequence of the fact that
  \begin{itemize}
      \item $\tilde{\mathcal{R}}_4(x_1,x_2,x_3,x_4)$ implies $\overline{\mathcal{R}}^{\eta,\delta_m,\epsilon}_4(x_1,x_2,x_3,x_4)$ and
      \item the difference $\overline{\mathcal{R}}^{\eta,\delta_m,\epsilon}_4(x_1,x_2,x_3,x_4) \setminus \tilde{\mathcal{R}}_4(x_1,x_2,x_3,x_4)$ implies a polychromatic five-arm event\footnote{This can happen if some of the paths from $\mathcal{F}_{\varkappa}$ are not connected to each other in the way required by $\tilde{\mathcal{R}}_4(x_1,x_2,x_3,x_4)$ inside $B_{10\sqrt{\varkappa}}(x)$.} in the annulus $A_{3\varkappa,10\sqrt{\varkappa}}(x)$, whose conditional probability given $\mathcal{F}_{\varkappa},\mathcal{E}_{\varkappa}$ decays polynomially in $\varkappa$, as $x_3,x_4\to x$, because the five-arm exponent is 2~\cite{KestenAlmostAllWords};
  \end{itemize}
  and~\eqref{eqn::four_arm_proof_aux2} can be proven as follows: write 
  \begin{align*}
\mathbb{P}\left[\tilde{\mathcal{R}}_4(x_1,x_2,x_3,x_4)| \mathcal{F}_{\varkappa},\mathcal{E}_{\varkappa}\right]=&\mathbb{P}\left[\tilde{\mathcal{R}}_4^{(\mathrm{int})}| \mathcal{F}_{\varkappa},\mathcal{E}_{\varkappa}\right] \times \mathbb{P}\left[\tilde{R}_4(x_1,x_2,x_3,x_4)|\tilde{R}_4^{(\mathrm{int})},\mathcal{F}_{\varkappa},\mathcal{E}_{\varkappa}\right]\\
=&\mathbb{P}\left[\tilde{\mathcal{R}}_4^{(\mathrm{int})}| \mathcal{F}_{\varkappa},\mathcal{E}_{\varkappa}\right] \times \mathbb{P}\left[\tilde{R}_4^{(\mathrm{ext})}|\tilde{R}_4^{(\mathrm{int})},\mathcal{F}_{\varkappa},\mathcal{E}_{\varkappa}\right],
  \end{align*}
  then by investigating the ``faces"\footnote{Compared with the proof of Lemma~\ref{lem::four_aux2}, there is some additional work to do, due to the lack of initial faces. However, the strategy is still clear, see~\cite[Sketch of the proof of Proposition~3.6]{GarbanPeteSchrammPivotalClusterInterfacePercolation}.} (introduced in the proof of Lemma~\ref{lem::four_aux2}) formed by four alternating paths crossing the annulus $A_{10\varkappa,5\epsilon}$, as in the proof of Lemma~\ref{lem::four_aux2}, one can obtain a suitable coupling between $\mathbb{P}\left[\, \cdot \, |\tilde{R}_4^{\mathrm{int}},\mathcal{F}_{\varkappa},\mathcal{E}_{\varkappa}\right]$ and $\mathbb{P}\left[\, \cdot \, |\mathcal{F}_{\varkappa},\mathcal{E}_{\varkappa}\right]$ and derive~\eqref{eqn::four_arm_proof_aux2}.

		\item The following limits exist:
		\begin{align} \label{eqn::def_R_4}
		R_4^{(\mathrm{int})}(\varkappa):=	\lim_{m\to \infty}\lim_{\eta\to 0} \mathbb{P}\left[\tilde{\mathcal{R}}_4^{(\mathrm{int})}| \mathcal{F}_{\varkappa},\mathcal{E}_{\varkappa}\right] \quad \textit{and}\quad R_4^{(\mathrm{ext})}(\varkappa):= \lim_{m\to \infty}\lim_{\eta\to 0} \mathbb{P}\left[\tilde{\mathcal{R}}_4^{(\mathrm{ext})}| \mathcal{F}_{\varkappa},\mathcal{E}_{\varkappa}\right] .
		\end{align}
		\item If $\{\varkappa_n\}_{n=1}^{\infty}$ is a decreasing sequence with $\lim_{n\to\infty} \varkappa_n=0 $, then $\{R_4^{(\mathrm{int})}(\varkappa_n)\}_{n=1}^{\infty}, 	\{R_4^{(\mathrm{ext})}(\varkappa_n)\}_{n=1}^{\infty} $ are two Cauchy sequences. As a consequence, the limits $\lim_{x_3,x_4\to x} R_4^{(\mathrm{int})}(\varkappa)$ and $ \lim_{x_3,x_4\to x} R_4^{(\mathrm{ext})}(\varkappa) $ exist. 
\end{itemize}

Combining all these observations together, we conclude that 
\begin{equation*}
	\lim_{x_3,x_4\to x}\lim_{m\to \infty}\lim_{\eta\to 0} T_2 =\lim_{x_3,x_4\to x} R_4^{(\mathrm{int})}(\varkappa)\times \lim_{x_3,x_4\to x} R_4^{(\mathrm{ext})}(\varkappa)
\end{equation*}
exists, as we set out to prove. 
\end{proof}

\begin{figure}
	\begin{center}
		\includegraphics[width=0.80\textwidth]{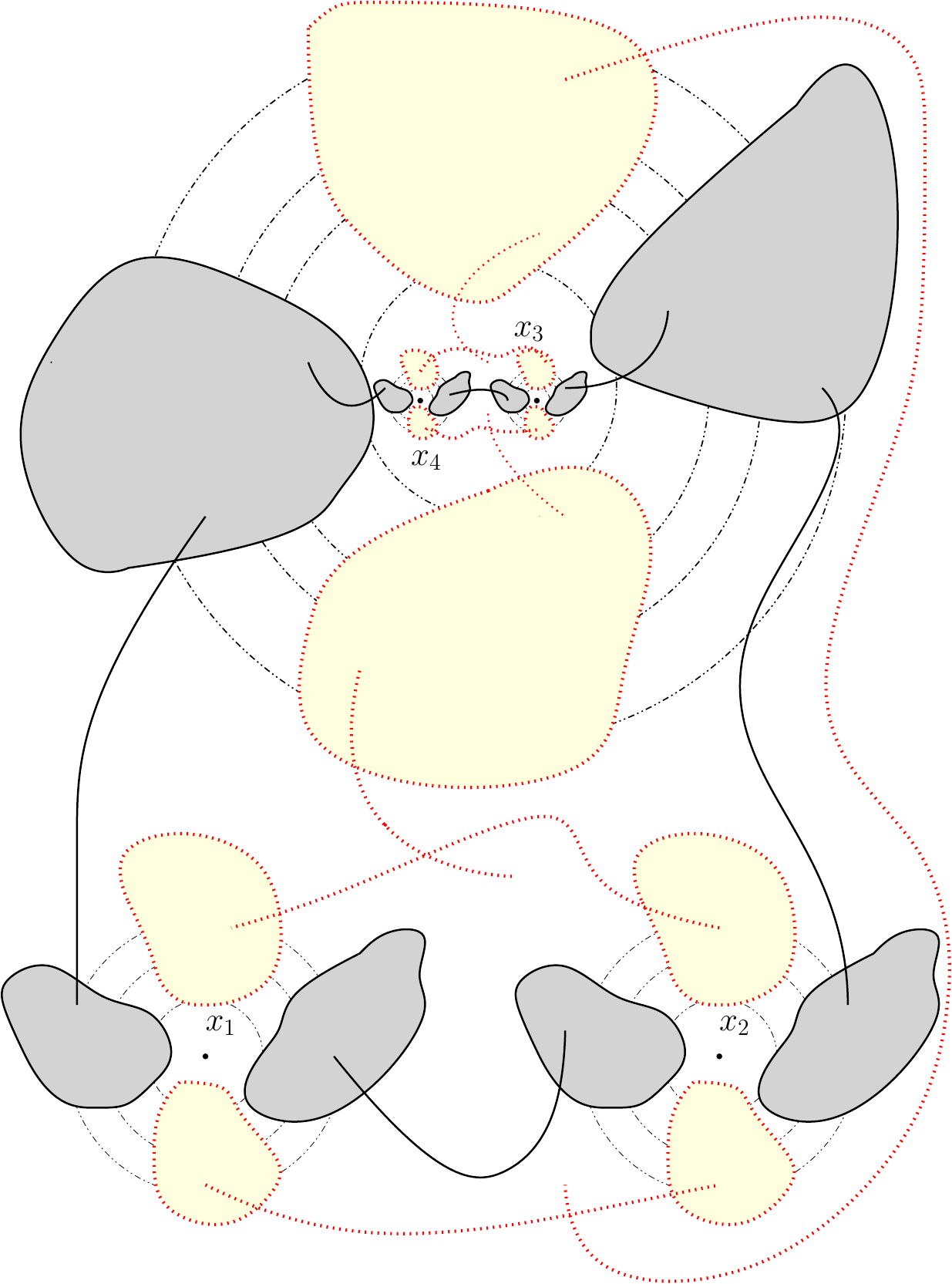}
	\end{center}
	\caption{The event $\tilde{\mathcal{R}}_4(x_1,x_2,x_3,x_4)$ when $x_3$ and $x_4$ are close to each other. For $1\leq j\leq 2$, there are $3$ concentric circles centered at $x_j$ with radius $\eta<\delta_m<\epsilon$, respectively, where $\epsilon\in (0,\min\{\frac{|x_1-x_2|}{10},  \frac{|x_1-x|}{10}, \frac{|x_2-x|}{10}\})$.  For $3\leq j\leq 4$, there are $3$ concentric circles centered at $x_j$ with radius $\varkappa\eta<\varkappa\delta_m<\varkappa\epsilon$, respectively, where $\varkappa:=\frac{|x_4-x_3|}{2}$. Moreover, there are $4$ concentric circles centered at $x$ with radius $10\varkappa<10\sqrt{\varkappa}<10\varkappa^{1/3}<5\epsilon$, respectively. Note that, as $x_3,x_4\to x$, standard RSW estimates imply that the probabilities of the event that two {black paths near $x_3$ and $x_4$ belong to the same black cluster and the event that two pairs of white clusters around $x_3$ and $x_4$ belong to the same white cluster} are uniformly bounded from below. }
	\label{fig::Rfusion}
\end{figure}

\subsection{
Correlations and fusion of boundary two-arm events: Proof of Theorem~\ref{thm::three_arm}}
	For $0<a<\eta<\epsilon$, we let $\{\partial B_{\eta}(0) \xlongleftrightarrow[\mathbb{H}]{BW} \partial B_{\epsilon}(0)\}$  denote the event that there are a black path and a white path in the upper half-plane, {in clockwise order,} connecting $\partial B_{\eta}(0)$  to $\partial B_{\epsilon}(0)$.
 Furthermore, we let $\{\partial B_{\eta}(0) \xlongleftrightarrow[\mathbb{H}]{BWB} \partial B_{\epsilon}(0)\}$ (resp., $\{0 \xlongleftrightarrow[\mathbb{H}]{BWB} \partial B_{\epsilon}(0)\}$) denote the event that there are two black paths and one white path in the upper half-plane connecting $\partial B_{\eta}(0)$ {(resp., $-a,0,a$)} to $\partial B_{\epsilon}(0)$, {with the white path between the two black paths}. 
\begin{lemma} \label{lem::two_arm_coupling_half}
The conclusions of Lemma~\ref{lem::four_arm_coupling_inner} still hold if we replace  $\{0\xlongleftrightarrow{BWBW}\partial B_{\epsilon}(0)\}$ by $\{0 \xlongleftrightarrow[\mathbb{H}]{BW} \partial B_{\epsilon}(0)\}$ and replace $\{\partial B_{\eta}(0) \xlongleftrightarrow{BWBW}\partial B_{\epsilon}(0)\}$  by $\{\partial B_{\eta}(0) \xlongleftrightarrow[\mathbb{H}]{BW} \partial B_{\epsilon}(0)\}$.
\end{lemma}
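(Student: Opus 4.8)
The plan is to transcribe the proof of Lemma~\ref{lem::four_arm_coupling_inner}, which follows the strategy of the proof of Proposition~3.6 of~\cite{GarbanPeteSchrammPivotalClusterInterfacePercolation}, replacing the interior alternating four-arm event by the polychromatic boundary two-arm event and the annulus $A_{\eta,\epsilon}(0)$ by the half-annulus $A_{\eta,\epsilon}(0)\cap\mathbb{H}$. As in the proof of Lemma~\ref{lem::backbone}, one first generates a single critical configuration $\Lambda^a$ and, starting near the origin in one case and from $\partial B_\eta(0)$ in the other, explores outward the interface strand in $\mathbb{H}$ that separates the black arm from the white arm of the conditioned boundary two-arm event. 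The exploration continues until the explored interface, together with the black and white clusters adjacent to it, first exposes --- around some circle $\partial B_r(0)$ with $r\in[\eta,\delta)$ --- a \emph{configuration of faces} in the sense recalled before Lemma~\ref{lem::faces}, here consisting of one black arm-segment and one white arm-segment crossing $A_{r,\delta}(0)\cap\mathbb{H}$ in the correct (clockwise) order.

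On the event $\mathcal{S}$ that such a configuration of faces is found inside the half-annulus $A_{\eta,\delta}(0)\cap\mathbb{H}$, condition on it together with the states of all the hexagons the exploration has revealed, which are all contained in $B_\delta(0)\cap\overline{\mathbb{H}}$. Given this data, the restriction of $\Lambda^a$ to the unexplored region is i.i.d.\ percolation conditioned on the event that the two revealed arm-segments are continued to $\partial B_\epsilon(0)$ by a black path and a white path in $\mathbb{H}$, in the correct order; crucially, this conditional law does not depend on whether the original conditioning was $\{0\xlongleftrightarrow[\mathbb{H}]{BW}\partial B_\epsilon(0)\}$ or $\{\partial B_\eta(0)\xlongleftrightarrow[\mathbb{H}]{BW}\partial B_\epsilon(0)\}$ --- exactly the mechanism used at the end of the proof of Lemma~\ref{lem::backbone}. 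Since both conditionings again involve crossings of a single color per arm, the FKG-based construction from~\cite[Proof of Lemma~2.1]{Cam23} applies and lets one build both conditional measures from the same $\Lambda^a$, yielding a coupling $\mathbb{P}_\eta^a$ of $\mathbb{P}^a[\,\cdot\,|0\xlongleftrightarrow[\mathbb{H}]{BW}\partial B_\epsilon(0)]$ and $\mathbb{P}^a[\,\cdot\,|\partial B_\eta(0)\xlongleftrightarrow[\mathbb{H}]{BW}\partial B_\epsilon(0)]$ under which $\tilde\Lambda^a$ and $\hat\Lambda^a$ agree outside the exposed faces, in particular on every hexagon outside $B_\delta(0)$. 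Hence any event $\mathcal{A}$ depending only on hexagons outside $B_\delta(0)$ has, conditionally on $\mathcal{S}$, the same probability under either marginal, which gives $\mathbb{P}_\eta^a[\tilde\Lambda^a\in\mathcal{A}\mid\mathcal{S}]=\mathbb{P}_\eta^a[\hat\Lambda^a\in\mathcal{A}\mid\mathcal{S}]$.

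Finally, one bounds $\mathbb{P}_\eta^a[\mathcal{S}]$ from below. The complement of $\mathcal{S}$ forces the failure, at every dyadic scale $2^{-k}$ between (roughly) $\eta$ and $\delta$, of the event that the explored black and white arms can be framed into a good configuration of faces within the corresponding sub-annulus. Boundary RSW estimates and the separation-of-arms phenomenon (see, e.g.,~\cite[Theorem~11]{NolinNearCriticalPercolation}) show that, conditionally on the boundary two-arm event, each such scale succeeds with probability bounded below by a universal constant $c>0$, while quasi-multiplicativity of boundary polychromatic arm events makes these per-scale events quasi-independent; multiplying the failure probabilities over the $\asymp\log_2(\delta/\eta)$ scales yields $\mathbb{P}_\eta^a[\mathcal{S}^c]\le(\eta/\delta)^{c_1}$ for some universal $c_1\in(0,\infty)$, as required. \emph{The only point requiring genuine verification} is that a correctly ordered black arm and white arm crossing the half-annulus do decouple its interior from its exterior in the conditioned measure, and that the associated framing can be produced at each scale with uniformly positive conditional probability; both are standard consequences of boundary RSW and the finiteness of the polychromatic boundary two- and three-arm exponents, and the combinatorics here is in fact simpler than in the interior four-arm setting of Lemma~\ref{lem::four_arm_coupling_inner}, so no essentially new difficulty arises.
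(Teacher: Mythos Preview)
Your overall strategy is the same as the paper's: the proof there consists of a one-sentence reference to Proposition~3.6 of~\cite{GarbanPeteSchrammPivotalClusterInterfacePercolation}, with the remark that the boundary makes the half-plane case easier than the bulk four-arm case. The exploration-and-faces mechanism you describe, together with the RSW/separation-of-arms bound on $\mathbb{P}_\eta^a[\mathcal{S}^c]$, is exactly that strategy.

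There is, however, one genuine misstatement. You write that ``since both conditionings again involve crossings of a single color per arm, the FKG-based construction from~\cite[Proof of Lemma~2.1]{Cam23} applies.'' It does not: the event $\{0\xlongleftrightarrow[\mathbb{H}]{BW}\partial B_\epsilon(0)\}$ is polychromatic and hence neither increasing nor decreasing, so the stochastic-domination argument used in the proof of Lemma~\ref{lem::backbone} (which treats a \emph{monochromatic} two-arm event) is unavailable here. This is precisely why the paper sends the reader to~\cite{GarbanPeteSchrammPivotalClusterInterfacePercolation} rather than to~\cite{Cam23}. Fortunately the FKG sentence is superfluous: the face-matching exploration you already outlined produces the coupling on its own --- once the explorations under the two conditionings expose a common configuration of faces inside $B_\delta(0)\cap\overline{\mathbb{H}}$, the conditional law outside the faces depends only on the faces, and the two measures can be made to agree there. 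Simply delete the FKG reference and rely on the GPS mechanism throughout.
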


\begin{lemma}\label{lem::three_arm_coupling}
The conclusions of Lemma~\ref{lem::four_arm_coupling_inner} still hold if we replace  $\{0\xlongleftrightarrow{BWBW}\partial B_{\epsilon}(0)\}$ by $\{0 \xlongleftrightarrow[\mathbb{H}]{BWB} \partial B_{\epsilon}(0)\}$ and replace $\{\partial B_{\eta}(0) \xlongleftrightarrow{BWBW}\partial B_{\epsilon}(0)\}$  by $\{\partial B_{\eta}(0) \xlongleftrightarrow[\mathbb{H}]{BWB} \partial B_{\epsilon}(0)\}$.
\end{lemma}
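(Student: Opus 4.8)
The plan is to mimic, almost verbatim, the proof of Lemma~\ref{lem::four_arm_coupling_inner} (hence the proof of Proposition~3.6 of~\cite{GarbanPeteSchrammPivotalClusterInterfacePercolation}), replacing the interior alternating $4$-arm event by the polychromatic boundary $3$-arm event and the annulus event $\mathcal{F}^a_{\eta,\epsilon}(0)$ by $\widetilde{\mathcal{F}}^a_{\eta,\epsilon}(0):=\{\partial B_\eta(0)\xlongleftrightarrow[\mathbb{H}]{BWB}\partial B_\epsilon(0)\}$. First I would record the boundary analogs of the three standard inputs to that argument: (a) quasi-multiplicativity of the polychromatic boundary $3$-arm probabilities across dyadic annuli centered at the origin; (b) a uniform (in $a,\eta,\delta$) lower bound on the conditional probability, given $\widetilde{\mathcal{F}}^a_{\eta,\epsilon}(0)$, that the two black arms and the white arm can be extended so as to reach $\partial B_\delta(0)$ at well-separated points forming a well-separated configuration of ``faces'' (in the sense of the proof of Lemma~\ref{lem::four_aux2}, now anchored at $\mathbb{R}$); and (c) the half-plane RSW estimates. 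These follow from standard arguments; see, e.g.,~\cite[Theorem~11]{NolinNearCriticalPercolation} for the separation-of-arms input (b) and~\cite{SmirnovWernerCriticalExponents} for the fact that the polychromatic boundary $3$-arm exponent equals $2$, in particular is strictly larger than $1$, which is exactly what produces the quantitative bound $1-(\eta/\delta)^{c_1}$.

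Second, I would run the exploration. Generate a critical configuration $\Lambda^a$ and, starting from the three hexagons containing $-a,0,a$ on $\partial\mathbb{H}$, explore outward the two black arms and the white arm required by $\widetilde{\mathcal{F}}^a_{\eta,\epsilon}(0)$ together with the black/white clusters they belong to, exactly as in the outward exploration used for Lemma~\ref{lem::four_arm_coupling_inner}, stopping as soon as the revealed arms and their faces reach scale $\delta$. Define $\mathcal{S}$ to be the event that this exploration succeeds in producing, at some scale in $[\eta,\delta]$, a valid configuration of faces (black, white, black, appropriately ordered and anchored at $\mathbb{R}$) with the three revealed arms stopped before entering $B_\delta(0)$'s complement; the independence of box crossings across disjoint dyadic annuli between scales $\eta$ and $\delta$, combined with (b), gives $\mathbb{P}_\eta^a[\mathcal{S}]\geq 1-(\eta/\delta)^{c_1}$. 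On $\mathcal{S}$, the conditional law of the labels in $\mathbb{H}\setminus B_\delta(0)$ given the explored set and the faces is, under both conditionings, critical percolation on the complement of the explored set conditioned on extending the three faces to $\partial B_\epsilon(0)$ as a polychromatic $3$-arm configuration; this constraint depends on the conditioning only through the (identical) faces, so the two conditional measures can be coupled to agree on $\mathbb{H}\setminus B_\delta(0)$, which gives the conditional-agreement statement. The same scheme with the BWB $3$-arm event replaced by the BW $2$-arm event yields Lemma~\ref{lem::two_arm_coupling_half}.

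The main obstacle is that, unlike the backbone coupling of Lemma~\ref{lem::backbone}, the conditioning here is polychromatic, so FKG/monotone coupling is unavailable; one must use the GPS-type ``reveal the arms, then re-randomize outside'' coupling, whose validity rests on the arms together with their faces forming a genuine barrier separating the explored region from $\mathbb{H}\setminus B_\delta(0)$. The delicate piece is therefore the boundary separation estimate (b): showing that with probability bounded below, uniformly in $a,\eta,\delta$, the three arms can be made to land on $\partial B_\delta(0)$ while avoiding one another and staying away from $\mathbb{R}$ except at their common anchor. This is where the half-plane RSW bounds and the strict inequality (boundary $3$-arm exponent strictly above $1$) enter; once (b) is in place, the remaining bookkeeping is identical to that in the proof of Lemma~\ref{lem::four_arm_coupling_inner}, and I would simply cite~\cite{GarbanPeteSchrammPivotalClusterInterfacePercolation} and~\cite{FWYIsingRadial} for the details.
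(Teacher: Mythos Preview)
Your approach is correct and essentially the same as the paper's: both defer to the outward-exploration coupling of Proposition~3.6 in~\cite{GarbanPeteSchrammPivotalClusterInterfacePercolation}, with the interior alternating $4$-arm event replaced by the half-plane polychromatic $3$-arm event. The paper's proof is even terser than yours and adds only the remark that the presence of the boundary $\mathbb{R}$ actually makes the separation-of-arms step \emph{easier} than in the plane (so your emphasis on this being the delicate piece is somewhat overstated); note also that the exponent $c_1$ in the bound $1-(\eta/\delta)^{c_1}$ comes from the independent dyadic-scale coupling attempts, not from the value of the $3$-arm exponent.
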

\begin{proof}[Proof of Lemmas~\ref{lem::two_arm_coupling_half} and~\ref{lem::three_arm_coupling}]
The two lemmas can be proved in the same way as Lemma~\ref{lem::four_arm_coupling_inner}, using the strategy in the proof of Proposition~3.6 of~\cite{GarbanPeteSchrammPivotalClusterInterfacePercolation}, with an exploration process that starts at the origin and moves outwards.
In fact, the presence of a boundary makes the proof on the upper half-plane easier than on the plane.
\end{proof}

Let
\begin{equation*}
	\overline{\iota}_a:=\mathbb{P}^a\left[0 \xlongleftrightarrow[\mathbb{H}]{BWB} \partial B_{1}(0)\right].
\end{equation*}
It was shown in~\cite{SmirnovWernerCriticalExponents} that $\overline{\iota}_a=a^{2+o(1)}$.
As in Sections~\ref{sec::thm::backbone} and~\ref{sec::pivotal}, one can express the event $\{\partial B_{\eta}(0) \xlongleftrightarrow[\mathbb{H}]{BWB} \partial B_{\epsilon}(0)\}$ in terms of interface loops. Thus, we can define $\{\partial B_{\eta}(0) \xlongleftrightarrow[\mathbb{H}]{BWB} \partial B_{\epsilon}(0)\}$ in the continuum, for the full scaling limit $\Lambda$.
One can proceed as in the proof of Corollary~\ref{coro::four_arm}, with Lemma~\ref{lem::four_aux12} replaced by ~\cite[Theorem~1.1]{ZhanGreen2SLEboundary}, to prove the following result.
\begin{lemma} \label{lem::three_arm_scaling}
	There exists a universal constant $C_{12}\in (0,\infty)$  such that 
	\begin{equation*}
		\lim_{m\to \infty} \delta_{m}^{-2}\times \mathbb{P}\left[\partial B_{\delta_{m}}(0)\xlongleftrightarrow[\mathbb{H}]{BWB} \partial B_1(0)\right]=C_{12}.
	\end{equation*}
\end{lemma}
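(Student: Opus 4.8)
The plan is to follow the (sketch of the) proof of Corollary~\ref{coro::four_arm}, replacing the bulk alternating four-arm event by the boundary polychromatic three-arm event $\{\cdot\xlongleftrightarrow[\mathbb{H}]{BWB}\cdot\}$, the normalization $\overline{\rho}_a$ by $\overline{\iota}_a$, the coupling Lemma~\ref{lem::four_arm_coupling_inner} by Lemma~\ref{lem::three_arm_coupling}, and --- at the point where that proof uses the existence of $\lim_{m\to\infty}f(x_1,x_2,x_3,x_4;\delta_m)$, which comes from Theorem~\ref{thm::pivotal} and ultimately from the two-multiple $\SLE_6$ Green's function of Lemma~\ref{lem::four_aux12} (i.e.\ \cite[Theorem~1.1]{ZhanGreen2SLE}) --- by its boundary counterpart \cite[Theorem~1.1]{ZhanGreen2SLEboundary}. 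As a preliminary step I would record the radial scaling relation
\begin{equation*}
	\lim_{a\to 0}\frac{\mathbb{P}^a\left[0\xlongleftrightarrow[\mathbb{H}]{BWB}\partial B_{\epsilon}(0)\right]}{\overline{\iota}_a}=\epsilon^{-2}, \qquad \forall\, \epsilon>0,
\end{equation*}
the analog of~\eqref{eqn::scal_four}, which follows from Lemma~\ref{lem::three_arm_coupling}, the rough exponent estimate $\overline{\iota}_a=a^{2+o(1)}$ of~\cite{SmirnovWernerCriticalExponents}, and the Ces\`aro-mean argument already used to deduce~\eqref{eqn::backbone_scal} from Lemma~\ref{lem::backbone} and~\eqref{eqn::backbone_o1} (with $\xi$ replaced by $2$). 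I would also note that the continuum event $\{\partial B_{\delta_m}(0)\xlongleftrightarrow[\mathbb{H}]{BWB}\partial B_1(0)\}$, defined from its discrete version through the interface-loop and cluster scaling limits of~\cite{CamiaNewmanPercolationFull,MR4044275} as in Sections~\ref{sec::pivotal} and~\ref{sec::thm::backbone}, is a continuity event for $\mathbb{P}$ by the same argument used there for $\mathcal{F}_{\eta,\epsilon}(z)$ and $\mathcal{A}_{\eta,\epsilon}(z)$ (the boundary of the event forcing a polychromatic boundary three-arm event of exponent $>1$ along the inner and outer circles, which is a.s.\ absent by Lemma~6.1 of~\cite{CamiaNewmanPercolationFull}).

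Next, exactly as in the derivation of~\eqref{eqn::four_aux3}, I would fix $\delta_m\in(0,1)$ and set, for small $a$,
\begin{align*}
	T_3 & := \frac{\mathbb{P}^a\left[0\xlongleftrightarrow[\mathbb{H}]{BWB}\partial B_{\delta_m}(0)\right]}{\overline{\iota}_a}, \\
	T_4 & := \mathbb{P}^a\left[\partial B_{\delta_m}(0)\xlongleftrightarrow[\mathbb{H}]{BWB}\partial B_1(0)\right], \\
	T_5 & := \frac{\mathbb{P}^a\left[0\xlongleftrightarrow[\mathbb{H}]{BWB}\partial B_1(0)\,|\,\partial B_{\delta_m}(0)\xlongleftrightarrow[\mathbb{H}]{BWB}\partial B_1(0)\right]}{\mathbb{P}^a\left[0\xlongleftrightarrow[\mathbb{H}]{BWB}\partial B_{\delta_m}(0)\right]},
\end{align*}
so that $T_3T_4T_5=\overline{\iota}_a^{-1}\,\mathbb{P}^a[0\xlongleftrightarrow[\mathbb{H}]{BWB}\partial B_1(0)]=1$, using the inclusion $\{0\xlongleftrightarrow[\mathbb{H}]{BWB}\partial B_1(0)\}\subseteq\{\partial B_{\delta_m}(0)\xlongleftrightarrow[\mathbb{H}]{BWB}\partial B_1(0)\}$. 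As $a\to0$, the scaling relation gives $T_3\to\delta_m^{-2}$, weak convergence together with the continuity-event property gives $T_4\to\mathbb{P}[\partial B_{\delta_m}(0)\xlongleftrightarrow[\mathbb{H}]{BWB}\partial B_1(0)]$, and the coupling of Lemma~\ref{lem::three_arm_coupling}, used as in \cite[Proof of Theorem~1.1]{Cam23}, gives $T_5\to f_{\delta_m}$ for some $f_{\delta_m}\in(0,\infty)$. Hence $\delta_m^{-2}\,\mathbb{P}[\partial B_{\delta_m}(0)\xlongleftrightarrow[\mathbb{H}]{BWB}\partial B_1(0)]=1/f_{\delta_m}$ for every $m$, and it remains to prove that $\lim_{m\to\infty}f_{\delta_m}$ exists in $(0,\infty)$, in which case one sets $C_{12}:=1/\lim_{m\to\infty}f_{\delta_m}$.

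To do this, I would introduce the boundary analog of the marked-point quantity $f(x_1,x_2,x_3,x_4;\delta_m)$ from the proof of Theorem~\ref{thm::pivotal}: fix boundary points $y_1<y_2<0<y_3<y_4$ and Dobrushin-type boundary data on $\partial(B_1(0)\cap\mathbb{H})$ inducing a pair of interfaces that cross the half-annulus $A_{\delta_m,1}(0)\cap\mathbb{H}$ with the BWB type around $0$, and consider the conditional probability that they come within distance $\delta_m$ of $0$ --- equivalently, that $0$ becomes pivotal for the crossing $(y_1y_2)\xlongleftrightarrow[\mathbb{H}]{B}(y_3y_4)$. As in Lemma~\ref{lem::four_aux11}, these discrete interfaces converge to a global multiple $\SLE_6$ in $\mathbb{H}$ --- in fact more easily than in the bulk, because of the boundary (cf.\ the proofs of Lemmas~\ref{lem::two_arm_coupling_half} and~\ref{lem::three_arm_coupling}) --- and then \cite[Theorem~1.1]{ZhanGreen2SLEboundary}, the boundary counterpart of Lemma~\ref{lem::four_aux12}, shows that the probability this multiple $\SLE_6$ comes within $\delta_m$ of the boundary point $0$ behaves like $C\,\delta_m^{2}$ as $m\to\infty$, for an explicit positive constant $C$. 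A ``faces'' coupling argument identical to the one in the proof of Lemma~\ref{lem::four_aux2} (see the footnote in the sketch of Corollary~\ref{coro::four_arm} and \cite[Sketch of the proof of Proposition~3.6]{GarbanPeteSchrammPivotalClusterInterfacePercolation}) then shows that the ratio of $f_{\delta_m}$ to this marked-point conditional probability tends to $1$; in particular $\lim_{m\to\infty}f_{\delta_m}$ exists and, being a limit of conditional probabilities bounded away from $0$ and $1$ by RSW estimates, lies in $(0,\infty)$. This gives the claim.

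The step I expect to be the main obstacle is the ``faces'' bookkeeping in the last paragraph: starting from the bare conditioning $\{0\xlongleftrightarrow[\mathbb{H}]{BWB}\partial B_1(0)\}$ one must run a GPS-type outward exploration to produce a configuration of faces around $\partial B_{\delta_m}(0)$ of the right combinatorial type (two black, one white, in cyclic order along the half-circle), control its probability via quasi-multiplicativity and separation-of-arms estimates (using that the boundary polychromatic three-arm exponent equals $2$ and is strictly increased by an additional arm), and check that, conditionally on the faces, the law inside $B_{\delta_m}(0)$ does not depend on the exterior environment --- ``free/one-arm'', which defines $f_{\delta_m}$, versus the marked-point environment to which \cite[Theorem~1.1]{ZhanGreen2SLEboundary} applies. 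Everything else is routine and, thanks to the presence of the boundary, strictly simpler than in Corollary~\ref{coro::four_arm}.
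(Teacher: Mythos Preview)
Your proposal is correct and follows essentially the same route as the paper, which simply says to ``proceed as in the proof of Corollary~\ref{coro::four_arm}, with Lemma~\ref{lem::four_aux12} replaced by~\cite[Theorem~1.1]{ZhanGreen2SLEboundary}.'' You have spelled out in more detail exactly what that entails---the radial scaling relation via Lemma~\ref{lem::three_arm_coupling} and the Ces\`aro argument, the $T_3T_4T_5$ decomposition, and the faces coupling to transfer from the unmarked to the marked-point conditioning---but the architecture is identical.
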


\begin{proof}[Proof of Theorem~\ref{thm::three_arm}]
{Theorem~\ref{thm::three_arm} can be proved using the strategy in the proof of Theorem~\ref{thm::4arm_expansion}, with Lemma~\ref{lem::four_arm_coupling_inner} replaced by Lemmas~\ref{lem::two_arm_coupling_half} and~\ref{lem::three_arm_coupling}, Lemma~\ref{lem::faces} replaced by an (easier) half-plane version involving the polychromatic three-arm event, and~\eqref{eqn::coro_aux2} replaced by Lemma~\ref{lem::three_arm_scaling}.
In particular, the same strategy used to prove \eqref{eqn::fourarm_expansion_aux1} and \eqref{eqn::fourarm_expansion_aux2} can be applied to $K(x_1,x_2,x_3,x_4)$, {\it mutatis mutandis}, to show that there is a constant $C_K \in (0,\infty)$ such that
\begin{equation*}
\lim_{x_2,x_3\to x} \frac{K(x_1,x_2,x_3,x_4)}{(x_3-x_2)^{1+1-2}} = C_K (x-x_1)^{-(2+1-1)}(x_4-x)^{-(1+2-1)}(x_4-x_1)^{-(1+1-2)},
\end{equation*}
which should be compared to~\eqref{eq:OPE} and~\eqref{eq:OPE-term}, and where the values of the exponents emerge from combining the boundary two-arm exponent, $1$, and the alternating boundary three-arm exponent, $2$.}
\end{proof}

\subsection{Logarithmic correlations of boundary fields: Proof of Theorem~\ref{thm::log2}}	
Suppose that $x_1^a<x_2^a<x_3^a<y^a<x_4^a$ are five vertices on $a\mathcal{T}\cap \mathbb{R}$. We let $\mathcal{M}^a(x_1^a,x_2^a,x_3^a,x_4^a)$ denote the event that there exist two disjoint black paths, $\ell_1^a,\ell_2^a$, and a white path, $\ell_3^a$, in the upper half-plane, such that: (1) $\ell_1^a$ connects $x_1^a$ to the segment $[x_3^a,x_4^a]$; (2) $\ell_2^a$ connects $x_1^a$ to $x_2^a$; (3) $\ell_3^a$ connects $x_2^a+a$ to the segment $[x_3^a,x_4^a]$; (4) {the lowest white path from {$x^a_2+a$} to $[x^a_3,x^a_4]$ is connected by a black path to the black path from $x^a_1$ to $[x^a_3,x^a_4]$.}
Moreover, we let $\hat{\mathcal{M}}^a(x_1^a,x_2^a,y^a)$ denote the following event: (1) there are two disjoint {(except for $x^a_1$)} black paths in the upper half-plane connecting $x_1^a$ to $x_2^a$ and to $y^a$, respectively; (2) there is a white path in the upper half-plane connecting $x_2^a+a$ to $y^a-a$; (3) the lowest white path from {$x^a_2+a$} to $[x^a_3,x^a_4]$ is connected by a black path to the black path from $x^a_1$ to $[x^a_3,x^a_4]$. See Figure~\ref{fig::L3} for an illustration of these two events.
 		
		\begin{figure} 
  \begin{subfigure}[b]{0.45\textwidth}
          			\begin{center}
		\includegraphics[width=0.9\textwidth]{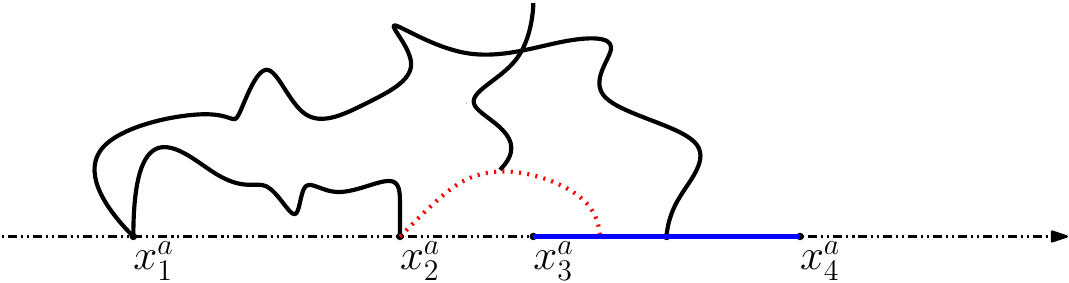}
			\end{center}
  \end{subfigure}
    \begin{subfigure}[b]{0.45\textwidth}
          			\begin{center}
		\includegraphics[width=0.9\textwidth]{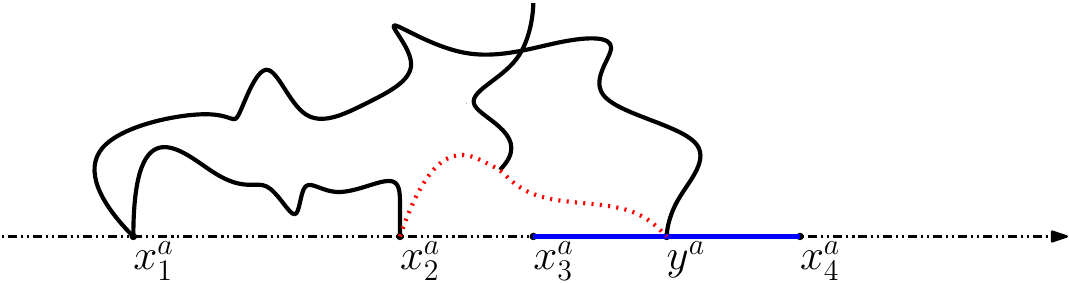}
			\end{center}
  \end{subfigure}

			\caption{The events $\mathcal{M}^a(x_1^a,x_2^a,x_3^a,x_4^a)$ and $\hat{\mathcal{
   M}}^a(x_1^a,x_2^a,y^a)$. The black, solid lines represent black paths and the red, dotted lines represent white paths.}
			\label{fig::L3}
		\end{figure}
	
\begin{lemma} \label{lem::equal_proba}
    Suppose that $x_1^a<x_2^a<x_3^a<x_4^a$ are four vertices of $a\mathcal{T}\cap \mathbb{R}$. Then we have 
    \begin{equation} \label{eqn::equal_proba}
        \mathbb{P}^a\left[\mathcal{L}^a(x_1^a,x_2^a,x_3^a,x_4^a)\right]= \mathbb{P}^a\left[\mathcal{M}^a(x_1^a,x_2^a,x_3^a,x_4^a)\right].
    \end{equation}
\end{lemma}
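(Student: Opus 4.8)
The plan is to establish the identity \eqref{eqn::equal_proba} by exhibiting an explicit measure-preserving bijection (a ``color-switching'' map) between the sets of percolation configurations realizing $\mathcal{L}^a(x_1^a,x_2^a,x_3^a,x_4^a)$ and $\mathcal{M}^a(x_1^a,x_2^a,x_3^a,x_4^a)$. Since $\mathbb{P}^a$ is the uniform (product) measure with equal weights on black and white, any bijection of configurations that flips the colors of equally many vertices on each configuration—or, more robustly, any bijection defined by flipping the colors inside a domain determined by an exploration that does not see those colors—automatically preserves $\mathbb{P}^a$. So the whole content is combinatorial.

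First I would look carefully at the two events. Comparing the definitions of $\mathcal{L}^a$ and $\mathcal{M}^a$, the difference lies only in the role of the black path from $x_1^a$ and the two white paths: in $\mathcal{L}^a$ we demand two disjoint white paths from (a neighbor of) $x_1^a$ and from $x_2^a+a$ to $[x_3^a,x_4^a]$, a black path from $x_1^a$ to $x_2^a$, and \emph{no} black path from $x_1^a$ to $[x_3^a,x_4^a]$; in $\mathcal{M}^a$ we demand a black path $\ell_1^a$ from $x_1^a$ to $[x_3^a,x_4^a]$, a black path $\ell_2^a$ from $x_1^a$ to $x_2^a$, a white path $\ell_3^a$ from $x_2^a+a$ to $[x_3^a,x_4^a]$, plus the ``lowest-white-path-connected-to-the-black-path-from-$x_1^a$'' condition. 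The natural map is: starting from a configuration in $\mathcal{L}^a$, explore the lowest white path $\gamma$ emanating from the neighbor of $x_1^a$ (it exists because the first white path of $\mathcal{L}^a$ provides one), together with the region it cuts off near $x_1^a$, and flip the colors of all hexagons strictly between $\gamma$ and the real segment $[x_1^a, \text{the landing point on }[x_3^a,x_4^a]]$. The flipped region turns that white path into a black path $\ell_1^a$ from $x_1^a$ to $[x_3^a,x_4^a]$; the ``no black crossing from $x_1^a$'' condition of $\mathcal{L}^a$ is exactly what guarantees that before the flip there was a topmost/lowest white path, and after the flip the old black path $x_1^a \leftrightarrow x_2^a$ survives (it lies outside the flipped region) to give $\ell_2^a$, while the old white path from $x_2^a+a$ survives to give $\ell_3^a$. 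The condition (4) in both $\mathcal{L}^a$ and $\mathcal{M}^a$ (``lowest white path from $x_2^a+a$ connected by a black path to the black path from $x_1^a$'') is the same in both events and is preserved because that connection, and the relevant lowest white path, lie outside the flipped region. I would verify that this map is an involution onto $\mathcal{M}^a$ (flipping back along the same explored path recovers the original), hence a bijection; and that, being defined by flipping colors inside a domain whose boundary is measurable with respect to the \emph{unflipped} colors outside, it preserves the product measure.

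The main obstacle—and the step requiring genuine care—is making the color-switching map well-defined and bijective \emph{at the lattice level}, i.e., identifying precisely which hexagons get flipped and checking that the flipped domain is determined symmetrically from both sides so that the map is a true involution. One must be careful about: (i) the exact identity of the ``lowest white path'' / ``topmost interface'' that serves as the switching curve, and whether it lands on $[x_3^a,x_4^a]$ rather than escaping to $x_4^a$ or beyond (here the hypothesis $x_1^a<x_2^a<x_3^a<x_4^a$ and the ``no black path from $x_1^a$ to $[x_3^a,x_4^a]$'' clause in $\mathcal{L}^a$ must be used to pin down the geometry); (ii) ensuring that flipping does not create or destroy any of the other required paths, which amounts to checking that every path in the definition of the target event can be routed outside the flipped region and vice versa—this is a planar-topology argument using the fact that the switching curve separates $x_1^a$'s immediate neighborhood from the rest; and (iii) confirming that conditions (4) of $\mathcal{L}^a$ and $\mathcal{M}^a$ genuinely coincide after translating through the flip, which I expect is immediate once the flipped domain is correctly located below the relevant white path. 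I would present the argument by first drawing the switching curve in a figure, then stating precisely the flip map, then checking the three items above, and finally invoking the product structure of $\mathbb{P}^a$ to conclude \eqref{eqn::equal_proba}.
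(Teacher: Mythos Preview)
Your overall plan---a measure-preserving color-flip bijection defined via an exploration---is exactly the right idea and is what the paper does. However, your specific choice of switching region is wrong, and the error is precisely at the sentence ``after the flip the old black path $x_1^a\leftrightarrow x_2^a$ survives (it lies outside the flipped region).'' It does not. Any white path from a neighbor of $x_1^a$ to $[x_3^a,x_4^a]$ must pass over the point $x_2^a$ (since $x_1^a<x_2^a<x_3^a$ and the black and white paths are vertex-disjoint), so the region enclosed by your curve $\gamma$ and the real segment necessarily contains the black path from $x_1^a$ to $x_2^a$. Flipping that region destroys $\ell_2^a$ rather than preserving it. For the same reason, your proposed inverse is not well-defined: after your flip the ``lowest white path from a neighbor of $x_1^a$'' no longer exists as such, so you cannot recover the same switching curve symmetrically.

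The paper's construction uses a different (and correct) switching region. One explores the region $D(\omega)$ enclosed by $\mathbb{R}$ and the \emph{lowest black} path from $x_1^a$ to $x_2^a$ together with the region enclosed by $\mathbb{R}$ and the \emph{lowest white} path from $x_2^a+a$ to $[x_3^a,x_4^a]$; one then keeps $D(\omega)$ (including the two lowest paths) fixed and flips all colors \emph{above} it. With this choice: the two lowest paths survive unchanged and furnish $\ell_2^a$ and $\ell_3^a$; the white path from a neighbor of $x_1^a$ to $[x_3^a,x_4^a]$, which necessarily lies above the lowest black path, is flipped to black and (together with $x_1^a$) furnishes $\ell_1^a$; and condition (3) of $\mathcal{L}^a$ (no black connection $x_1^a\to[x_3^a,x_4^a]$) becomes, after the flip above $D(\omega)$, precisely condition (4) of $\mathcal{M}^a$. (Incidentally, $\mathcal{L}^a$ has only three defining conditions, not four.) Because $D(\omega)$ is left untouched and is determined by the two lowest paths---which are still the lowest such paths after the flip---the map is a genuine involution, and the exploration sees only colors in $D(\omega)$, so the product measure is preserved.
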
	
\begin{proof}
    We define a {probability-preserving} bijection $\tau: \mathcal{L}^a(x_1^a,x_2^a,x_3^a,x_4^a)\to \mathcal{M}^a(x_1^a,x_2^a,x_3^a,x_4^a)$ between percolation configurations, as follows. 
    
   We let $V(a\mathcal{T})$ denote the vertex set of $a\mathcal{T}$. For a given configuration $\omega\in \mathcal{L}^a(x_1^a,x_2^a,x_3^a,x_4^a)$, we can write $\omega=(\omega_v)_{v\in \{V(a\mathcal{T})\}}\in \{0,1\}^{V(a\mathcal{T})}$, where $\omega_v=1$ (resp., $\omega_v=0$) means that the label at $v$ is black (resp., white). We explore the region enclosed by $\mathbb{R}$ and the lowest black path in the upper half-plane connecting $x_1^a$ to $x_2^a$, and the region enclosed by $\mathbb{R}$ and the lowest white path in the upper half-plane connecting {$x_2^a+a$} to the segment $[x_3^a,x_4^a]$. Let $D(\omega)$ denote the set of vertices contained in the explored regions and in the two lowest paths. {It is a standard percolation result that one can perform the exploration without gaining any information about the regions above the lowest paths.} We define
   \begin{equation*}
       \tau(\omega)_v:= \begin{cases}
           \omega_v, &\text{if }v\in D(\omega),\\
           1-\omega_v, & \text{if }v\in V(a\mathcal{T})\setminus D(\omega). 
       \end{cases}
   \end{equation*}
   Then $\tau: \mathcal{L}^a(x_1^a,x_2^a,x_3^a,x_4^a)\to \mathcal{M}^a(x_1^a,x_2^a,x_3^a,x_4^a)$ is a bijection, which gives~\eqref{eqn::equal_proba}, as desired.
\end{proof}

\begin{lemma}  \label{lem::L3}
Let $-\infty<x_1<x_2<y<\infty$ be three distinct points. Suppose that $x_1^a<x_2^a<y^a\in a\mathcal{T}\cap \mathbb{R}$ satisfy $x_1^a\to x_1$, $x_2^a\to x_2$ and $y^a\to y$ as $a\to 0$. Then there exists a universal constant $C_{13}\in (0,\infty)$ such that 
\begin{equation}\label{eqn::L}
	\hat{M}(x_1,x_2,y):=\iota_a^{-3}\times \mathbb{P}^a\left[\hat{\mathcal{M}}(x_1^a,x_2^a,y^a)\right]= C_{13}(y-x_1)^{-1}(y-x_2)^{-1}(x_2-x_1)^{-1}.
\end{equation}
\end{lemma}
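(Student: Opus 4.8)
\textbf{Proof proposal for Lemma~\ref{lem::L3}.}

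The plan is to prove \eqref{eqn::L} in three stages, following the template already established for the four-arm and boundary two-arm fusion results. First I would establish the \emph{existence and conformal covariance} of the limit $\hat{M}(x_1,x_2,y)$. As in Sections~\ref{sec::thm::backbone} and~\ref{sec::pivotal}, one introduces the local event $\hat{\mathcal{M}}^{(a,\eta,\delta_m,\epsilon)}(x_1^a,x_2^a,y^a)$, in which one requires polychromatic boundary three-arm events (two black, one white) crossing the annuli $A_{\eta,\epsilon}(x_j^a)$ at $x_1^a$, $x_2^a$, $y^a$, together with the appropriate black/white connections outside the $B_{\delta_m}$-disks realizing the global topology of $\hat{\mathcal{M}}^a$. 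Since the polychromatic boundary three-arm exponent for critical site percolation equals $2$~\cite{SmirnovWernerCriticalExponents}, the corresponding continuum event is a continuity event for $\mathbb{P}$ (the boundary of the event forces an extra boundary three-arm event, which has probability zero by Lemma~6.1 of~\cite{CamiaNewmanPercolationFull}), so that $\mathbb{P}^a[\hat{\mathcal{M}}^{(a,\eta,\delta_m,\epsilon)}] \to \mathbb{P}[\hat{\mathcal{M}}^{(\eta,\delta_m,\epsilon)}]$. Then, applying the coupling Lemma~\ref{lem::three_arm_coupling} at each of the three marked points (exactly as Lemma~\ref{lem::backbone} was used in the proof of Theorem~\ref{thm::backbone}) yields
\begin{equation*}
\hat{M}(x_1,x_2,y)=\lim_{a\to 0}\iota_a^{-3}\,\mathbb{P}^a[\hat{\mathcal{M}}^a(x_1^a,x_2^a,y^a)]
= \epsilon^{-6}\lim_{m\to\infty}\lim_{\eta\to 0}\mathbb{P}\left[\hat{\mathcal{M}}^{(\eta,\delta_m,\epsilon)}(x_1,x_2,y)\,\middle|\,\mathcal{F}^{\mathrm{bdry}}_{\eta,\epsilon}(x_j),\ j=1,2,3\right],
\end{equation*}
where $\mathcal{F}^{\mathrm{bdry}}_{\eta,\epsilon}$ denotes the polychromatic boundary three-arm crossing event; note that $\iota_a^{-3}$ rather than $\overline{\iota}_a^{-1}\iota_a^{-2}$ or similar appears because each of the three marked points contributes a boundary three-arm configuration, but the normalization is dictated by the statement of the lemma — I would double-check this against Remark~\ref{rem::sharp_arm}, since the exponent governing $\iota_a$ is $1$ and three independent such events would naively scale like $a^3$, which matches the polychromatic boundary three-arm exponent $2$ being replaced by the relevant local structure; the key point is that the $a$-dependence factors out and only a finite nonzero constant remains, which is guaranteed by standard RSW estimates (cf. the proofs of Lemmas~2.1 and~2.2 of~\cite{CamiaNewman2009ising}).

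Second, I would deduce the \emph{conformal covariance} of $\hat{M}$. Using the coupling argument of Lemma~\ref{lem::faces} (the ``faces'' construction of~\cite{GarbanPeteSchrammPivotalClusterInterfacePercolation}) together with the method of~\cite[Proofs of Theorems~1.1 and~1.4]{Cam23}, one shows that for any non-constant Möbius transformation $\varphi:\mathbb{H}\to\mathbb{H}$ preserving the order of $x_1,x_2,y$,
\begin{equation*}
\hat{M}(\varphi(x_1),\varphi(x_2),\varphi(y))=\hat{M}(x_1,x_2,y)\times\prod |\varphi'(\cdot)|^{-1},
\end{equation*}
where each of $x_1,x_2,y$ carries boundary scaling dimension $1$ — this is consistent because the event near $x_1$ involves two black arms (scaling dimension of the boundary ``two-arm'' object, $1$), and near $x_2$ and $y$ the event involves a black–white pair, also scaling dimension $1$. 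A conformally covariant function of three boundary points with all exponents equal to $1$ is, by the standard three-point argument recalled in \eqref{eq:3-pt-fnct}, uniquely determined up to a multiplicative constant:
\begin{equation*}
\hat{M}(x_1,x_2,y)=C_{13}\,|y-x_1|^{-1}|y-x_2|^{-1}|x_2-x_1|^{-1},
\end{equation*}
which is \eqref{eqn::L}; positivity and finiteness of $C_{13}$ follow from the RSW two-sided bound mentioned above.

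The \textbf{main obstacle} I anticipate is controlling the last condition in the definition of $\hat{\mathcal{M}}^a$ (item (3): the lowest white path is connected by a black path to the black path from $x_1^a$) together with the ``disjoint except at $x_1^a$'' requirement for the two black paths emanating from $x_1^a$. These constraints intertwine the local three-arm configurations at the three marked points with a global black/white connectivity pattern, so that separating the contribution of each marked point into an independent local factor (to extract the product of $|\varphi'|^{-1}$ terms and the correct normalization) requires a careful ``faces'' coupling, precisely analogous to—but somewhat more delicate than—the one in the proof of Lemma~\ref{lem::four_aux2} and in the fusion argument of Theorem~\ref{thm::4arm_expansion}~(\ref{item::expansion}). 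In particular, one must verify that the event difference between $\hat{\mathcal{M}}^a$ and the ``idealized'' product event forces an extra polychromatic arm (of exponent $>$ the one being normalized against), so that its conditional probability decays and can be absorbed into the error. I expect this to go through by the same mechanism used repeatedly in the paper, but it is the step where the combinatorics of the white/black connection pattern must be checked most carefully. The remaining verification — that the constant $C_{13}$ is universal (independent of $x_1,x_2,y$) — is then immediate from conformal covariance, exactly as in the derivation of $C_3$ in Theorem~\ref{thm::4arm_expansion}~(\ref{item::conformal_cova}).
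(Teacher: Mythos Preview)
Your overall strategy --- existence of the limit via coupling lemmas at each marked point, then conformal covariance, then the explicit three-point form --- is exactly the paper's approach. However, there is a genuine error in your Stage~1: you propose to use Lemma~\ref{lem::three_arm_coupling} (the boundary $BWB$ three-arm coupling) at each of $x_1,x_2,y$, but the local arm structure of $\hat{\mathcal{M}}^a$ is a \emph{two}-arm event at every marked point, not three-arm. Near $x_1^a$ the event forces two disjoint black paths (a boundary $BB$ two-arm event), while near $x_2^a$ and $y^a$ it forces one black and one white path (a boundary $BW$ two-arm event). Condition~(3) in the definition of $\hat{\mathcal{M}}^a$ is a global connectivity constraint and adds no further local arms. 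You in fact recognise this correctly in your Stage~2 (``the event near $x_1$ involves two black arms \ldots\ and near $x_2$ and $y$ the event involves a black--white pair''), so your proposal is internally inconsistent.

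The paper accordingly uses, at $x_1$, the half-plane analogue of Lemma~\ref{lem::backbone} (the $BB$ coupling, which goes through verbatim since the conditioning is monochromatic and one can argue via the FKG domination exactly as in the bulk), and at $x_2,y$ it uses Lemma~\ref{lem::two_arm_coupling_half} (the $BW$ coupling). Once this is corrected, your normalisation puzzle evaporates: each marked point contributes a boundary two-arm event of exponent $1$, so $\iota_a^{-3}\sim a^{-3}$ is exactly the right renormalisation, and the scaling dimension $1$ at each point in the covariance formula is consistent with it. Your anticipated ``main obstacle'' concerning condition~(3) and the disjointness at $x_1^a$ is not a real difficulty here: it is handled by the same mechanism as in the proof of Theorem~\ref{thm::backbone}, since these are macroscopic connectivity constraints that factor through the coupling on $\mathcal{S}$ and do not interfere with the local separation-of-arms estimates.
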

	
\begin{proof}
{It is clear that the proof of Lemma~\ref{lem::backbone} works also on the upper half-plane, that is,} if we replace the event $\{0\xlongleftrightarrow{BB}\partial B_{\epsilon}(0)\}$ by the event $\{0\xlongleftrightarrow[\mathbb{H}]{BB}\partial B_{\epsilon}(0)\}$ and the event $\mathcal{B}_{\eta,\epsilon}^a(0)$ by the event that there are two disjoint black paths in the upper half-plane connecting $\partial B_{\eta}(0)$ to $\partial B_{\epsilon}(0)$. Then, thanks to this observation and Lemma~\ref{lem::two_arm_coupling_half}, one can use the strategy in the proof of Theorem~\ref{thm::backbone} to show that
\begin{equation*}
	\hat{M}(x_1,x_2,x_3):={\lim_{a \to 0}} \, \iota_a^{-3} \, \mathbb{P}^a\left[\hat{\mathcal{M}}^a(x_1^a,x_2^a,x_3^a)\right]
\end{equation*}
exists and belongs to $(0,\infty)$ and that, for any non-constant M\"obius transformation $\varphi:\mathbb{H}\to\mathbb{H}$ satisfying $\varphi(x_1),\varphi(x_2),\varphi(y)\neq \infty$, we have
\begin{equation*}
	\hat{M}(\varphi(x_1),\varphi(x_2),\varphi(y))=\hat{M}(x_1,x_2,y)\times |\varphi'(x_1)|^{-1}\times|\varphi'(x_2)|^{-1}\times |\varphi'(y)|^{-1}.
\end{equation*}
As a consequence, there exists a universal constant $C_{13}\in (0,\infty)$ such that~\eqref{eqn::L} holds, as desired.
\end{proof}

\begin{proof}[Proof of Theorem~\ref{thm::log2}]
We claim that 
\begin{equation} \label{eqn::log_axu1}
	\mathcal{M}^a(x_1^a,x_2^a,x_3^a,x_4^a)=\cup_{x_3^a < y^a \leq x_4^a} \hat{\mathcal{M}}^a(x_1^a,x_2^a,y^a).
\end{equation}
Indeed, on the one hand, it is clear that $\cup_{x_3^a< y^a\leq x_4^a} \hat{\mathcal{M}}^a(x_1^a,x_2^a,y^a)\subseteq \mathcal{M}^a(x_1^a,x_2^a,x_3^a,x_4^a)$. {On the other hand, $\mathcal{M}^a(x_1^a,x_2^a,x_3^a,x_4^a)$ implies the existence of a white path in the upper half-plane connecting $x_2^a+a$ to the segment $[x_3^a,x_4^a]$ and, if $\mathcal{M}^a(x_1^a,x_2^a,x_3^a,x_4^a)$ occurs, there is a rightmost vertex in $[x_3^a,x_4^a]$ connected to $x_2^a+a$ by a white path in the upper half-plane.
Denoting this rightmost vertex by $y_*^a$, we see that $\mathcal{M}^a(x_1^a,x_2^a,x_3^a,x_4^a)$ implies $\hat{\mathcal{M}}^a(x_1^a,x_2^a,y^a_*+a)$, for some $y_*^a\in a\mathcal{T}\cap [x_3^a,x_4^a-a]$.
Therefore, $\mathcal{M}^a(x_1^a,x_2^a,x_3^a,x_4^a)\subseteq \cup_{x_3^a< y^a\leq x_4^a} \hat{\mathcal{M}}^a(x_1^a,x_2^a,y^a)$.}

Furthermore, it is clear that, due to topological constraints,
\begin{equation} \label{eqn::log_aux2}
	\hat{\mathcal{M}}^a(x_1^a,x_2^a,y_1^a)\cap 	\hat{\mathcal{M}}^a(x_1^a,x_2^a,y_2^a)=\emptyset, \quad \forall y_1^a\neq y_2^a \in a\mathcal{T}\cap(x_2^a+a,\infty).
\end{equation}
Combining~\eqref{eqn::log_axu1} and~\eqref{eqn::log_aux2} with Lemma~\ref{lem::equal_proba}, we obtain
\begin{equation*}
\mathbb{P}^a\left[\mathcal{L}^a(x_1^a,x_2^a,x_3^a,x_4^a)\right]=\sum_{x_3^a< y^a\leq x_4^a}\mathbb{P}^a\left[\hat{\mathcal{M}}^a(x_1^a,x_2^a,y^a)\right].
\end{equation*}
Thus, using Remark~\ref{rem::sharp_arm} and Lemma~\ref{lem::L3}, we have
\begin{align*}
	& L(x_1,x_2,x_3,x_4) := \lim_{a\to 0} \iota_a^{-2} \times \mathbb{P}^a\left[\mathcal{L}^a(x_1^a,x_2^a,x_3^a,x_4^a)\right] \\
	& \qquad \qquad \qquad = \lim_{a\to 0} \frac{\iota_a}{a}\sum_{x_3^a< y^a\leq x_4^a} a \iota_a^{-3} \times \mathbb{P}^a\left[\hat{\mathcal{M}}^a(x_1^a,x_2^a,y^a)\right] \\
	& \qquad \qquad \qquad = c_{\iota} \int_{x_3}^{x_4} \hat{M}(x_1,x_2,y)\ud y \\
	& \qquad \qquad \qquad = c_{\iota}C_{13}\int_{x_3}^{x_4} |y-x_1|^{-1}|x_2-x_1|^{-1}|y-x_2|^{-1} \ud y \\
	& \qquad \qquad \qquad = \frac{c_{\iota}C_{13}}{(x_2-x_1)^2} \times\log \frac{(x_4-x_2)(x_3-x_1)}{(x_3-x_2)(x_4-x_1)},
\end{align*}
which gives~\eqref{eqn::log_formula2} with $C_{M}:=c_{\iota}C_{13}$, where $c_{\iota}$ and $C_{13}$ are constants in Remark~\ref{rem::sharp_arm} and Lemma~\ref{lem::L3}, respectively.

Finally,~\eqref{eqn::log_asy2} and \eqref{eqn::log_cov2} follow immediately from~\eqref{eqn::log_formula2}, which completes the proof. 
	\end{proof}

{\small

	}

\end{document}